\documentclass[11pt]{amsart}
\usepackage{graphicx,tikz,amsmath,amsfonts,amssymb,amsthm, paralist,xcolor,slashed,accents,esint,mathrsfs,comment}%

\usepackage[utf8]{inputenc}%

\usepackage[a4paper,top=2.5cm,bottom=3cm,left=2cm,right=2cm]{geometry}%

\usepackage[title]{appendix}

\usepackage[hidelinks]{hyperref}

\newcommand{\RR}{\mathbb{R}}

\newcommand{\NN}{\mathbb{N}}
\newcommand{\TT}{\mathbb{T}}

\newcommand{\osc}{\mathrm{osc}}
\newcommand{\av}{\mathrm{av}}
\newcommand{\de}{\mathrm{d}}
\newcommand{\EdS}{\mathrm{EdS}}
\newcommand{\FL}{\mathrm{FL}}

\newcommand{\mc}{\mathcal}

\newcommand{\ol}{\overline}
\newcommand{\wh}{\widehat}

\newcommand{\lesa}{\lesssim}

\newcommand{\pa}{\partial}

\newcommand{\ra}{\rightarrow}
\newcommand{\ve}{\varepsilon}

\newcommand{\pav}{\partial_{\mathbf{v}}}

\newcommand{\idmet}{\bar{g}_0}
\newcommand{\idsff}{\bar{k}_0}
\newcommand{\idrho}{\bar{\rho}_0}

\numberwithin{equation}{section}
\newtheorem{theorem}{Theorem}[section]
\newtheorem{proposition}[theorem]{Proposition}

\newtheorem{lemma}[theorem]{Lemma}
\theoremstyle{definition}
\newtheorem{definition}[theorem]{Definition}
\theoremstyle{remark}
\newtheorem{remark}[theorem]{Remark}

\begin{document}

\title[Nonlinear stability of  Einstein--de Sitter Universes]{Nonlinear stability of  Einstein--de Sitter Universes}
\author[L.~Bernhardt]{Louie Bernhardt}
\address{School of Mathematics and Statistics, University of Melbourne, Parkville 3010, VIC, Australia. \newline louie.bernhardt@student.unimelb.edu.au}

\author[D.~Fajman]{David Fajman}
\address{Faculty of Physics, University of Vienna, Boltzmanngasse 5, 1090 Vienna, Austria. \newline David.Fajman@univie.ac.at}

\author[Z.~Wyatt]{Zoe Wyatt}
\address{DPMMS, Faculty of Mathematics, Wilberforce Road, Cambridge, CB3 0WB, United Kingdom. \newline zoe.wyatt@maths.cam.ac.uk}
\date{July 10, 2026}


\begin{abstract}
The Einstein--de Sitter universe is the prevailing model used in cosmology to describe the cold dark matter-dominated epoch of the universe. 
This model is a spatially homogeneous and isotropic spacetime undergoing decelerated expansion, and is linearly unstable under the Einstein--Euler equations with a pressureless fluid equation of state. We show that every initial data set for the Einstein--Euler equations on $\mathbb{T}^3$ with a near-flat metric and positive fluid energy density converges to a flat metric under the Einstein--Euler flow with a polytropic equation of state. This means the metric asymptotes to an Einstein--de Sitter spacetime. In particular, this settles the question of whether the Einstein--de Sitter model can be nonlinearly stable for an appropriate matter model. 
\end{abstract}

\maketitle

\section{Introduction}
The standard model of cosmology posits that the large-scale structure of the universe is well-described by a $(1+3)-$dimensional spatially homogeneous and isotropic FLRW spacetime 
$$\mathcal{M} = (0,\infty) \times \Sigma, \quad g = -dt^2 + a(t)^2 g_\Sigma \,. $$
This spacetime is filled with certain matter components and arises as a solution to the Einstein field equations. Following a brief inflationary period, there are believed to have been three consecutive epochs, each with distinguished behaviour depending on which matter component is assumed to dominate at that point in time. These comprise the radiation-dominated epoch, followed by the cold dark matter-dominated epoch, and finally the current epoch dominated by dark energy. The matter in each of the first two epochs is modelled by an idealised fluid satisfying the relativistic Euler equations, while for the third a positive cosmological constant $\Lambda>0$ is typically used. 

Within each epoch, a natural and essential question concerns the nonlinear stability of the spacetime used to model that time period. Broadly speaking, there are three features which drastically affect the mathematical analysis of such a stability question: 
\begin{enumerate}
    \item[(A)] the rate of volume growth $a(t)$ of the spatial hypersurfaces over time, which critically depends on the epoch in question;
    \item[(B)] the possibly dominant effect of the matter or $\Lambda$ on the spacetime geometry;
    \item[(C)] the confinement of gravitational waves propagating on the spatial domain $\Sigma$, which in cosmology is typically assumed to be compact. 
\end{enumerate}
These features influence the effects of dispersion, dissipation and nonlinear interactions that are key to the rigorous mathematical study of nonlinear stability under the Einstein--Euler equations. 
In normalised units these equations read
\begin{subequations} \label{intro:Einstein-Euler}
    \begin{align}
        R_{\mu\nu}[\bar{g}] - \frac{1}{2}\bar{g}_{\mu\nu}R[\bar{g}] + \Lambda \bar{g}_{\mu\nu} &=(\bar{\rho}+\bar{p})\bar{v}_\mu \bar{v}_\nu + \bar{p}\,\bar{g}_{\mu\nu}\,,
    \end{align}
    where  $(\mathcal M, \bar{g})$ is a Lorentzian manifold and $\Lambda \geq 0$ is a constant. 
The relativistic fluid is described by its energy density $\bar{\rho}$, pressure $\bar{p}$, 4-velocity $\bar{v}^\mu$ normalised by $\bar{g}(\bar{v},\bar{v})=-1$. The fluid satisfies the relativistic Euler equations
        \begin{align}\label{intro:Euler-rho}
        \bar{v}^\mu \bar{\nabla}_\mu \bar{\rho} + (\bar{\rho} + \bar{p})\bar{\nabla}_\mu \bar{v}^\mu &= 0\,,\\ \label{intro:Euler-mom}
        (\bar{\rho}+\bar{p})\bar{v}^\mu \bar{\nabla}_\mu \bar{v}^\nu + \big( \bar{v}^\mu \bar{v}^\nu+\bar{g}^{\mu\nu}\big) \bar{\nabla}_\mu \bar{p} &=0\,.
    \end{align}
\end{subequations}
The equations \eqref{intro:Euler-rho}--\eqref{intro:Euler-mom} need to be closed by a thermodynamic equation of state. Two canonical barotropic relations studied in astrophysics and cosmology are the \textit{linear} and \textit{polytropic} relations, respectively given by:
\begin{align}
    \label{intro-linear-eos}\bar{p}(\bar{\rho})& = K \bar{\rho}\,, \\
    \label{intro-poly-eos}\bar{p}(\bar{\rho}) &= C \bar{\rho}^{1+\frac{1}{n}}\,.
\end{align} 
Here, $K \geq 0$, $C, n>0$ are constants, with the parameter $n$ called the polytropic index. The polytropic constant $C$ scales the absolute pressure of the fluid at a given density, representing either a measure of specific entropy in thermal gases or a characteristic parameter governing the fluid's stiffness. One computes the fluid sound speed by evaluating $c_s^2 =\frac{\de \bar{p}}{\de \bar{\rho}}$ at fixed entropy, and so in an expanding cosmology where the energy density dilutes, we see that the sound speed for a polytropic fluid decreases over time while the linear fluid has a constant speed of sound given by $c_s = \sqrt{K}$. 

In this paper we focus on the cold dark matter-dominated epoch, which is modelled by an FLRW solution called the Einstein--de Sitter (EdS) universe. Established by Einstein and de Sitter in 1932 \cite{EindeS:EdS:32}, the EdS model reads\footnote{In this article, we take $\TT^3 = [-\pi,\pi]^3$ with the ends identified. Here the Euclidean metric $\delta_{ij}$ can be replaced with any 3-dimensional flat torus.}
\begin{equation}\label{sec:intro.eq:eds}
\mathcal{M}_{\EdS} =(0,\infty)\times\mathbb T^3, \quad  \bar{g}_{\EdS} =-dt^2+t^{\frac43}\sum_{i,j=1}^3 \delta_{ij}dx^i dx^j, \quad \bar{\rho}_{\EdS} = \rho_0 t^{-2}, \quad \Lambda =0. 
\end{equation}
This is a solution to the Einstein--Euler field equations, and the fluid is described by its energy density $\bar{\rho}$, 4-velocity $\bar{v}^\mu$ and, critically, an assumed pressureless equation of state $\bar{p}=0$ i.e. $K=0$ in \eqref{intro-linear-eos}. Such a matter model is called a \textit{dust fluid}.

Numerous heuristic and numerical works, outlined in Section \ref{sec:cosmo-jeans} below, suggest that the EdS model is unstable under the Einstein-dust field equations. From a nonlinear perspective, this raises a significant question regarding the validity of describing a major epoch in our universe by a dynamically unstable model. In the present article, we provide a resolution:

\begin{theorem}[Main Theorem]\label{main-theorem-rough}
    There is an open family of nonlinearly stable solutions to the Einstein--Euler equations on $\mathbb{T}^3$ with $\Lambda =0$ and polytropic equation of state \eqref{intro-poly-eos} with polytropic index $n>3$. These solutions are geodesically complete to the future and possess the same leading order asymptotics as the Einstein--de Sitter universe \eqref{sec:intro.eq:eds}. More precisely, for each solution $(\bar{g},\bar{\rho},\bar{v})$ in this family, there exists a coordinate system $(t,x^1,x^2,x^3)$, such that as $t \ra \infty$,
    \begin{subequations}\label{sec:intro.eq:asymptotics}
        \begin{gather}
        |\bar{g}_{00} - g_{00}^\infty| \ra 0\,, \qquad \sum_{i,j=1}^3|t^{-\frac{4}{3}}\bar{g}_{ij} - g_{ij}^\infty| \ra 0\,,\qquad |t^2\bar{\rho}-\rho^\infty|  \ra 0\,,\\
        \sum_{i=1}^3|t^{-\frac{2}{3}}\bar{g}_{0i}| \ra 0,\qquad \sum_{i=}^3|t^{\frac{2}{3}}\bar{v}^i|\ra 0\,,
    \end{gather}
    \end{subequations}
    where $(g_{00}^\infty,g_{ij}^{\infty},\rho^\infty)$ are constants such that $g_{00}^\infty < 0$, $ \rho^\infty>0$, and $g_{ij}^{\infty}$ are the components of a symmetric, positive definite $3\times 3$ matrix.
\end{theorem}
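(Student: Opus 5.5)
The plan is to follow the Fuchsian approach of Oliynyk and collaborators. We reformulate the Einstein--Euler system with polytropic equation of state as a symmetric hyperbolic Fuchsian system on a compactified time interval and then apply the global existence theory for such systems.

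\textbf{Gauge, time compactification and rescaling.} We fix a wave (generalised harmonic) gauge adapted to the EdS background, for instance via a background-adapted source function, so that the reduced Einstein equations are quasilinear wave equations. We introduce the compactified time $\tau = -t^{-1/3}$, or a similar power, so that $t\to\infty$ corresponds to $\tau\nearrow 0$. We then rescale the unknowns by the EdS weights:
\begin{equation*}
u_{ij}= t^{-4/3}\bar g_{ij}, \quad u_{0i}= t^{-2/3}\bar g_{0i}, \quad u_{00}= \bar g_{00}, \quad \zeta = t^{2}\bar\rho, \quad w^i = t^{2/3}\bar v^i,
\end{equation*}
together with suitable rescaled first derivatives. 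The system then takes the form $B^0\partial_\tau U + B^k\partial_k U = \tau^{-1}\mathcal B \mathbb P U + F$.

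The reason for $n>3$ is the polytropic structure. The sound speed satisfies $c_s^2 \sim \bar\rho^{1/n}\sim t^{-2/n}$. After rescaling, the pressure terms in the Euler equations enter with factors $t^{-2/n}$, whereas the wave-speed scale is set by $t^{2/3}$. The condition $n>3$ should ensure these terms decay like $|\tau|^{\delta-1}$ for some $\delta>0$, making them integrable perturbations. This is what breaks the Jeans-type linear instability present for dust: effectively the fluid becomes ``dust-like with decaying pressure'' only in a controlled way, while the gauge-induced dissipation dominates. For the Euler equations, I would use the Frauendiener/Oliynyk symmetrisation with the variable $\zeta^{1/(2n)}$ in place of the density. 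This regularises the degenerate sound speed as $\bar\rho\to 0$ relative to the rescaled frame.

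\textbf{Verifying the Fuchsian structure.} The main work, and the step I expect to be the main obstacle, is checking the structural conditions of the Fuchsian global existence theorem: $B^0$ is symmetric and uniformly positive, $\mathcal B$ is positive definite on the range of the projection $\mathbb P$, the singular terms commute with $\mathbb P$, and $F$ is $O(|\tau|^{-1+\delta})$ in the appropriate sense. The difficulty is that in decelerated expansion the natural $1/\tau$ terms coming from the metric derivatives do not all have a favourable sign. Some components, namely the shift and the trace-free parts, are only marginally damped. I would try to fix this first by choosing the gauge source functions and the first-order variables, for example $t\partial_t u - $ a multiple of $u$, so as to make the singular matrix block-diagonalisable with strictly positive eigenvalues on $\mathbb P$. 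For the metric components that merely converge to constants, such as $g_{ij}^\infty$, I would place them in $\mathbb P^\perp$ with no singular term. If full positivity fails, I would accept a weaker coercivity and use the version of the theorem allowing $\mathbb P^\perp$ components with bounded nonlinear coupling.

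\textbf{Conclusion.} Applying the Fuchsian existence theorem to initial data near the EdS values at $\tau_0<0$, with $\rho>0$, gives a global solution on $[\tau_0,0)$ with uniform $H^s$ bounds, $s>3/2+1$, and a limit $U(0)$. Energy decay estimates show that $\mathbb P U\to 0$, which yields $t^{2/3}\bar v^i\to 0$ and $t^{-2/3}\bar g_{0i}\to0$. The remaining components converge to limits, and constraint propagation together with the vanishing of spatial derivatives in the limit forces these limits $g_{00}^\infty, g_{ij}^\infty, \rho^\infty$ to be constants. Positivity of $\rho^\infty$ and definiteness of $g^\infty$ follow by smallness. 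Finally, the propagation of the gauge and constraints gives a genuine Einstein--Euler solution. Future geodesic completeness follows from the asymptotics via standard integration of the geodesic equations, since the metric is uniformly comparable to $-dt^2+t^{4/3}\delta$.
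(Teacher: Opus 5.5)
Your proposal has a genuine gap, and it sits exactly where you locate the role of $n>3$: that heuristic is inverted. Since $c_s^2\sim\bar\rho^{1/n}\sim t^{-2/n}$, the comoving sound horizon is $\int c_s\,a^{-1}\,dt\sim\int t^{-\frac{2}{3}-\frac{1}{n}}\,dt$, which is finite precisely when $n<3$.

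In your compactified time $s=-t^{-1/3}$ the characteristic speeds are $3|s|^{-2}$ for light and $3|s|^{-2+\frac{3}{n}}$ for sound, whatever choice of unknowns you make. No other compactification helps, because the integrals of these speeds are the conformal time and the comoving sound horizon, and both are infinite for $n>3$.

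Hence pressure is an integrable perturbation only for $n<3$, which is the regime where the paper expects instability. There the fluid is effectively dust, the Jeans/Sachs--Wolfe growth of the density contrast is physical, and no gauge-induced dissipation can remove it. For $n>3$ pressure is itself the stabilising mechanism. Your principal part is then more singular than $|s|^{-1}$ (or $B^0$ degenerates if you renormalise), which is outside the hypotheses of the Fuchsian global existence theorems.

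Moreover, the Fuchsian conclusion is convergence to a generally $x$-dependent limit $U(0,x)$. The constancy of $g^\infty_{00}$, $g^\infty_{ij}$, $\rho^\infty$ is homogenisation, i.e.\ decay of $\partial_x h$ and $\partial_x\rho$. This is exactly what has to be proved: constraint propagation does not give it, and the pointwise damping $s^{-1}\mathcal{B}\mathbb{P}U$ does not produce it on its own. In the paper (with $\tau\sim t^{1/3}$ conformal time) this decay comes from corrected energies that exploit equipartition over infinite conformal time:
\begin{itemize}
\item $E_w[u]+\frac{\eta}{\tau}\int|g^{00}|\,u\,\partial_\tau u$ for the metric;
\item the fluid correction $\frac{\eta}{\tau}\int\tau^{6/n}\chi\,(v^0)^2\xi^a\partial_a\varphi$ with $\eta=1-\frac{3}{2n}$, where $n>3$ is used precisely so that this correction is bounded by $\tau^{-(1-\frac{3}{n})}$ times the energy.
\end{itemize}

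Even granting a framework that tolerates this, your plan of choosing gauge source functions so that $\mathcal{B}$ is positive on the range of $\mathbb{P}$ (or else accepting weaker coercivity) does not address the two critical linear couplings at the core of the proof. Neither can be placed in a source $F=O(|s|^{-1+\delta})$.

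First, the averaged lapse and density converge to values other than $-1$ and $12$, so they are $\mathbb{P}^\perp$ quantities. Yet they drive $h^{ij}$ at the critical rate $\tau^{-2}$, which forces $h^{ij}_{\mathrm{av}}$ to grow like $\log\tau$. The paper removes this with the matter-dependent gauge source $\frac{3-\kappa}{\tau}h^{00}_{\mathrm{av}}+\frac{1}{4\tau}(\rho_{\mathrm{av}}-12)$. With it, only the combination $(\rho_{\mathrm{av}}-12)+12h^{00}_{\mathrm{av}}$ enters the $h^{ij}$ equations, and this combination decays by the averaged continuity equation.

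Second, $\partial_x\rho$ decays only like $\tau^{-(1-\frac{3}{2n}-\delta)}$, and feeding it through the metric equations and back into the Euler equations loses a factor $\tau^{3/n}$. The paper closes this loop in three steps:
\begin{itemize}
\item it commutes the metric wave equations with $\partial_{\mathbf v}=v^\mu\partial_\mu$ to exploit the improved decay of $\partial_{\mathbf v}\rho$;
\item it recovers the remaining derivatives from the elliptic identity for $H^{ab}\partial_a\partial_b$;
\item it takes the gauge parameter $\kappa$ large to absorb the remaining critical terms, which crucially involve only lapse, shift and fluid variables.
\end{itemize}
Without these ingredients the proposal does not establish the theorem.
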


See Theorem~\ref{sec:stability.thm:main} for a full statement of the above result.
\begin{remark}[Homogenisation of solutions]
    The asymptotics \eqref{sec:intro.eq:asymptotics} imply that the stable solutions $(\bar{g},\bar{\rho},\bar{v})$ from Theorem~\ref{main-theorem-rough} \emph{homogenise} over time, meaning they converge to some spatially homogeneous spacetime (and fluid) as $t \ra \infty$. It turns out that the system \eqref{intro:Einstein-Euler} admits a family of spatially homogeneous solutions with a quiet fluid,\footnote{By quiet fluid, we mean that for a given spacetime, the fluid 4-velocity $\bar{v}$, which we take to be unit timelike, is normal to the spatially homogeneous hypersurfaces which foliate that spacetime, i.e $\bar{v}^\mu = \delta_0^\mu$.} all of which have the same long-time growth/decay rates of their metric and fluid components as the Einstein--de Sitter universe. By applying a particular coordinate transformation, one can show that every member of this family of spatially homogeneous solutions can essentially be reduced to a single FLRW solution \eqref{intro:Einstein-Euler}; see already Appendix~\ref{app:FLRW}. Thus, our result implies that this \emph{family} of spatially homogeneous spacetimes is nonlinearly stable as solutions to \eqref{intro:Einstein-Euler}, moreover every spacetime in a neighbourhood of this family converges as $t \ra \infty$ to a (potentially different) member of the family.
\end{remark}
\subsection{Einstein--Euler dynamics in cosmological settings}
\subsubsection{Dark energy dominated evolution}\label{sec:intro-dark-energy}
The main mechanism by which boundedness of perturbations is achieved in cosmological stability results is dilution, which is strong when spacetime expansion is fast. Historically, this has been most well understood in the regime of the dark energy-dominated epoch.   Roughly speaking, a cosmological constant $\Lambda>0$ generates
expansion of the form $a(t) \sim e^{Ht}$ where $H = \sqrt{\Lambda/3}$, for all cases of the  curvature of the metric on $\Sigma$. Cosmological models with $\ddot{a}>0$ are called \textit{accelerated}, and in these models the feature (A) listed above is the singular driving force behind global stability. This was first identified in \cite{BrauerRendallReula:1994}, followed by rigorous work of \cite{Rin:deSitterENSFstab08, RodSpe:irrEulerdeSitterstab:13} and many others \cite{Rin:powerlaw:09,Spe:EulerdeSitterstab:12,  LubKro:deSitterradstab11, HadSpe:deSitterduststab:15,Oli:EulerdeSitterstab:16,  Friedrich:2017,  LeFlochWei-Chaplygin,  Gong_2024, FouMarOli:tilted:24, FouMarOli:tilted2:25}. In all these works, the fast spatial expansion creates damping terms in the Einstein--matter equations which
dilutes nonlinear perturbations and ensures future global existence for small perturbations around the homogeneous background solutions. 

\subsubsection{Fluid evolution on fixed matter-dominated spacetimes}
The radiation and dust dominated epochs of our universe, however, do not enjoy the  fast spacetime expansion detailed in Section \ref{sec:intro-dark-energy} above. The general Friedmann–Lema\^itre spacetime used to model these two epochs reads
\begin{equation}\label{sec:intro.eq:FL}
  \mathcal{M}_{\FL} =(0,\infty)\times\mathbb T^3, \quad  \bar{g}_{\FL} =-dt^2+t^{\frac{4}{3(1+K)}}\sum_{i,j=1}^3 \delta_{ij}dx^i dx^j, \quad  \bar{\rho}_{\EdS}= \rho_0 t^{-2}, \quad \Lambda = 0\,.
\end{equation}
For $K \in [0,1]$, \eqref{sec:intro.eq:FL} is a solution to the Einstein--Euler equations \eqref{intro:Einstein-Euler} with linear equation of state \eqref{intro-linear-eos}. The solutions are called \textit{decelerated} since $\ddot{a}<0$ and also \textit{matter-dominated} since the type of matter field, manifested in the choice of $K$, drives the precise expansion rate of the spacetime. 
 Note that in the pressureless dust setting when $K = 0$ the solution  \eqref{sec:intro.eq:FL} reduces to the EdS solution \eqref{sec:intro.eq:eds}. When modelling the radiation epoch, standard cosmological models take $K=1/3$ which is called a \textit{radiation fluid}. In this case, the expansion rate in \eqref{sec:intro.eq:FL} becomes the very slow rate $a(t)=t^{\frac{1}{2}}$.  

In this article, matter-domination plays a remarkably dominant role. Unlike nonlinearly stable Einstein solutions such as Minkowski or subextremal Kerr, the EdS solution is \textit{not} a vacuum solution. 
This in particular implies that stabilisation of the matter field is necessary to allow for stability of the spacetime. Consequently, and as first initiated by \cite{Spe:relEulerstab:13} and continued in \cite{WeiJDE2018, Fajman2021, Fajetal:linEulerstab:24}, a natural question is to determine conditions, such as on the expansion rate $a(t)$ or equation of state parameters, that either cause a fluid to stabilise on a fixed cosmological geometry or develop shocks in finite-time. 

For fluids with a linear equation of state \eqref{intro-linear-eos} and non-vanishing pressure $K>0$ on a fixed FLRW geometry, recent works by the second two authors and Maliborski, Ofner and Oliynyk indicate that precisely the EdS expansion rate $a(t)=t^{2/3}$ is the sharp threshold for stabilisation \cite{Fajetal:decelEuler1:25, Fajetal:decelEuler2:25, Fajetal:decelEuler3:25}. On this threshold and below, fluids with non-vanishing pressure do not homogenise (in fact they form shocks), while for faster expansion they homogenise i.e. the spatial gradient of the expansion-normalised  energy density converges to zero. 
Indeed the mechanism for fluid stabilisation presented in \cite{Fajetal:decelEuler2:25} relies on the non-trivial pressure of the fluid to couple together the dynamics of the fluid energy density and fluid velocity. On the other hand the equations \eqref{intro:Euler-rho} and \eqref{intro:Euler-mom} decouple in the  pressureless case and, while the dust fluid enjoys regular behaviour on fixed FLRW geometries with slow expansion rates $a(t)  \sim  t^{\frac12 +\delta}$ \cite{Spe:relEulerstab:13, Ju2025}, the expansion normalised fluid density does \emph{not} homogenise. 
Based on these results it seems that the dust-dominated EdS solution \eqref{sec:intro.eq:eds} expands too slowly to allow the fluid to homogenise and hence cannot be stable as a solution to the Einstein--Euler system. 
 
\subsubsection{Einstein--de Sitter and the polytropic equation of state}
In light of the previous subsection, we seek a stable matter model which will allow for a background solution that asymptotes to the EdS model. Unlike the pure dust matter used in \eqref{sec:intro.eq:eds} however, we require a pressure term to facilitate homogenisation of the fluid itself, which would generate a stronger stabilisation effect than for a linear equation of state. 
The observation of compatibility with dust asymptotics and the necessity of positive pressure led to a recent work by the second author and Marshall \cite{FajMar:polytropic:26}, via numerical studies in Gowdy symmetry, suggesting that EdS is nonlinearly stable as a solution to the Einstein--Euler system with a polytropic equation of state. Indeed, our main result confirms the conjecture of \cite{FajMar:polytropic:26}.

\subsection{Mathematical challenges and outline of the proof}
Several mathematical aspects in our proof of Theorem \ref{main-theorem-rough} require significant novel ideas to overcome challenges from all three features (A), (B), and (C). 
The inherent decelerated expansion appearing in our problem leads to non-integrable error terms which require careful analysis. Indeed all other cosmological Einstein-matter stability results to date require  expansion at a faster rate than that present in our theorem. Furthermore, the confinement of the spatial $\TT^3$ topology rules out any potential dispersive effect as a source of decay. However the \underline{most challenging obstacle} comes from the dynamics of the fluid and the metric, which are both \emph{highly and linearly coupled}. On the one hand the fluid, both at leading order and in its higher-order perturbations, drives the dynamics and expansion rate of the metric. On the other hand, the metric also drives the behaviour of the fluid at all orders. 
This coupling influences almost every part of our analysis, in particular the energy estimates for the metric and the fluid, and is a new effect not present in related works on slowly expanding spacetimes.
Additionally, the polytropic fluid has a more intricate decay structure than what is present in fluids with linear equation of state \cite{Fajetal:decelEuler2:25}. We must capture this structure, carefully balancing decay of the density and velocity, in order to control the nonlinearities that arise in our analysis.

Structurally, the proof of Theorem~\ref{main-theorem-rough} is ordered as follows: $(i)$ a transformation of the Einstein--Euler system into a wave-transport system for conformally rescaled evolution variables by using a sourced wave gauge; $(ii)$ ODE estimates for the averaged values of the evolution variables; $(iii)$ energy estimates for derivatives of the metric perturbations; and $(iv)$ energy estimates for derivatives of the fluid variables.
 We give an overview of these aspects below.   For notation used in the following paragraphs, we advise the reader to consult Section~\ref{subsec:notation}.

\subsubsection{The Einstein--Euler system in wave gauge}\label{subsubsec:gauged-eqs}
We first transform the evolution variables $(\bar{g},\bar{\rho},\bar{v})$ into expansion-normalised quantities $(g,\rho,v)$ by a conformal transformation involving the factor:
$$\tau = \int_{\TT^3} \frac{\de t}{t^{2/3}} =3t^{1/3}\,.$$
Using this transformation, the EdS metric from \eqref{sec:intro.eq:eds} takes the conformally flat form\footnote{Strictly speaking, we begin with an EdS metric of the form $\bar{g}_{\EdS} = -\de t^2 + (3t)^{4/3}\delta_{ij}\de x^i \de x^j$ and change to $\tau = (3t)^{1/3}$, which yields the expression \eqref{sec:intro.eq:eds-conf}. We adopt this convention for the remainder of the article.}
\begin{equation}\label{sec:intro.eq:eds-conf}
    \bar{g}_{\EdS} = \tau^4\big(-\de \tau^2 + \delta_{ij}\de x^i \de x^j\big)\,.
\end{equation}
In view of this, we conformally transform our solution to the Einstein--Euler equations \eqref{intro:Einstein-Euler} with $\Lambda = 0$ and  solve for the rescaled metric $g = \tau^{-4}\bar{g}$, which can be viewed as a perturbation of the Minkowski metric. We also carry out appropriate rescalings for the fluid variables $(\rho,v) = (\tau^{6}\bar{\rho},\tau^{2}\bar{v})$, and treat the rescaled quantities $(g,\rho,v)$ as the fundamental evolution variables.

Next, we break the diffeomorphism invariance of the equations \eqref{intro:Einstein-Euler} by imposing a \emph{sourced wave gauge}. 
This transforms the Einstein equations into a system of nonlinear damped wave equations for the metric components (see Section~\ref{sec:gauged-eqs} and Definition~\ref{def:gauge-fixed-system}). Schematically, these damped wave equations for $u \in \{g^{\mu\nu}: \mu, \nu = 0,1,2,3\}$ take the form
\begin{equation}\label{sec:intro.eq:wave-schem}
    \wh{\square}_g u + \frac{2\eta}{\tau}g^{00}\pa_\tau u = F_w\,,
\end{equation}
where $\wh{\square}_g = g^{\alpha\beta}\pa_\alpha \pa_\beta$ is the reduced wave operator, $\eta\in (0,\infty)$ is the \emph{damping parameter}, and $F_w$ represents the various inhomogeneous source terms which depend on the evolution variables.
Our choice of gauge means that the damping term, involving $\eta$ on the LHS of \eqref{sec:intro.eq:wave-schem}, takes different values for the metric lapse $g^{00}$, shift $g^{0i}$, and spatial components $g^{ij}$. This approach of sourced wave gauges draws from work in the setting of  accelerated cosmologies, in particular work of Ringstr\"om \cite{Rin:deSitterENSFstab08} followed by \cite{Rin:powerlaw:09, Spe:EulerdeSitterstab:12, RodSpe:irrEulerdeSitterstab:13, HadSpe:deSitterduststab:15, Oli:EulerdeSitterstab:16, HinVas:KdSexpandingstab:24}. 

The study of decelerated cosmologies using sourced wave gauges was only recently initiated by the first author \cite{Ber:decelerated:25}, in the context of the Einstein--scalar field system with exponential potential.  This has been extended in recent joint work by the first and third authors, contained in the thesis \cite{Ber:thesis:26}, to prove stability of decelerated solutions to the Einstein--Euler-scalar field system. However, for this model the scalar field is the mechanism driving the expansion of the solutions and the fluid does \emph{not} influence the metric to leading order.
In the present work, an \underline{essential insight} is to isolate the leading-order effect of the fluid on the metric in a way that captures the cosmological expansion and drives decay in the system. We achieve this by introducing a novel  matter-dependent gauge source term: see Section \ref{intro:bded-sec} and \eqref{sec:intro.eq:decaying-average} below.

Finally, under the sourced wave gauge the Euler equations \eqref{intro:Euler-rho}--\eqref{intro:Euler-mom} take the form
\begin{subequations}\label{sec:intro.eq:euler-schem}
    \begin{align}
        &v^\mu \pa_\mu \rho + \chi_1(\rho)\pa_\mu v^\mu = F_\rho\,,\label{sec:intro.eq:euler-schem1}\\
        &v^\mu \pa_\mu v^i + \frac{\chi_2(\rho)}{\tau^{\frac{6}{n}}}\big(v^\mu v^i + g^{\mu i}\big)\pa_\mu \rho + \frac{2}{\tau}v^0 v^i = F_v\,. \label{sec:intro.eq:euler-schem2}
    \end{align}
\end{subequations}
Here $\chi_1,\chi_2$ are certain functions of $\rho$ that converge to some positive constant, and $F_\rho$, $F_v$ are again source terms. The quantity $\frac{2}{\tau}v^0 v^i$ in \eqref{sec:intro.eq:euler-schem2} acts as a damping term. 

\subsubsection{Boundedness of the spatial averages}\label{intro:bded-sec}
An important feature in the study of evolution equations with compact spatial topology (say $\TT^3$ to be concrete) is that the dynamics of the evolution variables, which we schematically denote by $u$, decouple into two pieces:
\begin{equation}
    u_{\av} = \fint_{\TT^3}u(x)\de^3 x, \qquad u_{\osc} = u - u_\av\,.
\end{equation}
Here we have a spatially-independent \emph{averaged} contribution $u_\av$ and a spatially-dependent \emph{oscillatory remainder}.
The spatial averages are functions of the time coordinate $\tau$ only, and one can derive ODEs for the averages by integrating the evolution equations over the torus. The remainder term is controlled by the higher-order derivatives of the evolution variables by the Poincar\'e inequality.  This was observed for the relativistic Euler equations in \cite{Fajetal:decelEuler2:25}, for the Einstein-scalar field system in \cite{Ber:decelerated:25}, and for the vacuum Einstein equations in \cite{Choquet-Bruhat:2001qwz, HuneauStingoWyatt2025}. 

In Section~\ref{sec:avg-ests}  we control  the spatial averages of the metric and matter using the evolution equations, while the lapse $g^{00}$ and shift $g^{0i}$ metric averages are improved using the sourced wave gauge constraint. This follows the approach developed in \cite{Ber:decelerated:25}. 
However unlike this latter work, the matter field now drives the expansion rate, and so many of the averaged evolution variables do not decay to their background values. In particular, the averaged lapse $g^{00}_\av$  and fluid density $\rho_\av$ do not converge to the background solution (which turn out to be $-1$ and $12$, after appropriate rescalings), but to nearby values. This would be a significant issue for proving boundedness of \emph{all} spatial averages, as the fluid density and lapse appear as critical source terms in the equations for the spatial metric components $g^{ij}$, $i,j=1,2,3$. Using only boundedness of these quantities would produce logarithmic growth in the spatial metric perturbations $g^{ij}-\delta^{ij}$, which would lead to a breakdown of the proof. 

We overcome this issue by including a term in the gauge source functions (see Definition \ref{def:gauge-choice}) that depends explicitly on the spatial averages of the lapse and fluid density. Specifically, this modification involves the quantity
\begin{equation}\label{sec:intro.eq:decaying-average}
    (\rho_\av - 12) + 12(g_\av^{00}+1)\,.
\end{equation}
Significantly, we show in Proposition~\ref{sec:avg-ests.prop:avg-ests} that the Euler continuity equation \eqref{sec:intro.eq:euler-schem1}---modified by the wave gauge condition---implies decay for the combination appearing in \eqref{sec:intro.eq:decaying-average}. Moreover upon further inspection, the gauge-fixed equations for the spatial metric components $g^{ij}$ contain $g_\av^{00}$ and $\rho_\av$ in the exact combination of \eqref{sec:intro.eq:decaying-average}, and so these source terms are no longer critical. This observation is precisely what allows us to prove boundedness of $g^{ij}_\av$, and hence of all averaged quantities.

\subsubsection{Energy estimates for the metric wave equations}
Next we analyse the higher-order behaviour of the evolution variables $(g,\rho,v)$, captured by derivatives of these quantities, using $L^2$-based energies. We find that these terms decay more rapidly than the spatial averages (some of which do not decay at all), which results in a homogenisation of the solutions. The basic energy functional adapted to the damped wave equation \eqref{sec:intro.eq:wave-schem} for the metric components reads
\begin{equation}
    E_w[u] = \frac{1}{2}\int_{\TT^3} \big(|g^{00}|(\pa_\tau u)^2 + g^{ij}\pa_i u \pa_j u\big) \,,
\end{equation}
and this energy is equivalent to the norms $E_w \sim \|\pa_\tau u\|_{L^2(\TT^3)}^2 + \|\pa_x u\|_{L^2(\TT^3)}^2$. Thanks to the damped structure in \eqref{sec:intro.eq:wave-schem}, the energy obeys an estimate like
\begin{equation*}
    \pa_\tau E_w[u] \leq -\frac{2\eta}{\tau}\int_{\TT^3} |g^{00}|(\pa_\tau u)^2 + C\|\pa_\tau u\|_{L^2}\|F_w\|_{L^2} + \text{h.o.t}\,.
\end{equation*}
The term involving $-2\eta$ has a good sign, but does not control the full energy.
We rectify this by adding a \emph{correction term} to the energy $E_w$ (cf. \cite{Choquet-Bruhat:2001qwz}), defining
\begin{equation}\label{intro:cor-energy}
    E_{w,\eta}[u] = E_w[u] + \frac{\eta}{\tau}\int_{\TT^3}|g^{00}|u\pa_\tau u\,.
\end{equation}
We arrive at an energy estimate of the form
\begin{equation}\label{intro:energy-est-Ew}
    \pa_\tau E_{w,\eta}[u] \leq -\frac{2\eta}{\tau}E_{w,\eta}[u] + C\|\pa_\tau u\|_{L^2}\|F_w\|_{L^2} + \text{h.o.t}\,.
\end{equation}
The term involving $-2\eta $ now does control the whole energy. 
We apply this energy in Section \ref{sec:energy-met} to derivatives of the metric components, which allows us to apply the Poincar\'e inequality in order to show that the correction term is indeed lower-order. In order to prove decay of the corrected energy though, we must control $F_w$. This in fact requires a subtle commutator argument, which we explain below in Section~\ref{subsubsec:coupling-overview} after first estimating the fluid variables.

\subsubsection{Sharp energy estimates for the fluid variables}\label{subsubsec:fluid-energy-overview}
Next, we carry out $L^2$-based energy estimates for the fluid variables  to establish decay of derivatives of $(\rho,v)$. The basic fluid energy $E_f$ (see Definition~\ref{def:base-fluid-energy}) behaves like
\begin{equation*}
    E_f \sim \int_{\TT^3} (\pa_x \rho)^2 + \tau^{\frac{6}{n}}\int_{\TT^3} (\pa_x v)^2.
\end{equation*}
The extra weight $\tau^{\frac{6}{n}}$ compensates for the $\tau^{-\frac{6}{n}}$ weight present in the Euler momentum equations (see \eqref{sec:intro.eq:euler-schem2}). These weights are a defining feature of the polytropic equation of state, and they necessitate a delicate analysis that goes considerably beyond what is required for the case of a linear equation of state \cite{Fajetal:decelEuler2:25}. One can show that $E_f$ satisfies an energy estimate like
\begin{equation*}
    \pa_\tau E_f \leq -\frac{2-\frac{3}{n}}{\tau}\int_{\TT^3} \tau^{\frac{6}{n}}(\pa_x v)^2 + C\|\pa_x \rho\|_{L^2}\|F_\rho\|_{L^2} + C\tau^{\frac{6}{n}}\|\pa_x v\|_{L^2}\|F_v\|_{L^2} + \text{h.o.t}\,.
\end{equation*}
Once again, there is a bulk term with a good sign in this estimate, but it does not control the entire energy. We introduce a suitable correction term (see \eqref{sec:energy-fluid.eq:correction-def}). The corrected fluid energy $E_{f,\eta;1}$ with damping parameter  $\eta = 1-\frac{3}{2n}$ obeys an estimate of the form
\begin{equation}\label{intro:fluid-mod-energyest}
    \pa_\tau E_{f,\eta;1} \leq -\frac{2-\frac{3}{n}}{\tau}E_{f,\eta;1} + C\|\pa_x \rho\|_{L^2}\|F_\rho\|_{L^2} + C\tau^{\frac{6}{n}}\|\pa_x v\|_{L^2}\|F_v\|_{L^2} + \text{h.o.t}\,.
\end{equation}
Provided the source terms decay sufficiently fast, we can then prove decay  like 
\begin{equation}\label{intro:rhov-decay}
    |\pa_x \rho| \lesa \tau^{-(1-\frac{3}{2n}-\delta)}\,, \quad |\pa_x v| \lesa \tau^{-(1+\frac{3}{2n}-\delta)}\,, \quad 0<\delta\ll 1\,.
\end{equation}
While these decay rates agree with the numerical rates found  in \cite{FajMar:polytropic:26}, an immediate issue may arise from the non-integrable rate for the fluid density $\pa_x \rho$. This makes possible the dangerous situation of error terms appearing in the energy estimates that are not integrable in time, and so they cannot be controlled by a Gr\"onwall-type inequality.

To prevent such a problematic situation arising, we first observe that many of the error terms involve products of both $\rho$ and $v$. Whenever this occurs, we can ``absorb'' the bad decay of the $\rho$ factor into the good decay of the $v$ factor (i.e. the additional weight in the velocity part of the energy). More significantly though, we prove in Proposition~\ref{sec:stability.prop:fluid-time-deriv} that the  fluid variables $\pav \rho$, $\pav v^i$, where $\pav := v^\mu \pa_\mu$  is the \textit{material derivative}, actually exhibit improved decay compared to \eqref{intro:rhov-decay}. In particular, we find\footnote{In fact $\pav v^i$ also decays faster than its spatial derivatives, but this is not as essential to the analysis.} 
\begin{equation}
    \label{intro:pav-rho-decay}
    |\pav \rho| \lesa \tau^{-(1+\frac{3}{2n})}
\end{equation} 
and this  \emph{is} integrable in time.
This allows us to estimate all remaining dangerous error terms in the energy estimates \eqref{intro:fluid-mod-energyest}. 

\subsubsection{Linear coupling of the metric and fluid equations}\label{subsubsec:coupling-overview}
We return now to the issue of the strong coupling between the metric and fluid. For the higher derivatives of $(g,\rho,v)$, this coupling manifests in the inhomogeneous source terms schematically represented in \eqref{sec:intro.eq:wave-schem}--\eqref{sec:intro.eq:euler-schem} by $F_{w}$, $F_\rho$, $F_v$.
These source terms produce two major and related issues. The first issue is that the slowest-decaying (linear) terms present in the source terms (see $\mathcal{L}_h, \mathcal{L}_\rho, \mathcal{L}_v$ in Proposition~\ref{sec:gauged-eqs.prop:gauged-eqs}) do not have a straightforward hierarchy that can be exploited. There are slowly-decaying terms that depend on $(\rho,v)$ in the metric equation source terms, and conversely there are slowly-decaying terms that depend on $g$ in the Euler equation source terms. Even worse, the source terms decay at too slow a rate to close the energy estimates. For example, if the rescaled fluid density decays as in \eqref{intro:rhov-decay}, then certain $\rho$-dependent source terms for the metric equations imply $|\pa_x g| \lesa \tau^{-(2-\frac{3}{2n}-\delta)}$. This in turn would imply decay of the fluid density like $|\pa_x \rho| \lesa \tau^{-(1-\frac{9}{2n}-\delta)}$, which is a loss in decay of $O(\tau^{\frac{3}{n}})$. 

As mentioned above, the material derivatives of the fluid variables \eqref{intro:pav-rho-decay} have better decay than the spatial derivatives \eqref{intro:rhov-decay}. To take advantage of this improved decay, we employ a vectorfield commutator argument for the metric components, commuting $\pav$ through the wave equations \eqref{sec:intro.eq:wave-schem} to obtain
\begin{equation}
    \wh{\square}_g (\pav u) + \frac{2\eta}{\tau}g^{00}\pa_\tau (\pav  u) = \pav F_w + \text{h.o.t}\,.
\end{equation}
Crucially, the quantities $\pav F_w$ now exhibit better decay than $F_w$ or $\pa_x F_w$. This gives improved decay for $\pav g$ and one can recover improved decay for all derivatives of the metric by an elliptic estimate (see Lemma~\ref{sec:energy-met.lem:elliptic}). An analogous approach has been used to recover top-order regularity for the Einstein-dust system in \cite{HadSpe:deSitterduststab:15,FajOfnWya:EEstab1:24}. However, our method here to derive improved \emph{decay} for material derivatives of the fluid variables is, to our knowledge, new in the Einstein--Euler setting.

Once we commute the wave equations for $g^{\mu\nu}$ with the material derivative, we are left with source terms in both metric equations and fluid equations that decay at the same critical rate as the damping terms (i.e. like the term involving $-2\eta$ in \eqref{intro:energy-est-Ew}). These additional terms, however, have ``bad signs'' and ``critical decay'' from the perspective of the energy estimates.

Our insight here is to use some remaining gauge freedom to pick gauge source functions that amplify the damping coefficients and thus can absorb these critically decaying error terms.  We fix gauge source functions with terms that produce damping terms in the wave equations for the lapse and shift, dependent on a fixed parameter $\kappa$; see Definition~\ref{def:gauge-choice}. By fixing $\kappa$ sufficiently large, we are able to make the damping terms in the lapse and shift equations as large as we like. Unfortunately, we do not have the freedom to choose arbitrarily large damping terms in the wave equations for the spatial metric components. Thus one final--and necessary--observation  that we make in Section~\ref{sec:boot-imp} is that these critical source terms depend only on the lapse, shift, and fluid variables. 

\begin{remark}[The polytropic index $n$ and constant $C$]\label{sec:intro.rmk:polytropic-index}
    In our main theorem, we  restrict the polytropic index $n$ in the equation of state \eqref{intro-poly-eos} to $n > 3$. This restriction is invoked in our construction of the corrected energy for the fluid in Section~\ref{sec:energy-fluid}; see specifically Definition~\ref{def:corrected-base-fluid-energy} and the computation that follows. In agreement with numerical evidence from \cite{FajMar:polytropic:26}, we do not expect homogenisation of the fluid to occur when $0 < n < 3$. Fluid homogenisation is likely a necessary mechanism for the nonlinear stability of slowly expanding spacetimes to hold, and so we conjecture that solutions to Einstein--Euler with polytropic index $0 < n < 3$ are \emph{unstable}.  The question of stability at the critical index $n = 3$ is at present unclear and invites further investigation. We also note that different restrictions on the polytropic index arise in the setting of exponentially expanding solutions to the Einstein--Euler system \eqref{intro:Einstein-Euler} with polytropic equation of state and, importantly, with $\Lambda > 0$ \cite{LiuWei2021}. We believe this is a natural consequence of the marked difference between decelerated and exponentially accelerated geometries.

    On the other hand, Theorem~\ref{main-theorem-rough} holds for all positive values of $C$, the polytropic constant from \eqref{eq:polytropic-eos}. The value of $C$ does not play a meaningful role in our analysis.
\end{remark}

\subsection{Context in general relativity and cosmology}
\subsubsection{Context within general relativity}
The list of known nonlinearly stable $(1+3)-$ dimensional spacetimes for the Einstein equations with vanishing cosmological constant, either in vacuum or with a physically realistic matter model,  consists of Minkowski spacetime \cite{ChrKlai:minkowskistab:93}, the Kerr family \cite{KlaiSzef:Kerr:23,dafermos2021nonlinearstabilityschwarzschildfamily,hintz2026nonlinearstabilitysubextremalkerr} and the Milne universe \cite{AndMon:Milnestab:11}. The result presented in this paper complements that list with the first nonlinear stability result for a physically realistic matter-dominated cosmological spacetime. Moreover, among the stable models with compact spatial topology, it is the one with the slowest expansion rate. 

Our methodological approach is sufficiently sharp to verify the numerical decay rates observed in \cite{FajMar:polytropic:26}. Additionally, our proof is sufficiently robust to include additional matter models beside the polytropic fluid, which are compatible with the asymptotic behaviour of our solutions, or to consider charged matter in the presence of Maxwell fields. This paves the way to establish stability of the EdS model in the presence of multiple co-existing matter fields.

The results in this paper also demonstrate the stability mechanism for a spacetime with close coupling to a homogenising matter field that drives geometric properties of the spacetime. It hence serves as a blueprint to potentially replace the dominating matter type, here polytropic fluids, by another model with the same features. For example, one could consider a relativistic Boltzmann model with collision kernel compatible with the EdS expansion rate (see recent work that poses the Boltzmann equation on fixed FLRW backgrounds \cite{Strain:2026fka}). In fact, for collisionless massless Vlasov matter, Taylor has shown that in the presence of dispersion, i.e. with non-compact spatial slices, a radiation dominated model is stable for spherically symmetric perturbations \cite{Tay:decelFLRWstab:24}.

Finally, we remark that our methods indicate that insufficient homogenisation of the spatial metric would prevent homogenisation in the fluid. This is not only relevant from a geometric perspective but apparently a necessary feature to enable stability in the first place. The present result confirms that within the class of $\Lambda=0$ stable cosmological spacetimes, the spatial metric asymptotically homogenises towards the future. This behaviour---which occurs as well for the Milne model, where the  expansion-normalised spatial metric converges to one with constant negative curvature---is different from the class of solutions undergoing accelerated expansion. In the case of accelerated expansion, timelike and null geodesics eventually become causally disconnected from each other. This causes perturbations to ``freeze in place'' over time, preventing homogenisation from occurring \cite{Fri:deSitterstab:86, Rin:deSitterENSFstab08}.

\subsubsection{Context within cosmology}\label{sec:cosmo-jeans}
The heuristic consensus in theoretical cosmology is that the EdS model is unstable. Standard linear perturbation theory applied to the Einstein--dust equations finds growth of certain fluid density perturbations for small initial data. This corresponds to the linear instability of the EdS model within the set of solutions to the Einstein-dust system as discovered by Sachs and Wolfe in their seminal paper \cite{SaWo67} by a Jeans-type analysis. A recent numerical work by Marshall also indicates there is an instability for nonlinear perturbations of the Friedmann–Lema\^itre spacetimes \eqref{sec:intro.eq:FL}, including the case of EdS \cite{Mar:instabilityFLRW:25}.

The growth of matter density perturbations on an EdS background is a standard explanation for structure formation during cosmological evolution \cite{Weinberg:2008zzc}. 
While the instability explains structure formation, it seemingly contradicts large-scale homogeneity of the universe, which creates a significant contradiction from the nonlinear perspective.
A direct consequence of our results in this paper is the justification of EdS as a model for the large scale structure of the matter-dominated epoch of the Universe. Indeed the polytropic fluid might be considered to asymptotically approach a pressureless dark matter component in this epoch which drives the large scale homogenisation. The absence of growing density perturbations in our solutions does not contradict that structure formation occurred during this period of cosmological evolution. Coupling additional matter models, which form structures possibly on smaller scales, is a natural generalisation of our result  to address the structure formation process. 

\subsection{Outline of the paper}
In Section~\ref{sec:prelim} we present several preliminary equations and notational conventions. Then, in Section~\ref{sec:gauged-eqs} we derive the Einstein--Euler system in a sourced wave gauge, and address the local well-posedness theory of the resulting gauge-fixed system. In Section~\ref{sec:stability} we state a full version of the main stability theorem, and present the bootstrap argument.
In Section~\ref{sec:avg-ests}, we derive the ODE system satisfied by the spatially averaged evolution variables. Then, in Section~\ref{sec:energy-met} we establish higher-order energy estimates for the metric perturbations and subsequently in Section~\ref{sec:energy-fluid} we do the same for higher derivatives of the fluid variables. Finally, in Section~\ref{sec:boot-imp} we improve the bootstrap assumptions on the higher-order derivatives of the evolution variables, completing the proof of the main theorem. 

\subsection{Acknowledgements}
The authors thank Elliot Marshall for helpful discussions. 
This research was funded in part by the Austrian Science Fund (FWF)
projects \textit{Matter-dominated cosmology}\newline (10.55776/PAT7614324) and \textit{Dynamics of matter in the
decelerated epoch} (10.55776/ PAT1953025).


\section{Preliminaries}\label{sec:prelim}
\subsection{Field equations and background solutions}
\begin{definition}[Einstein field equations with a polytropic fluid] \label{def:background-asymp-eds}
The Einstein field equations for a Lorenztian manifold $(\mathcal M, \bar{g})$ read
\begin{equation}\label{sec:intro.eq:einstein}
    R_{\mu\nu}[\bar{g}] - \frac{1}{2}R\bar{g}_{\mu\nu} = T_{\mu\nu}, \qquad T_{\mu\nu} = (\bar{\rho}+\bar{p})\bar{v}_\mu \bar{v}_\nu + \bar{p}\bar{g}_{\mu\nu}.
\end{equation}
Using  
$\bar{\nabla}$ the covariant derivative for $\bar{g}$, the relativistic Euler equations are determined by the conservation of energy-momentum $\bar{\nabla}^\mu T_{\mu\nu} = 0$.  
By projecting these equations onto both $\bar{v}$ and the $\bar{g}$-orthogonal complement of $\bar{v}$, it is well known that the Euler equations can be written as
\begin{subequations}
    \label{sec:intro.eq:euler}
        \begin{align}
        \bar{v}^\mu \bar{\nabla}_\mu \bar{\rho} + (\bar{\rho} + \bar{p})\bar{\nabla}_\mu \bar{v}^\mu &= 0,\\
        (\bar{\rho}+\bar{p})\bar{v}^\mu \bar{\nabla}_\mu \bar{v}^\nu + \bar{\Pi}^{\mu\nu}\bar{\nabla}_\mu \bar{p} &=0,
    \end{align}
\end{subequations}
    where $\bar{\Pi}^{\mu\nu}:=\bar{v}^\mu \bar{v}^\nu+\bar{g}^{\mu\nu}$ is the projection operator onto the $\bar{g}$-orthogonal complement of $\bar{v}$. 
    For $C, n >0$ constants, the polytropic equation of state reads
\begin{equation}\label{eq:polytropic-eos}
    \bar{p}(\bar{\rho}) = C \bar{\rho}^{1+\frac{1}{n}}.
\end{equation} 
\end{definition}

\begin{definition}[Homogeneous background solution]\label{def:aeds}
 The Einstein--Euler system \eqref{intro:Einstein-Euler} with polytropic equation of state \eqref{eq:polytropic-eos} admits a spatially homogeneous solution $(\mc{M},\bar{g}_n,\bar{\rho}_n,\bar{v}_n)$, which in coordinates $(t,x^1,x^2,x^3)$ reads
\begin{equation}
    \mc{M} = (0,\infty)\times \TT^3,\qquad \bar{g}_n = -\de t^2 + a^2(t)\sum_{i=1}^3 (\de x^i)^2,\qquad \bar{\rho}_n =b(t),\qquad \bar{v}_n^\mu = \delta_0^\mu,
\end{equation}
where the functions $(a(t),b(t))$ satisfy the ODEs
\begin{align}
    3\Big(\frac{\dot{a}(t)}{a(t)}\Big)^2 - b(t) = 0,\qquad
    \frac{\dot{b}(t)}{b(t)\big(1+b^{\frac{1}{n}}(t)\big)} + 3\frac{\dot{a}(t)}{a(t)} =0.
\end{align}
Asymptotic analysis given in Appendix \ref{app:FLRW} reveals that $a(t) \sim t^{2/3}$, $b(t) \sim t^{-2}$ as $t \ra \infty$.  
Alternatively, we can express this solution in coordinates $(\tau,x)$, where $\tau$ is the conformal time coordinate
\begin{equation*}
    \tau(t) = \int_{\TT^3} \frac{\de t}{a(t)} \sim t^{1/3}.
\end{equation*}
The spatially homogeneous solution then takes the form
\begin{equation}
    \mc{M} = (0,\infty)\times \TT^3,\qquad \bar{g}_n = a^2(\tau)\Big(-\de \tau^2 + \sum_{i=1}^3 (\de x^i)^2\Big),\qquad \bar{\rho}_n =b(\tau),\qquad \bar{v}_n^\mu = \frac{1}{a(\tau)}\delta_0^\mu,
\end{equation}
where $a(\tau) \sim \tau^2$, $b(\tau) \sim \tau^{-6}$ as $\tau \ra \infty$.
\end{definition}

\subsection{Notation}\label{subsec:notation}
We list some relevant notational conventions used in this article.
\subsubsection{Indices}We use Roman letters to denote spatial indices $a, b, i, j \cdots \in \{1, 2, 3 \}$, and Greek letters to denote spacetime indices $\alpha, \beta, \cdots \in \{0, 1, 2, 3\}$. The Einstein summation convention for repeated upper and lower indices is adopted. Unless stated otherwise, we raise and lower indices with respect to the conformal metric $g$ introduced in Section~\ref{sec:gauged-eqs}.

\subsubsection{Differential operators}
For a given coordinate chart $(x^0 = \tau,x^1,x^2,x^3)$, we denote the coordinate vector-fields
\begin{equation}
    \pa_\mu = \frac{\pa}{\pa x^\mu},
\end{equation}
and we often write $\pa_0 = \pa_\tau$ for the time derivative. We perform all computations with respect to the frame $\{\pa_\mu\}_{\mu=0,1,2,3}$.
Given a multi-index $I = (I_1,I_2,I_3) \in \NN^3$, the operator $\pa_x^I$ denotes the composition of spatial derivatives
\begin{equation}
    \pa_x^I = \pa_1^{I_1}\cdot \pa_2^{I_2}\cdot \pa_3^{I_3}.
\end{equation}
For a spacetime function $u$, we will use the schematic notation $\pa u$ to denote any/all spacetime derivatives of $u$, and $\pa_x u$ to denote any/all purely spatial derivatives of $u$.

In our analysis of metric perturbations, we use both the reduced wave operator $\wh{\square}_g$ and the damped wave operator $P_\eta$, for constant $\eta>0$, which are defined by
\begin{equation}
    \wh{\square}_g:= g^{\alpha\beta}\pa_\alpha\pa_\beta, \quad P_\eta := \wh{\square}_g + \frac{2\eta}{\tau}g^{00}\pa_\tau \,.
\end{equation}

\subsubsection{Norms and spatial averages}
Let $u$ be a function on $\TT^3$. We often write the integral of $u$ over $\TT^3$ as
\begin{equation}
    \int_{\TT^3} u := \int_{\TT^3}u(x)\,\de^3 x,
\end{equation}
omitting the volume form $\de^3 x = \de x^1 \wedge \de x^2 \wedge \de x^3$. 
For $u \in L^\infty(\TT^3)$, we denote the spatial average of $u$ over $\TT^3$ by
\begin{equation}
    u_{\av} := \fint_{\TT^3}u(x)\de^3 x. 
\end{equation}

The Sobolev space $W^{N,p}(\mathbb{T}^3)$ with $N \geq 0$ is defined as the completion of the function space $C^\infty(\mathbb{T}^3,\mathbb{R})$ with respect to the norm
\begin{equation}
\|u\|_{W^{N,p}} = \begin{cases} \begin{displaystyle}\biggl( \sum_{0\leq |\alpha|\leq N} \int_{\mathbb{T}^3} |\pa_x^{\alpha} u|^p\biggl)^{\frac{1}{p}}  \end{displaystyle} & \text{if $1\leq p < \infty $}\,, \\
 \begin{displaystyle} \max_{0\leq \ell \leq N}\sup_{x\in \mathbb{T}^3}|\pa_x^{\ell} u(x)|  \end{displaystyle} & \text{if $p=\infty$}\,.
\end{cases}
\end{equation}
We employ the standard notation $L^p=L^p(\mathbb{T}^3)=W^{0,p}(\mathbb{T}^3)$, and $H^N = H^N(\TT^3) = W^{N,2}(\TT^3)$, although note that we introduce notation for the Sobolev norm of the spatial components of the fluid velocity  
\[\|v\|_{H^N} := \sum_{i=1}^3 \|v^i\|_{H^N}\].

\subsubsection*{Constants and inequalities}
We write $A \lesssim B$ to indicate $A\leq C B$, whenever $C$ is a constant that depends only on only the regularity index $N$ and parameters of the Einstein--Euler system (like the polytropic index $n$), and not on evolution variables or coordinates. We will also write $A\sim B$ when both $A\lesa B$ and $B \lesa A$. Some constants which play a significant role in our analysis will be labelled uniquely, specifically in Proposition~\ref{sec:avg-ests.prop:avg-ests} and Theorem~\ref{sec:boot-imp.thm:higher-ests}.

\subsection{Function space inequalities}
We recall the following standard results that can be found in \cite{Heb:sob:00}. 
\begin{lemma}[Sobolev embedding]
Let $N \in \mathbb{N}$, $N \geq 2$. There exists a constant $C>0$ such that if
$u \in H^N(\mathbb{T}^3)$, then $u \in W^{N-2,\infty}(\mathbb{T}^3)$
and
\begin{equation}
\|u\|_{W^{N-2,\infty}}
\le C\|u\|_{H^N}.
\end{equation}
\end{lemma}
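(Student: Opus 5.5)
The plan is to reduce the statement to the single case $N=2$ and then prove that case directly with Fourier series on $\TT^3=[-\pi,\pi]^3$.

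\emph{Reduction to $N=2$.} For a multi-index $\alpha$ with $|\alpha|\le N-2$, the function $\pa_x^\alpha u$ lies in $H^2$, and
\begin{equation*}
\|\pa_x^\alpha u\|_{H^2}\le \|u\|_{H^N}.
\end{equation*}
So once we know $\|w\|_{L^\infty}\le C\|w\|_{H^2}$ for every $w\in H^2(\TT^3)$, we can apply it to $w=\pa_x^\alpha u$ and take the maximum over $|\alpha|\le N-2$. This gives $\|u\|_{W^{N-2,\infty}}\le C\|u\|_{H^N}$, with a constant depending only on $N$.

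\emph{The case $N=2$.} By density it is enough to prove the estimate for $u\in C^\infty(\TT^3)$ and then pass to the completion.

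1. Expand $u(x)=\sum_{k\in\mathbb{Z}^3}\wh u(k)e^{ik\cdot x}$.
2. By Parseval, using $\sum_{|\alpha|\le 2}|(ik)^\alpha|^2\gtrsim(1+|k|^2)^2$, we have
\begin{equation*}
\sum_k(1+|k|^2)^2|\wh u(k)|^2\lesssim\|u\|_{H^2}^2.
\end{equation*}
3. Bound $|u(x)|\le\sum_k|\wh u(k)|$ and apply Cauchy--Schwarz with weights $(1+|k|^2)^{\pm1}$:
\begin{equation*}
\sum_k|\wh u(k)|\le\Big(\sum_{k\in\mathbb{Z}^3}(1+|k|^2)^{-2}\Big)^{1/2}\Big(\sum_k(1+|k|^2)^2|\wh u(k)|^2\Big)^{1/2}.
\end{equation*}
4. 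The first factor is finite because $(1+|k|^2)^{-2}\sim|k|^{-4}$ and $4>3=\dim$. One can see this by comparing with $\int_{\RR^3}(1+|\xi|^2)^{-2}\,\de\xi<\infty$.

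This gives $\|u\|_{L^\infty}\le C\|u\|_{H^2}$.

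\emph{Extension to all of $H^2$.} The same estimate shows that the Fourier series of any $u\in H^2$ converges absolutely and uniformly. Hence $u$ has a continuous representative, and the bound passes to the limit of smooth approximants. This justifies the density step.

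\emph{Main obstacle.} The only real point is the summability of $(1+|k|^2)^{-2}$ over $\mathbb{Z}^3$. This is exactly where the threshold $2>3/2$ and the restriction $N\ge 2$ enter, and it is immediate by an integral comparison. Everything else is bookkeeping.
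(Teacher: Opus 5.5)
Your argument is correct and complete. The paper gives no proof of this lemma; it quotes it as a standard fact and refers to Hebey's monograph. There the embedding $H^N\subset C^{m}$ is proved on general compact Riemannian manifolds whenever $\frac12<\frac{N-m}{3}$, i.e.\ $m\le N-2$. The standard proof localises to coordinate charts with a partition of unity and then applies the Euclidean Sobolev--Morrey inequalities.

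Your route is genuinely different and uses the flat structure of $\mathbb{T}^3=[-\pi,\pi]^3$. You reduce to $N=2$ by applying the estimate to $\partial_x^\alpha u$; this is legitimate because $\beta\mapsto\alpha+\beta$ is injective, so $\|\partial_x^\alpha u\|_{H^2}\le\|u\|_{H^N}$. You then combine Parseval with Cauchy--Schwarz against the summable weight $(1+|k|^2)^{-2}$. The lower bound $\sum_{|\alpha|\le 2}|k^\alpha|^2\ge\frac12(1+|k|^2)^2$ that you need does hold. Your density step is also handled correctly, and it matches the paper's definition of $W^{N,p}$ as the completion of $C^\infty(\mathbb{T}^3)$.

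Your approach gives a short, self-contained proof with an explicit constant. It even gives slightly more: absolute convergence of the Fourier series, which supplies the continuous representative directly. The cited approach is more general. It covers arbitrary compact manifolds and $L^p$-based spaces $W^{k,p}$ with $p\neq 2$, where Parseval is not available. For the $L^2$-based Sobolev spaces on the flat torus used throughout the paper, your argument is fully sufficient.
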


\begin{lemma}[Sobolev product estimate]
Let $N \in \mathbb{N}$, $N \geq 2$. Then there exists a constant $C>0$
such that if $u,v \in H^N(\mathbb{T}^3)$, the product
$u\cdot v \in H^N(\mathbb{T}^3)$, and
\begin{equation}
\|u\cdot v\|_{H^N}
\le C\|u\|_{H^N}\|v\|_{H^N}.
\end{equation}
\end{lemma}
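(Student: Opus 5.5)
The two lemmas just stated, Sobolev embedding and the Sobolev product estimate on $\TT^3$, are classical, and I would prove both with Fourier series on the torus. We work with periodic functions on $\TT^3=[-\pi,\pi]^3$ and expand
\[
u=\sum_{k\in\mathbb{Z}^3}\hat u(k)e^{ik\cdot x},
\qquad
\|u\|_{H^N}^2\sim\sum_k\langle k\rangle^{2N}|\hat u(k)|^2,
\qquad
\langle k\rangle=(1+|k|^2)^{1/2}.
\]
The equivalence of norms is Plancherel combined with the elementary bound $\sum_{|\alpha|\le N}|k^\alpha|^2\sim\langle k\rangle^{2N}$.

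\emph{Embedding.} First treat $N=2$. By Cauchy--Schwarz,
\[
\sup_x|u(x)|\le\sum_k|\hat u(k)|\le\Big(\sum_k\langle k\rangle^{-4}\Big)^{1/2}\Big(\sum_k\langle k\rangle^{4}|\hat u(k)|^2\Big)^{1/2}.
\]
The first factor is finite because $4>3$ is the dimension, so $\sum_k\langle k\rangle^{-4}<\infty$. The Fourier series also converges absolutely and uniformly, so $u$ is continuous. For general $N\ge2$, apply the $N=2$ case to $\pa_x^\alpha u$ with $|\alpha|\le N-2$. Each such derivative lies in $H^2$ with $\|\pa_x^\alpha u\|_{H^2}\le\|u\|_{H^N}$, which gives $\|u\|_{W^{N-2,\infty}}\lesa\|u\|_{H^N}$. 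Density of $C^\infty$ in $H^N$ then extends the bound to all of $H^N$.

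\emph{Product estimate.} I would use the Leibniz rule: for $|\alpha|\le N$, $\pa^\alpha(uv)$ is a sum of terms $\pa^\beta u\,\pa^\gamma v$ with $|\beta|+|\gamma|=|\alpha|$. Split according to which factor carries fewer derivatives; by symmetry we may take $|\beta|\le|\gamma|$, so that $|\beta|\le N/2$.

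If $|\beta|\le N-2$, put $\pa^\beta u$ in $L^\infty$ and $\pa^\gamma v$ in $L^2$:
\[
\|\pa^\beta u\,\pa^\gamma v\|_{L^2}\le\|\pa^\beta u\|_{L^\infty}\|\pa^\gamma v\|_{L^2}\lesa\|u\|_{H^N}\|v\|_{H^N},
\]
using the embedding just proved. Since $|\beta|\le N/2$, this case covers everything except when $N/2>N-2$, i.e. $N\le3$.

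For $N=2$ or $3$, the leftover terms are those with $|\beta|=|\gamma|=1$ when $N=2$, and with $(|\beta|,|\gamma|)=(1,2)$ or $(2,1)$ when $N=3$. These are handled by Hölder and the $H^1\subset L^4$ (indeed $L^6$) embedding on $\TT^3$:
\[
\|\pa u\,\pa v\|_{L^2}\le\|\pa u\|_{L^4}\|\pa v\|_{L^4}\lesa\|u\|_{H^2}\|v\|_{H^2}.
\]
The embedding $H^1\subset L^4$ follows from the Fourier-multiplier (Hardy--Littlewood--Sobolev) bound, or more elementarily from Gagliardo--Nirenberg. The $N=3$ terms are handled the same way with one extra derivative on one factor, e.g. $\|\pa u\,\pa^2 v\|_{L^2}\le\|\pa u\|_{L^\infty}\|\pa^2 v\|_{L^2}$ with $H^3\subset W^{1,\infty}$.

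Summing over $\alpha$ gives $\|uv\|_{H^N}\lesa\|u\|_{H^N}\|v\|_{H^N}$, and this implies $uv\in H^N$.

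The main obstacle is the low-regularity bookkeeping for $N=2,3$. There the split into an $L^\infty$ factor and an $L^2$ factor fails, and one needs the $L^4$ Sobolev embedding. Everything else is routine. As the paper indicates, one could instead simply cite \cite{Heb:sob:00}.
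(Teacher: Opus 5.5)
Your argument is correct, but it is not a reconstruction of anything in the paper. The paper gives no proof of this lemma; it quotes it as a standard fact and cites \cite{Heb:sob:00}. Your Fourier-series proof is therefore a self-contained substitute. Its structure is the standard one and works: expand $\pa^\alpha(uv)$ by Leibniz, put the factor carrying fewer derivatives in $L^\infty$ via $H^N\subset W^{N-2,\infty}$, and treat the low-regularity case with $H^1\subset L^4$.

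There is one bookkeeping slip. Because $|\beta|$ is an integer, $|\beta|\le|\gamma|$ gives $|\beta|\le\lfloor N/2\rfloor$, and $\lfloor N/2\rfloor\le N-2$ already holds for $N=3$. So only $N=2$ is exceptional. The $N=3$ pair $(1,2)$ you list is covered by your first case, as your own treatment via $H^3\subset W^{1,\infty}$ shows, and $(2,1)$ is its mirror image.

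You should also say explicitly how you get $uv\in H^N$ for general $u,v\in H^N$. Either justify the Leibniz rule for weak derivatives, or prove the bound for smooth functions and pass to the limit. In the latter route the bilinear estimate makes $u_nv_n$ Cauchy in $H^N$, and $u_nv_n\to uv$ uniformly by the embedding.

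If you want to stay entirely inside your Fourier framework, you can drop the $L^4$ step. From $\langle k\rangle^N\lesssim\langle k-l\rangle^N+\langle l\rangle^N$ and Young's inequality $\ell^2*\ell^1\subset\ell^2$ you get $\|uv\|_{H^N}\lesssim\|u\|_{H^N}\|\hat v\|_{\ell^1}+\|\hat u\|_{\ell^1}\|v\|_{H^N}$. The bound $\|\hat v\|_{\ell^1}\lesssim\|v\|_{H^2}\le\|v\|_{H^N}$ is exactly what you proved for the embedding, so this handles all $N\ge2$ at once.

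The citation buys brevity. Your route gives a self-contained proof whose constants depend only on $N$ and the dimension.
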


\begin{lemma}[Poincar\'e inequality]
There exists a constant $C>0$ such that for all
$u\in H^1(\mathbb{T}^3)$, we have the following inequality:
\begin{equation}
\|u - u_\av\|_{L^2}
\le
C\|\pa_x u\|_{L^2}.
\end{equation}
\end{lemma}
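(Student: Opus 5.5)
The statement to prove is the Poincaré inequality on $\TT^3$: $\|u-u_\av\|_{L^2}\le C\|\pa_x u\|_{L^2}$ for $u\in H^1(\TT^3)$. I would prove it by Fourier series, since on the flat torus $[-\pi,\pi]^3$ with identified ends this is the most direct route and yields an explicit constant.

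\emph{Step 1: reduce to smooth functions.} By definition, $H^1(\TT^3)$ is the completion of $C^\infty(\TT^3)$ in the $H^1$ norm. The map $u\mapsto u_\av$ is continuous in $L^2$, since $|u_\av|\le (2\pi)^{-3/2}\|u\|_{L^2}$ by Cauchy--Schwarz. Hence both sides of the inequality are continuous in the $H^1$ norm, and it suffices to prove it for $u\in C^\infty(\TT^3)$.

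\emph{Step 2: expand in Fourier series.} For smooth $u$, write
\[
u(x)=\sum_{k\in\mathbb{Z}^3}\hat u(k)e^{ik\cdot x},\qquad \hat u(k)=(2\pi)^{-3}\int_{\TT^3}u(x)e^{-ik\cdot x}\,\de^3x.
\]
Then $\hat u(0)=u_\av$, so $u-u_\av$ has Fourier coefficients $\hat u(k)$ for $k\neq 0$ and $0$ at $k=0$. Periodicity lets us integrate by parts with no boundary terms, which gives $\widehat{\pa_j u}(k)=ik_j\hat u(k)$.

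\emph{Step 3: apply Parseval.} Parseval's identity gives
\[
\|u-u_\av\|_{L^2}^2=(2\pi)^3\sum_{k\neq0}|\hat u(k)|^2 .
\]
For $k\in\mathbb{Z}^3\setminus\{0\}$ we have $|k|^2\ge 1$, so this is bounded by
\[
(2\pi)^3\sum_{k}|k|^2|\hat u(k)|^2=\sum_{j=1}^3\|\pa_j u\|_{L^2}^2 .
\]
This yields the inequality with $C=1$, for any reasonable convention for $\|\pa_x u\|_{L^2}$ (the sum or the $\ell^2$ combination of the $\|\pa_j u\|_{L^2}$, up to a factor $\sqrt3$).

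There is no real obstacle here. The only point requiring care is the density and continuity argument in Step 1, together with justifying the term-by-term differentiation of the Fourier series. Both are standard for smooth periodic functions, whose coefficients decay rapidly.
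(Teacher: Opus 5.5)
The paper does not actually prove this lemma. It lists it among standard function-space inequalities and attributes it to \cite{Heb:sob:00}, so your Fourier-series argument is a self-contained replacement for a citation rather than a variant of a proof in the text.

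Your argument is correct. With your normalisation $\hat u(0)=u_\av$, Parseval gives $\|u-u_\av\|_{L^2}^2=(2\pi)^3\sum_{k\neq 0}|\hat u(k)|^2$. Since $|k|\geq 1$ on $\mathbb{Z}^3\setminus\{0\}$, this yields $C=1$, and the constant is sharp: take $u=\sin x^1$.

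The usual proof on a general compact manifold, which is presumably what the citation has in mind, argues by contradiction using Rellich--Kondrachov compactness and gives no explicit constant. Your route avoids compactness and produces the optimal constant, at the cost of relying on the flat lattice structure of $\TT^3$. It still covers any flat torus, as allowed in the paper's footnote: replace $|k|\geq 1$ by the length of the shortest nonzero dual-lattice vector. For the paper's purposes the value of $C$ is irrelevant, since the inequality is only ever used inside $\lesssim$ estimates.

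Two minor remarks. First, you never actually differentiate a Fourier series term by term. The identity $\widehat{\pa_j u}(k)=ik_j\hat u(k)$ comes from integrating by parts against $e^{-ik\cdot x}$, and it passes to $H^1$ limits because Fourier coefficients are $L^2$-continuous. Second, $C=1$ holds for both the $\ell^1$ and the $\ell^2$ convention for $\|\pa_x u\|_{L^2}$, because $\big(\sum_j\|\pa_j u\|_{L^2}^2\big)^{1/2}\leq\sum_j\|\pa_j u\|_{L^2}$. So no factor $\sqrt{3}$ enters for either convention.
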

\section{Gauge-fixed system}\label{sec:gauged-eqs}
In this section, we derive a gauge-fixed version of the Einstein--Euler system with polytropic equation of state. First, we reformulate the Einstein equations and Euler equations in terms of suitably rescaled evolution variables. Then, we impose a sourced wave gauge with carefully chosen source functions, and write down the resulting gauge-fixed system that is central to the global existence result proved in this paper. Following that, we discuss local-wellposedness of the gauge-fixed system, and prove that the gauge condition defining our choice of wave gauge is preserved by the gauge-fixed system, provided the Einstein constraint equations and the wave gauge condition are satisfied initially.

\subsection{Reformulation of the Einstein--Euler system}
We reformulate the Einstein--Euler system \eqref{intro:Einstein-Euler} in terms of a conformally rescaled metric $g = \Omega^2 \bar{g}$, and rescaled fluid variables. The purpose of these rescalings are that they normalise the asymptotics of the evolution variables, which simplify their analysis. Following \cite{Ber:decelerated:25}, we  additionally study the inverse metric components $g^{\mu\nu} = (g^{-1})^{\mu\nu}$. We begin first with a Lemma relating the Ricci tensor of two conformal metrics; see \cite{Ber:decelerated:25} for a proof.

\begin{lemma}[Ricci curvatures for conformal metrics]\label{sec:gauged-eqs.lem:ricci-conf}
    Let $\bar{g}$ be a Lorentzian metric, and introduce the conformal metric $g = \Omega^2\bar{g}$, where $\Omega\in C^\infty(\mc{M}\ra \RR)$ is a scalar function that is smooth and positive everywhere. Then the Ricci curvature of the metrics $g$ and $\bar{g}$ are related by the identity
    \begin{align}
        \label{sec:gaugedeqs.eq:ricciconformal}
        R^{\mu\nu}[g] =\>&g^{\mu\alpha}g^{\nu\beta}R_{\alpha\beta}[\bar{g}] + g^{\delta\lambda}\pa_\lambda g^{\mu\nu} \pa_\delta (\log \Omega) - 2g^{\lambda(\mu}\pa_\lambda g^{\nu)\delta}\pa_\delta (\log \Omega) \nonumber\\
        &- g^{\mu\nu} \Big(g^{\alpha\beta}\pa_\alpha \pa_\beta (\log\Omega) - 2g^{\alpha\beta}\pa_\alpha (\log \Omega) \pa_\beta (\log \Omega) - \Gamma^\lambda \pa_\lambda (\log \Omega)\Big)\nonumber\\
        &-2g^{\mu\alpha}g^{\nu\beta}\Big(\pa_\alpha\pa_\beta(\log \Omega) + \pa_\alpha (\log \Omega)\pa_\beta (\log \Omega)\Big),\nonumber
    \end{align}
    where $\Gamma^\lambda:= g^{\alpha\beta}\Gamma_{\alpha\beta}^\lambda(g)$.
\end{lemma}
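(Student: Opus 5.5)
Set $\omega := \log\Omega$ and derive the identity in three steps: compare the Ricci tensors with lower indices, raise the indices, and then convert the second-covariant-derivative terms of $\omega$ into the coordinate quantities that appear in the statement ($\pa_\lambda g^{\mu\nu}$, $\Gamma^\lambda$, $\pa_\alpha\pa_\beta\omega$).

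\textbf{Step 1 (lower indices).} Start from the classical transformation law for $g = e^{2\omega}\bar g$ in dimension $4$:
\begin{equation*}
R_{\alpha\beta}[g] = R_{\alpha\beta}[\bar g] - 2\big(\nabla_\alpha\nabla_\beta\omega - \nabla_\alpha\omega\nabla_\beta\omega\big) - \big(\square_g\omega + 2|\nabla\omega|_g^2\big)g_{\alpha\beta}.
\end{equation*}
Here all covariant derivatives and contractions are taken with respect to $g$. This formula follows from two facts:
\begin{itemize}
\item[(i)] the difference of connections is the tensor $C^\lambda_{\alpha\beta} = \delta^\lambda_\alpha\pa_\beta\omega + \delta^\lambda_\beta\pa_\alpha\omega - \bar g_{\alpha\beta}\bar g^{\lambda\sigma}\pa_\sigma\omega$;
\item[(ii)] the standard identity $R_{\alpha\beta}[g]-R_{\alpha\beta}[\bar g] = \bar\nabla_\lambda C^\lambda_{\alpha\beta} - \bar\nabla_\beta C^\lambda_{\lambda\alpha} + C C - CC$.
\end{itemize}
This is a routine computation. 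The only care needed is the sign convention: one may compute relative to either metric, and switching between them flips the signs of the quadratic terms in $\omega$. I will fix signs at the end by checking the formula on $\Omega$ constant and on a pure-gradient case.

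\textbf{Step 2 (raise indices).} Raise both indices with $g$ to get $R^{\mu\nu}[g] = g^{\mu\alpha}g^{\nu\beta}R_{\alpha\beta}[g]$. The term $g^{\mu\alpha}g^{\nu\beta}R_{\alpha\beta}[\bar g]$ then appears exactly as in the statement.

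\textbf{Step 3 (convert to coordinate expressions).} Write $\nabla_\alpha\nabla_\beta\omega = \pa_\alpha\pa_\beta\omega - \Gamma^\delta_{\alpha\beta}\pa_\delta\omega$. Then:
\begin{itemize}
\item[(a)] The trace becomes $\square_g\omega = g^{\alpha\beta}\pa_\alpha\pa_\beta\omega - \Gamma^\lambda\pa_\lambda\omega$. This produces the $g^{\mu\nu}(\cdots - \Gamma^\lambda\pa_\lambda\omega)$ term.
\item[(b)] For the Christoffel part of the Hessian, contract with $g^{\mu\alpha}g^{\nu\beta}$ and use
\begin{equation*}
g^{\mu\alpha}g^{\nu\beta}\Gamma^\delta_{\alpha\beta} = \tfrac12\big(-g^{\mu\alpha}\pa_\alpha g^{\nu\delta} - g^{\nu\beta}\pa_\beta g^{\mu\delta} + g^{\delta\lambda}\pa_\lambda g^{\mu\nu}\big).
\end{equation*}
This identity follows from $\pa g^{-1} = -g^{-1}(\pa g)g^{-1}$ inserted into the definition of $\Gamma$. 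Multiplied by $2\pa_\delta\omega$, it gives precisely the terms $g^{\delta\lambda}\pa_\lambda g^{\mu\nu}\pa_\delta\omega - 2g^{\lambda(\mu}\pa_\lambda g^{\nu)\delta}\pa_\delta\omega$.
\end{itemize}

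\textbf{Main obstacle.} The main difficulty is bookkeeping of signs and of the coefficients of the quadratic terms $\pa\omega\pa\omega$. The statement has $+\pa_\alpha\omega\pa_\beta\omega$ in the last line and $-2g^{\alpha\beta}\pa_\alpha\omega\pa_\beta\omega$ inside the trace, which must match the law in Step 1 once it is expressed in terms of $g$.

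To settle this, I will test the final formula on the flat case $\bar g = \eta$, $\Omega = \tau^{-2}$, where $\bar g$ is flat so $R[\bar g]=0$. There $g = \tau^{-4}\eta$ and $R[g]$ can be computed directly. This fixes any ambiguity of convention (the direction of the conformal factor and the symmetrisation convention $(\mu\nu)$ with weight $\tfrac12$). After that, I will adjust Step 1 to the convention that matches.
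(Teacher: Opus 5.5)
Your route is the standard derivation behind this identity; the paper gives no proof here and defers to the first author's earlier work. Your Steps 2 and 3 are correct. The contraction identity for $g^{\mu\alpha}g^{\nu\beta}\Gamma^\delta_{\alpha\beta}$, with $(\mu\nu)$ of weight $\tfrac12$, produces exactly the two $\pa_\lambda g^{\mu\nu}$ terms, and the trace produces the $\Gamma^\lambda\pa_\lambda(\log\Omega)$ term.

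The one real error is the display in Step 1. What you wrote is the law expressed with $\bar\nabla_\alpha\bar\nabla_\beta\omega$, $\bar g^{\lambda\sigma}$ and $\bar g_{\alpha\beta}$, which is what the difference-tensor computation (ii) actually outputs. It is not the law with everything taken with respect to $g$, as you claim.

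The conversion takes one line. Use
\[
\bar\nabla_\alpha\bar\nabla_\beta\omega = \nabla_\alpha\nabla_\beta\omega + C^\lambda_{\alpha\beta}\pa_\lambda\omega = \nabla_\alpha\nabla_\beta\omega + 2\pa_\alpha\omega\,\pa_\beta\omega - g_{\alpha\beta}|\nabla\omega|_g^2
\]
together with $\bar g_{\alpha\beta}\bar g^{\lambda\sigma} = g_{\alpha\beta}g^{\lambda\sigma}$. This flips both quadratic terms and gives
\[
R_{\alpha\beta}[g] = R_{\alpha\beta}[\bar g] - 2\big(\nabla_\alpha\nabla_\beta\omega + \pa_\alpha\omega\,\pa_\beta\omega\big) - \big(\square_g\omega - 2|\nabla\omega|_g^2\big)g_{\alpha\beta}.
\]
Feeding this law, rather than your display, into Steps 2--3 gives the statement exactly. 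That includes the $-2g^{\alpha\beta}\pa_\alpha\omega\pa_\beta\omega$ inside the trace and the $+\pa_\alpha\omega\pa_\beta\omega$ in the last line.

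Do this conversion explicitly rather than fixing signs afterwards by an example. If you do rely on the test $\bar g=\eta$, $\Omega=\tau^{-2}$, be aware that it only discriminates in the spatial components. Here $g$ is the flat FLRW metric with $a(s)=s^2$ in proper time $s=-1/\tau$.
\begin{itemize}
\item Both laws give $R_{00}[g]=-6\tau^{-2}$.
\item The correct law gives $R_{ij}[g]=10\tau^{-2}\delta_{ij}$, while your display gives $26\tau^{-2}\delta_{ij}$.
\end{itemize}
The constant-$\Omega$ check tests nothing, since every $\omega$-term vanishes.
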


Next, we state another lemma which gives a standard decomposition of the Ricci tensor into a scalar wave operator, non-hyperbolic
principal terms which become lower-order upon fixing the gauge, and semilinear terms. Again, we refer to \cite{Ber:decelerated:25} for a proof.
\begin{lemma}[Ricci curvature wave decomposition]\label{sec:gauged-eqs.lem:ricci-wave}
    The Ricci curvature of a Lorentzian metric $g$ satisfies the identity
    \begin{equation}
        2R^{\mu\nu}[g] = \wh{\square}_g g^{\mu\nu} + 2g^{\lambda(\mu}\pa_\lambda \Gamma^{\nu)}-\mc{F}^{\mu\nu}(g,\pa g),
    \end{equation}
    where the $\mc{F}^{\mu\nu}$ are the functions
    \begin{multline}\label{sec:gaugedeqs.eq:semilinearterms}
        \mc{F}^{\mu\nu}(g,\pa g) = g^{\alpha\beta}g^{\lambda\delta}\pa_\lambda g^{\mu\nu}\pa_{(\alpha} g_{\beta)\delta}-\frac{1}{2}g^{\alpha\beta}g^{\lambda\delta}\pa_\lambda g^{\mu\nu}\pa_\delta g_{\alpha\beta} + 2g^{\lambda(\mu}g^{\nu)\delta}\pa_\delta g^{\alpha\beta}\pa_\alpha g_{\beta\lambda}\\- \frac{1}{2}g^{\lambda(\mu}g^{\nu)\delta}\pa_\lambda g^{\alpha\beta}\pa_\delta g_{\alpha\beta}
        -g^{\alpha\beta}g^{\lambda(\mu|}\pa_\alpha g^{|\nu)\delta}\pa_\delta g_{\beta\lambda} - g^{\alpha\beta}g^{\lambda(\mu|}\pa_\alpha g^{|\nu)\delta}\pa_\beta g_{\lambda\delta}.
    \end{multline}
\end{lemma}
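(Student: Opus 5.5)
The plan is to prove the identity by direct computation in coordinates. I write $R^{\mu\nu}[g] = g^{\mu\alpha}g^{\nu\beta}R_{\alpha\beta}[g]$ and expand $R_{\alpha\beta}$ in the standard form
\[
R_{\alpha\beta} = -\tfrac12 g^{\lambda\delta}\pa_\lambda\pa_\delta g_{\alpha\beta} + g_{\lambda(\alpha}\pa_{\beta)}\Gamma^\lambda + Q_{\alpha\beta}(g,\pa g),
\]
where $\Gamma^\lambda = g^{\alpha\beta}\Gamma^\lambda_{\alpha\beta}$ and $Q$ is quadratic in $\pa g$. This is the usual harmonic-gauge decomposition. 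The first step is to verify it directly from $R_{\alpha\beta} = \pa_\lambda\Gamma^\lambda_{\alpha\beta} - \pa_\beta\Gamma^\lambda_{\alpha\lambda} + \Gamma\Gamma - \Gamma\Gamma$. I would write $\Gamma^\lambda_{\alpha\beta} = \tfrac12 g^{\lambda\delta}(\pa_\alpha g_{\beta\delta}+\pa_\beta g_{\alpha\delta}-\pa_\delta g_{\alpha\beta})$, and then isolate the second-derivative terms. The terms involving $\pa_\lambda\pa_{(\alpha}g_{\beta)\delta}$ and $\pa_\alpha\pa_\beta g$ combine exactly into the symmetrized $\pa_{(\beta}\Gamma_{\alpha)}$, where $\Gamma_\alpha = g_{\alpha\lambda}\Gamma^\lambda$. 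Lowering and raising the index in $\Gamma_\alpha$ produces further first-order quadratic terms, which I put into $Q$.

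The second step is to convert to the inverse metric. I raise both indices and use
\[
g^{\mu\alpha}g^{\nu\beta}\pa_\lambda\pa_\delta g_{\alpha\beta} = -\pa_\lambda\pa_\delta g^{\mu\nu} + (\text{products } \pa g^{-1}\cdot\pa g),
\]
which follows by differentiating $g^{\mu\alpha}g_{\alpha\beta}=\delta^\mu_\beta$ twice. This gives $-\tfrac12 g^{\lambda\delta}g^{\mu\alpha}g^{\nu\beta}\pa_\lambda\pa_\delta g_{\alpha\beta} = \tfrac12\wh{\square}_g g^{\mu\nu} + (\text{quadratic})$. Similarly,
\[
g^{\mu\alpha}g^{\nu\beta}g_{\lambda(\alpha}\pa_{\beta)}\Gamma^\lambda = g^{\lambda(\mu}\pa_\lambda\Gamma^{\nu)},
\]
with no quadratic remainder at all, since $g^{\mu\alpha}g_{\lambda\alpha}=\delta^\mu_\lambda$. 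Multiplying by $2$ then yields $2R^{\mu\nu} = \wh{\square}_g g^{\mu\nu} + 2g^{\lambda(\mu}\pa_\lambda\Gamma^{\nu)} - \mc F^{\mu\nu}$, where $-\mc F^{\mu\nu}$ collects all the first-order products.

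The final step, and the main obstacle, is purely bookkeeping. I have to show that the collected quadratic terms equal exactly the six terms in \eqref{sec:gaugedeqs.eq:semilinearterms}. These are written in mixed form, with $\pa g^{\cdot\cdot}$ multiplied by $\pa g_{\cdot\cdot}$. To keep this manageable, I would express every Christoffel product through $\pa g_{\cdot\cdot}$ and every leftover term from the inverse-metric conversion through $\pa g^{\mu\nu}\pa g_{\alpha\beta}$. I would then use $\pa g^{\mu\nu} = -g^{\mu\alpha}g^{\nu\beta}\pa g_{\alpha\beta}$ freely to move between the two forms.

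I would organize the matching by contraction type. First come the terms where $\pa g^{\mu\nu}$ carries both free indices; these are the first two terms, and they arise from the $\Gamma^\lambda$ trace contributions. Next come the terms where the free indices sit on metric factors; these are the third and fourth terms, coming from $\Gamma\Gamma$. Last are the mixed terms, the fifth and sixth, which come from the two-derivative conversion. A check against a conformally flat or diagonal metric would be used to fix signs. Since the identity is standard and stated as in \cite{Ber:decelerated:25}, one may alternatively simply cite that computation.
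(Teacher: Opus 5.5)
Your plan is correct and essentially follows the paper: the paper does no computation of its own but defers to \cite{Ber:decelerated:25}, which is exactly this direct expansion, and your handling of the principal part and of $g^{\lambda(\mu}\pa_\lambda\Gamma^{\nu)}$ (which raises with no remainder) is right. For the remaining matching, do the general computation rather than fixing signs on special metrics: every term is a contraction of objects that are tensorial under linear coordinate changes, so it suffices to work at a point where $g_{\alpha\beta}=m_{\alpha\beta}$ with arbitrary $\pa g$, and there the first two terms of $\mc{F}^{\mu\nu}$ combine to $\Gamma^\lambda\pa_\lambda g^{\mu\nu}$ while the last four match the leftover quadratic terms exactly.
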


With these two lemmas, we can now derive a conformal version of the Einstein--Euler system \eqref{intro:Einstein-Euler}. First, we introduce the notation for the rescalings. 
\begin{definition}[Conformal rescalings $(g, \rho, p, v, \chi, \Pi)$]\label{def:conformal-rescalings}
Consider the conformally rescaled metric $g_{\mu\nu} = \Omega^2 \bar{g}_{\mu\nu}$  with a conformal factor of $\Omega = \tau^{-2}$. We define the rescaled inverse metric, fluid energy density, pressure and 4-velocity by:
    \begin{equation}\label{sec:gauged-eqs.eq:conf-rescaling}
        (g^{\mu\nu},\rho,v^i) := \big(\tau^{4}\bar{g}^{\mu\nu},\,\tau^{6}\bar{\rho},\,\tau^2 \bar{v}^i\big),
    \end{equation}
    and $p := \tau^{6+\frac{6}{n}}\bar{p} = C\rho^{1+\frac{1}{n}}$, $\chi:= \rho + \tau^{-\frac{6}{n}}p$, $\Pi^{\mu\nu} := v^\mu v^\nu + g^{\mu\nu}$. Note that under this rescaling, the vectorfield $v$ additionally satisfies the constraint
    \begin{equation}\label{sec:gauged-eqs.eq:fluid-constr}
        g_{\alpha\beta}v^\alpha v^\beta = -1\,.
    \end{equation}
\end{definition}

\begin{proposition}\label{sec:gauged-eqs.prop:ungauged-sys}
    Let $(\mc{M},\bar{g})$ be a $4-$dimensional Lorentzian manifold, let $\bar{\rho}$ be a scalar function on $(\mc{M},\bar{g})$, and let $\bar{v}$ be a unit timelike vectorfield on $(\mc{M},\bar{g})$. Fix a chart $(x^0 = \tau,x^1,x^2,x^3)$ on $\mc{M}$. Then $(\mc{M},\bar{g},\bar{\rho},\bar{v})$ solves the Einstein--Euler system \eqref{intro:Einstein-Euler} with polytropic equation of state \eqref{eq:polytropic-eos} if and only if the quantities $(\mc{M},g,\rho,v)$ solve 
    \begin{align}
        \label{sec:gauged-eqs.eq:ungauged-sys-1}
        \wh{\square}_g g^{\mu\nu} = &- 2g^{\lambda(\mu}\pa_\lambda \Gamma^{\nu)} -\frac{4}{\tau}g^{0\lambda}\pa_\lambda g^{\mu\nu}  + \frac{8}{\tau}g^{\lambda(\mu}\pa_\lambda g^{\nu)0} + \frac{12}{\tau^2}g^{00}g^{\mu\nu}--\frac{4}{\tau}\Gamma^0 g^{\mu\nu}--\frac{24}{\tau^2}g^{0\mu}g^{0\nu}\\
        &+\frac{1}{\tau^2}\rho\big(2v^\mu v^\nu + g^{\mu\nu}) + \frac{1}{\tau^{2+\frac{6}{n}}}p\big(2v^\mu v^\nu - g^{\mu\nu}\big) + \mc{F}^{\mu\nu},\qquad \mu,\nu=0,1,2,3,\nonumber
    \end{align}
    and 
    \begin{subequations}\label{sec:gauged-eqs.eq:ungauged-sys-2}
        \begin{align}
            v^\mu \pa_\mu \rho + \chi\pa_\mu v^\mu = &-\chi v^\mu \Gamma_{\mu\lambda}^\lambda- \frac{6}{\tau^{1+\frac{6}{n}}}pv^0,\\
            v^\mu \pa_\mu v^i + \frac{p'(\rho)}{\tau^{\frac{6}{n}}\chi}\Pi^{\mu i}\pa_\mu \rho  +\frac{2}{\tau}\Pi^{0i} =& -v^\mu v^\lambda\Gamma_{\mu \lambda}^i + \frac{6\rho p'(\rho)}{\tau^{1+\frac{6}{n}}\chi}\Pi^{0i}, \qquad i=1,2,3\,.
        \end{align}
    \end{subequations}

\end{proposition}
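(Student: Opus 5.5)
The plan is a direct computation, split into the Einstein part and the Euler part; every step is reversible, which gives the ``if and only if''.

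\textbf{Einstein part.} First I trace-reverse the Einstein equations with $\Lambda=0$. This gives
\[
R_{\alpha\beta}[\bar g]=(\bar\rho+\bar p)\bar v_\alpha\bar v_\beta+\tfrac12(\bar\rho-\bar p)\bar g_{\alpha\beta}.
\]
Next I raise both indices with $g^{\mu\alpha}g^{\nu\beta}$, where $g^{\mu\alpha}=\Omega^{-2}\bar g^{\mu\alpha}=\tau^4\bar g^{\mu\alpha}$. Using $\bar v^\mu=\tau^{-2}v^\mu$, $\bar\rho=\tau^{-6}\rho$ and $\bar p=\tau^{-6-6/n}p$, the right-hand side becomes
\[
g^{\mu\alpha}g^{\nu\beta}R_{\alpha\beta}[\bar g]=\tau^{-2}\big(\rho+\tau^{-6/n}p\big)v^\mu v^\nu+\tfrac12\tau^{-2}\big(\rho-\tau^{-6/n}p\big)g^{\mu\nu}.
\]
The powers of $\tau$ should balance as $\tau^8\cdot\tau^{-6}\cdot\tau^{-4}\cdot\tau^{-2}$ for the lowered velocities, and $\tau^{8}\tau^{-6}\tau^{-4}$ for the $\bar g$ term. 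This must be checked carefully. The check produces the combinations $\rho(2v^\mu v^\nu+g^{\mu\nu})$ and $p(2v^\mu v^\nu-g^{\mu\nu})$ after multiplying by $2$.

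I then substitute this into Lemma~\ref{sec:gauged-eqs.lem:ricci-conf} with $\log\Omega=-2\log\tau$. In that case $\pa_\delta\log\Omega=-\frac{2}{\tau}\delta^0_\delta$ and $\pa_\alpha\pa_\beta\log\Omega=\frac{2}{\tau^2}\delta^0_\alpha\delta^0_\beta$, so each term of the lemma becomes one of the explicit terms
\[
-\tfrac4\tau g^{0\lambda}\pa_\lambda g^{\mu\nu},\quad \tfrac8\tau g^{\lambda(\mu}\pa_\lambda g^{\nu)0},\quad \tfrac{12}{\tau^2}g^{00}g^{\mu\nu},\quad -\tfrac{4}{\tau}\Gamma^0g^{\mu\nu},\quad -\tfrac{24}{\tau^2}g^{0\mu}g^{0\nu}.
\]
For example, the $g^{\mu\nu}(\dots)$ bracket gives $\frac{2}{\tau^2}g^{00}-\frac{8}{\tau^2}g^{00}-\frac{2}{\tau}\Gamma^0$. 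Multiplying the whole identity by $2$ and equating it with Lemma~\ref{sec:gauged-eqs.lem:ricci-wave} then yields \eqref{sec:gauged-eqs.eq:ungauged-sys-1}. The main obstacle is this bookkeeping of coefficients and signs, and I would verify it term by term.

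\textbf{Euler part.} Here I use the transformation law of the Christoffel symbols,
\[
\bar\Gamma^\lambda_{\mu\nu}=\Gamma^\lambda_{\mu\nu}-\big(\delta^\lambda_\mu\pa_\nu\omega+\delta^\lambda_\nu\pa_\mu\omega-g_{\mu\nu}g^{\lambda\sigma}\pa_\sigma\omega\big),\qquad \omega=\log\Omega .
\]
This gives
\[
\bar\nabla_\mu\bar v^\mu=\pa_\mu\bar v^\mu+\bar v^\mu\big(\Gamma^\lambda_{\mu\lambda}-4\pa_\mu\omega\big).
\]
I substitute the rescalings into the continuity equation. The $\tau$-derivatives of the weights $\tau^{-6}$ and $\tau^{-2}$ produce terms proportional to $v^0/\tau$. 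These combine with $-4\pa_\mu\omega\,v^\mu=\frac{8}{\tau}v^0$, so the $\rho v^0/\tau$ terms cancel. The only leftover is the mismatch $-6\tau^{-6/n}p\,v^0/\tau$ coming from the different weight of $p$, which is exactly the stated right-hand side.

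For the momentum equation I proceed the same way. The acceleration term $\bar v^\mu\bar\nabla_\mu\bar v^\nu$ produces $\frac{2}{\tau}\Pi^{0i}$, using $g_{\alpha\beta}v^\alpha v^\beta=-1$ to convert $v^0v^i$ and $g^{0i}$ terms into $\Pi^{0i}$. The pressure gradient $\pa_\mu\bar p=\tau^{-6-6/n}\big(p'\pa_\mu\rho-\tfrac{6}{\tau}\big(1+\tfrac1n\big)p\,\delta^0_\mu\big)$, after multiplying by $\bar g^{\mu i}=\tau^{-4}g^{\mu i}$ and dividing by $\bar\rho+\bar p$, gives the $p'$ term together with $\frac{6\rho p'}{\tau^{1+6/n}\chi}\Pi^{0i}$. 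This last step uses $(1+\frac1n)p=\rho p'$.
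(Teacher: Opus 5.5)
Your route is the paper's route, and your final coefficients are right. For the Einstein part you trace-reverse, raise with $g^{\mu\alpha}g^{\nu\beta}$, insert into Lemma~\ref{sec:gauged-eqs.lem:ricci-conf} with $\log\Omega=-2\log\tau$, and equate with Lemma~\ref{sec:gauged-eqs.lem:ricci-wave}. For the Euler part you substitute the rescalings using the conformal change of Christoffel symbols.

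\textbf{Missing step: the converse for the momentum equation.} The rescaled system contains only the components $i=1,2,3$. The Euler equations \eqref{sec:intro.eq:euler} also contain the $\nu=0$ component, so ``every step is reversible'' does not recover it. You need the observation the paper makes: the momentum expression $M^\nu$ is automatically $g$-orthogonal to $v$. Indeed $v_\nu\bigl(v^\mu\pa_\mu v^\nu+\Gamma^\nu_{\mu\lambda}v^\mu v^\lambda\bigr)=\tfrac12 v^\mu\pa_\mu\bigl(g_{\alpha\beta}v^\alpha v^\beta\bigr)=0$ by the normalisation, and $v_\nu\Pi^{\mu\nu}=0$. Hence $v_0M^0=-v_aM^a$, so $M^0=0$ follows from the spatial equations wherever $v_0\neq 0$ (e.g.\ when $\pa_\tau$ is timelike). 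Alternatively, once \eqref{sec:gauged-eqs.eq:ungauged-sys-1} has returned the full Einstein equations, the contracted Bianchi identity gives $\bar\nabla^\mu T_{\mu\nu}=0$ and hence all the Euler equations.

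\textbf{Two bookkeeping slips}, neither of which affects your final list:
\begin{itemize}
\item The bracket multiplying $-g^{\mu\nu}$ in Lemma~\ref{sec:gauged-eqs.lem:ricci-conf} is $\frac{2}{\tau^2}g^{00}-\frac{8}{\tau^2}g^{00}+\frac{2}{\tau}\Gamma^0$, because $-\Gamma^\lambda\pa_\lambda\log\Omega=+\frac{2}{\tau}\Gamma^0$. It is the prefactor $-g^{\mu\nu}$ that produces $-\frac{2}{\tau}\Gamma^0g^{\mu\nu}$. With the sign you wrote, doubling would give $+\frac{4}{\tau}\Gamma^0g^{\mu\nu}$.
\item The power count $\tau^8\cdot\tau^{-6}\cdot\tau^{-4}\cdot\tau^{-2}$ equals $\tau^{-4}$, which is wrong. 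With upper velocities it is $\tau^8\cdot\tau^{-6}\cdot(\tau^{-2})^2=\tau^{-2}$, which is what your displayed formula actually uses.
\end{itemize}
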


\begin{proof}
    First we prove that the Einstein equations \eqref{intro:Einstein-Euler} are equivalent to \eqref{sec:gauged-eqs.eq:ungauged-sys-1}.  It follows from Lemma~\ref{sec:gauged-eqs.lem:ricci-conf} that the Ricci curvatures of $g$, $\bar{g}$ are related by
    \begin{equation}\label{sec:gauged-eqs.prop:ungauged-sys-proof-1}
        R^{\mu\nu}[g] =\,g^{\mu\alpha}g^{\nu\beta}R_{\alpha\beta}[\bar{g}] - \frac{2}{\tau}g^{0\lambda}\pa_\lambda g^{\mu\nu} + \frac{4}{\tau}g^{\lambda(\mu}\pa_\lambda g^{\nu)0}
        + \frac{6}{\tau^2}g^{00}g^{\mu\nu}--\frac{2}{\tau}\Gamma^0g^{\mu\nu}-\frac{12}{\tau^2}g^{0\mu}g^{0\nu}.
    \end{equation}
    The Einstein equations are equivalent to
    \begin{equation*}
        2g^{\mu\alpha}g^{\nu\beta}R_{\alpha\beta}[\bar{g}] = \tau^{8}\Big(2(\bar{\rho} + \bar{p})\bar{v}^\mu\bar{v}^\nu + (\bar{\rho}-\bar{p})\bar{g}^{\mu\nu}\Big).
    \end{equation*}
    Substituting in the rescaled fluid variables $(\rho,p,v)$ and separating density and pressure terms, we have
    \begin{equation*}
        2g^{\mu\alpha}g^{\nu\beta}R_{\alpha\beta}[\bar{g}] = \frac{1}{\tau^2}\rho\big(2v^\mu v^\nu + g^{\mu\nu}) + \frac{1}{\tau^{2+\frac{6}{n}}}p\big(2v^\mu v^\nu - g^{\mu\nu}\big).
    \end{equation*}
    Inserting this into \eqref{sec:gauged-eqs.prop:ungauged-sys-proof-1} and decomposing the Ricci tensor $R^{\mu\nu}[g]$ as in Lemma~\ref{sec:gauged-eqs.lem:ricci-wave}, we obtain \eqref{sec:gauged-eqs.eq:ungauged-sys-1}.
   
    Next, we prove that the Euler equations \eqref{sec:intro.eq:euler} with polytropic equation of state \eqref{eq:polytropic-eos} are equivalent to \eqref{sec:gauged-eqs.eq:ungauged-sys-1}. First we change to the conformal metric $g = \tau^{-4}\bar{g}$, and conformal four-velocity $v^\mu = \tau^{2}\bar{v}^\mu$, which gives
    \begin{align*}
        v^\mu \pa_\mu \bar{\rho} + (\bar{\rho} + \bar{p})\pa_\mu v^\mu &=- \frac{6}{\tau}(\bar{\rho} + \bar{p})v^0 -(\bar{\rho}+\bar{p})v^\mu \Gamma_{\mu\lambda}^\lambda,\\
        v^\mu \pa_\mu v^\nu + \frac{\bar{p}'(\bar{\rho})}{\bar{\rho}+\bar{p}}\Pi^{\mu\nu}\pa_\mu \bar{\rho} +\frac{2}{\tau}\Pi^{0\nu} &= -v^\mu v^\lambda\Gamma_{\mu\lambda}^{\nu}\,.
    \end{align*}
    Then we transform to the rescaled fluid density and pressure $\rho = \tau^6\bar{\rho}$, $p = \tau^{6+\frac{6}{n}}\bar{p}$, from which we obtain the equations
    \begin{subequations}\label{sec:gauged-eqs.prop:ungauged-sys-proof-3}
        \begin{align}
            v^\mu \pa_\mu \rho + \chi\pa_\mu v^\mu = &-\chi v^\mu \Gamma_{\mu\lambda}^\lambda- \frac{6}{\tau^{1+\frac{6}{n}}}pv^0,\\
            v^\mu \pa_\mu v^\nu + \frac{p'(\rho)}{\tau^{\frac{6}{n}}\chi}\Pi^{\mu\nu}\pa_\mu \rho + \frac{2}{\tau}\Pi^{0\nu} =&-v^\mu v^\lambda\Gamma_{\mu\lambda}^\nu + \frac{6\rho p'(\rho)}{\tau^{1+\frac{6}{n}}\chi}\Pi^{0\nu},\qquad \mu=0,1,2,3,
        \end{align}
    \end{subequations}
    where $p'(\rho) := \frac{\de p}{\de \rho} = C(1+\frac{1}{n})\rho^{\frac{1}{n}}$. The system \eqref{sec:gauged-eqs.prop:ungauged-sys-proof-3} is overdetermined, in the sense that the $0$-component of the momentum equation is fully determined by the other equations in combination with the normalisation condition $g_{\alpha\beta} v^\alpha v^\beta = -1$. 
    In light of this, the Euler equations \eqref{sec:intro.eq:euler} are thus equivalent to \eqref{sec:gauged-eqs.eq:ungauged-sys-2}, completing the proof.
\end{proof}

\begin{remark}\label{sec:gauged-eqs.rmk:rescaling}
    The rescalings \eqref{sec:gauged-eqs.eq:conf-rescaling} normalise the (expected) asymptotics of the evolution variables, so that $(g^{\mu\nu},\rho,v) = O(1)$, roughly speaking. Performing the same rescalings to the the background FLRW solution $(\bar{g}_n,\bar{\rho}_n,\bar{v}_n)$ from Definition~\ref{def:aeds}, we find that
    \begin{equation}
       g_n^{\mu\nu} = \tau^4 \bar{g}_n^{\mu\nu} \ra m^{\mu\nu},\qquad \rho_n := \tau^6 \bar{\rho}_n \ra 12,\qquad v_n^i := \tau^2 \bar{v}^i = 0,
    \end{equation}
    In particular, the metric $\ol{g}_n^{\mu\nu}$ is asymptotically Einstein--de Sitter in nature, and so $g_n^{\mu\nu}$ is only a perturbation of Minkowski, which nonetheless converges to Minkowski as $\tau \ra \infty$.
\end{remark}

\subsection{Sourced wave gauge}
We are now ready to introduce the gauge-fixed Einstein--Euler system. 
\begin{definition}[Metric perturbation and gauge choice]\label{def:gauge-choice}
   We define the metric perturbation variables
\begin{equation}
    h^{\mu\nu} := g^{\mu\nu} - m^{\mu\nu}\,,
\end{equation}
and fix the following choice of wave gauge:
\begin{subequations}\label{sec:gauged-eqs.eq:gauge-choice}
    \begin{align}
        \Gamma^\mu(g) =&\, \mc{Y}^\mu ,\\
        \mc{Y}^\mu :=&\, \frac{\kappa}{\tau} (g^{0\mu}+\delta_0^\mu) + \delta_0^\mu \Big(\frac{3-\kappa}\tau h_{\av}^{00} + \frac{1}{4\tau}(\rho_\av-12)\Big). 
    \end{align}
\end{subequations}
where $\kappa > 0$ is some large parameter.  
\end{definition}

\begin{remark}
The above definition is a sourced wave gauge. For the standard choice of wave gauge, one would set $\Gamma^\mu=0$. Here, instead, we assume that the contracted Christoffel symbols satisfy an equation with lower-order gauge source functions. Our particular choice of gauge has two purposes. The first, which was pioneered in the influential work \cite{Cho:Einsteinlwp:52} by Choquet-Bruhat, is that any choice of wave gauge (or source wave gauge) transform the Einstein equations into a system of quasilinear wave equations. The second is that the gauge source function $\mc{Y}^\mu$ introduce important damping terms in the resulting wave equations. These damping terms are lower-order in terms of derivatives, but modify the leading-order decay rates of the evolution variables. In \eqref{sec:gauged-eqs.eq:gauge-choice}, we have included terms in the gauge source functions that modify the decay rates of the spatial averages of the evolution variables, as well as terms that modify the behaviour of higher-order spatial derivatives of the variables. Indeed, we observe that taking the average of the wave gauge condition gives
\begin{equation*}
    \Gamma_\av^0 = \frac{1}{4\tau}\big((\rho_\av-12) + 12h_\av^{00}\big), \qquad \Gamma_\av^i = \frac{\kappa}{\tau} h_\av^{0i},
\end{equation*}
while taking spatial derivatives of the condition yields
\begin{equation*}
    \pa_x^I \Gamma^\mu = \frac{\kappa}{\tau}\pa_x^I h^{0\mu}.
\end{equation*}
The quantity $(\rho_\av - 12) + 12h_\av^{00}$ will play an important role in our analysis of the spatial averages of the system. While the averaged density or lapse do not decay to their background values, we will be able to prove decay of $(\rho_\av - 12) + 12h_\av^{00}$; see already Proposition~\ref{sec:avg-ests.prop:avg-ests}.
\end{remark}

\begin{definition}[Gauge-fixed system]\label{def:gauge-fixed-system}
 We define the \emph{gauge-fixed system} to be the wave equations
\begin{subequations}\label{sec:gauged-eqs.eq:gauged-eqs-0}
    \begin{multline}\label{sec:gauged-eqs.eq:gauged-eqs-0-einstein}
        \wh{\square}_g g^{\mu\nu} = - 2g^{\lambda(\mu}\pa_\lambda \mc{Y}^{\nu)} -\frac{4}{\tau}g^{0\lambda}\pa_\lambda g^{\mu\nu} +\frac{8}{\tau}g^{\lambda(\mu}\pa_\lambda g^{\nu)0} +\frac{12}{\tau^2}g^{00}g^{\mu\nu}-\frac{4}{\tau}\mc{Y}^0 g^{\mu\nu}-\frac{24}{\tau^2}g^{0\mu}g^{0\nu}\\
        +\frac{1}{\tau^2}\rho\big(2v^\mu v^\nu + g^{\mu\nu}) + \frac{1}{\tau^{2+\frac{6}{n}}}p\big(2v^\mu v^\nu - g^{\mu\nu}\big) + \mc{F}^{\mu\nu},\qquad \mu,\nu=0,1,2,3,
    \end{multline}
    which are obtained by taking \eqref{sec:gauged-eqs.eq:ungauged-sys-1} and replacing all instances of $\Gamma^\mu$ with $\mc{Y}^\mu$, in conjunction with the Euler equations 
    \begin{align}
        \label{sec:gauged-eqs.eq:gauged-eqs-0-euler-1}
        v^\mu \pa_\mu \rho + \chi\pa_\mu v^\mu = &-\chi v^\mu \Gamma_{\mu\lambda}^\lambda- \frac{6}{\tau^{1+\frac{6}{n}}}pv^0,\\
        \label{sec:gauged-eqs.eq:gauged-eqs-0-euler-2}
        v^\mu \pa_\mu v^i + \frac{p'(\rho)}{\tau^{\frac{6}{n}}\chi}\Pi^{\mu i}\pa_\mu \rho + \frac{2}{\tau}\Pi^{0i} =&-v^\mu v^\lambda\Gamma_{\mu\lambda}^i + \frac{6\rho p'(\rho)}{\tau^{1+\frac{6}{n}}\chi}\Pi^{0i},\qquad i=1,2,3 \,.
    \end{align}
\end{subequations}
\end{definition}
 We discuss the local theory of the gauge-fixed system in more depth in Section \ref{subsec:gauge-initial-data}, but for now we say that the system \eqref{sec:gauged-eqs.eq:gauged-eqs-0} is a locally well-posed hyperbolic system of mixed order.

The next proposition is a decomposition of the gauged-fixed system that highlights the structure of the various equations.
\begin{proposition}[Structure of the gauge-fixed system]\label{sec:gauged-eqs.prop:gauged-eqs}
    The equations \eqref{sec:gauged-eqs.eq:gauged-eqs-0} for $(g,\rho,v)$ imply the following equations for the metric perturbations $h^{\mu\nu}$:
    \begin{subequations}\label{sec:gauged-eqs.eq:wave}
        \begin{align}
            \label{sec:gauged-eqs.eq:gauged-eqs-1}
            \Big(\wh{\square}_g +\frac{2(\kappa-2)}{\tau}g^{00}\pa_\tau\Big)h^{00} &= \mc{L}_h^{00}+\mc{G}_h^{00},\\
            \label{sec:gauged-eqs.eq:gauged-eqs-2}
            \Big(\wh{\square}_g + \frac{\kappa}{\tau}g^{00}\pa_\tau\Big) h^{0i} &= \mc{L}_h^{0i} + \mc{G}_h^{0i},\\
            \label{sec:gauged-eqs.eq:gauged-eqs-3}
            \Big(\wh{\square}_g + \frac{4}{\tau} g^{00} \pa_\tau\Big) h^{ij}&= \mc{L}_h^{ij} + \mc{G}_h^{ij},
        \end{align}
    \end{subequations}
    and the following equations for $(\rho,v)$:
    \begin{subequations}\label{sec:gauged-eqs.eq:fluid}
        \begin{align}
        \label{sec:gauged-eqs.eq:gauged-eqs-4}
        v^\mu \pa_\mu \rho + \chi\pa_\mu v^\mu &= \mc{L}_{(\rho,v)} +  \mc{G}_{(\rho,v)},\\
        \label{sec:gauged-eqs.eq:gauged-eqs-5}
        v^\mu \pa_\mu v^i + \frac{p'(\rho)}{\tau^{\frac{6}{n}}\chi}\Pi^{\mu i}\pa_\mu \rho +\frac{2}{\tau} v^0 v^i &= \mc{L}_{(\rho,v)}^i + \mc{G}_{(\rho,v)}^i,
        \end{align}
    \end{subequations}
     The quantities $\mc{L}_h^{\mu\nu}$, $\mc{L}_{(\rho,v)}$, $\mc{L}_{(\rho,v)}^i$ contain terms that are principal in terms of decay, but lower-order in terms of derivatives, and are given by
    \begin{subequations}
        \begin{align}
            \mc{L}_h^{00} :=&\,-\frac{1}{\tau^2}g^{00}(\rho-\rho_\av)-\frac{2(\kappa+6)}{\tau^2}g^{00}(h^{00}-h_\av^{00})\\
            &-\frac{2}{\tau}g^{00}\pa_\tau \big((3-\kappa)h_\av^{00}+ \frac{1}{4}(\rho_\av-12)\big) - \frac{3}{2\tau^2}g^{00}\big((\rho_\av-12)+12h_\av^{00}\big),\nonumber\\
            \mc{L}_h^{0i} :=& -\frac{\kappa-4}{\tau}g^{i\lambda}\pa_\lambda h^{00} + \frac{1}{\tau^2}\rho\big(2v^0v^i + h^{0i}\big),\\
            \mc{L}_h^{ij} :=& -\frac{2(\kappa - 4)}{\tau}g^{\lambda(i}\pa_\lambda h^{j)0}-\frac{4(\kappa - 3)}{\tau^2}g^{ij} (h^{00}-h_\av^{00}) + \frac{1}{\tau^2}(\rho-\rho_\av)g^{ij},\\
            \mc{L}_{(\rho,v)} :=& -\chi v^0 \Gamma_{0\lambda}^\lambda,\\
            \mc{L}_{(\rho,v)}^i :=& -\frac{2}{\tau} h^{0i} - (v^0)^2 \Gamma_{00}^i,
        \end{align}
    \end{subequations}
    while the quantities $\mc{G}_h^{\mu\nu}$, $\mc{G}_{(\rho,v)}$, $\mc{G}_{(\rho,v)}^i$ contain error terms that are lower-order both in terms of decay and derivatives, given by
    \begin{subequations}
        \begin{align}
            \mc{G}_h^{00} :=& -\frac{2(\kappa-2)}{\tau}h^{0a}\pa_a h^{00}+\frac{2}{\tau^2}\rho\big(g^{00}+(v^0)^2\big)+ \frac{1}{\tau^{2+\frac{6}{n}}}p\big(2(v^0)^2 - g^{00})+\mc{F}^{00},\\
            \mc{G}_h^{0i} :=&\, -\frac{\kappa}{\tau}h^{0a}\pa_a h^{0i} + \frac{\kappa-24}{\tau^2}g^{00}h^{0i}-\frac{3(\kappa -4)}{\tau^2}h^{00}h^{0i}+ \frac{1}{\tau^{2+\frac{6}{n}}}p\big(2v^0 v^i- h^{0i}\big)+ \mc{F}^{0i}\\
            &-\frac{1}{\tau} h^{0i}\pa_\tau\Big((3-\kappa) h_\av^{00}  + \frac{1}{4}(\rho_\av-12)\Big) - \frac{3}{\tau^2}h^{0i}\Big((3-\kappa) h_\av^{00}  + \frac{1}{4}(\rho_\av-12)\Big),\nonumber\\
            \mc{G}_h^{ij} :=&\,-\frac{4}{\tau} h^{0a} \pa_a h^{ij} + \frac{2(\kappa - 12)}{\tau^2}h^{0i}h^{0j} + \frac{2}{\tau^2}\chi v^i v^j - \frac{1}{\tau^{2+\frac{6}{n}}} g^{ij}p+ \mc{F}^{ij}, \\
            \mc{G}_{(\rho,v)} :=& -\chi v^a \Gamma_{a\lambda}^\lambda-\frac{6}{\tau^{1+\frac{6}{n}}}pv^0,\\
            \mc{G}_{(\rho,v)}^i :=& -2v^0 v^a \Gamma_{0a}^i - v^a v^b \Gamma_{ab}^i + \frac{6\rho p'(\rho)}{\tau^{1+\frac{6}{n}}\chi}\Pi^{0i}.
        \end{align}
    \end{subequations}
\end{proposition}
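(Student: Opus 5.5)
The plan is to substitute the explicit gauge source $\mc{Y}^\mu$ from Definition~\ref{def:gauge-choice} into the right-hand side of \eqref{sec:gauged-eqs.eq:gauged-eqs-0-einstein}, expand every term in $h^{\mu\nu}=g^{\mu\nu}-m^{\mu\nu}$, and then sort the result component by component. The two groups of terms are kept on separate tracks. First-order terms $\pa_\tau h^{\mu\nu}$ with coefficient $g^{00}$ are moved to the left to form the damping operators. The remaining terms are assigned to $\mc{L}$ (linear in perturbations, decay weight $\tau^{-1}\pa$ or $\tau^{-2}$) or to $\mc{G}$ (quadratic, involving $h^{0a}\pa_a$, or carrying the extra $\tau^{-6/n}$ pressure weight).

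\textbf{Step 1: the gauge-derived terms.} Since $\pa_\lambda \mc{Y}^\nu$ only sees $h^{0\nu}$ and the averages, which depend on $\tau$ alone, the term $-2g^{\lambda(\mu}\pa_\lambda\mc{Y}^{\nu)}$ produces three pieces:
\begin{itemize}
\item $-\frac{\kappa}{\tau}g^{\lambda\mu}\pa_\lambda h^{0\nu}$ and its symmetrisation;
\item $\frac{\kappa}{\tau^2}g^{0\mu}(g^{0\nu}+\delta_0^\nu)$ from differentiating $1/\tau$;
\item average terms with $\delta_0^\nu$.
\end{itemize}
These are combined with $-\frac{4}{\tau}g^{0\lambda}\pa_\lambda g^{\mu\nu}+\frac{8}{\tau}g^{\lambda(\mu}\pa_\lambda g^{\nu)0}$. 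For $(\mu\nu)=(00)$ the $\pa_\tau h^{00}$ coefficients add to $g^{00}\tau^{-1}(-4+8-2\kappa)$, which gives $\frac{2(\kappa-2)}{\tau}g^{00}\pa_\tau$ on the left. The spatial parts $g^{0a}\pa_a=h^{0a}\pa_a$ go into $\mc{G}$. For $(0i)$ one gets $\tau^{-1}(-4+4-\kappa)g^{00}\pa_\tau h^{0i}$ plus $(4-\kappa)\tau^{-1} g^{i\lambda}\pa_\lambda h^{00}$, which lands in $\mc{L}_h^{0i}$. For $(ij)$ one gets $-\frac{4}{\tau}g^{00}\pa_\tau h^{ij}$ together with $\frac{2(4-\kappa)}{\tau}g^{\lambda(i}\pa_\lambda h^{j)0}$.

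\textbf{Step 2: zeroth-order terms.} Next I collect the terms $\frac{12}{\tau^2}g^{00}g^{\mu\nu}-\frac{4}{\tau}\mc{Y}^0 g^{\mu\nu}-\frac{24}{\tau^2}g^{0\mu}g^{0\nu}$, the $\kappa\tau^{-2}$ terms from Step 1, and the fluid terms $\tau^{-2}\rho(2v^\mu v^\nu+g^{\mu\nu})$. In these I write $\mc{Y}^0=\frac{\kappa}{\tau}h^{00}+\frac{3-\kappa}{\tau}h^{00}_\av+\frac{1}{4\tau}(\rho_\av-12)$ and $\rho=12+(\rho_\av-12)+(\rho-\rho_\av)$. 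Two identities are used here. One is $g^{00}+(v^0)^2 = O(h^{0i}v^i,(v^i)^2)$, which follows from the normalisation \eqref{sec:gauged-eqs.eq:fluid-constr} and allows $2(v^0)^2$ to be replaced by $-2g^{00}$ modulo $\mc{G}$. The other is the background balance, namely that the $O(\tau^{-2})$ constant terms cancel when $h=0$, $\rho=12$. The content to verify is that every non-decaying zeroth-order contribution reorganises into $(h^{00}-h^{00}_\av)$, $(\rho-\rho_\av)$, or exactly the combination $(\rho_\av-12)+12h_\av^{00}$. This is where the coefficients $3-\kappa$ and $1/4$ in $\mc{Y}^0$ are tuned. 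For instance, in the $(ij)$ equation the average terms collapse to $\frac{1}{\tau^2}g^{ij}$ times $\big(-4\kappa h^{00}-4(3-\kappa)h^{00}_\av-(\rho_\av-12)+12 h^{00}+\ldots+\rho\big)$. Grouping this yields $-\frac{4(\kappa-3)}{\tau^2}g^{ij}(h^{00}-h^{00}_\av)+\frac{1}{\tau^2}g^{ij}(\rho-\rho_\av)$. The quadratic leftovers ($h^{0i}h^{0j}$, $\chi v^iv^j$, pressure) go to $\mc{G}^{ij}$.

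\textbf{Step 3: the Euler equations and the main obstacle.} For the Euler equations I only need to split $v^\mu\Gamma^\lambda_{\mu\lambda}$ and $v^\mu v^\lambda\Gamma^i_{\mu\lambda}$ into the $\mu=\lambda=0$ parts and the parts with at least one factor $v^a$. I also write $\Pi^{0i}=v^0v^i+h^{0i}$, which produces $\frac{2}{\tau}v^0v^i$ on the left and $-\frac{2}{\tau}h^{0i}$ in $\mc{L}^i_{(\rho,v)}$. The main obstacle is purely bookkeeping: the $(00)$ zeroth-order terms. There $\pa_\tau$ of the average source generates the term $-\frac{2}{\tau}g^{00}\pa_\tau(\ldots)$, and the constants $\kappa+6$ and $3/2$ must emerge consistently. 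I would check this component first by setting all oscillatory parts to zero and verifying that the ODE matches the stated $\mc{L}^{00}_h$.
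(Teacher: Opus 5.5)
Your proposal is correct and takes the same route as the paper: substitute $\mc{Y}^\mu$ into \eqref{sec:gauged-eqs.eq:gauged-eqs-0-einstein}, move the $g^{00}\pa_\tau$ terms to the left, cancel the background $\tau^{-2}$ constants using $g^{00}=-1+h^{00}$ and $\rho=12+(\rho-12)$, and regroup the rest into oscillatory parts plus the combination $(\rho_\av-12)+12h_\av^{00}$; the paper writes this out only for the $(00)$ component and leaves the others to the reader. One bookkeeping warning for when you work out the $(0i)$ zeroth-order terms: they come out as $\frac{\kappa-12}{\tau^2}g^{00}h^{0i}-\frac{3\kappa}{\tau^2}h^{00}h^{0i}$, together with the two average-source terms (which do match), so the true right-hand side is the stated $\mc{L}_h^{0i}+\mc{G}_h^{0i}$ minus $\frac{12}{\tau^2}h^{0i}$ --- this is a slip in the statement, amounting to replacing $\frac{12}{\tau^2}g^{00}g^{0i}$ by $\frac{12}{\tau^2}h^{00}h^{0i}$, and it is harmless downstream because $\mc{G}_h^{0i}$ already contains terms of exactly this form.
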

\begin{proof}
    These follow from straightforward computations. For example, from \eqref{sec:gauged-eqs.eq:ungauged-sys-1} we compute
    \begin{multline*}
        \wh{\square}_g h^{00} = - 2g^{0\lambda}\pa_\lambda \mc{Y}^{0} +\frac{4}{\tau}g^{0\lambda}\pa_\lambda h^{00} -\frac{12}{\tau^2}(g^{00})^2  -\frac{4}{\tau}\mc{Y}^0g^{00}\\+\frac{1}{\tau^2}\rho \big(2(v^0)^2 + g^{00}\big) +\frac{1}{\tau^{2+\frac{6}{n}}}p\big(2(v^0)^2 - g^{00}\big)+ \mc{F}^{00}.
    \end{multline*}
    The terms containing the gauge source functions $\mc{Y}^\mu$ combine with other terms on the right hand side of \eqref{sec:gauged-eqs.eq:gauged-eqs-1} like
    \begin{multline*}
        - 2g^{0\lambda}\pa_\lambda \mc{Y}^{0} +\frac{4}{\tau}g^{0\lambda}\pa_\lambda h^{00} = -\frac{2(\kappa-2)}{\tau}g^{00}\pa_\tau h^{00}-\frac{2(\kappa-2)}{\tau}h^{0a}\pa_a h^{00} + \frac{2\kappa}{\tau^2}g^{00}h^{00}\\
        -\frac{2}{\tau}g^{00}\pa_\tau\big((3-\kappa)h_\av^{00} + \frac{1}{4}(\rho_\av-12)\big)+ \frac{2}{\tau^2}g^{00}\big((3-\kappa)h_\av^{00} + \frac{1}{4}(\rho_\av-12)\big),
    \end{multline*}
    and
    \begin{equation*}
        -\frac{12}{\tau^2}(g^{00})^2 - \frac{4}{\tau}\mc{Y}^0 g^{00} = -\frac{12}{\tau^2} (g^{00})^2 - \frac{4\kappa}{\tau^2}g^{00} h^{00}-\frac{4}{\tau^2}g^{00}\big((3-\kappa)h_\av^{00}+\frac{1}{4}(\rho_\av-12)\big)\\
    \end{equation*}
    Noting that
    \begin{align*}
        -\frac{12}{\tau^2}(g^{00})^2 &= \frac{12}{\tau^2}g^{00} -\frac{12}{\tau^2}g^{00}h^{00},\\
        \frac{1}{\tau^2}\rho \big(2(v^0)^2 + g^{00}\big) &= \frac{2}{\tau^2}\rho\big((v^0)^2 + g^{00}\big) - \frac{1}{\tau^2}(\rho-12)g^{00}-\frac{12}{\tau^2}g^{00},
    \end{align*}
    we thus have
    \begin{multline}\label{sec:gauged-eqs.eq:gauged-eqs-proof-1}
        \Big(\wh{\square}_g + \frac{2(\kappa-2)}{\tau}g^{00}\pa_\tau \Big)h^{00}  = - \frac{1}{\tau^2}(\rho-12)g^{00}  - \frac{2(\kappa+6)}{\tau^2}g^{00}h^{00}\\
        -\frac{2}{\tau}g^{00}\pa_\tau\big((3-\kappa)h_\av^{00} + \frac{1}{4}(\rho_\av-12)\big) - \frac{2}{\tau^2}g^{00}\big((3-\kappa)h_\av^{00} + \frac{1}{4}(\rho_\av-12)\big)\\
        \underbrace{- \frac{2(\kappa-2)}{\tau}h^{0a}\pa_a h^{00} + \frac{2}{\tau^2}\rho\big((v^0)^2 + g^{00}\big) + \frac{1}{\tau^{2+\frac{6}{n}}}p\big(2(v^0)^2 - g^{00}\big)+ \mc{F}^{00}}_{=\mc{G}_h^{00}}.
    \end{multline}
    Rearranging several of the terms in \eqref{sec:gauged-eqs.eq:gauged-eqs-proof-1}, we get
    \begin{multline*}
        - \frac{2(\kappa+6)}{\tau^2}g^{00}h^{00}-\frac{2}{\tau^2}g^{00}\big((3-\kappa)h_\av^{00} + \frac{1}{4}(\rho_\av-12)\big) - \frac{1}{\tau^2}(\rho-12)g^{00}\\
        = -\frac{2(\kappa+6)}{\tau^2}g^{00}(h^{00}-h_\av^{00}) -\frac{1}{\tau^2} g^{00}(\rho-\rho_\av) - \frac{3}{2\tau^2}g^{00}\big((\rho_\av-12)+12h_\av^{00}\big).
    \end{multline*}
    Combining this with \eqref{sec:gauged-eqs.eq:gauged-eqs-proof-1} yields the equation \eqref{sec:gauged-eqs.eq:gauged-eqs-1}. The other equations follow in a similar manner; we leave their proofs to the reader.
\end{proof}

\subsection{Local theory for the Einstein--Euler system}\label{subsec:gauge-initial-data}
We briefly collect several results regarding the well-posedness theory for the Einstein--Euler system with polytropic equation of state. This includes propagation of the gauge condition \eqref{sec:gauged-eqs.eq:gauge-choice}, equivalence of initial data for the Einstein--Euler and gauge-fixed systems, and local well-posedness of the gauge-fixed system.

We introduce the quantities
\begin{equation}
    \mc{D}^\mu = \Gamma^\mu(g) - \mc{Y}^\mu,
\end{equation}
so that the wave gauge condition \eqref{sec:gauged-eqs.eq:gauge-choice} is equivalent to
\begin{equation}\label{sec:gauged-eqs.eq:gauge-d}
    \mc{D}^\mu \equiv 0,\qquad \mu=0,1,2,3.
\end{equation}
In the following lemma, we show that if $(g,\rho,v)$ is a solution to the gauge-fixed system \eqref{sec:gauged-eqs.eq:gauged-eqs-0}, then the functions $\mc{D}^\mu$ satisfy wave equations with respect to the physical metric $\bar{g}$.
\begin{lemma}\label{lem:evol-D}
Suppose we have a solution $(g, \rho, v)$ to the gauge-fixed Einstein--Euler system \eqref{sec:gauged-eqs.eq:gauged-eqs-0}. 
Then, there are smooth functions $A^{\mu\alpha\beta}$ and $B^{\mu\alpha}$ such that the scalar functions $\{\mc{D}^\mu:\mu =0,1,2,3\}$ satisfy
\[
\wh{\square}_g \mc{D}^\mu + A^{\mu\alpha\beta}\pa_\alpha \mc{D}_\beta + B^{\mu\alpha}\mc{D}_\alpha = 0 \,.
\]
\end{lemma}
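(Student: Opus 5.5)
Throughout, we treat $\mathcal{Y}^\mu$ as a given function of the unknowns. This includes the nonlocal averages $h^{00}_\av$ and $\rho_\av$; they depend only on $\tau$, so they are smooth functions of time along the solution. We then follow the standard Choquet-Bruhat propagation-of-gauge argument, based on the contracted Bianchi identity and the divergence-free property of the energy-momentum tensor.

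\emph{Step 1: the Einstein tensor in terms of $\mathcal{D}$.} Insert $\Gamma^\mu = \mathcal{Y}^\mu + \mathcal{D}^\mu$ into Lemma~\ref{sec:gauged-eqs.lem:ricci-wave} and Lemma~\ref{sec:gauged-eqs.lem:ricci-conf}, and compare with \eqref{sec:gauged-eqs.eq:gauged-eqs-0-einstein}. The gauge-fixed system was obtained from \eqref{sec:gauged-eqs.eq:ungauged-sys-1} by replacing $\Gamma^\mu$ with $\mathcal{Y}^\mu$. Since $\Gamma$ enters there linearly, through $-2g^{\lambda(\mu}\pa_\lambda\Gamma^{\nu)}$ and $-\frac{4}{\tau}\Gamma^0 g^{\mu\nu}$, a solution of the gauge-fixed system satisfies
\[
2g^{\mu\alpha}g^{\nu\beta}\big(R_{\alpha\beta}[\bar g]-\bar T^{\rm tr}_{\alpha\beta}\big) = 2g^{\lambda(\mu}\pa_\lambda\mathcal{D}^{\nu)} + \frac{4}{\tau}\mathcal{D}^0 g^{\mu\nu} + (\text{terms linear in }\mathcal{D}),
\]
where $\bar T^{\rm tr}$ is the trace-reversed energy-momentum tensor. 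Equivalently, $\bar G_{\mu\nu}-\bar T_{\mu\nu} = \mathcal{M}_{\mu\nu}$, with $\mathcal{M}_{\mu\nu}$ a linear expression in $\pa\mathcal{D}$ and $\mathcal{D}$ whose coefficients are smooth in $(\tau,g,\pa g,\rho,v)$. The first-order part is $\bar g$-symmetrised $\pa_{(\mu}\mathcal{D}_{\nu)}$ minus half its trace, up to conformal weights.

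\emph{Step 2: apply the Bianchi identity.} Take $\bar\nabla^\mu$ of this identity. We have $\bar\nabla^\mu\bar G_{\mu\nu}=0$ identically. Also $\bar\nabla^\mu\bar T_{\mu\nu}=0$, because the fluid equations \eqref{sec:gauged-eqs.eq:gauged-eqs-0-euler-1}--\eqref{sec:gauged-eqs.eq:gauged-eqs-0-euler-2} are exactly the (untouched) Euler equations expressed with the genuine Christoffel symbols of $g$; no gauge substitution was made there. This gives $\bar\nabla^\mu\mathcal{M}_{\mu\nu}=0$. The principal part of $\bar\nabla^\mu(\bar\nabla_{(\mu}\mathcal{D}_{\nu)}-\frac12\bar g_{\mu\nu}\bar\nabla\cdot\mathcal{D})$ is $\frac12\bar\square\mathcal{D}_\nu$, since the mixed second derivatives cancel. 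The remaining terms are first and zeroth order in $\mathcal{D}$. Rewriting in terms of $g = \tau^{-4}\bar g$ and the coordinate operator $\wh{\square}_g$ only changes lower-order coefficients, which involve $\pa g$, $\tau^{-1}$ and $\pa\log\Omega$. Raising the index with $g$ yields the stated form, with $A^{\mu\alpha\beta}$ and $B^{\mu\alpha}$ smooth functions of $(\tau, g, \pa g, \pa^2 g, \rho, v, \pa\rho, \pa v)$.

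\emph{Main obstacle.} The one delicate point is the momentum constraint on fluid compatibility. The paper dropped the $\nu=0$ component of the momentum equation. We must check that, together with $g_{\alpha\beta}v^\alpha v^\beta=-1$, it is still implied, so that \emph{all four} components of $\bar\nabla^\mu \bar T_{\mu\nu}$ vanish. This holds by the usual argument: the time component equals $v$-contraction of the spatial ones, obtained by differentiating the normalisation along $v$. I would verify it by contracting the full momentum equation with $v_\nu$; both sides then vanish identically.

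The second point requiring care is that the nonlocal averages in $\mathcal{Y}^0$ produce no nonlocal terms in $\mathcal{D}$. They only contribute to the known function $\mathcal{Y}$, whose derivatives cancel identically between the gauge-fixed equations and the $\Gamma$-form. So $\mathcal{D}$ appears only through the differences identified in Step 1.
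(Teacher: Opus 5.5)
Your proposal is correct and follows the same route as the paper. Both rewrite the gauge-fixed Einstein equations as $\bar{G}_{\mu\nu}-\bar{T}_{\mu\nu}$ equal to an expression linear in $\pa\mc{D}$ and $\mc{D}$, then take the $\bar{g}$-divergence and use the Bianchi identity together with the Euler equations. Your additional checks fill in points the paper leaves implicit: the dropped $\nu=0$ momentum component is recovered from the spatial ones via $v_\nu$-contraction and $v_0\neq 0$, and the spatial averages in $\mc{Y}^\mu$ produce no nonlocal terms in the equation for $\mc{D}$.
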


\begin{proof}
We begin by writing out an alternative form of the gauge-fixed Einstein equations. We rearrange \eqref{sec:gauged-eqs.eq:gauged-eqs-0-einstein}, substituting $\mc{Y}^\mu = \Gamma^\mu - \mc{D}^\mu$ to get
\begin{multline*}
    \frac{1}{2}\wh{\square}_g g^{\mu\nu} + g^{\lambda(\mu}\pa_\lambda \Gamma^{\nu)}-\mc{F}^{\mu\nu} + \frac{2}{\tau}g^{0\lambda}\pa_\lambda g^{\mu\nu}-\frac{4}{\tau}g^{\lambda(\mu}\pa_\lambda g^{\nu)0}
    - \frac{6}{\tau^2}g^{00}g^{\mu\nu} -\frac{2}{\tau}\Gamma^0g^{\mu\nu} +\frac{12}{\tau^2}g^{0\mu}g^{0\nu}\\= g^{\alpha\mu}g^{\beta\nu}\Big(T_{\alpha\beta}-\frac{1}{2}g^{\lambda\delta}T_{\lambda\delta}g_{\alpha\beta}\Big) + g^{\lambda(\mu}\pa_\lambda \mc{D}^{\nu)}+ \frac{2}{\tau}\mc{D}^0 g^{\mu\nu}.
\end{multline*}
The left hand side is just $g^{\alpha\mu} g^{\beta\nu}R^{\alpha\beta}[\ol{g}]$ by Lemma~\ref{sec:gauged-eqs.lem:ricci-wave} and \eqref{sec:gauged-eqs.prop:ungauged-sys-proof-1}. Contracting twice with the lowered spacetime metric $\bar{g}$ and rearranging, it follows that the gauge-fixed Einstein equations are equivalent to the equations
\begin{equation}
    R_{\mu\nu}[\bar{g}] - \Big(T_{\mu\nu}-\frac{1}{2}\bar{g}_{\mu\nu}T\Big) = \frac{1}{\tau^2}\bar{g}_{\lambda(\mu}\pa_{\nu)} \mc{D}^{\lambda} + \frac{2}{\tau^3}\mc{D}^0 \bar{g}_{\mu\nu},
\end{equation}
defined with respect to the original spacetime variables (note that $T = \bar{g}^{\alpha\beta}T_{\alpha\beta}$). We take the trace with respect to $\bar{g}$:
\begin{equation}
    R[\bar{g}] + T =  \frac{1}{\tau^2}\pa_\lambda \mc{D}^\lambda + \frac{8}{\tau^3} \mc{D}^0,
\end{equation}
and so the gauge-fixed Einstein equations \eqref{sec:gauged-eqs.eq:gauged-eqs-0-einstein} imply\footnote{The functions $\mc{D}^\mu$ can also be written solely in terms of the coordinates $(\tau,x)$ and spacetime evolution variables $(\bar{g},\bar{\rho},\bar{p},\bar{v})$. We do not do this here, but point the interested reader to \protect{\cite[Remark 3.9]{Ber:decelerated:25}}.}
\begin{equation}\label{eq:reducedEin-D}
    R_{\mu\nu}[\bar{g}] - \frac{1}{2}\bar{g}_{\mu\nu}R[\bar{g}] - T_{\mu\nu} =  \frac{1}{\tau^2}\bar{g}_{\lambda(\mu}\pa_{\nu)} \mc{D}^{\lambda} -\frac{1}{2\tau^2}\bar{g}_{\mu\nu}\pa_\lambda \mc{D}^\lambda  - \frac{2}{\tau^3}\mc{D}^0 \bar{g}_{\mu\nu}\,.
\end{equation}
We take the divergence of both sides of this equation with respect to the covariant derivative $\bar{\nabla}$. The left hand side vanishes by a combination of the Bianchi identity and the Euler equations \eqref{sec:gauged-eqs.eq:gauged-eqs-0-euler-1}--\eqref{sec:gauged-eqs.eq:gauged-eqs-0-euler-2}, leaving us with the equation
\begin{equation}
    0 = \bar{\nabla}^\mu \Big(\frac{1}{\tau^2}\bar{g}_{\lambda(\mu}\pa_{\nu)} \mc{D}^{\lambda} -\frac{1}{\tau^2}\bar{g}_{\mu\nu}\pa_\lambda \mc{D}^\lambda  - \frac{2}{\tau^3}\mc{D}^0 \bar{g}_{\mu\nu}\Big).
\end{equation}
Treating $\bar{g}_{\lambda\mu} \mc{D}^{\lambda}$ as a one-form indexed by $\mu$, since $D^\lambda$ are now scalar functions, the result follows upon distributing the covariant derivative. 
\end{proof}

Initial data for the Einstein--Euler system is typically specified in a geometric manner, consisting of a $3$-dimensional Riemannian manifold $(\Sigma,\bar{g}_0)$, along with a covariant two-tensor $\bar{k}_0$, a function $\bar{\rho}_0$ and a vectorfield $\bar{v}_0$, all on $\Sigma$. In order to launch a solution to the Einstein--Euler system, the quadruple $(\bar{g}_0,\bar{k}_0,\bar{\rho}_0,\bar{v}_0)$ must also satisfy the \emph{Gauss-Codazzi constraints}
\begin{subequations}\label{sec:gauged-eqs.eq:constraints}
    \begin{align}
        R[\bar{g}_0] - (\text{Tr}_{\idmet} \bar{k}_0)^2 + |\bar{k}_0|^2 &= 2 T_{00}|_{\Sigma}, \\
        \nabla[\bar{g}_0]^i (\bar{k}_0)_{ij}- \nabla_j[\bar{g}_0] \text{Tr}_{\idmet} \idsff &= T_{0j}|_{\Sigma}\,.
    \end{align}
\end{subequations}

It is not a priori obvious that data for the gauge-fixed system of Definition~\ref{def:gauge-fixed-system}, given by
\begin{equation*}
    (g^{\mu\nu},\pa_\tau g^{\mu\nu},\rho,v^i)\big|_{\Sigma},\qquad \mu,\nu=0,1,2,3,\quad i=1,2,3,
\end{equation*}
can be specified in a manner that simultaneously guarantees that the induced geometric data $(\Sigma,\bar{g}_0,\bar{k}_0,\bar{\rho}_0,\bar{v}_0)$ satisfies the constraints \eqref{sec:gauged-eqs.eq:constraints}, and that the wave coordinate condition initially satisfies $\mc{D}^\mu|_{\Sigma} = \pa_\tau\mc{D}^\mu|_{\Sigma} = 0$. In view of Lemma~\ref{lem:evol-D}, this would guarantee that the solution to the gauge-fixed system launched by $(g^{\mu\nu},\pa_\tau g^{\mu\nu},\rho,v^i)\big|_{\Sigma}$ can be transformed back into a solution of the Einstein--Euler system. We now prove this assertion.
\begin{lemma}\label{sec:gauged-eqs.lem:initial-data}
The initial data $(g^{\mu\nu},\pa_\tau g^{\mu\nu}, \rho, v^i)|_\Sigma$ for the gauge-fixed equations \eqref{sec:gauged-eqs.eq:wave}--\eqref{sec:gauged-eqs.eq:fluid} can be chosen so that the emanating solution satisfies the Einstein--Euler system \eqref{intro:Einstein-Euler}.
\end{lemma}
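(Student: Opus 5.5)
The argument is the standard Choquet-Bruhat one: build gauge-fixed data from geometric data satisfying the constraints \eqref{sec:gauged-eqs.eq:constraints}, arrange that $\mc{D}^\mu|_\Sigma = 0$, and then use the constraints to force $\pa_\tau \mc{D}^\mu|_\Sigma = 0$. Lemma~\ref{lem:evol-D} then gives $\mc{D}^\mu \equiv 0$ by uniqueness for linear wave equations, and reversing Proposition~\ref{sec:gauged-eqs.prop:ungauged-sys} yields a solution of \eqref{intro:Einstein-Euler}.

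\emph{Step 1 (choosing the data).} Start from geometric data $(\Sigma=\TT^3,\bar{g}_0,\bar{k}_0,\bar{\rho}_0,\bar{v}_0)$ satisfying \eqref{sec:gauged-eqs.eq:constraints}, and fix an initial time $\tau_0$. Set $\bar{g}_{ij}|_\Sigma = (\bar{g}_0)_{ij}$, and freely choose the lapse and shift, e.g.\ $\bar{g}_{00}|_\Sigma=-\tau_0^4$ and $\bar{g}_{0i}|_\Sigma=0$. This determines $g^{\mu\nu}|_\Sigma$.

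Next fix $\pa_\tau g_{ij}|_\Sigma$ from $\bar{k}_0$ via the second fundamental form formula. Since $\Omega=\tau^{-2}$ is an explicit function of $\tau$, the conformal factor only contributes known terms.

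Then use the freedom in the time derivatives of lapse and shift. Because
\[
\Gamma^\mu = -\frac{1}{\sqrt{|g|}}\pa_\alpha\bigl(\sqrt{|g|}\,g^{\alpha\mu}\bigr),
\]
$\Gamma^\mu|_\Sigma$ depends linearly on $\pa_\tau g^{0\mu}|_\Sigma$ with coefficient $-g^{00}\neq 0$, plus terms already fixed. The right-hand side $\mc{Y}^\mu|_\Sigma$ involves only undifferentiated $g$ and $\rho_\av$. So we can solve algebraically for $\pa_\tau g^{0\mu}|_\Sigma$ to get $\mc{D}^\mu|_\Sigma=0$. The $\rho_\av$ and $h^{00}_\av$ terms in $\mc{Y}^0$ are nonlocal but only through averages of quantities already known at $\Sigma$, so they are just constants at this stage.

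Finally set $\rho|_\Sigma=\tau_0^6\bar{\rho}_0$ and take $v^i|_\Sigma$ from the spatial part of $\bar{v}_0$ rescaled, with $v^0$ determined by \eqref{sec:gauged-eqs.eq:fluid-constr}.

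\emph{Step 2 (vanishing of $\pa_\tau\mc{D}^\mu$).} This is the main step. Evaluate \eqref{eq:reducedEin-D} on $\Sigma$ and contract with the unit normal $n^\mu$ of $\Sigma$. The $n$–$n$ and $n$–$i$ components of the Einstein tensor minus $T$ are exactly the constraint expressions \eqref{sec:gauged-eqs.eq:constraints}, so they vanish. Since $\mc{D}^\lambda|_\Sigma=0$, all tangential derivatives $\pa_i\mc{D}^\lambda$ and the undifferentiated $\mc{D}^0$ term vanish as well. What remains is a linear algebraic system in $\pa_\tau \mc{D}^\lambda|_\Sigma$, of the schematic form $\bar{g}_{\lambda(\mu}n_{\nu)}\pa_\tau\mc{D}^\lambda - \tfrac12 \bar{g}_{\mu\nu}n_\tau\pa_\tau \mc{D}^\tau=0$ for $\mu=n$. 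The point to verify is that this system is nondegenerate. Check it at the background, where $n\propto\pa_\tau$: the $0i$ component gives $\pa_\tau\mc{D}^i=0$, and the $00$ component gives $\bar g_{00}\pa_\tau\mc{D}^0(1-\tfrac12)=0$, hence $\pa_\tau\mc{D}^0=0$. Invertibility persists for the nearby data by continuity, and in fact in general, since the system is the standard one from Choquet-Bruhat's argument.

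\emph{Step 3 (propagation).} A subtlety is that $\mc{Y}^\mu$ contains spatial averages, so the equation for $\mc{D}^\mu$ in Lemma~\ref{lem:evol-D} is mildly nonlocal. However, that lemma has already absorbed this into smooth coefficients along the given solution, so it remains a linear homogeneous wave system with zero Cauchy data. An energy estimate on $\TT^3$ gives $\mc{D}^\mu\equiv0$ on the existence interval.

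Then $\Gamma^\mu=\mc{Y}^\mu$, so \eqref{sec:gauged-eqs.eq:gauged-eqs-0-einstein} coincides with \eqref{sec:gauged-eqs.eq:ungauged-sys-1}. Proposition~\ref{sec:gauged-eqs.prop:ungauged-sys} then gives that $(\bar{g},\bar{\rho},\bar{v})=(\tau^4 g,\tau^{-6}\rho,\tau^{-2}v)$ solves \eqref{intro:Einstein-Euler}. The Euler equations hold directly, and the $0$-component of the momentum equation follows from the normalisation, as noted in that proof.
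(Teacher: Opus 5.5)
Your proposal is correct and takes essentially the same route as the paper: zero shift with $g^{00}|_\Sigma=-1$, then $\pa_\tau g^{ij}|_\Sigma$ from $\bar{k}_0$, then $\pa_\tau g^{0\mu}|_\Sigma$ solved algebraically from $\mc{D}^\mu|_\Sigma=0$, then the $00$ and $0i$ components of \eqref{eq:reducedEin-D} together with the constraints forcing $\pa_\tau\mc{D}^\mu|_\Sigma=0$, and finally Lemma~\ref{lem:evol-D} propagating the gauge. A few small inaccuracies do not affect the argument:
\begin{itemize}
\item The coefficient of $\pa_\tau g^{00}$ in $\Gamma^0$ is $-1+\tfrac12 g^{00}g_{00}$, which equals $-\tfrac12$ at zero shift and matches the paper's formula. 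The coefficient of $\pa_\tau g^{0i}$ in $\Gamma^i$ is $-1$, not $-g^{00}$. Both are nonzero, so your solve still works.
\item Your shift vanishes exactly on $\Sigma$, so the Step~2 computation is exact and no continuity argument is needed.
\item The $\mc{D}^\mu$ equation is genuinely local; the averages enter only through the definition of $\mc{D}^\mu$.
\end{itemize}
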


\begin{proof}
Recall from Theorem \ref{main-theorem-rough}, our initial data set $((\bar{g}_0)_{ij}, (\bar{k}_0)_{ij}, \bar{\rho}_0, \bar{v}_0^i)$ specified on a spacelike hypersurface $\Sigma=\{\tau=\tau_0\}\times \mathbb{T}^3$ satisfies the constraints \eqref{sec:gauged-eqs.eq:constraints}. Clearly we should set $g^{ij}|_{\Sigma} = \tau_0^4 \bar{g}^{ij}$, $\rho|_{\Sigma} = \bar{\rho}_0$ and, after inverting the conformal transformation, set $v^i|_{\Sigma} = \tau_0^2 \bar{v}_0^i$. 
We pick the remaining data $g^{0\mu}|_{\Sigma}$ and $\partial_\tau  g^{0\mu}|_{\Sigma}$ in a way that guarantees that $\mc{D}^\mu$ and $\pa_\tau \mc{D}^\mu$ vanish on $\Sigma$, and so in combination with Lemma \ref{lem:evol-D} and classical properties of wave equations, we may conclude that $\mc{D}^\mu \equiv 0$. 

First, we choose
\[ g^{0i}|_\Sigma= 0, \quad g^{00}|_\Sigma = -1\,.\]
Due to this choice, the future directed unit normal to the hypersurface is $\pa_\tau$ and we also have $h^{0\mu}|_\Sigma = 0$.
Consequently, we prescribe $\partial_\tau g^{ij}$ by undoing the conformal transformation inside the identity $\partial_\tau \bar{g}^{ij} = 2\bar{k}^{ij}$ to find
\[ \pa_\tau g^{ij}|_\Sigma =\tau_0^4\Big(2\bar{k}_0^{ij}+\frac{4}{\tau_0^5}\bar{g}_0^{ij}\Big)\,.\]
Then we prescribe $\partial_\tau g^{00}$ and $\pa_\tau g^{0i}$ in a manner that $\mathcal{D}^\mu|_{\Sigma} = 0$ is satisifed. Expanding out $\Gamma^0(g)=\mc{Y}^0$ produces
\begin{align}
\mc{Y}^0 = -\pa_\tau g^{00} -\pa_a g^{0a} +\frac{1}{2} g_{\alpha\beta} g^{00}\pa_\tau g^{\alpha\beta}
+\frac{1}{2} g_{\alpha\beta} g^{0a}\pa_a g^{\alpha\beta}\,.
\end{align}
Several terms vanish since we have picked the shift to initially vanish, and so rearranging this yields
\[
\frac12 \pa_\tau g^{00}|_\Sigma  = - \frac{1}{4\tau_0}((\idrho)_\av - 12) - \frac12 \tau_0^{-4}(\idmet)_{ab} (\pa_\tau g^{ab}|_\Sigma) \,.
\]
Similarly 
\[
\mc{Y}^i= -\pa_\tau g^{0i}
-\pa_a g^{ia} +\frac{1}{2} g_{\alpha\beta} g^{0i}\pa_\tau g^{\alpha\beta} +\frac{1}{2} g_{\alpha\beta} g^{ia}\pa_a g^{\alpha\beta}\,,
\]
and so we pick
\[
\pa_\tau g^{0i}|_\Sigma = -\tau_0^4 \pa_j (\idmet)^{ij}+\frac{1}{2} \tau_0^4 (\idmet)_{ab} (\idmet)^{ij}\pa_j (\idmet)^{ab}\,.
\]
All together this ensures $\mathcal{D}^\mu = 0$ on the initial hypersurface. Then, evaluating the $0i-$components of \eqref{eq:reducedEin-D} and using the constraint equations \eqref{sec:gauged-eqs.eq:constraints} yields
\[\frac12 \bar{g}_{ia} \pa_\tau \mc{D}^a = 0\,,\]
while evaluating the $00-$component yields
\[ 0=  -\frac1{2} \pa_\tau \mc{D}^0\,.\]
In combination these imply that $\pa_\tau \mathcal{D}^\mu|_{\Sigma} = 0$  for $\mu = 0,1,2,3$.
\end{proof}

Finally, we state the local existence and uniqueness theorem satisfied by the gauge-fixed system of Definition~\ref{def:gauge-fixed-system}. We do not prove this result, but remark that it follows from a standard argument involving the contraction mapping theorem and energy estimates of the variety proved already in our global existence theorem. We remark that the inclusion of spatial averages in the gauge-fixed system does not affect such an argument, as the spatial averages can be viewed as functions that are smooth (in fact constant) along the level sets of $\tau$.
\begin{proposition}[Local well-posedness of the gauge-fixed system]\label{prop:local}
    Fix a polytropic index $n > 0$, an integer $N \geq 3$, and an initial time $\tau_0 > 0$. Letting $\Sigma_\tau = \{\tau\}\times \TT^3$, consider initial data
    \begin{gather*}
        g^{\mu\nu}|_{\Sigma_{\tau_0}} = g_0^{\mu\nu} \in H^{N+1}(\TT^3)\,,\qquad \pa_\tau g^{\mu\nu}|_{\Sigma_{\tau_0}} = g_1^{\mu\nu} \in H^{N}(\TT^3)\,, \qquad \mu,\nu=0,1,2,3\,, \\
        \rho|_{\Sigma_{\tau_0}} = \rho_0 \in H^N(\TT^3)\,, \qquad v^i|_{\Sigma_{\tau_0}} = v_0^i \in H^N(\TT^3)\,, \qquad i=1,2,3
    \end{gather*}
    for the gauge-fixed Einstein--Euler system \eqref{sec:gauged-eqs.eq:wave}--\eqref{sec:gauged-eqs.eq:fluid} (not necessarily satisfying the constraints \eqref{sec:gauged-eqs.eq:gauge-choice}, \eqref{sec:gauged-eqs.eq:constraints}). Suppose $g_0^{00} < 0$, $\inf_{\Sigma}|\rho_0|>0$, and the functions $g_{0}^{ij}$ constitute the components of (the inverse of) a Riemannian metric on $\Sigma$. Then there exists a unique classical solution $(g,\rho,v)$ to the gauge-fixed Einstein--Euler system \eqref{sec:gauged-eqs.eq:gauged-eqs-0} on $[\tau_0,T)\times \TT^3$, $T>\tau_0$ launched by
    $(g_0^{\mu\nu},g_1^{\mu\nu},\rho_0,v_0^i)$. The solution has regularity properties
    \begin{gather*}
        g^{\mu\nu} \in C^0\big([\tau_0,T), H^{N+1}(\TT^3)\big) \cap C^1\big([\tau_0,T), H^{N}(\TT^3)\big)\,, \qquad \mu,\nu=0,1,2,3,\\
        \rho \in C^0\big([\tau_0,T), H^{N}(\TT^3)\big)\,,\qquad  v^i \in C^0\big([\tau_0,T), H^{N}(\TT^3)\big)\,,\qquad i=1,2,3\,,
    \end{gather*}
    and the maximal time of existence $T$ and norms of the solution depend continuously on the initial data.

    If, in addition $(g_0^{\mu\nu},g_1^{\mu\nu},\rho_0,v_0^i)$ satisfy the constraints 
    \begin{subequations}\label{sec:gauged-eqs.eq:constraints2}
        \begin{align}
            R[\bar{g}_0] - (\text{Tr}_{\idmet} \bar{k}_0)^2 + |\bar{k}_0|^2 &= 2 T_{00}|_{\Sigma}\,, \\
        \nabla[\bar{g}_0]^i (\bar{k}_0)_{ij}- \nabla_j[\bar{g}_0] \text{Tr}_{\idmet} \idsff &= T_{0j}|_{\Sigma}\,,\\
        \mc{D}^\mu|_{\Sigma} &= 0\,, \qquad \mu=0,1,2,3\,,
        \end{align}
    \end{subequations}
    then the triple $(\bar{g},\bar{\rho},\bar{v})$ defined by \eqref{sec:gauged-eqs.eq:conf-rescaling} constitutes a solution to the Einstein--Euler system \eqref{intro:Einstein-Euler}.
\end{proposition}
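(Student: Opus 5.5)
The plan is the standard one for quasilinear hyperbolic systems of mixed order: put the system in a form amenable to energy estimates, construct solutions by Picard iteration (a contraction in a low norm with boundedness in a high norm), and then recover the Einstein--Euler system through propagation of the gauge. Since $\mc{Y}^\mu$ depends on $h^{00}_\av$ and $\rho_\av$, the spatial averages are nonlocal in space, but they are functions of $\tau$ alone. They are bounded by $\|h^{00}\|_{L^2}$ and $\|\rho\|_{L^2}$, and so they enter every estimate as harmless lower-order coefficients. The one exception is $\pa_\tau \rho_\av$, which appears in $\mc{L}_h^{00}$. We handle it by averaging the continuity equation \eqref{sec:gauged-eqs.eq:gauged-eqs-0-euler-1}, which expresses $\pa_\tau\rho_\av$ through first derivatives of $(\rho,v)$ and $\pa g$. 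This costs no regularity at the level $H^{N}$ for the source of the wave equation.

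\textbf{Step 1: symmetric hyperbolic form and energy estimates.} For the fluid, solve the normalisation \eqref{sec:gauged-eqs.eq:fluid-constr} for $v^0>0$ as a smooth function of $(g,v^i)$; this uses that $g$ is Lorentzian with $g^{00}<0$. The Euler equations \eqref{sec:gauged-eqs.eq:gauged-eqs-4}--\eqref{sec:gauged-eqs.eq:gauged-eqs-5} then become a first-order system in $(\rho,v^i)$. Multiplying the $\rho$-equation by $\tau^{-6/n}p'(\rho)/\chi^2$ and the $v$-equations by $g_{ij}$ (projected suitably) gives a symmetrizer, in the usual way for the relativistic Euler equations. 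The symmetrizer is positive definite because $\rho>0$, and positivity of $\rho$ persists for short time by continuity. The coefficient of $\pa_\tau$ is positive definite since $v^0>0$. For the metric, each component $g^{\mu\nu}$ satisfies $\wh{\square}_g g^{\mu\nu}=$ (terms involving at most first derivatives of $g$ and zeroth-order terms in $(\rho,v)$). Standard $H^{N+1}\times H^N$ energy estimates for the wave part, together with $H^N$ estimates for the fluid part, close. The mixed-order coupling is consistent because the wave source contains only $(\rho,v)$ undifferentiated. Meanwhile $\Gamma^\lambda_{\mu\lambda}$ in the Euler equations contains only $\pa g\in H^N$.

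\textbf{Step 2: existence, uniqueness, continuity and continuation.} Existence comes from iterating the linearised system on a short interval $[\tau_0,T)$. We obtain uniform $H^{N+1}\times H^N\times H^N$ bounds from Step 1 and contraction in $H^{1}\times L^2\times L^2$. Interpolation then gives the stated regularity, and uniqueness follows from the same difference estimate. Continuous dependence follows by a Bona--Smith type argument, which also gives the continuity in time into the top-order spaces. We need $N\ge3$ so that the coefficients are in $W^{1,\infty}$ by the Sobolev embedding lemma.

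\textbf{Step 3: constraint propagation.} If the constraints \eqref{sec:gauged-eqs.eq:constraints2} hold, then the argument in the proof of Lemma~\ref{sec:gauged-eqs.lem:initial-data} shows $\pa_\tau\mc{D}^\mu|_\Sigma=0$; that argument evaluates \eqref{eq:reducedEin-D} on $\Sigma$ using the Gauss--Codazzi constraints. Lemma~\ref{lem:evol-D} gives a linear homogeneous wave equation for $\mc{D}^\mu$ with vanishing data. By uniqueness for linear wave equations (an energy estimate plus Gr\"onwall), $\mc{D}^\mu\equiv0$. Hence $\Gamma^\mu=\mc{Y}^\mu$, the gauge-fixed equations coincide with \eqref{sec:gauged-eqs.eq:ungauged-sys-1}--\eqref{sec:gauged-eqs.eq:ungauged-sys-2}, and Proposition~\ref{sec:gauged-eqs.prop:ungauged-sys} yields a solution of \eqref{intro:Einstein-Euler}.

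\textbf{Main obstacle.} I expect the main difficulty to be in Step 3. Lemma~\ref{lem:evol-D} is applied to $\mc{D}^\mu$, which contains $\pa g$, so it needs enough regularity to be meaningful. There is also a subtlety: $\mc{Y}^0$ involves averages, so the divergence argument produces nonlocal terms $\pa\rho_\av$ and $\pa h^{00}_\av$. These must be checked to be expressible through $\mc{D}$ itself, or to be lower order in a way that still permits a Gr\"onwall argument for $\|\mc{D}\|_{H^1}$. I would first try to treat these averaged terms as bounded linear functionals of $\mc{D}$ inside the energy estimate. This requires the Euler equations to hold exactly, which they do.
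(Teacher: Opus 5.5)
Your route is the standard argument the paper invokes without giving details: energy estimates for the mixed-order system, with the averages treated as $\tau$-dependent lower-order coefficients, then Lemma~\ref{lem:evol-D} together with \eqref{eq:reducedEin-D} on $\Sigma$ for constraint propagation. \textbf{There is one step that is false as written:} the claim that the fluid symmetrizer is positive definite ``because $\rho>0$'' (and that the $\pa_\tau$-coefficient is positive because $v^0>0$). The energy density your multipliers produce is a positive multiple of the integrand of $E_f$ in Definition~\ref{def:base-fluid-energy}, evaluated on $(\varphi,\xi)$ with $g(\xi,v)=0$. The maximum of $(\xi^0)^2/g(\xi,\xi)$ over $\xi\perp v$ is $\Pi^{00}$, so this form is positive definite if and only if
\begin{equation*}
(v^0)^2-\tau^{-\frac{6}{n}}p'(\rho)\,\Pi^{00}>0\,.
\end{equation*}
This is exactly the quantity inverted in Lemma~\ref{sec:energy-fluid.lem:time-deriv}; equivalently, $\{\tau=\mathrm{const}\}$ must be spacelike for the acoustic metric. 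Since $\Pi^{00}=(v^0)^2+g^{00}$ with $g^{00}<0$, the condition holds whenever the sound speed $\tau^{-6/n}p'(\rho)=C(1+\frac1n)\bar\rho^{1/n}$ is at most $1$. For the polytropic law, however, this speed is unbounded. For instance, take $g_0^{\mu\nu}=m^{\mu\nu}$, $\tau_0^{-6/n}p'(\rho_0)=2$ and $\sum_i(v_0^i)^2=2$: then $(v^0)^2=3<4=\tau_0^{-6/n}p'(\rho_0)\Pi^{00}$, and the system is not even hyperbolic with respect to $\tau$. So you need this extra open condition on the data, propagated for short time by continuity in the same way you propagate $\rho>0$. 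The proposition as stated omits the condition too. This is harmless for the paper, because in Theorem~\ref{sec:stability.thm:main} one has $\rho_0\approx 12$ and $\tau_0$ large, so the sound speed is $\ll1$.

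Separately, the obstacle you flag in Step~3 does not arise. When the gauge-fixed equations are rewritten using $\mc{Y}^\mu=\Gamma^\mu-\mc{D}^\mu$, as in the proof of Lemma~\ref{lem:evol-D}, every occurrence of $\mc{Y}$ is absorbed either into the Einstein tensor or into $\mc{D}$. Hence the right side of \eqref{eq:reducedEin-D} contains only $\mc{D}$ and $\pa\mc{D}$. The terms $\pa_\tau\rho_\av$ and $\pa_\tau h^{00}_\av$ are simply pieces of $\pa_\tau\mc{D}^0$ and never appear separately. The divergence of the left side vanishes by the Bianchi identity and the exact Euler equations. Lemma~\ref{lem:evol-D} is therefore already a closed, local, linear homogeneous wave equation, with coefficients in $g,\pa g,\pa^2g$ that are bounded for $N\ge3$, and $\mc{D}\in C^0H^N\cap C^1H^{N-1}$ is ample regularity for the Gr\"onwall uniqueness argument. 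One small point: for general lapse and shift, contract \eqref{eq:reducedEin-D} with the unit normal instead of taking the $00$ and $0i$ components as in Lemma~\ref{sec:gauged-eqs.lem:initial-data}. On $\Sigma$, where $\mc{D}=0$, this gives $\bar g_{\lambda\nu}\pa_\tau\mc{D}^\lambda=0$, hence $\pa_\tau\mc{D}^\mu|_\Sigma=0$.
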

\section{The nonlinear stability theorem}\label{sec:stability}
In this section, we state the nonlinear stability theorem for the Einstein--Euler system with polytropic equation of state, and begin its proof.
\paragraph{\textit{Bootstrap norms}}
We define several norms which will play a central role in proving Theorem~\ref{sec:stability.thm:main}.

We let $\delta > 0$ denote a small parameter satisfying the bounds
\begin{equation}\label{sec:stability.eq:delta-choice}
     \frac{1}{\sqrt{\kappa}} < \delta < \max\big\{\frac{3}{4n},1-\frac{3}{n}\big\}\,,
\end{equation}
where $n$ is the polytropic index of the fluid, and $\kappa$ is the damping parameter defined by the gauge condition \eqref{sec:gauged-eqs.eq:gauge-choice}. Note that in these bounds, we are already constraining the damping parameter $\kappa$ to be of sufficiently large size. Conversely, this means that $\delta$ can in principle be made arbitrarily small, provided $\kappa$ is large enough.

For a given integer $N \geq 3$, we define the following norms.
\begin{itemize}
    \item Metric averages:
    \begin{subequations}\label{sec:stability.eq:boot}
        \begin{multline}\label{sec:stability.eq:boot-1}
            S_{h;\av}(\tau):= \big|h_\av^{00}(\tau)\big| + \tau^{1+\frac{3}{n}}\sum_{i=1}^3 \big|h_\av^{0i}(\tau)\big|+ \sum_{i,j=1}^3 \big|h_\av^{ij}(\tau)\big|\\
            +  \tau^{1+\frac{3}{n}}\big|\pa_\tau h_\av^{00}(\tau)\big| + \tau^{2+\frac{3}{n}}\sum_{i=1}^3 \big|\pa_\tau h_{\av}^{0i}(\tau)\big| + \tau^{1+\frac{3}{n}}\sum_{i,j=1}^3 \big|\pa_\tau h_\av^{\mu\nu}(\tau)\big| \leq \ve,
        \end{multline}
        \item Fluid averages:
        \begin{equation}\label{sec:stability.eq:boot-2}
            S_{(\rho,v);\av}(\tau):= \big|\rho_\av(\tau)-12\big| + \tau^{1+\frac{3}{n}}\sum_{i=1}^3\big|v_\av^i(\tau)\big|\leq \ve,\\
        \end{equation}
        \item Higher-order metric derivatives:
        \begin{equation}\label{sec:stability.eq:boot-3}
             S_{h;N}(\tau):= \tau^{2+\frac{3}{2n}-\delta}\sum_{\mu=0}^3 \big\|\pa \pa_x h^{0\mu}\big\|_{H^{N-1}} + \tau^{2-\delta}\sum_{i,j=1}^3 \big\|\pa \pa_x h^{ij}\big\|_{H^{N-1}} \leq \ve,\\
        \end{equation}
         \begin{equation}\label{sec:stability.eq:boot-3.5}
             S_{h;N-1}^{\mathbf{v}}(\tau):= \tau^{2+\frac{3}{2n}-\delta}\sum_{\mu=0}^3 \big\|\pa \pav  h^{0\mu}\big\|_{H^{N-1}} + \tau^{2-\delta}\sum_{i,j=1}^3 \big\|\pa \pav  h^{ij}\big\|_{H^{N-1}} \leq \ve,\\
        \end{equation}
        \item Higher-order fluid derivatives:
        \begin{equation}\label{sec:stability.eq:boot-4}
            S_{(\rho,v);N}(\tau):= \tau^{1-\frac{3}{2n}-\delta}\big\|\pa_x \rho\big\|_{H^{N-1}} + \tau^{1+\frac{3}{2n}-\delta}\big\|\pa_x v\big\|_{H^{N-1}} \leq \ve,
        \end{equation}
        where $\| v\| := \sum_{i=1}^3 \|v^i\|$ and $\|\pa_x v\|:= \sum_{i=1}^3 \|\pa_x v^i\|$.
    \end{subequations}
\end{itemize}
We also define the total norm
\begin{equation*}
    S_N(\tau):= S_{h;\av}(\tau) + S_{(\rho,v);\av}(\tau) + S_{h;N}(\tau) + S_{h;N-1}^{\mathbf{v}}(\tau) + S_{(\rho,v);N}(\tau).
\end{equation*}
By the Poincar\'e inequality, the total norm $S_N$ controls the $H^{N+1}$ norm of metric components $g^{\mu\nu}$, and the $H^N$ norms of $\pa_\tau g^{\mu\nu}$, $\rho$, and $v$.

We are now ready to state the main theorem of this article.
\begin{theorem}[Future-stability of the Einstein--de Sitter universe]\label{sec:stability.thm:main}
    Fix a polytropic index $n > 3$, an integer $N \geq 3$, and an initial time $\tau_0 > 0$. Consider initial data
    \begin{gather*}
        g^{\mu\nu}|_{\Sigma_{\tau_0}} = g_0^{\mu\nu} \in H^{N+1}(\TT^3),\qquad \pa_\tau g^{\mu\nu}|_{\Sigma_{\tau_0}} = g_1^{\mu\nu} \in H^{N}(\TT^3), \qquad \mu,\nu=0,1,2,3,\\
        \rho|_{\Sigma_{\tau_0}} = \rho_0 \in H^N(\TT^3), \qquad v^i|_{\Sigma_{\tau_0}} = v_0^i \in H^N(\TT^3), \qquad i=1,2,3
    \end{gather*}
    for the gauge-fixed Einstein--Euler system \eqref{sec:gauged-eqs.eq:gauged-eqs-0}, satisfying the constraints \eqref{sec:gauged-eqs.eq:constraints2}. Then there exists a constant $\ve > 0$ sufficiently small that if $\ve < \ve_0$, $\tau_0$ is sufficiently large, and the initial data satisfies
    \begin{equation}\label{sec:stability.eq:initial-bound}
        \sum_{\mu,\nu=0}^3 \Big(\|g_0^{\mu\nu} - m^{\mu\nu}\|_{H^{N+1}(\Sigma_{\tau_0})} + \|g_1^{\mu\nu}\|_{H^N(\Sigma_{\tau_0})}\Big) + \|\rho_0-12\|_{H^{N}(\Sigma_{\tau_0})} + \sum_{i=1}^3 \|v_0^i\|_{H^N(\Sigma_{\tau_0})} \leq \ve^2,
    \end{equation}
    then the solution $(g,\rho,v)$ to the gauge-fixed system launched by $(g_0^{\mu\nu},g_1^{\mu\nu},\rho_0,v_0^i)$ is future-global and satisfies the bound
    \begin{equation}\label{sec:stability.eq:global-bound}
        S_N(\tau) \leq \ve\,,\qquad \forall\tau\in [\tau_0,\infty)\,.
    \end{equation}
    
    Moreover, the physical variables $(\bar{g},\bar{\rho},\bar{v})$ constitute a future-causal geodesically complete solution to the Einstein--Euler system with polytropic equation of state $\bar{p} = C\bar{\rho}^{1+\frac{1}{n}}$, that converges as $\tau \ra \infty$ to a spatially homogeneous solution to \eqref{intro:Einstein-Euler}. After transforming to the time coordinate $t = \tau^{1/3}$, the physical variables $(\bar{g},\bar{\rho},\bar{v})$ have the asymptotics
    \begin{subequations}\label{sec:stability.eq:asymptotics}
        \begin{align}
            \|\bar{g}_{00} - g_{00}^\infty\|_{W^{N-2,\infty}} &\lesa \tau^{-\frac{3}{n}},\\
            \sum_{i=1}^3\|t^{-\frac{2}{3}}\bar{g}_{0i}\|_{W^{N-2,\infty}} &\lesa \tau^{-(1+\frac{3}{n})},\\
            \sum_{i,j=1}^3\|t^{-\frac{4}{3}}\bar{g}_{ij} - g_{ij}^{\infty}\|_{W^{N-2,\infty}} &\lesa \tau^{-\frac{3}{n}},\\
           \|\tau^2\bar{\rho}-\rho^\infty\|_{W^{N-3,\infty}} &\lesa \tau^{-\min\{\frac{3}{n},1-\frac{3}{2n}-\delta\}},\\
           \sum_{i=1}^3\|t^{\frac{2}{3}}\bar{v}^i\|_{W^{N-3,\infty}} &\lesa \tau^{-(1+\frac{3}{2n}-\delta)}.
        \end{align}
    \end{subequations}
\end{theorem}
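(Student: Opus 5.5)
The proof is a continuity (bootstrap) argument built on the local theory of Proposition~\ref{prop:local}. Let $T^*$ be the supremum of times $T>\tau_0$ for which the solution exists on $[\tau_0,T)$ and satisfies $S_N(\tau)\le\ve$. The smallness \eqref{sec:stability.eq:initial-bound} gives $S_N(\tau_0)\lesssim\ve^2$, where $\tau_0$-dependent weights are absorbed by taking $\ve$ small relative to $\tau_0$. Hence $T^*>\tau_0$. The core task is to show that on $[\tau_0,T^*)$ the bootstrap assumptions imply the improved bound $S_N\le C\ve^2+C\tau_0^{-\gamma}\ve\le\ve/2$ for some $\gamma>0$. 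Continuity of $S_N$ and the continuation criterion of Proposition~\ref{prop:local} then force $T^*=\infty$. The constraints \eqref{sec:gauged-eqs.eq:constraints2} together with Lemma~\ref{lem:evol-D} and Lemma~\ref{sec:gauged-eqs.lem:initial-data} show that $\mc{D}^\mu\equiv0$, so the global solution solves the Einstein--Euler system.

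The improvement splits into three blocks, in this order. First come the averages. Integrating \eqref{sec:gauged-eqs.eq:wave}--\eqref{sec:gauged-eqs.eq:fluid} over $\TT^3$ gives ODEs in $\tau$ with sources quadratic in oscillatory quantities, and these are integrable under the bootstrap rates. From the averaged gauge condition and the continuity equation I will derive a damped ODE for $Q:=(\rho_\av-12)+12h^{00}_\av$ that yields decay like $\tau^{-3/n}$. The averaged lapse and shift are improved via $\Gamma_\av^\mu$. Because the $h^{ij}$ equation only contains $\rho_\av,h^{00}_\av$ through $Q$ (and through oscillatory parts), $h^{ij}_\av$ stays bounded with $\pa_\tau h^{ij}_\av=O(\tau^{-1-3/n})$. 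This is the content of Proposition~\ref{sec:avg-ests.prop:avg-ests}.

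Second come the metric energies. For each $\pa_x^I$ derivative, $1\le|I|\le N$, and for $\pav$ applied to $\pa_x^I h$, I use the corrected energy $E_{w,\eta}$ of \eqref{intro:cor-energy} with $\eta=\kappa-2,\kappa/2,2$. The Poincar\'e inequality controls the correction term. For the $\pav$-commuted equations the critical linear sources $\mc{L}_h$ are replaced by $\pav\mc{L}_h$, which involve $\pav\rho$ and therefore decay at the improved rate $\tau^{-(1+3/(2n))}$ established by the fluid material-derivative estimate. An elliptic estimate (Lemma~\ref{sec:energy-met.lem:elliptic}) then transfers decay from $\pav h$ back to $\pa\pa_x h$.

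Third come the fluid energies. I use the corrected fluid energy $E_{f,\eta;1}$ with $\eta=1-\frac{3}{2n}$, which needs $n>3$. The non-integrable $\pa_x\rho$ rate is absorbed by pairing with $v$-weights and by expressing dangerous terms through $\pav\rho$.

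The main obstacle is closing the coupled linear terms at critical rate $\tau^{-1}\times$energy. These are the $\frac{\kappa-4}{\tau}\pa h^{00}$ terms in $\mc{L}_h^{0i}$ and $\mc{L}_h^{ij}$, and the $\frac{2}{\tau}h^{0i}$ and $\Gamma^i_{00}$ terms in the fluid equations. I would order the Grönwall argument hierarchically. The lapse and shift are estimated first, where the damping $\sim\kappa/\tau$ beats the $O(1)$-coefficient cross terms; this is why $\delta>\kappa^{-1/2}$ appears. Next come $h^{ij}$, whose critical sources involve only the lapse, shift and fluid and are thus already improved. The fluid is estimated last.

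Finally, the asymptotics \eqref{sec:stability.eq:asymptotics} follow by integrating $\pa_\tau$ of the averages to get limits $g^\infty$, $\rho^\infty$. Sobolev embedding applied to the oscillatory parts, using the decay in $S_N$, gives the rates. Undoing the rescalings gives the physical statements. Future geodesic completeness follows by the standard argument: the metric is conformal to a near-Minkowski metric with factor $\tau^4$, so affine parameter along causal geodesics grows like $\int\tau^4\,d\tau\to\infty$.
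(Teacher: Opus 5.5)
You follow the paper's architecture: a bootstrap on $S_N$, the decaying combination $(\rho_{\av}-12)+12h^{00}_{\av}$ for the averages, $\pav$-commuted corrected wave energies closed by the elliptic estimate, the corrected fluid energy with $\eta=1-\frac{3}{2n}$ (where $n>3$ enters), and a large gauge parameter $\kappa$. The step that fails is your claim that the averaged ODEs only have sources quadratic in oscillatory quantities, and with it your ordering of parameters.

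The reference state $(m^{\mu\nu},12,0)$ is \emph{not} a solution of the rescaled gauge-fixed system; the FLRW background only tends to it, cf.\ Remark~\ref{sec:gauged-eqs.rmk:rescaling}. Inserting it leaves residuals with $p\approx C\,12^{1+\frac1n}=O(1)$:
\begin{itemize}
\item $-\tau^{-2-\frac{6}{n}}p\,\delta^{ij}$ in the $g^{ij}$ equations;
\item $-6\tau^{-1-\frac{6}{n}}p$ in the continuity equation.
\end{itemize}
These are the dominant terms of $\mc{H}_h^{ij}$ and $\mc{H}_\rho$ in Lemma~\ref{sec:avg-ests.lem:errors}, and they carry no factor of $\ve$. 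Integrating the averaged ODEs, $h^{ij}_{\av}$, and then $h^{00}_{\av}$ and $\rho_{\av}-12$, drift by an amount of order $\tau_0^{-\frac{6}{n}}$ that no $\ve^2$-small data can cancel. So the improvement for the averages has the form $C\ve^2+C\tau_0^{-\gamma}$, not $C\ve^2+C\tau_0^{-\gamma}\ve$.

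Your ordering is to fix $\tau_0$ and then take $\ve$ below $\tau_0^{-(2+\frac{3}{2n}-\delta)}$ so the weights in $S_N(\tau_0)$ are harmless. That gives $\ve\ll\tau_0^{-\frac{6}{n}}$, and the bootstrap bound $|h^{ij}_{\av}|\le\ve$ breaks. The paper orders the parameters the other way, $\tau_0\ge\ve^{-4n}$, precisely so that these $\ve$-free terms become $\lesa\ve^3\tau^{-1-\gamma'}$ (Proposition~\ref{sec:avg-ests.prop:avg-ests}, Theorem~\ref{sec:boot-imp.thm:higher-ests}). With that ordering your device for $S_N(\tau_0)\lesa\ve^2$ is unavailable. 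The unweighted bound \eqref{sec:stability.eq:initial-bound} does not control quantities like $\tau_0^{2+\frac{3}{2n}-\delta}\|\pa\pa_x h^{0\mu}\|_{H^{N-1}}$. The smallness must therefore be imposed on the $\tau_0$-weighted quantities directly, which is what the paper's assertions $I\neq\emptyset$ and $\alpha(\tau_0)\lesa\ve^2$ implicitly require.

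Your stated improvement is also wrong for the top-order norms. Terms of size $\tau^{-1}\times$energy are neither integrable nor small in $\ve$ or $\tau_0$; the only gain is from $\kappa$. The coupling also runs both ways. $\mc{L}_h^{00}$ contains $-\tau^{-2}g^{00}(\rho-\rho_{\av})$, whose $\pav$-derivative is still exactly critical for the weighted lapse energy. The fluid sees $h^{0\mu}$ through $\frac{2}{\tau}h^{0i}$ and $\Gamma_{00}^i$. So ``lapse and shift first, fluid last'' is not a hierarchy, but it closes as follows, using the weighted energies $\mc{E}$ of Theorem~\ref{sec:boot-imp.thm:higher-ests}:
\begin{itemize}
\item In the lapse step, insert the bootstrap bound on the fluid and use the damping to get $(\mc{E}^{\mathbf{v}}_{h^{00};N-1})^{1/2}\lesa(\text{data})+C\ve/\kappa$.
\item The same bound follows for the shift.
\item The weighted fluid energy has only $2\delta/\tau$ damping, so it gives $(\mc{E}_{(\rho,v);N})^{1/2}\lesa(\text{data})+C\ve/(\delta\kappa)\le(\text{data})+C\ve/\sqrt{\kappa}$.
\end{itemize}
That last step is where $\delta>\kappa^{-1/2}$ is used, not in the lapse/shift step.

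Done this way, your route is a legitimate alternative to the paper's. The paper absorbs all critical cross terms at once via $\kappa$-weighted Young inequalities in $\mc{E}_N$, getting $\pa_\tau\mc{E}_N\lesa\tau^{-1-\frac{3}{2n}+\delta}$ and hence $C\ve^{3/2}$. Yours gives an $O(\kappa^{-1/2})$ improvement factor instead. It also tolerates the factor $\kappa-4$ in front of $\pa h^{00}$ in $\mc{L}_h^{0i}$, since it only yields $(\mc{E}^{\mathbf{v}}_{h^{0*};N-1})^{1/2}\lesa\sup(\mc{E}^{\mathbf{v}}_{h^{00};N-1})^{1/2}$; a one-shot Young absorption has to handle that factor, e.g.\ by giving the shift energy a small relative weight.

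Finally, your $h^{ij}$ step does not work because its sources are ``already improved''; the fluid is not improved at that stage. It works because, after weighting, $h^{ij}$ has no critical sources at all: its couplings to the shift and the fluid carry an extra factor of order $\tau^{-\frac{3}{2n}}$, since $h^{ij}$ need only decay like $\tau^{-(2-\delta)}$. With only $2\delta/\tau$ damping, a genuinely critical fluid source there would break your ordering.
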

\begin{remark}[Initial time $\tau_0$]\label{sec:stability.rmk:initial-time}
    The assumption on $\tau_0$ being sufficiently large is made so that the quantities $(m^{\mu\nu},12,0)$ are close to the quantities induced data of the rescaled asymptotically Einstein--de Sitter solution to the Einstein--Euler system with polytropic equation of state at conformal time $\tau = \tau_0$. Recall from Remark~\ref{sec:gauged-eqs.rmk:rescaling} that $\bar{g}_{n}$, once conformally transformed by $\tau^{-4}$, is not exactly Minkowski, but only asymptotes to Minkowski. 
    
    One can recover a full nonlinear stability statement from Theorem~\ref{sec:stability.thm:main} for arbitrary initial time $\tau_0' > 0$ by the following argument. Take data for the Einstein--Euler system at some initial time $\tau_0' > 0$. Assume that the initial data on $\Sigma_{\tau_0'}$ is sufficiently close to the data of the asymptotically EdS solution,\footnote{The data at time $\tau_0'$ may need to be chosen to be even smaller than $O(\ve^2)$, but this does not affect our argument.} then apply the Cauchy stability of Einstein--Euler to obtain a solution which satisfies the bound \eqref{sec:stability.eq:initial-bound} at a later time $\tau_0 > \tau_0'$. This is always possible, since $(g_n^{\mu\nu},\pa_\tau g_n^{\mu\nu},\rho_n, v_n^i) \ra (m^{\mu\nu},0,12,0)$ as $\tau \ra \infty$.
\end{remark}
\begin{remark}[Asymptotics]
    The asymptotics \eqref{sec:stability.eq:asymptotics} can be derived in a straightforward manner after proving stability by using the global bound \eqref{sec:stability.eq:global-bound}, Lemma~\ref{sec:stability.lem:lowered-met} and the Sobolev embedding. We note that the bounds \eqref{sec:stability.eq:asymptotics} immediately imply the asymptotic description \eqref{sec:intro.eq:asymptotics} stated in the introduction.
\end{remark}
\begin{proof}
    Due to the propagation of the wave gauge condition \eqref{sec:gauged-eqs.eq:gauge-choice}, any solution $(g,\rho,v)$ to the gauge-fixed system  \eqref{sec:gauged-eqs.eq:wave}--\eqref{sec:gauged-eqs.eq:fluid} satisfying the constraints \eqref{sec:gauged-eqs.eq:constraints2} induces a solution $(\bar{g},\bar{\rho},\bar{v})$ to the Einstein--Euler system \eqref{intro:Einstein-Euler}. Additionally, the future-causal geodesic completeness of $\bar{g}$ follows from arguments identical to those given in \protect{\cite[Section 9.2]{Ber:decelerated:25}}. Thus it suffices to
    prove future-global existence of solutions $(g,\rho,v)$ to the gauged-fixed system. We do this by a bootstrap argument.

    Let $(g_0^{\mu\nu},g_1^{\mu\nu},\rho,v^i)$ be an initial data set satisfying the hypotheses of Theorem~\ref{sec:stability.thm:main}, and let $(g,\rho,v)$ be the unique solution launched by that data. We let $I \subset [\tau_0,\infty)$ denote the connected set where $T \in I$ if an only if $(g,\rho,v)$ exists for all $\tau \in [\tau_0,T]$, and satisfies the bound
    \begin{equation}\label{sec:globed.eq:boot}
        S_N(\tau) \leq \ve
    \end{equation}
    for all $\tau \in [\tau_0,T]$. Henceforth, we call the bound \eqref{sec:globed.eq:boot} the \emph{bootstrap assumption}. Clearly $I$ is nonempty by the smallness assumption \eqref{sec:stability.eq:initial-bound} on the initial data, and the local well-posedness result of Proposition~\ref{prop:local}. Moreover, $I$ is closed by continuity properties of the equations. To show that $I$ is open, observe that if $T \in I$, then by local well-posedness, the solution $(g,\rho,v)$ extends uniquely to a small neighbourhood around $T$. If we can show that for all $T \in I$, the bootstrap assumption \eqref{sec:globed.eq:boot} can be uniformly improved, then we are done. We do this in two parts. First, we improve the bound on the spatial average norms $S_{h;\av}$, $S_{(\rho,v);\av}$, which we prove in Proposition~\ref{sec:avg-ests.prop:avg-ests}. Then, we improve the bound on the higher-order norms $S_{h;N}$, $S_{(\rho,v);N}$ in Theorem~\ref{sec:boot-imp.thm:higher-ests}.
\end{proof}
In the remainder of this article, we assume the bootstrap assumption \eqref{sec:globed.eq:boot} always holds, whether explicitly stated or not. We also assume that the initial time $\tau_0$ is sufficiently large that
\begin{equation*}
    \tau_0 \geq \ve^{-4n},
\end{equation*}
where $n$ is the polytropic index. We will use this assumption only twice, in Proposition~\ref{sec:avg-ests.prop:avg-ests} and in Theorem~\ref{sec:boot-imp.thm:higher-ests}, to show that rapidly decaying error terms are in fact small for all $\tau \geq \tau_0$.

\subsection{Preliminary estimates}
We collect some important results that generally follow from the bootstrap assumptions. These allow us to estimate various quantities appearing in estimates that are not explicitly controlled by the norm $S_{N}$.
\begin{lemma}[Lowered metric component estimates]\label{sec:stability.lem:lowered-met}
    The following pointwise and $L^2$-based bounds hold for the components of the (lowered) metric $g_{\mu\nu}$:
    \begin{subequations}
        \begin{align}
            \label{sec:stability.eq:lowered-met-1}
            \Big|g_{00} - (g^{00})^{-1}\Big| &\lesa \sum_{i,j=1}^3 |h^{0i}|^2\,,\\
            \label{sec:stability.eq:lowered-met-2}
            \sum_{i=1}^3 |g_{0i}| &\lesa \sum_{i,j=1}^3 |h^{0i}|\,,\\
            \label{sec:stability.eq:lowered-met-3}
            \sum_{\mu,\nu=0}^3|g_{\mu\nu}-m_{\mu\nu}| &\lesa \sum_{\mu,\nu=0}^3 |g^{\mu\nu} - m^{\mu\nu}|\,.
        \end{align}
        \begin{multline}\label{sec:stability.eq:lowered-met-4}
            \big|(g_{00})_\av + 1\big| + \tau^{1+\frac{3}{n}}\sum_{i=1}^3\big|(g_{0i})_\av\big| + \sum_{i,j=1}\big|(g_{ij}-\delta_{ij})\big|\\
            +\tau^{1+\frac{3}{n}}\big|\pa_\tau (g_{00})_\av\big| + \tau^{2+\frac{3}{n}}\sum_{i=1}^3 \big|\pa_\tau (g_{0i})_\av\big| + \tau^{1+\frac{3}{n}}\sum_{i,j=1}^3 \big|\pa_\tau (g_{ij})_\av\big| \lesa \ve,
        \end{multline}
        \begin{equation}\label{sec:stability.eq:lowered-met-5}
            \tau^{2+\frac{3}{2n}-\delta}\sum_{\mu=0}^3 \big\|\pa \pa_x g_{0\mu}\big\|_{H^{N-1}} + \tau^{2-\delta}\sum_{i,j=1}^3 \big\|\pa\pa_x g_{ij}\big\|_{H^{N-1}} \lesa \ve.
        \end{equation}
    \end{subequations}
\end{lemma}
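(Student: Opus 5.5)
The plan is to work with the block decomposition of the inverse metric and use the bootstrap bounds to keep everything close to Minkowski. Write $g^{-1}$ in $(1+3)$ block form with lapse entry $g^{00}$, shift $g^{0i}=h^{0i}$ and spatial block $g^{ij}$. The standard Schur complement formulas give
\[
g_{00} = \big(g^{00} - g^{0a}(g^{-1}_{\mathrm{sp}})_{ab}g^{0b}\big)^{-1}, \qquad g_{0i} = -g_{00}\,(g^{-1}_{\mathrm{sp}})_{ia}g^{0a}.
\]
Here $g^{-1}_{\mathrm{sp}}$ is the inverse of the $3\times 3$ matrix $(g^{ij})$. By the bootstrap assumption \eqref{sec:globed.eq:boot}, Sobolev embedding and Poincar\'e, $|h^{\mu\nu}|\lesa \ve$ pointwise. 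Hence $g^{00}\approx -1$, $(g^{ij})$ is uniformly positive definite, and its inverse is bounded.

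Estimate \eqref{sec:stability.eq:lowered-met-1} then follows by expanding $x\mapsto x^{-1}$ around $g^{00}$: the difference is $O(|h^{0\cdot}|^2)$. Estimate \eqref{sec:stability.eq:lowered-met-2} is immediate from the formula for $g_{0i}$. For \eqref{sec:stability.eq:lowered-met-3} I would write
\[
g - m = -g\,(g^{-1}-m^{-1})\,m
\]
as a matrix identity, with $g$ bounded, which gives the pointwise bound.

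For the averaged bounds \eqref{sec:stability.eq:lowered-met-4}, each lowered component is a smooth function $F(h)$ with $F(0)$ equal to the Minkowski value. I split $h = h_\av + h_\osc$ and Taylor expand:
\[
F(h)_\av = F(h_\av) + O\big(\|h_\osc\|_{L^\infty}\big),
\]
where the first-order term in $h_\osc$ averages to zero, so the error is really $O(\|h_\osc\|^2)$. The bootstrap gives $\|h_\osc\|_{L^\infty}\lesa \ve\tau^{-(2-\delta)}$. This error decays faster than any weight appearing in \eqref{sec:stability.eq:lowered-met-4}, since $2-\delta > 1+\tfrac{3}{n}$ for $n>3$ (with even more room for the $g_{0i}$ weight $\tau^{1+3/n}$, using that $g_{0i}$ is linear in $h^{0a}$).

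The structure of $F$ matters for the weighted pieces. From \eqref{sec:stability.eq:lowered-met-2}, $(g_{0i})_\av$ is controlled by $|h^{0a}_\av|$ plus quadratic oscillation terms, so it inherits the weight $\tau^{1+3/n}$. For time derivatives, the chain rule gives $\pa_\tau F(h)=DF(h)\,\pa_\tau h$, and averaging splits this as
\[
DF(h_\av)\,\pa_\tau h_\av + \text{(oscillatory products)}.
\]
The oscillatory products are bounded by $\|h_\osc\|\,\|\pa_\tau h_\osc\|\lesa \ve^2\tau^{-(4-2\delta)}$. The $\pa_\tau(g_{0i})_\av$ bound with weight $\tau^{2+3/n}$ needs care: $DF$ for $g_{0i}$ is of the form (bounded)$\times\pa_\tau h^{0a}$ plus $h^{0a}\times\pa_\tau(\ldots)$. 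Both pieces are fine, the second because $|h^{0a}_\av|\lesa\ve\tau^{-1-3/n}$ and $|\pa_\tau h_\av|\lesa \ve\tau^{-1-3/n}$.

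Finally, for \eqref{sec:stability.eq:lowered-met-5}, I apply $\pa\pa_x^I$ to $F(h)$ and use the Sobolev product estimate (Moser-type composition, $N\ge3$). The top-order terms are $DF(h)\,\pa\pa_x^{I}h$; lower terms are products of at least two derivatives of $h$, which are doubly small. For the $g_{0\mu}$ components I need the faster rate $\tau^{2+\frac{3}{2n}-\delta}$, so $DF$ must only hit $h^{0\cdot}$ at linear order. This holds because $g_{0i}$ is $h^{0a}$ times a smooth function. Likewise $g_{00}+1$ is a function of $h^{00}$ plus quadratic terms in $h^{0a}$, so derivatives of $h^{ij}$ enter $g_{0\mu}$ only multiplied by $h^{0a}$. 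I expect this bookkeeping, verifying that no slow $\pa h^{ij}$ term appears linearly in $\pa\pa_x g_{0\mu}$, to be the main point requiring care; the rest is routine.
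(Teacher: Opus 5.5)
Your proposal is correct and follows essentially the paper's route. The Schur-complement formulas are just the solved form of the paper's identities $g_{00}g^{00}+g_{0a}g^{0a}=1$ and $g_{0i}g^{00}+g_{ia}g^{0a}=0$. Your bookkeeping for \eqref{sec:stability.eq:lowered-met-5} is what the paper obtains from $\pa_\lambda g_{0\nu}=-g_{0\alpha}g_{\nu\beta}\pa_\lambda g^{\alpha\beta}$, in which every $\pa h^{ab}$ term carries a factor $g_{0a}=O(|h^{0\cdot}|)$.

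One small correction: $2-\delta>1+\frac{3}{n}$ does not follow from $n>3$ alone; it needs $\delta<1-\frac{3}{n}$. You never need it, though, because the averaging error is quadratic in the oscillatory parts. It is therefore $O(\ve^2\tau^{-(4-2\delta)})$, which beats every weight in \eqref{sec:stability.eq:lowered-met-4}.
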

\begin{proof}
    The pointwise bounds for the lowered metric components are fairly standard--the rough bound \eqref{sec:stability.eq:lowered-met-3} follows from $g^{\mu\nu}$ being both the inverse of $g_{\mu\nu}$ and close to the Minkowski metric. This rough bound then implies \eqref{sec:stability.eq:lowered-met-1} and \eqref{sec:stability.eq:lowered-met-2} from rearranging the identities
    \begin{align*}
        1 &= g_{0\mu} g^{0\mu} = g_{00}g^{00} + g_{0a}g^{0a} ,\\
        0 &= g_{i\mu} g^{0\mu} = g_{0i}g^{00} + g_{ia}g^{0a},
    \end{align*}
    to obtain
    \begin{align*}
        g_{00} - (g^{00})^{-1} &= -(g^{00})^{-1}g_{0a}g^{0a},\\
        g_{0i} &= -(g^{00})^{-1}g_{ia} g^{0a}.
    \end{align*}
    The bound \eqref{sec:stability.eq:lowered-met-4} can be derived in a similar manner using the bootstrap assumptions, while \eqref{sec:stability.eq:lowered-met-5} follows from repeated applications of the identity
    \begin{equation*}
        \pa_\lambda g_{\mu\nu} = -g_{\mu\alpha}g_{\nu\beta}\pa_\lambda g^{\mu\alpha},
    \end{equation*}
    along with the bounds \eqref{sec:stability.eq:lowered-met-1} -  \eqref{sec:stability.eq:lowered-met-3}.
\end{proof}
\begin{lemma}[Miscellaneous Estimates for the fluid four-velocity components]\label{sec:boot:lem.misc-fluid}
    The following estimates hold for the zeroth components of the fluid four-velocity
    \begin{subequations}
        \begin{align}
            \label{sec:boot:eq.misc-fluid-1}
            \big|(v^0)^2 + g^{00}\big| &\lesa \sum_{i=1}^3 \big((v^i)^2 + (h^{0i})^2\big),\\
            \label{sec:boot:eq.misc-fluid-2}
            |(v_0)^2 + g_{00}| &\lesa \sum_{i=1}^3 \big((v^i)^2 + (h^{0i})^2\big),\\
            \label{sec:boot:eq.misc-fluid-3}
            \|\pa_x v^0\|_{H^{N-1}} &\lesa \frac{\ve}{\tau^{1+\frac{3}{2n}-\delta}},
        \end{align}
    \end{subequations}
    and for the lowered fluid four-velocity components:
    \begin{subequations}
        \begin{align}
            \label{sec:boot:eq.misc-fluid-4}
            \sum_{i=1}^3 (v_i)^2 &\lesa \sum_{i=1}^3 \big((v^i)^2 + (h^{0i})^2\big),\\
            \label{sec:boot:eq.misc-fluid-5}
            \sum_{i=1}|(v_i)_\av| &\lesa \frac{\ve}{\tau^{1+\frac{3}{n}}},\\
            \label{sec:boot:eq.misc-fluid-6}
            \sum_{i=1}^3\|\pa_x v_i\|_{H^{N-1}} &\lesa \frac{\ve}{\tau^{1+\frac{3}{2n}-\delta}}\,.
        \end{align}
    \end{subequations}
\end{lemma}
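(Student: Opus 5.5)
The plan is to derive every estimate from the normalisation constraint \eqref{sec:gauged-eqs.eq:fluid-constr}, $g_{\alpha\beta}v^\alpha v^\beta=-1$, together with the relations between upper and lower metric components from Lemma~\ref{sec:stability.lem:lowered-met} and the bootstrap assumption \eqref{sec:globed.eq:boot}. We take $v^0$ to be determined by this constraint as the future-directed root, so $v^0>0$ and $v^0\approx 1$. Expanding the constraint gives
\begin{equation*}
g_{00}(v^0)^2 + 2g_{0a}v^0v^a + g_{ab}v^av^b = -1 .
\end{equation*}
Write $g_{00}=(g^{00})^{-1}+O(|h^{0i}|^2)$ using \eqref{sec:stability.eq:lowered-met-1}, use $|g_{0a}|\lesa |h^{0i}|$ from \eqref{sec:stability.eq:lowered-met-2}, and use $|g_{ab}|\lesa 1$. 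Multiplying the constraint by $g^{00}$ then gives
\begin{equation*}
(v^0)^2+g^{00} = O\big(|h^{0i}||v^i| + |v^i|^2 + |h^{0i}|^2\big).
\end{equation*}
Applying Young's inequality to the cross term yields \eqref{sec:boot:eq.misc-fluid-1}.

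For the lowered components, note that $v_0=g_{00}v^0+g_{0a}v^a$. Hence $(v_0)^2=g_{00}^2(v^0)^2+O(|h||v|)$. Since $g_{00}^2(v^0)^2 = -g_{00}+O(\cdot)$ by \eqref{sec:boot:eq.misc-fluid-1} and \eqref{sec:stability.eq:lowered-met-1}, this gives \eqref{sec:boot:eq.misc-fluid-2}. Similarly $v_i=g_{i0}v^0+g_{ia}v^a$, so $|v_i|\lesa |h^{0i}|+|v^i|$, which gives \eqref{sec:boot:eq.misc-fluid-4}.

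For the average \eqref{sec:boot:eq.misc-fluid-5}, write $g_{i0}v^0+g_{ia}v^a$ and split each factor into its average and oscillatory part. The linear-in-average contributions are $(g_{0i})_\av v^0_\av$ and $(g_{ia})_\av v^a_\av$. Both are $O(\ve\tau^{-1-3/n})$ by \eqref{sec:stability.eq:lowered-met-4} and \eqref{sec:stability.eq:boot-2}. The remaining terms are averages of products of oscillatory parts, such as $\fint (g_{0i})_{\osc}v^0_{\osc}$ and $\fint (g_{ia})_{\osc}v^a_{\osc}$. By Poincaré these are bounded by products of $\|\pa_x g\|_{L^2}$ and $\|\pa_x v\|_{L^2}$, which are $O(\ve^2\tau^{-2+\delta}\tau^{-1-3/(2n)+\delta})$ and therefore faster than $\tau^{-1-3/n}$. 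This holds since $\delta$ is small relative to $1$ (indeed $2-2\delta>3/(2n)$ given \eqref{sec:stability.eq:delta-choice}).

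For the derivative bounds \eqref{sec:boot:eq.misc-fluid-3} and \eqref{sec:boot:eq.misc-fluid-6}, solve the quadratic constraint explicitly for $v^0$ as a smooth function $v^0=\Phi(g_{\mu\nu},v^i)$ near $(m,0)$. Then apply $\pa_x^I$ with $|I|\le N$, using the chain rule and the Sobolev product estimate, and also the Moser-type composition estimate on $\TT^3$ since $N\ge3$ and $\|g-m\|_{W^{1,\infty}}+\|v\|_{W^{1,\infty}}\lesa\ve$. This gives
\begin{equation*}
\|\pa_xv^0\|_{H^{N-1}}\lesa \|\pa_x g_{\mu\nu}\|_{H^{N-1}}\,\|v\|_{H^N}+\|\pa_x v\|_{H^{N-1}} .
\end{equation*}
The point to check is that the metric enters only multiplied by $v^i$, or quadratically via $g_{0a}v^a$. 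This holds because $\pa\Phi/\pa g_{00}$ is nonzero but $g_{00}$ itself enters. We therefore need $\|\pa_x g_{00}\|_{H^{N-1}}\lesa\ve\tau^{-2-3/(2n)+\delta}$, which \eqref{sec:stability.eq:lowered-met-5} gives, and which is better than required. The same reasoning applied to $v_i=g_{i\mu}v^\mu$ gives \eqref{sec:boot:eq.misc-fluid-6}, using $\|\pa_xg_{i\mu}\|_{H^{N-1}}\lesa\ve\tau^{-2+\delta}$ times $O(1)$ factors. This is still faster than $\tau^{-1-3/(2n)+\delta}$ because $n>3$.

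The main obstacle is the bookkeeping in the derivative estimates: one must make sure that no term carries only the slower decay of $\pa_x g_{ij}$ (rate $\tau^{-2+\delta}$) without at least matching $\tau^{-1-3/(2n)+\delta}$. Since $2-\delta>1+3/(2n)-\delta$, this holds.
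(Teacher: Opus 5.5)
Your proposal is correct and follows the paper's route: the paper likewise solves the normalisation constraint $g_{\alpha\beta}v^\alpha v^\beta=-1$ for $v^0$, using Lemma~\ref{sec:stability.lem:lowered-met}, to get \eqref{sec:boot:eq.misc-fluid-1} and \eqref{sec:boot:eq.misc-fluid-3}, and then reads off the remaining bounds from $v_\mu=g_{\mu\alpha}v^\alpha$; your average/oscillation splitting and composition estimate just fill in details the paper leaves implicit. The only slip is that your displayed bound for $\|\pa_x v^0\|_{H^{N-1}}$ omits the linear term $\|\pa_x g_{00}\|_{H^{N-1}}$ (since $\pa\Phi/\pa g_{00}\neq 0$ at $v^i=0$), but you account for it in the next sentence, and since it is $\lesssim\ve\tau^{-2-\frac{3}{2n}+\delta}$ the conclusion stands.
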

\begin{proof}
    We rearrange the constraint
    \begin{equation*}
        -1 = g_{\alpha\beta}v^\alpha v^\beta = g_{00} (v^{0})^2 + 2g_{0a}v^0 v^a + g_{ab}v^a v^b,
    \end{equation*}
    and solve for $v^0$, which by Lemma~\ref{sec:stability.lem:lowered-met} implies the bound
    \begin{equation*}
        \big|(v^0)^2 + g^{00}\big| \leq \big|(v^0)^2 + (g_{00})^{-1}\big| + \big|(g_{00})^{-1} - g^{00}\big| \lesa \sum_{i=1}^3 \big(|v^i|^2 +|h^{0i}|^2\big)\,.
    \end{equation*}
    This gives the pointwise bound \eqref{sec:boot:eq.misc-fluid-1} and the norm bound \eqref{sec:boot:eq.misc-fluid-3}. The remaining bounds all follow from the relation $v_\mu = g_{\mu\alpha}v^\alpha$ and the previous estimates.
\end{proof}

Next, we compute the Christoffel symbols that appear in the Euler equations. Using the gauge condition \eqref{sec:gauged-eqs.eq:gauge-choice}, we find that these Christoffel symbols depend (to leading order) on the lapse and shift perturbations, and not on the spatial metric perturbations.
\begin{lemma}[Christoffel symbol estimates]\label{sec:stability.lem:christoffel}
    The Christoffel symbols satisfy
    \begin{subequations}
        \begin{align}
            \label{sec:stability.eq:christoffel-1}
            \Gamma_{0\lambda}^\lambda &= -\frac{1}{g^{00}}\big(\pa_\lambda h^{0\lambda} + \Gamma^0\big) + \frac{1}{2}g^{00} g_{\alpha\beta} h^{0a}\pa_a h^{\alpha\beta},\\
            \label{sec:stability.eq:christoffel-2}
            \Gamma_{00}^i &= -g_{00}\pa_\tau h^{0i}-\frac{1}{2}g^{i\lambda}\pa_\lambda g_{00}  - g_{0a}\pa_\tau h^{ia}.
        \end{align}
    \end{subequations}
    Moreover, we have the bounds 
    \begin{subequations}\label{sec:stability.eq:christoffel-3}
        \begin{align}
            \big\|\chi v^\mu\Gamma_{\mu\lambda}^\lambda\big\|_{H^{N}} &\lesa \frac{1}{\tau^{1+\frac{3}{n}}},\\
            \big\|v^\mu v^\nu \Gamma_{\mu\nu}^i\big\|_{H^{N}} &\lesa \frac{1}{\tau^{2+\frac{3}{2n}-\delta}}\,,
        \end{align}
    \end{subequations}
    and
    
    \begin{subequations}\label{sec:stability.eq:christoffel-4}
        \begin{align}
            \big\|\pa_x \big(\chi v^\mu\Gamma_{\mu\lambda}^\lambda\big)\big\|_{H^{N-1}} &\lesa \frac{1}{\tau^{2+\frac{3}{2n}-\delta}}\,,\\
            \big\|\pa_x \big(v^\mu v^\nu \Gamma_{\mu\nu}^i\big)\big\|_{H^{N-1}} &\lesa \sum_{\mu=0}^3 \big\|\pa \pa_xh^{0\mu}\big\|_{H^{N-1}} + \frac{1}{\tau^{2+\frac{9}{2n}-\delta}}\,.
        \end{align}
    \end{subequations}
\end{lemma}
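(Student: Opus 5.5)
The identities \eqref{sec:stability.eq:christoffel-1}--\eqref{sec:stability.eq:christoffel-2} are algebraic, so I will derive them directly from the coordinate formula $\Gamma^\lambda_{\mu\nu}=\frac12 g^{\lambda\alpha}(\pa_\mu g_{\alpha\nu}+\pa_\nu g_{\alpha\mu}-\pa_\alpha g_{\mu\nu})$, rewritten in terms of the inverse metric through $\pa g_{\mu\nu}=-g_{\mu\alpha}g_{\nu\beta}\pa g^{\alpha\beta}$.

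For \eqref{sec:stability.eq:christoffel-1}, I start from $\Gamma^\lambda_{0\lambda}=\frac12 g^{\alpha\beta}\pa_\tau g_{\alpha\beta}=-\frac12 g_{\alpha\beta}\pa_\tau g^{\alpha\beta}$. I then compare this with the contracted Christoffel symbol
\[
\Gamma^0=-\pa_\lambda g^{0\lambda}+\tfrac12 g_{\alpha\beta}g^{0\lambda}\pa_\lambda g^{\alpha\beta},
\]
which is the expansion already used in the proof of Lemma~\ref{sec:gauged-eqs.lem:initial-data}. Splitting $g^{0\lambda}\pa_\lambda=g^{00}\pa_\tau+h^{0a}\pa_a$ lets me solve for $\frac12 g_{\alpha\beta}\pa_\tau g^{\alpha\beta}$. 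Since $\pa g^{0\lambda}=\pa h^{0\lambda}$, this gives the stated formula up to the precise placement of the factor $g^{00}$ in front of the $h^{0a}$ term.

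For \eqref{sec:stability.eq:christoffel-2}, I expand
\[
\Gamma^i_{00}=g^{i\alpha}\pa_\tau g_{\alpha0}-\tfrac12 g^{i\alpha}\pa_\alpha g_{00}.
\]
I convert the first term using $g^{i\alpha}\pa_\tau g_{\alpha 0}=-g_{0\beta}\pa_\tau g^{i\beta}$, which follows from differentiating $g^{i\alpha}g_{\alpha0}=0$. Splitting $\beta=0$ and $\beta=a$ then yields $-g_{00}\pa_\tau h^{0i}-g_{0a}\pa_\tau h^{ia}$.

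For the bounds \eqref{sec:stability.eq:christoffel-3}, I insert the gauge condition $\Gamma^0=\mc{Y}^0$ into \eqref{sec:stability.eq:christoffel-1}. Then
\[
\Gamma_{0\lambda}^\lambda=-(g^{00})^{-1}\Big(\pa_\tau h^{00}+\pa_a h^{0a}+\tfrac{\kappa}{\tau}h^{00}+\tfrac{3-\kappa}{\tau}h^{00}_\av+\tfrac{1}{4\tau}(\rho_\av-12)\Big)+\text{quadratic terms}.
\]
I split each metric quantity into its average and oscillatory part. The average parts are controlled by $S_{h;\av}$ and $S_{(\rho,v);\av}$: for instance $\pa_\tau h^{00}_\av\lesa\ve\tau^{-1-3/n}$. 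The terms $\frac1\tau h^{00}_\av$ and $\frac1\tau(\rho_\av-12)$ only decay like $\tau^{-1}$ individually. However, $\frac{\kappa}{\tau}h^{00}_\av+\frac{3-\kappa}{\tau}h^{00}_\av$ combines to $\frac{3}{\tau}h^{00}_\av$, and together with $\frac1{4\tau}(\rho_\av-12)$ this gives exactly $\frac1{4\tau}\big((\rho_\av-12)+12h^{00}_\av\big)$. I therefore expect the $\tau^{-(1+3/n)}$ bound to require the improved decay of this combination, which is the decay proven in Proposition~\ref{sec:avg-ests.prop:avg-ests}.

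This is the main obstacle. If that proposition is not yet available at this point, I would have to rely on the bootstrap norm; but \eqref{sec:stability.eq:boot-2} does not include this combination. So I would either add it to the bootstrap assumptions or invoke the average estimate. The $v^a\Gamma_{a\lambda}^\lambda$ part is bounded by $|v^a||\pa g|\lesa\tau^{-1-3/n}$ by the fluid bounds.

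The oscillatory parts use Poincaré together with $S_{h;N}$, which gives decay of $\tau^{-2-3/(2n)+\delta}$ or better. For $v^\mu v^\nu\Gamma^i_{\mu\nu}$, the leading piece is $(v^0)^2\Gamma^i_{00}$. Using \eqref{sec:stability.eq:christoffel-2}, I bound it by $|\pa_\tau h^{0i}|$ and $|\pa g_{00}|$, whose averages decay like $\tau^{-2-3/n}$ and whose oscillatory parts decay like $\tau^{-2-3/(2n)+\delta}$ via Lemma~\ref{sec:stability.lem:lowered-met}. The cross terms contain a factor $v^a$ or $g_{0a}$ and are therefore smaller.

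For \eqref{sec:stability.eq:christoffel-4}, spatial derivatives annihilate all averages. I then apply the Sobolev product estimate together with \eqref{sec:stability.eq:boot-3}, \eqref{sec:stability.eq:boot-4} and Lemma~\ref{sec:boot:lem.misc-fluid}. The slowest term in the first bound is $\pa_x\pa h^{0\mu}\lesa\tau^{-2-3/(2n)+\delta}$, while $\frac{\kappa}{\tau}\pa_x h^{00}$ is bounded by $\tau^{-1}$ times Poincaré applied to $\pa_x h$, hence by $\pa_x\pa_x h$. In the second bound I keep $\|\pa\pa_x h^{0\mu}\|_{H^{N-1}}$ explicit. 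The remaining terms involve $\pa_x g_{00}$ via $\pa_x h^{00}$ or $h^{0a}$, and products such as $v^a\pa_x\pa h^{ij}$, which decay like $\tau^{-1-3/n}\cdot\tau^{-2+\delta}$ and so are better than $\tau^{-2-9/(2n)+\delta}$.
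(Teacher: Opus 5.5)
Your derivations of the two identities are the paper's. Your route yields $\frac{1}{2g^{00}}g_{\alpha\beta}h^{0a}\pa_a h^{\alpha\beta}$, which is also what the paper's proof obtains, so the $\frac12 g^{00}$ in the statement is a misprint. Your treatment of \eqref{sec:stability.eq:christoffel-4} also matches the paper: substitute $\Gamma^0=\mc{Y}^0$, note $\pa_x\mc{Y}^0=\frac{\kappa}{\tau}\pa_x h^{00}$, and use Sobolev products.

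The gap is in the first bound of \eqref{sec:stability.eq:christoffel-3}. There you insert the gauge condition and find the average $\frac{1}{4\tau}\big((\rho_\av-12)+12h^{00}_\av\big)$. You then conclude that you must either enlarge the bootstrap or import \eqref{sec:avg-ests.omega-ests} from Proposition~\ref{sec:avg-ests.prop:avg-ests}. Neither is needed. The second option also reverses the logical order of the bootstrap, since Proposition~\ref{sec:avg-ests.prop:avg-ests} is proved later. Its error bounds in Lemma~\ref{sec:avg-ests.lem:errors} draw on estimates such as Proposition~\ref{sec:stability.prop:fluid-time-deriv}, which are themselves built on this lemma.

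The missing observation is that $S_{h;\av}$ in \eqref{sec:stability.eq:boot-1} already contains $\tau^{1+\frac3n}|\pa_\tau h^{00}_\av|$ and $\tau^{1+\frac3n}|\pa_\tau h^{ij}_\av|$. So for the undifferentiated estimate you should not use the gauge at all. The formula you started from, $\Gamma^\lambda_{0\lambda}=-\frac12 g_{\alpha\beta}\pa_\tau h^{\alpha\beta}$, gives
$\|\Gamma^\lambda_{0\lambda}\|_{H^N}\lesa|\pa_\tau h_\av|+\|\pa_x\pa_\tau h\|_{H^{N-1}}\lesa\ve\tau^{-1-\frac3n}+\ve\tau^{-2+\delta}$.
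This is exactly the paper's rough bound $\|\Gamma\|_{H^N}\lesa\ve\tau^{-(1+\frac3n)}$, obtained ``by the bootstrap assumptions''.

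The same observation removes your worry about the combination. Averaging the gauge condition gives $\Gamma^0_\av=\frac{1}{4\tau}\big((\rho_\av-12)+12h^{00}_\av\big)$. Your expansion $\Gamma^0=-\pa_\lambda h^{0\lambda}+\frac12 g^{0\lambda}g_{\alpha\beta}\pa_\lambda h^{\alpha\beta}$ gives $|\Gamma^0_\av|\lesa|\pa_\tau h^{00}_\av|+\sum_{i,j}|\pa_\tau h^{ij}_\av|+(\text{quadratic terms})\lesa\ve\tau^{-1-\frac3n}$. Hence the combination is already $O(\ve\tau^{-\frac3n})$ under the bootstrap. Its decay is encoded in the bound on $\pa_\tau h^{00}_\av$, not in \eqref{sec:stability.eq:boot-2}.

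The gauge substitution is needed only after a spatial derivative. There the raw formula would leave $\pa_x\pa_\tau h^{ij}=O(\tau^{-2+\delta})$, and the averaged term drops out anyway.

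Two quoted rates are also off, though the conclusions survive. $(\pa_\tau g_{00})_\av$ is only $O(\tau^{-1-\frac3n})$, and $\|v\|_{H^N}$ is only $O(\tau^{-1-\frac{3}{2n}+\delta})$. Both enter \eqref{sec:stability.eq:christoffel-2} and its derivatives multiplied by $h^{0i}$ or $\pa h$, so your conclusions there still hold, using $\delta<1-\frac3n$.
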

\begin{proof}
    We observe that the contracted Christoffel symbol $\Gamma^0$ decomposes like
    \begin{equation*}
        \Gamma^0 = \frac{1}{2}g^{\alpha\beta}g^{0\lambda}\big(2\pa_{(\alpha}g_{\beta)\lambda}-\pa_\lambda g_{\alpha\beta}\big)= -\pa_\lambda h^{0\lambda}-\frac{1}{2}g^{00} g^{\alpha\beta}\pa_\tau g_{\alpha\beta} + \frac{1}{2}h^{0a}g_{\alpha\beta}\pa_a h^{\alpha\beta}\,,
    \end{equation*}
    and so we have 
    \begin{equation*}
        \Gamma_{0\lambda}^\lambda = \frac{1}{2}g^{\alpha\beta}\pa_\tau g_{\alpha\beta} = -\frac{1}{g^{00}}\big(\pa_\lambda h^{0\lambda}+\Gamma^0\big) + \frac{1}{2g^{00}}g_{\alpha\beta}h^{0a}\pa_a h^{\alpha\beta}\,,
    \end{equation*}
    which is \eqref{sec:stability.eq:christoffel-1}. For \eqref{sec:stability.eq:christoffel-2}, we compute
    \begin{equation*}
        \Gamma_{00}^i = g^{i\lambda}\pa_\tau g_{0\lambda}-\frac{1}{2} g^{i\lambda}\pa_\lambda g_{00} = -g_{00}\pa_\tau h^{0i} - g_{0a}\pa_\tau h^{ia}-\frac{1}{2} g^{i\lambda}\pa_\lambda g_{00}\,.
    \end{equation*}
    These identities then imply the $H^N$-norm bounds \eqref{sec:stability.eq:christoffel-3}.

    In proving the remaining bounds \eqref{sec:stability.eq:christoffel-4} we denote all combinations of Christoffel symbols by
    $\Gamma$, and note that by the bootstrap assumptions, all Christoffel symbols obey the rough bounds
    \begin{equation*}
        \|\Gamma\|_{H^N} \lesa \frac{\ve}{\tau^{1+\frac{3}{n}}}\,,\qquad \big\|\pa_x \Gamma\big\|_{H^{N-1}} \lesa \frac{\ve}{\tau^{2-\delta}}\,.
    \end{equation*}
    We apply the Sobolev product Lemma and the Leibniz rule to obtain
    \begin{multline*}
        \Big\|\pa_x(\chi v^\mu \Gamma_{\mu\lambda}^\lambda )\Big\|_{H^{N-1}}\\
        \lesa \big\|\pa_x \Gamma_{0\lambda}^\lambda\big\|_{H^{N-1}} + \big(\|\pa_x\chi\|_{H^{N-1}} + \|\pa_x v\|_{H^{N-1}}\big)\|\Gamma\|_{H^{N-1}} + \|v\|_{H^{N-1}}\big\|\pa_x\Gamma\big\|_{H^{N-1}}\\
        \lesa \frac{1}{\tau^{2+\frac{3}{2n}-\delta}}\,,
    \end{multline*}
    where we have bounded $\|\pa_x \Gamma_{0\lambda}^\lambda\|_{H^{N-1}}$ using Lemma~\ref{sec:stability.eq:christoffel-1} and the wave gauge condition \eqref{sec:gauged-eqs.eq:gauge-choice}. We also have by \eqref{sec:stability.eq:christoffel-2} that
    \begin{multline*}
        \Big\|\pa_x \big(v^\mu v^\nu \Gamma_{\mu\nu}^i\big)\Big\|_{H^{N-1}} \lesa \big\|\pa_x \Gamma_{00}^i\big\|_{H^{N-1}} + \|\pa_x v\|_{H^{N-1}}\|\Gamma\|_{H^{N-1}} + \|v\|_{H^{N-1}}\big\|\pa_x \Gamma\big\|_{H^{N-1}}\\
        \lesa \sum_{\mu=0}^3 \big\|\pa \pa_xh^{0\mu}\big\|_{H^{N-1}} + \frac{1}{\tau^{2+\frac{9}{2n}-\delta}}\,.
    \end{multline*}
    This completes the proof.
\end{proof}

As a consequence of the Euler equations and the previous lemma, we also have improved bounds for certain derivatives of $(\rho,v)$.
\begin{proposition}[Improved fluid variable decay]\label{sec:stability.prop:fluid-time-deriv}
    The time derivatives of the fluid evolution variables obey the bounds 
    \begin{subequations}
        \begin{align}
            \|\pa_\tau \rho\|_{H^{N-1}} &\lesa \frac{1}{\tau^{1+\frac{3}{2n}-\delta}}S_{(\rho,v);N}(\tau) + \frac{1}{\tau^{1+\frac{3}{n}}}\,, \\
            \|\pa_\tau v\|_{H^{N-1}} &\lesa \frac{1}{\tau^{1+\frac{9}{2n}-\delta}}S_{(\rho,v);N}(\tau) + \frac{1}{\tau^{2+\frac{3}{2n}-\delta}}\,.
        \end{align}
    \end{subequations}
    The same bounds also hold if we replace $\pav$ with the coordinate time derivative $\pa_\tau$.
\end{proposition}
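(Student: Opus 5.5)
The plan is to solve the Euler equations \eqref{sec:gauged-eqs.eq:gauged-eqs-4}--\eqref{sec:gauged-eqs.eq:gauged-eqs-5} for the material derivatives $\pav\rho$ and $\pav v^i$. Every remaining term is then either a spatial derivative of the fluid variables multiplied by a small or decaying coefficient, or a Christoffel source already bounded in Lemma~\ref{sec:stability.lem:christoffel}.

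\textbf{Density equation.} Write $\pa_\mu v^\mu = \pa_\tau v^0 + \pa_a v^a$. The continuity equation then gives
\[
\pav\rho = -\chi\,\pa_\tau v^0 - \chi\,\pa_a v^a - \chi v^\mu\Gamma^\lambda_{\mu\lambda} - \frac{6}{\tau^{1+\frac{6}{n}}}p v^0 .
\]
The terms are handled as follows.
\begin{itemize}
\item The term $\chi\pa_a v^a$ is bounded in $H^{N-1}$ by $\|\pa_x v\|_{H^{N-1}} \lesa \tau^{-(1+\frac{3}{2n}-\delta)}S_{(\rho,v);N}$, using the Sobolev product lemma and the boundedness of $\chi$.
\item The Christoffel term contributes $\tau^{-(1+\frac{3}{n})}$ by \eqref{sec:stability.eq:christoffel-3}.
\item The pressure term decays like $\tau^{-(1+\frac{6}{n})}$.
\item The delicate term is $\pa_\tau v^0$. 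I would differentiate the normalisation $g_{\alpha\beta}v^\alpha v^\beta=-1$ in $\tau$ to get
\[
2v_0\,\pa_\tau v^0 = -2v_a\pa_\tau v^a - v^\alpha v^\beta\pa_\tau g_{\alpha\beta}.
\]
The term $v_a\pa_\tau v^a$ is quadratically small. The term $\pa_\tau g_{\alpha\beta}$ is $O(\ve\tau^{-1-\frac{3}{n}})$ at the level of averages by Lemma~\ref{sec:stability.lem:lowered-met}, and its oscillatory part is bounded by $S_{h;N}$ via Poincar\'e.
\end{itemize}
Note that $v^0$ is close to $1$, so $\pav=v^0\pa_\tau+v^a\pa_a$. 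I therefore treat the equations for $\rho$ and $v$ as a coupled linear algebraic system in $(\pa_\tau\rho,\pa_\tau v^i)$. Its coefficient matrix is a small perturbation of the identity, since $v^a$ is small and $\tau^{-6/n}\Pi^{0i}$ is small. It can be inverted, which gives the bounds for $\pa_\tau$ and for $\pav$ simultaneously.

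\textbf{Velocity equation.} From \eqref{sec:gauged-eqs.eq:gauged-eqs-5},
\[
\pav v^i = -\frac{p'}{\tau^{6/n}\chi}\Pi^{\mu i}\pa_\mu\rho - \frac{2}{\tau}v^0v^i + \mc{L}^i_{(\rho,v)} + \mc{G}^i_{(\rho,v)} .
\]
The terms are handled as follows.
\begin{itemize}
\item The pressure-gradient term splits into $\Pi^{ai}\pa_a\rho$ and $\Pi^{0i}\pa_\tau\rho$. The first gives $\tau^{-6/n}\|\pa_x\rho\|_{H^{N-1}}\lesa \tau^{-\frac{6}{n}-(1-\frac{3}{2n}-\delta)}S = \tau^{-(1+\frac{9}{2n}-\delta)}S$, which is exactly the claimed rate. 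The second carries the small factor $\Pi^{0i}=v^0v^i+h^{0i}$ and is absorbed through the algebraic inversion above.
\item The damping term $\frac{2}{\tau}v^0v^i$ needs some care. Its oscillatory part is bounded by $\tau^{-1}\|\pa_x v\|\lesa\tau^{-(2+\frac{3}{2n}-\delta)}$ via Poincar\'e. Its average part is bounded by $\tau^{-1}|v^i_\av|\lesa\tau^{-(2+\frac{3}{n})}$ from \eqref{sec:stability.eq:boot-2}.
\item $\mc{L}^i$ and $\mc{G}^i$ are bounded by $\tau^{-(2+\frac{3}{2n}-\delta)}$ through \eqref{sec:stability.eq:christoffel-3} and the bootstrap bound on $h^{0i}$, the latter split into average and oscillation as before.
\end{itemize}

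\textbf{Main obstacle.} I expect the hardest step to be the $\pa_\tau v^0$ (equivalently $\pa_\tau g_{\alpha\beta}$) contribution in the density equation. The claimed $\tau^{-(1+\frac{3}{n})}$ term must arise only from averages of the time derivatives of the lapse and shift, and from the gauge-controlled Christoffel term. I would verify that the $h^{ij}$ part enters only through $v^\alpha v^\beta\pa_\tau g_{\alpha\beta}$, where it carries $v^av^b$ factors or reduces to $\pa_\tau g_{00}$. If it enters otherwise, the rate is lost, because $\pa h^{ij}$ only decays like $\tau^{-(2-\delta)}$ with no $\frac{3}{n}$ gain, and its average only like $\tau^{-1-\frac{3}{n}}$.
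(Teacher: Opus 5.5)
Your proposal is correct and is essentially the paper's argument. Like the paper, you solve the Euler equations for the time derivatives, eliminate $\pa_\tau v^0$ via the normalisation $g_{\alpha\beta}v^\alpha v^\beta=-1$, bound the remaining terms with the Christoffel lemma and the bootstrap, and absorb the weakly coupled $\pa_\tau\rho$ and $\pa_\tau v$ cross terms. The worry in your last paragraph is unfounded: $\pa_\tau h^{ij}$ does enter with $O(1)$ coefficients through $\Gamma^\lambda_{0\lambda}=\tfrac12 g^{\alpha\beta}\pa_\tau g_{\alpha\beta}$, but its average is $O(\ve\tau^{-1-\frac3n})$ and its oscillation is $O(\ve\tau^{-(2-\delta)})$, so both fit within the claimed $\tau^{-(1+\frac3n)}$ because $\delta<1-\frac3n$.
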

\begin{remark}
    These estimates show that certain derivatives of the fluid variables exhibit improved decay in comparison to their spatial derivatives. Using the bootstrap assumptions, we find that $\|\pa_\tau \rho\|_{H^{N-1}} \lesa \ve \tau^{-(1+\frac{3}{2n}-\delta)}$, and $\|\pa_\tau v\|_{H^{N-1}} \lesa \ve \tau^{-(1 + \frac{9}{2n}-\delta)}$, which is an improvement in decay of order $\tau^{-\frac{3}{n}}$ compared to the spatial derivatives. This observation will be crucial when deriving commuted energy estimates.
\end{remark}
\begin{proof}
    We rearrange the Euler equations to find 
    \begin{align}
        \label{sec:stability.eq:fluid-time-deriv-proof-1}
        \pa_\tau \rho &= \frac{\chi}{v_0v^0}v_a\pa_\tau v^a +\frac{\chi}{2v_0 v^0}v^\alpha v^\beta \pa_\tau g_{\alpha\beta}- \frac{\chi}{v^0}\pa_a v^a - \frac{\chi}{v^0}v^\mu \Gamma_{\mu\lambda}^\lambda - \frac{1}{v^0}\Big[v^a\pa_a \rho + \frac{6}{\tau^{1+\frac{6}{n}}}pv^0\Big]\,, \\
        \label{sec:stability.eq:fluid-time-deriv-proof-2}
        \pa_\tau v^i &= - \frac{p'(\rho)}{\tau^{\frac{6}{n}}v^0\chi}\Pi^{0i}\pa_\tau \rho - \frac{p'(\rho)}{\tau^{\frac{6}{n}}v^0\chi}\Pi^{ia}\pa_a \rho - \frac{1}{v^0}v^\mu v^\nu \Gamma_{\mu\lambda}^i - \frac{1}{v^0}\Big[v^a \pa_a v^i + \Big(\frac{2}{\tau} + \frac{6\rho p'(\rho)}{\tau^{1+\frac{6}{n}}\chi}\Big)\Pi^{0i}\Big]\,,
    \end{align}
    where we used the identity $v_\mu v^\mu= - 1$ to compute the time derivative of $v^0$ in the continuity equation. We use these equations, the Sobolev product inequality and Lemma~\ref{sec:stability.lem:christoffel} to derive the bounds
    \begin{align}
        \label{sec:stability.eq:fluid-time-deriv-proof-3}
        \|\pa_\tau\rho\|_{H^{N-1}} &\lesa \frac{\ve}{\tau^{1+\frac{3}{2n}-\delta}}\|\pa_\tau v\|_{H^{N-1}} + \|\pa_x v\|_{H^{N-1}} + \frac{1}{\tau^{1+\frac{3}{n}}}\,, \\
        \label{sec:stability.eq:fluid-time-deriv-proof-4}
        \|\pa_\tau v\|_{H^{N-1}} &\lesa \frac{\ve}{\tau^{1+\frac{15}{2n}-\delta}}\|\pa_\tau \rho\|_{H^{N-1}} + \frac{1}{\tau^{\frac{6}{n}}}\|\pa_x \rho\|_{H^{N-1}} + \frac{1}{\tau^{2+\frac{3}{2n}-\delta}}\,. 
    \end{align}
    We note that the worst-behaved error term in \eqref{sec:stability.eq:fluid-time-deriv-proof-1} is given by $\chi (v^0)^{-1}v^\mu \Gamma_{\mu\lambda}^\lambda$, whose $H^{N-1}$ norm decays like the error term in \eqref{sec:stability.eq:fluid-time-deriv-proof-3}. Several terms in \eqref{sec:stability.eq:fluid-time-deriv-proof-2} decay at the worst rate given in \eqref{sec:stability.eq:fluid-time-deriv-proof-4}.
    
    Combining the inequalities  \eqref{sec:stability.eq:fluid-time-deriv-proof-3} and  \eqref{sec:stability.eq:fluid-time-deriv-proof-4}, we find that
    \begin{multline*}
        \big\|\pa_\tau\rho\big\|_{H^{N-1}} + \tau^{\frac{3}{n}}\|\pa_\tau v\|_{H^{N-1}} \lesa \frac{\ve}{\tau^{1+\frac{9}{2n}-\delta}}\Big(\big\|\pa_\tau\rho\big\|_{H^{N-1}} + \tau^{\frac{3}{n}}\|\pa_\tau v\|_{H^{N-1}}\Big)\\+\frac{1}{\tau^{\frac{3}{n}}}\Big(\big\|\pa_x\rho\big\|_{H^{N-1}} + \tau^{\frac{3}{n}}\|\pa_x v\|_{H^{N-1}}\Big) + \frac{1}{\tau^{1+\frac{3}{n}}}+ \frac{1}{\tau^{2-\frac{3}{2n}-\delta}}\,.
    \end{multline*}
    The first collection of terms on the right hand side can be absorbed into the left hand side, and the terms $\|\pa_x \rho\|_{H^{N-1}}$, $\tau^{\frac{3}{n}}\|\pa_x v\|_{H^{N-1}}$ can be bounded in terms of the norm $S_{(\rho,v);N}$, giving the combined estimate
    \begin{equation*}
        \big\|\pa_\tau\rho\big\|_{H^{N-1}} + \tau^{\frac{3}{n}}\|\pa_\tau v\|_{H^{N-1}} \lesa \frac{1}{\tau^{1+\frac{3}{2n}-\delta}}S_{(\rho,v);N} + \frac{1}{\tau^{1+\frac{3}{n}}}+ \frac{1}{\tau^{2-\frac{3}{2n}-\delta}}\,. 
    \end{equation*}
    Plugging this estimates back into \eqref{sec:stability.eq:fluid-time-deriv-proof-3}, \eqref{sec:stability.eq:fluid-time-deriv-proof-4} yield the desired bound.
\end{proof}
\section{ODE Estimates for the spatial averages}\label{sec:avg-ests}
In this section, we analyse the behaviour of the spatial averages of the evolution variables, given by
\begin{equation}
    h_\av^{\mu\nu}(\tau):= \fint_{\TT^3} h^{\mu\nu}(\tau,x)\de^3 x\,, \qquad \rho_\av := \fint_{\TT^3} \rho(\tau,x)\de^3 x\,, \qquad v_\av^i(\tau):= \fint_{\TT^3} v^i(\tau,x)\de^3 x \,.
\end{equation}
An independent analysis of these quantities is necessary, as the dynamics of the spatial averages $(h_\av^{\mu\nu},\rho_\av,v_\av^i)$ largely decouple from dynamics of the remainders $(h^{\mu\nu}-h_\av^{\mu\nu},\rho-\rho_\av,v^i-v_\av^i)$.\footnote{These remainder quantities are controlled by the top-order derivatives of the evolution variables by virtue of the Poincar\'e inequality.} The main result of this section are estimates which improve on the bootstrap assumptions \eqref{sec:stability.eq:boot-1}--\eqref{sec:stability.eq:boot-2} for the spatial averages of the evolution variables, contained in Proposition~\ref{sec:avg-ests.prop:avg-ests}.
\subsection{The ODE system for the spatial averages}
From the various equations in the system, we derive ODE satisfied by the spatial averages of the evolution variables, beginning with the metric components.
\begin{lemma}[ODE for the averaged metric components]\label{sec:avg-ests.lem:avg-met}
	The spatial averages $h_\av^{\mu\nu}$ satisfy the system of ODE
    \begin{subequations}
	   \begin{align}
		      \label{sec:avg-ests.eq:avg-met-1}
            \pa_\tau h_\av^{00} + \frac{6}{\tau} h_\av^{00} &= \frac{1}{2}(g_{ab}g^{00})_\av\pa_\tau  h_\av^{ab} -\frac{1}{2\tau}(\rho-12)_\av + \mc{H}_h^{00}\,, \\
            \label{sec:avg-ests.eq:avg-met-2}
		    \pa_\tau h_\av^{0i} + \frac{\kappa}{\tau} h_\av^{0i} &= \mc{H}_h^{0i}\,,\\
            \label{sec:avg-ests.eq:avg-met-3}
		    \pa_\tau^2 h_\av^{ij} + \frac{4}{\tau} \pa_\tau h_\av^{ij} &= \mc{H}_h^{ij}\,,
        \end{align}
    \end{subequations}
	where $\mc{H}_h^{\mu\nu}$, $\mc{H}_\rho$, $\mc{H}_v^i$ are the error terms
		\begin{align*}
		    \mc{H}_h^{00} :=&\,\int_{\TT^3} \Big(\frac{1}{2}g_{\alpha\beta}h^{0a}\pa_a h^{\alpha\beta}+ h_{0a}g^{00}\pa_\tau h^{0a} -\frac{1}{2}h_{0a}h^{0a}\pa_\tau h^{00} \Big) \\&+ \frac{1}{2}\int_{\TT^3} \big[g_{ab}g^{00}-(g_{ab}g^{00})_\av\big]\pa_\tau (h^{ab}-h_\av^{ab})\\
            \mc{H}_h^{0i} :=&\, \frac{1}{2}\int_{\TT^3} \Big(g_{\alpha\beta}h^{0i}\pa_\tau h^{\alpha\beta} + g_{\alpha\beta}g^{ia}\pa_a h^{\alpha\beta}\Big),\\
            \mc{H}_h^{ij} :=&\,-\int_{\TT^3} (g^{00})^{-1}\Big(2h^{0a}\pa_a\pa_\tau h^{ij} + g^{ab}\pa_a\pa_b h^{ij}\Big) + \int_{\TT^3} \mc{L}_h^{ij} + \int_{\TT^3} \mc{G}_h^{ij},
		\end{align*}
		\end{lemma}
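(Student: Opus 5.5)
The lapse and shift ODEs are first order, while the spatial ODE is second order. This points to the method: the lapse and shift equations \eqref{sec:avg-ests.eq:avg-met-1}--\eqref{sec:avg-ests.eq:avg-met-2} should come from averaging the wave gauge condition $\Gamma^\mu = \mc{Y}^\mu$, and \eqref{sec:avg-ests.eq:avg-met-3} from averaging the wave equation \eqref{sec:gauged-eqs.eq:gauged-eqs-3}.

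For the shift, expand $\Gamma^i$ as in the proof of Lemma~\ref{sec:gauged-eqs.lem:initial-data}:
\[
\Gamma^i = -\pa_\tau g^{0i} - \pa_a g^{ia} + \tfrac12 g_{\alpha\beta}g^{0i}\pa_\tau g^{\alpha\beta} + \tfrac12 g_{\alpha\beta}g^{ia}\pa_a g^{\alpha\beta}.
\]
Set this equal to $\mc{Y}^i = \frac{\kappa}{\tau}h^{0i}$ and integrate over $\TT^3$. The divergence $\pa_a g^{ia}$ integrates to zero, and what remains is $\pa_\tau h^{0i}_\av + \frac{\kappa}{\tau}h^{0i}_\av = \mc{H}^{0i}_h$.

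For the lapse, do the same with
\[
\Gamma^0 = -\pa_\tau g^{00} - \pa_a g^{0a} + \tfrac12 g_{\alpha\beta}g^{00}\pa_\tau g^{\alpha\beta} + \tfrac12 g_{\alpha\beta}g^{0a}\pa_a g^{\alpha\beta}.
\]
Split the contraction $g_{\alpha\beta}\pa_\tau g^{\alpha\beta}$ into its $00$, $0a$ and $ab$ parts:
\begin{itemize}
\item The $00$ part is $g_{00}g^{00}\pa_\tau h^{00}$. Since $g_{00}g^{00} = 1 - g_{0a}g^{0a}$, it equals $\pa_\tau h^{00}$ up to the quadratic term $-h_{0a}h^{0a}\pa_\tau h^{00}$.
\item The $0a$ part gives the $h_{0a}g^{00}\pa_\tau h^{0a}$ term.
\item The $ab$ part $\tfrac12 g^{00}g_{ab}\pa_\tau h^{ab}$ is averaged by writing $g_{ab}g^{00} = (g_{ab}g^{00})_\av + [\,\cdot - (\cdot)_\av]$ and $\pa_\tau h^{ab} = \pa_\tau h^{ab}_\av + \pa_\tau(h^{ab}-h^{ab}_\av)$. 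The product of the two averaged pieces is $\tfrac12 (g_{ab}g^{00})_\av\pa_\tau h^{ab}_\av$. The two cross terms vanish, since each has a zero-mean factor multiplied by a spatially constant one. The fluctuation–fluctuation product is the last term of $\mc{H}^{00}_h$.
\end{itemize}

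On the gauge side, $\mc{Y}^0_\av = \frac{\kappa}{\tau}h^{00}_\av + \frac{3-\kappa}{\tau}h^{00}_\av + \frac{1}{4\tau}(\rho_\av - 12) = \frac{3}{\tau}h^{00}_\av + \frac{1}{4\tau}(\rho-12)_\av$. The left side of the averaged identity is $-\pa_\tau h^{00}_\av + \tfrac12\pa_\tau h^{00}_\av + \dots = -\tfrac12\pa_\tau h^{00}_\av + \dots$, so multiply through by $-2$. This produces $\pa_\tau h^{00}_\av + \frac{6}{\tau}h^{00}_\av$ together with $-\frac{1}{2\tau}(\rho-12)_\av$, matching \eqref{sec:avg-ests.eq:avg-met-1}. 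I expect the bookkeeping of signs and factors of $2$ in this step to be the main obstacle. In particular I must check that the constants and normalisation of $\mc{H}^{00}_h$ as written, with $\int$ versus $\fint$, are consistent, for instance by treating $|\TT^3|$ as absorbed into the normalisation.

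For the spatial components, divide \eqref{sec:gauged-eqs.eq:gauged-eqs-3} by $g^{00}$ to get
\[
\pa_\tau^2 h^{ij} + \frac{4}{\tau}\pa_\tau h^{ij} = -(g^{00})^{-1}\big(2h^{0a}\pa_a\pa_\tau h^{ij} + g^{ab}\pa_a\pa_b h^{ij}\big) + (g^{00})^{-1}(\mc{L}^{ij}_h + \mc{G}^{ij}_h),
\]
using $\wh{\square}_g = g^{00}\pa_\tau^2 + 2g^{0a}\pa_a\pa_\tau + g^{ab}\pa_a\pa_b$ with $g^{0a} = h^{0a}$. Averaging and commuting $\pa_\tau$ with the average gives \eqref{sec:avg-ests.eq:avg-met-3}. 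The factor $(g^{00})^{-1}$ in front of $\mc{L}_h+\mc{G}_h$ is either absorbed into the definition of $\mc{H}^{ij}_h$ or is a harmless normalisation discrepancy; in the subsequent estimates only bounds on $\mc{H}^{ij}_h$ matter.
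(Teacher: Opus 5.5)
Your proposal follows the paper's proof. The lapse and shift ODEs come from averaging $\Gamma^\mu=\mc{Y}^\mu$ with exactly the expansion of $\Gamma^0,\Gamma^i$ you wrote down, and the $h^{ij}$ ODE comes from averaging \eqref{sec:gauged-eqs.eq:gauged-eqs-3} after multiplying by $(g^{00})^{-1}$.

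The factor-of-two check you deferred does not come out as ``matching''. Multiplying the averaged lapse identity by $-2$ also doubles $\tfrac12(g_{ab}g^{00})_\av\pa_\tau h^{ab}_\av$ and the quadratic and fluctuation terms. Your own computation, which is correct, therefore yields coefficient $1$ in front of $(g_{ab}g^{00})_\av\pa_\tau h^{ab}_\av$ and error $2\mc{H}^{00}_h$ in \eqref{sec:avg-ests.eq:avg-met-1}.

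Likewise, the factor $(g^{00})^{-1}\approx -1$ in front of $\mc{L}^{ij}_h+\mc{G}^{ij}_h$ is genuinely present. It is not a normalisation you can absorb, since dropping it would flip a sign at leading order.

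Both are slips in the displayed statement, and the paper's proof shares them. The paper's error estimates in Lemma~\ref{sec:avg-ests.lem:errors} do carry the $(g^{00})^{-1}$. The slips are harmless downstream, because only bounds on these terms are used afterwards.
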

\begin{proof}
    The ODE \eqref{sec:avg-ests.eq:avg-met-1}, \eqref{sec:avg-ests.eq:avg-met-2} are derived from the wave gauge conditions \eqref{sec:gauged-eqs.eq:gauge-choice}, while the ODE \eqref{sec:avg-ests.eq:avg-met-3} are derived from the wave equations \eqref{sec:gauged-eqs.eq:gauged-eqs-3} for the spatial metric perturbations.

    From \cite{Ber:decelerated:25} (see in particular equations $(6.24) - (6.25)$), the contracted Christoffel symbols $\Gamma^\mu = g^{\alpha\beta}\Gamma_{\alpha\beta}^\mu(g)$ satisfy
    \begin{align*}
        \Gamma^0 &=-\frac{1}{2}\pa_\tau h^{00}-\pa_a h^{0a} +\frac{1}{2}g_{ab}g^{00}\pa_\tau h^{ab}+\frac{1}{2}g_{\alpha\beta}h^{0a}\pa_a h^{\alpha\beta}+ h_{0a}g^{00}\pa_\tau h^{0a} -\frac{1}{2}h_{0a}h^{0a}\pa_\tau h^{00},\\
        \Gamma^i &= -\pa_\tau h^{0i}-\pa_a h^{ia} + \frac{1}{2}g_{\alpha\beta}h^{0i} \pa_\tau h^{\alpha\beta} + \frac{1}{2}g_{\alpha\beta}g^{ia} \pa_a h^{\alpha\beta}.
    \end{align*}
    Combining these with the wave gauge equations \eqref{sec:gauged-eqs.eq:gauge-choice} and rearranging terms gives rise to the equations
    \begin{subequations}
        \begin{align}
            \label{sec:avg-ests.eq:avg-met-proof-1}
            \pa_\tau h^{00} + \frac{6}{\tau}h^{00} = & -2\pa_a h^{0a} + \frac{1}{2}g_{ab}g^{00}\pa_\tau h^{ab} + \frac{1}{2\tau}(\rho_\av-12)\\
            &+\frac{1}{2}g_{\alpha\beta}h^{0a}\pa_a h^{\alpha\beta}+ h_{0a}g^{00}\pa_\tau h^{0a} -\frac{1}{2}h_{0a}h^{0a}\pa_\tau h^{00} + \frac{2(\kappa-3)}{\tau}(h^{00}-h_\av^{00}),\nonumber\\
            \label{sec:avg-ests.eq:avg-met-proof-2}
            \pa_\tau h^{0i} + \frac{\kappa}{\tau}h^{0i} = &- \pa_a h^{ia} + \frac{1}{2}g_{\alpha\beta}h^{0i} \pa_\tau h^{\alpha\beta} + \frac{1}{2}g_{\alpha\beta}g^{ia} \pa_a h^{\alpha\beta}.
        \end{align}
    \end{subequations}
    We integrate these over the torus. For \eqref{sec:avg-ests.eq:avg-met-proof-1}, this gives
    \begin{multline*}
        \pa_\tau h_\av^{00} + \frac{6}{\tau}h_\av^{00} = - 2\underbrace{\int_{\TT^3} \pa_a h^{0a}}_{=0}+ \frac{1}{2}\int_{\TT^3} g_{ab} g^{00} \pa_\tau h^{ab} + \frac{1}{2\tau}(\rho_\av -12) \\
        + \int_{\TT^3} \Big(\frac{1}{2}g_{\alpha\beta}h^{0a}\pa_a h^{\alpha\beta}+ h_{0a}g^{00}\pa_\tau h^{0a} -\frac{1}{2}h_{0a}h^{0a}\pa_\tau h^{00}\Big) + \frac{2(\kappa-3)}{\tau}\underbrace{\int_{\TT^3} \big(h^{00}-h_\av^{00}\big)}_{=0}\,.
    \end{multline*}
    The underbraced integrals vanish, as their integrands clearly have vanishing spatial average. After writing
    \begin{equation*}
        \frac{1}{2}\int_{\TT^3} g_{ab} g^{00} \pa_\tau h^{ab} =\frac{1}{2}(g_{ab}g^{00})_\av\pa_\tau  h_\av^{ab} + \frac{1}{2}\int_{\TT^3} \big[g_{ab}g^{00}-(g_{ab}g^{00})_\av\big]\pa_\tau (h^{ab}-h_\av^{ab})\,,
    \end{equation*}
    we find that $h_{\av}^{00}$ obeys the equation
    \begin{multline*}
        \pa_\tau h_\av^{00} + \frac{6}{\tau}h_\av^{00} = \frac{1}{2}(g_{ab}g^{00})_\av\pa_\tau  h_\av^{ab} + \frac{1}{2\tau}(\rho_\av -12) \\
        + \int_{\TT^3} \Big(\frac{1}{2}g_{\alpha\beta}h^{0a}\pa_a h^{\alpha\beta}+ h_{0a}g^{00}\pa_\tau h^{0a} -\frac{1}{2}h_{0a}h^{0a}\pa_\tau h^{00} \Big) \\+ \frac{1}{2}\int_{\TT^3} \big[g_{ab}g^{00}-(g_{ab}g^{00})_\av\big]\pa_\tau (h^{ab}-h_\av^{ab})\,,
    \end{multline*}
    which is \eqref{sec:avg-ests.eq:avg-met-1}.
    Next we take the average of \eqref{sec:avg-ests.eq:avg-met-proof-2} and find that
    \begin{equation}
        \pa_\tau h_\av^{0i} + \frac{\kappa}{\tau}h_\av^{0i} = -\underbrace{\int_{\TT^3} \pa_a h^{ia}}_{=0} + \frac{1}{2}\int_{\TT^3} \Big(g_{\alpha\beta}h^{0i}\pa_\tau h^{\alpha\beta} + g_{\alpha\beta}g^{ia}\pa_a h^{\alpha\beta}\Big)\,,
    \end{equation}
    is just \eqref{sec:avg-ests.eq:avg-met-2}.
    Finally, to derive \eqref{sec:avg-ests.eq:avg-met-3}, we multiply the evolution equation \eqref{sec:gauged-eqs.eq:gauged-eqs-3} by $(g^{00})^{-1}$ and integrate over the torus. Since
    \begin{align*}
        \int_{\TT^3} (g^{00})^{-1}\Big(\wh{\square}_g h^{ij} + \frac{4}{\tau} g^{00}\pa_\tau h^{ij}\Big) &= \pa_\tau^2 h_\av^{ij} + \frac{4}{\tau}\pa_\tau h_\av^{ij} + \int_{\TT^3} (g^{00})^{-1}\Big(2h^{0a}\pa_a\pa_\tau h^{ij} + g^{ab}\pa_a\pa_b h^{ij}\Big)\,,
    \end{align*}
    we see that $h_\av^{ij}$ satisfies the ODE
    \begin{equation}
        \pa_\tau^2 h_\av^{ij} + \frac{4}{\tau}\pa_\tau h_\av^{ij} = -\int_{\TT^3} (g^{00})^{-1}\Big(2h^{0a}\pa_a\pa_\tau h^{ij} + g^{ab}\pa_a\pa_b h^{ij}\Big) + \int_{\TT^3} \mc{L}_h^{ij} + \int_{\TT^3} \mc{G}_h^{ij},
    \end{equation}
    which is \eqref{sec:avg-ests.eq:avg-met-3}.    
\end{proof}

\begin{lemma}[ODE for the fluid variables]\label{sec:avg-sys.lem:avg-fluid}
    The spatial averages $\rho_\av$, $v_\av$ satisfy the ODE
    \begin{subequations}\label{sec:avg-ests.eq:avg-fluid}
        \begin{gather}		
            \pa_\tau \Big((\rho_\av-12) + 12h_\av^{00}\Big) + \frac{3}{\tau}\Big((\rho_\av-12) + 12 h_\av^{00}\Big) = \mc{H}_\rho\,,\label{sec:avg-ests.eq:avg-fluid-1}\\
		      \pa_\tau \big(v^i_\av + h_\av^{0i}\big) + \frac{2}{\tau} \big(v_\av^i + h_\av^{0i}\big)= \mc{H}_v^i\,,\label{sec:avg-ests.eq:avg-fluid-2}
        \end{gather}
    \end{subequations}
    where the error terms $\mc{H}_\rho$, $\mc{H}_v^i$ are given by
        \begin{align*}
            \mc{H}_\rho :=&\, -\int_{\TT^3} \frac{1}{v^0}\Big[\chi \pa_\mu v^\mu + v^a\pa_a\rho +\mc{G}_{(\rho,v)}\Big]\\
            &+ \int_{\TT^3} \Big(\frac{\chi}{g^{00}} + 12\Big)\big(\pa_\lambda h^{0\lambda} + \Gamma^0\big)-\int_{\TT^3} \frac{\chi}{2g^{00}}g_{\alpha\beta}h^{0a}\pa_a h^{\alpha\beta}\,,\\
            \mc{H}_v^i :=&\,-\int_{\TT^3} \frac{1}{v^0}\Big[\frac{1}{\tau^{\frac{6}{n}} \chi}\Pi^{\mu i}\pa_\mu \rho +v^a\pa_a v^i +\mc{G}_{(\rho,v)}^i \Big]\\
            &- \frac{2}{\tau}\int_{\TT^3} \big[(v^0)^{-1}-1\big]h^{0i} +\int_{\TT^3} (v^0 g_{00} + 1)\pa_\tau h^{0i}+\int_{\TT^3} v^0\Big(g_{0a}\pa_\tau h^{ia} + \frac{1}{2}g^{i\lambda}\pa_\lambda g_{00}\Big)\,.
        \end{align*}
\end{lemma}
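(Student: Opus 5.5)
We derive both ODEs by solving the Euler equations \eqref{sec:gauged-eqs.eq:gauged-eqs-4}--\eqref{sec:gauged-eqs.eq:gauged-eqs-5} for the time derivatives, integrating over $\TT^3$, and isolating the leading linear terms. Divergence terms $\pa_a(\cdot)$ and oscillatory remainders with vanishing average drop out. Everything else is absorbed into $\mc{H}_\rho$, $\mc{H}_v^i$.

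\emph{Continuity equation.} I will write $v^\mu\pa_\mu\rho = v^0\pa_\tau\rho + v^a\pa_a\rho$ and divide \eqref{sec:gauged-eqs.eq:gauged-eqs-4} by $v^0$. This gives
\begin{equation*}
\pa_\tau \rho = -\frac{1}{v^0}\Big[\chi\pa_\mu v^\mu + v^a\pa_a\rho + \mc{G}_{(\rho,v)}\Big] - \chi\,\Gamma_{0\lambda}^\lambda .
\end{equation*}
For the last term I substitute \eqref{sec:stability.eq:christoffel-1}:
\begin{equation*}
-\chi\Gamma^\lambda_{0\lambda} = \frac{\chi}{g^{00}}\big(\pa_\lambda h^{0\lambda}+\Gamma^0\big) - \frac{\chi}{2g^{00}}g_{\alpha\beta}h^{0a}\pa_a h^{\alpha\beta}.
\end{equation*}
I then add and subtract $12(\pa_\lambda h^{0\lambda}+\Gamma^0)$. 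The $\frac{\chi}{g^{00}}+12$ piece goes into $\mc{H}_\rho$, leaving the main term $-12(\pa_\tau h^{00} + \pa_a h^{0a} + \Gamma^0)$.

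Averaging kills $\pa_a h^{0a}$. By the wave gauge \eqref{sec:gauged-eqs.eq:gauge-choice}, $\Gamma^0_\av = \frac{1}{4\tau}\big((\rho_\av-12)+12h_\av^{00}\big)$. Hence
\begin{equation*}
\pa_\tau\rho_\av + 12\pa_\tau h_\av^{00} = -\frac{3}{\tau}\big((\rho_\av-12)+12h^{00}_\av\big) + \mc{H}_\rho,
\end{equation*}
which is \eqref{sec:avg-ests.eq:avg-fluid-1}. The key structural point is that the $\frac{1}{4\tau}(\rho_\av-12)$ term in $\mc{Y}^0$ produces exactly the combination $(\rho_\av-12)+12h_\av^{00}$ with the damping coefficient $3$. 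Signs and conventions (average versus integral, normalised volume) will be checked carefully here.

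\emph{Momentum equation.} I divide \eqref{sec:gauged-eqs.eq:gauged-eqs-5}, with $\mc{L}^i_{(\rho,v)}$ moved explicitly, by $v^0$. This gives
\begin{equation*}
\pa_\tau v^i + \frac{2}{\tau}v^i = -\frac{1}{v^0}\Big[\frac{p'}{\tau^{6/n}\chi}\Pi^{\mu i}\pa_\mu\rho + v^a\pa_a v^i + \mc{G}^i_{(\rho,v)}\Big] - \frac{2}{\tau v^0}h^{0i} - v^0\Gamma^i_{00}.
\end{equation*}
Using \eqref{sec:stability.eq:christoffel-2}, I write
\begin{equation*}
-v^0\Gamma^i_{00} = v^0 g_{00}\pa_\tau h^{0i} + v^0\big(g_{0a}\pa_\tau h^{ia} + \tfrac12 g^{i\lambda}\pa_\lambda g_{00}\big),
\end{equation*}
and then $v^0g_{00}\pa_\tau h^{0i} = -\pa_\tau h^{0i} + (v^0g_{00}+1)\pa_\tau h^{0i}$. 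Similarly, $-\frac{2}{\tau v^0}h^{0i} = -\frac{2}{\tau}h^{0i} - \frac{2}{\tau}\big((v^0)^{-1}-1\big)h^{0i}$. Moving $-\pa_\tau h^{0i} - \frac{2}{\tau}h^{0i}$ to the left and averaging gives \eqref{sec:avg-ests.eq:avg-fluid-2}, with the listed $\mc{H}^i_v$. The stated form of $\mc{H}_v^i$ suggests $p'(\rho)$ is absorbed or normalised, so I will track that factor.

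\emph{Main obstacle.} The only delicate step is making the $\rho$ equation close on the specific combination. This requires correctly trading $-\chi\Gamma^\lambda_{0\lambda}$ for $-12\,\pa_\tau h^{00}$ plus the gauge term, with the exact coefficient $12 \cdot \frac14 = 3$. The remaining manipulations are bookkeeping of error terms.
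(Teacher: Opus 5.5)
Your proposal is correct and is essentially the paper's proof. You divide the Euler equations by $v^0$ and average over $\TT^3$, rewrite $\Gamma^\lambda_{0\lambda}$ and $\Gamma^i_{00}$ via Lemma~\ref{sec:stability.lem:christoffel}, add and subtract $12(\pa_\lambda h^{0\lambda}+\Gamma^0)$, and use $\Gamma^0_\av=\frac{1}{4\tau}\big((\rho_\av-12)+12h^{00}_\av\big)$ from the gauge condition. You are also right that a factor $p'(\rho)$ belongs in the first term of $\mc{H}_v^i$. Strictly, dividing by $v^0$ produces $+\frac{1}{v^0}\mc{G}_{(\rho,v)}$ and $+\frac{1}{v^0}\mc{G}^i_{(\rho,v)}$, so the sign inside your brackets is off; this slip is copied from the statement and is harmless for the later bounds.
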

\begin{proof}
    The identities \eqref{sec:avg-ests.eq:avg-fluid-1}, \eqref{sec:avg-ests.eq:avg-fluid-2} are obtained by multiplying the continuity and momentum equations \eqref{sec:gauged-eqs.eq:fluid} by $(v^0)^{-1}$ and integrating over the torus. 
    This gives the equations
    \begin{subequations}
        \begin{align}
            \label{sec:avg-ests.eq:avg-fluid-proof-1}
            \pa_\tau (\rho_\av -12) &= - \int_{\TT^3} \chi \Gamma_{0\lambda}^\lambda -\int_{\TT^3} \frac{1}{v^0}\Big[\chi \pa_\mu v^\mu + v^a \pa_a \rho + \mc{G}_{(\rho,v)}\Big]\,,\\
            \label{sec:avg-ests.eq:avg-fluid-proof-2}
            \pa_\tau v_\av^i + \frac{2}{\tau} v_\av^i &= - \int_{\TT^3} v^0 \Gamma_{00}^i  -\frac{2}{\tau}\int_{\TT^3} \frac{h^{0i}}{v^0} -\int_{\TT^3} \frac{1}{v^0}\Big[\frac{1}{\tau^{\frac{6}{n}}\chi}\Pi^{\mu i}\pa_\mu \rho + v^a \pa_a v^i + \mc{G}_{(\rho,v)}^i\Big]\,.
        \end{align}
    \end{subequations}
    The Christoffel symbols in the first two equations can be rewritten with the identities from Lemma~\ref{sec:stability.lem:christoffel}. For \eqref{sec:avg-ests.eq:avg-fluid-proof-1}, this and the gauge condition imply
    \begin{align*}
        -\int_{\TT^3} \chi\Gamma_{0\lambda}^\lambda =&\, \int_{\TT^3} \frac{\chi}{g^{00}}\big[\pa_\lambda h^{0\lambda} + \Gamma^0\big] - \int_{\TT^3} \frac{\chi}{2g^{00}}g_{\alpha\beta}h^{0a}\pa_a h^{\alpha\beta}\\
        =&\, -12\int_{\TT^3} \big[\pa_\tau h^{00} + \Gamma^0\big] + \int_{\TT^3} \Big(\frac{\chi}{g^{00}}+12\Big)\big[\pa_\lambda h^{0\lambda} + \Gamma^0\big] -  \int_{\TT^3} \frac{\chi}{2g^{00}}g_{\alpha\beta}h^{0a}\pa_a h^{\alpha\beta}\\
        =&\,-12\pa_\tau h_\av^{00} -\frac{3}{\tau}\big((\rho_\av-12) + 12h_\av^{00}\big)\\
        &+ \int_{\TT^3} \Big(\frac{\chi}{g^{00}}+12\Big)\big[\pa_\lambda h^{0\lambda} + \Gamma^0\big] -  \int_{\TT^3} \frac{\chi}{2g^{00}}g_{\alpha\beta}h^{0a}\pa_a h^{\alpha\beta}.
    \end{align*}
    Plugging this into \eqref{sec:avg-ests.eq:avg-fluid-proof-1} and rearranging yields \eqref{sec:avg-ests.eq:avg-fluid-1}. Moreover, we have
    \begin{align*}
        - \int_{\TT^3} v^0 \Gamma_{00}^i  -\frac{2}{\tau}\int_{\TT^3} \frac{h^{0i}}{v^0} =& \int_{\TT^3} v^0g_{00}\pa_\tau h^{0i}-\frac{2}{\tau}\int_{\TT^3} \frac{h^{0i}}{v^0} + \int_{\TT^3} v^0\big[g_{0a}\pa_\tau h^{ia} + \frac{1}{2} g^{i\lambda}\pa_\lambda g_{00}\big]\\
        =&   -\pa_\tau h_\av^{0i} -\frac{2}{\tau}h_\av^{0i}  +\int_{\TT^3} (v^0 g_{00} + 1)\pa_\tau h^{0i}-\frac{2}{\tau}\int_{\TT^3} \big((v^0)^{-1}-1\big)h^{0i}\\
        &+\int_{\TT^3} v^0\big[g_{0a}\pa_\tau h^{ia} + \frac{1}{2}g^{i\lambda}\pa_\lambda g_{00}\big],
    \end{align*}
    which along with \eqref{sec:avg-ests.eq:avg-fluid-proof-2} implies the identity \eqref{sec:avg-ests.eq:avg-fluid-2}. This completes the proof.
\end{proof}
\subsection{Estimates for the ODE error terms}
\begin{lemma}\label{sec:avg-ests.lem:errors}
    Assume the bootstrap assumptions \eqref{sec:stability.eq:boot} hold. Then the error terms $\mc{H}_h^{\mu\nu}$, $\mc{H}_{\rho}$, $\mc{H}_v^i$ obey the bounds
    \begin{subequations}\label{sec:avg-ests.eq:errors}
        \begin{align}
            |\mc{H}_h^{00}| &\lesa \frac{1}{\tau^{3+\frac{3}{n}-\delta}}\,, \\
            \sum_{i=1}^3 |\mc{H}_h^{0i}| &\lesa \frac{1}{\tau^{2+\frac{6}{n}}}\,, \\
            \sum_{i,j=0}^3 |\mc{H}_h^{ij}| &\lesa \frac{1}{\tau^{2+\frac{6}{n}}}\,, \\
            |\mc{H}_\rho| &\lesa \frac{\ve}{\tau}\big|(\rho_\av-12)+12h_\av^{00}\big| + \ve\sum_{i,j=1}^3|\pa_\tau h_\av^{ij}|+ \frac{1}{\tau^{1+\frac{6}{n}}}\,, \\
            \sum_{i=1}^3 |\mc{H}_v^i|  &\leq\frac{\ve}{\tau}\sum_{i=1}^3 |h_\av^{0i}| +  \frac{1}{\tau^{2+\frac{6}{n}}}\,.
        \end{align}
    \end{subequations}
\end{lemma}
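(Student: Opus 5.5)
The plan is to bound each error term separately, integral by integral, using the bootstrap assumptions \eqref{sec:stability.eq:boot}, the Poincar\'e inequality, Sobolev embedding, and the auxiliary bounds of Lemma~\ref{sec:stability.lem:lowered-met}, Lemma~\ref{sec:boot:lem.misc-fluid}, Lemma~\ref{sec:stability.lem:christoffel} and Proposition~\ref{sec:stability.prop:fluid-time-deriv}. The guiding principle is that every integrand is either a total spatial derivative, which integrates to zero, or at least quadratic in perturbations. A quadratic term is controlled by splitting each factor into its average plus an oscillatory part and pairing the oscillatory parts in $L^2$ via Cauchy--Schwarz. Oscillatory parts decay at the fast rates of $S_{h;N}$ and $S_{(\rho,v);N}$ (for instance $\|h^{0\mu}-h^{0\mu}_\av\|_{L^2}\lesa \ve\tau^{-(2+\frac{3}{2n}-\delta)}$). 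Averages decay at the rates of $S_{h;\av}$, $S_{(\rho,v);\av}$, and undifferentiated averages such as $h^{00}_\av,h^{ij}_\av,\rho_\av-12$ are merely $O(\ve)$. In each product I will record the decay exponents and check they beat the claimed power, using the smallness of $\delta$ from \eqref{sec:stability.eq:delta-choice} and $n>3$.

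For $\mc{H}_h^{00}$, every term contains a factor $h^{0a}$ or $h_{0a}$ multiplied by a derivative. In the worst case $h^{0a}_\av\sim\tau^{-1-\frac{3}{n}}$ multiplies $\pa_\tau h^{0a}$ or $\pa_a h$, whose oscillatory part is $\lesa\tau^{-(2-\delta)}$; this gives $\tau^{-(3+\frac{3}{n}-\delta)}$. The commutator integral $[g_{ab}g^{00}-(\cdot)_\av]\pa_\tau(h^{ab}-h^{ab}_\av)$ is a product of two oscillatory quantities, so it is even smaller.

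For $\mc{H}_h^{0i}$, I will write $g_{\alpha\beta}g^{ia}\pa_a h^{\alpha\beta}$ with $g_{\alpha\beta}$ replaced by $(g_{\alpha\beta}-(g_{\alpha\beta})_\av)$ plus its average. The average piece integrates to zero, and the remainder is bounded by $\tau^{-(4-2\delta)}$. The term $h^{0i}\pa_\tau h$ is controlled by $\tau^{-1-\frac3n}\cdot\tau^{-1-\frac3n}$, since $\pa_\tau h^{ab}_\av\lesa\tau^{-1-\frac3n}$.

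For $\mc{H}_h^{ij}$, I integrate the second-order terms by parts so that each becomes a product of two first derivatives. The linear part $\int\mc{L}_h^{ij}$ is handled as follows: the $(h^{00}-h^{00}_\av)$ and $(\rho-\rho_\av)$ pieces multiplied by $g^{ij}$ reduce, after subtracting averages, to products of oscillatory parts. The term $\frac1\tau g^{\lambda(i}\pa_\lambda h^{j)0}$ has zero spatial-derivative average, while its $\pa_\tau$ part gives $\tau^{-1}\pa_\tau h^{0j}_\av\lesa\tau^{-3-\frac3n}$. In $\int\mc{G}_h^{ij}$, the term $\tau^{-2}\chi v^iv^j$ gives $\tau^{-2}(\tau^{-2-\frac6n}+\text{osc})$, and $\tau^{-2-\frac6n}g^{ij}p$ is exactly of size $\tau^{-2-\frac6n}$, which determines the claimed rate. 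The quadratic form $\mc{F}^{ij}$ is treated by the same average/oscillation split.

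The main obstacle is $\mc{H}_\rho$, because $\rho-12$ and $\pa_x\rho$ decay slowly. The term $\frac{1}{v^0}\chi\pa_\mu v^\mu$ contains $\chi\pa_\tau v^0/v^0$. Writing $\pa_\tau v^0$ through the normalisation $g_{\alpha\beta}v^\alpha v^\beta=-1$ produces $\pa_\tau g_{ab}$ weighted by $v^av^b$, together with $v_a\pa_\tau v^a$. Averaging $\chi/v^0$ against $\pa_\tau h^{ab}_\av$ is the source of the $\ve|\pa_\tau h^{ij}_\av|$ term. The term $\chi\pa_av^a/v^0$ is handled by integrating by parts onto $\chi/v^0$, giving $\|\pa_x\rho\|\,\|v-v_\av\|\lesa\tau^{-2+2\delta}$. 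The factor $(\chi/g^{00}+12)$ multiplies $\pa_\lambda h^{0\lambda}+\Gamma^0=\pa_\tau h^{00}+\pa_ah^{0a}+\mc{Y}^0$. Its average part $\chi_\av/g^{00}_\av+12$ is $O(|(\rho_\av-12)+12h^{00}_\av|+\ve^2)$ to leading order, and multiplying by $\mc{Y}^0_\av+\pa_\tau h^{00}_\av=O(\ve/\tau)$ produces the $\frac\ve\tau|(\rho_\av-12)+12h^{00}_\av|$ term plus $\tau^{-1-\frac6n}$ remainders from the pressure $\tau^{-\frac6n}p$ inside $\chi$. The oscillatory parts pair to rates above $1+\frac6n$, provided $\delta$ is small. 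For the term $v^a\pa_a\rho/v^0$, I again integrate by parts onto $v^a$ and pair with the fast-decaying $\pa_xv$.

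For $\mc{H}_v^i$, the same mechanisms apply. The pressure-gradient term $\tau^{-\frac6n}\chi^{-1}\Pi^{ai}\pa_a\rho$ is integrated by parts, which gives $\tau^{-\frac6n}\|\pa_x\rho\|\cdot\|\pa_x(\Pi/\chi v^0)\|$. The $\Pi^{0i}\pa_\tau\rho$ term uses Proposition~\ref{sec:stability.prop:fluid-time-deriv}. The $\frac2\tau((v^0)^{-1}-1)h^{0i}$ term supplies the $\frac\ve\tau|h^{0i}_\av|$ contribution, since $(v^0)^{-1}-1=O(\ve)$.
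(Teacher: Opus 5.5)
Your overall strategy is the paper's: term-by-term Cauchy--Schwarz and Poincar\'e, splitting products into averages and oscillations, and integrating by parts so that derivatives land on fast-decaying factors. Your treatment of $\mc{H}_h^{00}$, $\mc{H}_h^{0i}$, $\mc{H}_h^{ij}$ is fine.

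\textbf{Main gap: the term $-\fint_{\TT^3}\chi\,\pa_\tau v^0/v^0$ in $\mc{H}_\rho$.} Differentiating the normalisation gives $2v_\alpha\pa_\tau v^\alpha=-v^\alpha v^\beta\pa_\tau g_{\alpha\beta}$. The dominant term on the right is $(v^0)^2\pa_\tau g_{00}$, and you dropped it. You kept only $v^av^b\pa_\tau g_{ab}$ and $v_a\pa_\tau v^a$, which are quadratically small. Since $v_0=g_{00}v^0+g_{0a}v^a$ and $g_{00}=(g^{00})^{-1}+O(|h^{0i}|^2)$ (Lemma~\ref{sec:stability.lem:lowered-met}), one gets $\pa_\tau v^0/v^0=-\frac{1}{2}\pa_\tau h^{00}/(1-h^{00})+\dots$. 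So this term contributes $6\,\pa_\tau h^{00}_\av$ to $\mc{H}_\rho$ with an $O(1)$ coefficient.

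The bootstrap only gives $|\pa_\tau h^{00}_\av|\le\ve\tau^{-1-\frac{3}{n}}$, much larger than $\tau^{-1-\frac{6}{n}}$. The averaged gauge condition $\Gamma^0_\av=\mc{Y}^0_\av=\frac{1}{4\tau}\varpi$, with $\varpi:=(\rho_\av-12)+12h^{00}_\av$, only rewrites it as $-\frac{1}{2\tau}\varpi$ plus an $O(1)$ multiple of $\pa_\tau h^{ab}_\av$. These are exactly the two critical quantities in the claimed bound, but without the factor $\ve$. So the stated estimate for $\mc{H}_\rho$ does not follow from the ingredients you use. The paper disposes of $\mc{H}_\rho$ with ``the same arguments'' and does not isolate this term either.

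An extra input is needed. One option is the averaged Hamiltonian constraint. To linear order it reads $\varpi\approx-2\tau\sum_a\pa_\tau h^{aa}_\av$, and inserted into the averaged gauge condition it shows that $\pa_\tau h^{00}_\av$ vanishes to linear order. The other option is to keep the term on the left, so the averaged continuity equation becomes $\pa_\tau(\rho_\av-12+6h^{00}_\av)+\frac{3}{\tau}\varpi=(\text{small})$. This creates an $O(1)$ coupling between $\varpi$ and $\pa_\tau h^{ij}_\av$ in Proposition~\ref{sec:avg-ests.prop:avg-ests}. That coupling can still be closed because the $h^{ij}_\av$ equation does not involve $\varpi$, but the result is not the bound stated in the lemma.

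\textbf{Two smaller steps also miss the stated rates.}

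First, in $\mc{H}_v^i$, integrating the pressure-gradient term $\tau^{-\frac{6}{n}}\fint_{\TT^3}\frac{p'(\rho)}{v^0\chi}\Pi^{ai}\pa_a\rho$ by parts as you propose puts a derivative on $\chi^{-1}$. Your bound is then $\tau^{-\frac{6}{n}}\|\pa_x\rho\|_{L^2}^2\lesa\ve^2\tau^{-(2+\frac{3}{n}-2\delta)}$. This is weaker than $\tau^{-(2+\frac{6}{n})}$, and after the weight $\tau^{1+\frac{3}{n}}$ used for $v_\av$ in Proposition~\ref{sec:avg-ests.prop:avg-ests} it is not even integrable. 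The fix is to note that $\frac{p'(\rho)}{\chi}\pa_a\rho$ is the spatial derivative of a function of $(\rho,\tau)$ and so integrates to zero. Only the deviation $\Pi^{ai}/v^0-\delta^{ai}$ then contributes, and its gradient decays like $\pa_x h$.

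Second, your separate bounds $\|\pa_x\rho\|_{L^2}\|v-v_\av\|_{L^2}\lesa\tau^{-2+2\delta}$ for $\fint\chi\pa_av^a/v^0$ and $\fint v^a\pa_a\rho/v^0$ beat $\tau^{-1-\frac{6}{n}}$ only if $2\delta\le1-\frac{6}{n}$, which fails for $3<n\le6$. Combine the two terms first via $v^a\pa_a\rho+\rho\,\pa_av^a=\pa_a(\rho v^a)$. After integrating by parts the derivative then falls on $1/v^0$, and the leftover pressure part $\tau^{-\frac{6}{n}}p\,\pa_av^a$ is harmless.
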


\begin{remark}
    At various points in the following proof, we rewrite a product in terms of its average and zero-average components, which implies the identity
    \begin{equation}
        \fint_{\TT^3} f\cdot g \,\textrm{d}^3 x = f_\av g_\av +  \fint_{\TT^3} (f-f_\av)(g-g_\av)\textrm{d}^3 x.
    \end{equation}
    Then, we apply the Poincar\'e inequality to estimate the final term. A useful consequence of this identity is that if one of the factors have zero average (for example, if one of the factors is a spatial derivative), then we can bound $\big|\fint_{\TT^3} f\cdot g \big|\lesa \|\pa_x f\|_{L^2}\|\pa_x g\|_{L^2}$, rendering such terms as quadratic errors.
\end{remark}
\begin{proof}
    This is a straightforward computation in which we freely use the Cauchy-Schwarz, Sobolev, and Poincar\'e inequalities. We bound the various integrals in $\mc{H}_{h}^{00}$ like
    \begin{align*}
        \Big|\int_{\TT^3} g_{\alpha\beta}h^{0a}\pa_a h^{\alpha\beta}\Big| &\lesa \sum_{\alpha,\beta=0}^3\|h^{0a}\|_{L^2} \|\pa_a h^{\alpha\beta}\|_{L^2} \lesa \frac{1}{\tau^{3+\frac{3}{n}-\delta}}\,,\\
        \Big|\int_{\TT^3} h_{0a}g^{00} \pa_\tau h^{0a}\Big| &\lesa \|h_{0a}\|_{L^2} \|\pa_\tau h^{0a}\|_{L^2} \lesa\frac{1}{\tau^{3+\frac{9}{2n}-\delta}}\,,\\
         \Big|\int_{\TT^3} h_{0a}h^{0a}\pa_\tau h^{00}\Big| &\lesa \|h_{0a}h^{0a}\|_{L^2} \|\pa_\tau h^{00}\|_{L^2} \lesa \frac{1}{\tau^{3+\frac{9}{n}}}\,,
    \end{align*}
    and
    \begin{multline*}
        \Big|\int_{\TT^3} \big[g_{ab}g^{00}-(g_{ab}g^{00})_\av\big]\pa_\tau (h^{ab}-h_\av^{ab})\Big|\\\lesa \big\|g_{ab} - (g_{ab})_\av\big\|_{L^2}\big\|\pa_\tau (h^{ab}-h_\av^{ab})\big\|_{L^2}
        \lesa \|\pa_x g_{ab}\|_{L^2} \|\pa_\tau \pa_x h^{ab}\|_{L^2} \lesa \frac{1}{\tau^{4-2\delta}}\,.
    \end{multline*}
    The slowest decay rate comes from the first of these four bounds, since we have assumed $\delta < 1-\frac{3}{n}$, which gives the stated estimate for $\mc{H}_h^{00}$. For $\mc{H}_h^{0i}$, we have
    \begin{align*}
        \Big|\int_{\TT^3} g_{\alpha\beta}h^{0i}\pa_\tau h^{\alpha\beta}\Big| &\lesa  \sum_{\alpha,\beta=0}^3\|h^{0i}\|_{L^2} \|\pa_\tau h^{\alpha\beta}\|_{L^2} \lesa \frac{1}{\tau^{2+\frac{6}{n}}}\,,\\
        \Big|\int_{\TT^3} g_{\alpha\beta}g^{ia}\pa_a h^{\alpha\beta}\Big| &\lesa \big|(g_{\alpha\beta} g^{ia})_\av\big|\cdot\underbrace{\Big|\int_{\TT^3} \pa_a h^{\alpha\beta}\big|}_{=0} + \Big| \int_{\TT^3} \big[g_{\alpha\beta}g^{ia} - (g_{\alpha\beta}g^{ia})_\av\big]\pa_a h^{\alpha\beta}\Big|\\
        &\lesa \big\|\pa_x (g_{\alpha\beta}g^{ia})\big\|_{L^2}\|\pa_a h^{\alpha\beta}\|_{L^2} \lesa \frac{1}{\tau^{4-2\delta}}\,,
    \end{align*}
    which gives the bound for $\mc{H}_h^{0i}$.
    
    Next, we estimate $\mc{H}_h^{ij}$. Integrating by parts, we find that
    \begin{multline*}
    \Big|\int_{\TT^3} (g^{00})^{-1}\Big(2h^{0a}\pa_a\pa_\tau h^{ij} + g^{ab}\pa_a\pa_b h^{ij}\Big) \Big| \\
    = \Big|\int_{\TT^3} 2\pa_a \big((g^{00})^{-1}h^{0a}\big)\pa_\tau h^{ij} + \pa_a \big((g^{00})^{-1}g^{ab}\big)\pa_b h^{ij}\Big|\\\lesa \big\|\pa_a \big((g^{00})^{-1}h^{0a}\big)\big\|_{L^2}\|\pa_\tau h^{ij}\|_{L^2} + \big\|\pa_a \big((g^{00})^{-1}g^{ab}\big)\|_{L^2}\|\pa_b h^{ij}\|_{L^2} \lesa \frac{1}{\tau^{3+\frac{9}{2n}-\delta}}\,.
    \end{multline*}
    The linear terms that comprise $\mc{L}_h^{ij}$ become error terms after integrating over the torus,
    \begin{multline*}
        \Big|\int_{\TT^3} (g^{00})^{-1}\mc{L}_h^{ij}\Big| \lesa \frac{1}{\tau}\Big|\int_{\TT^3} (g^{00})^{-1}h^{0(i}\pa_\tau h^{j)0} \Big| + \frac{1}{\tau}\Big|\int_{\TT^3} (g^{00})^{-1}g^{a(i}\pa_a h^{j)0}\Big|\\
        + \frac{1}{\tau^2}\Big|\int(g^{00})^{-1} g^{ij}\big[(h^{00}-h_\av^{00}) + (\rho - \rho_\av)\big]\Big|\,,
    \end{multline*}
    where for each term on the right hand side, we have
    \begin{align*}
        \frac{1}{\tau}\Big|\int_{\TT^3} h^{0(i}\pa_\tau h^{j)0} \Big| &\lesa \frac{1}{\tau}\|h^{0(i}\|_{L^2}\|\pa_\tau h^{j)0}\|_{L^2} \lesa \frac{1}{\tau^{4+\frac{9}{2n}-\delta}}\,,\\
        \frac{1}{\tau}\Big|\int_{\TT^3} (g^{00})^{-1}g^{a(i}\pa_a h^{j)0}\Big| & = \frac{1}{\tau}\Big|\int_{\TT^3} \pa_a \big((g^{00})^{-1}g^{a(i}\big)h^{j)0}\Big| \lesa \frac{1}{\tau}\big\|\pa_x h\big\|_{L^2}\|h^{j)0}\|_{L^2} \lesa \frac{1}{\tau^{4+\frac{3}{n}-\delta}}\,,
    \end{align*}
    and
    \begin{multline*}
        \frac{1}{\tau^2}\Big|\int(g^{00})^{-1} g^{ij}\big[(h^{00}-h_\av^{00}) + (\rho - \rho_\av)\big]\Big| \\\lesa \frac{1}{\tau^2}\Big|\big((g^{00})^{-1}g^{ij}\big)_\av\Big|\cdot \Big|\underbrace{\int_{\TT^3} \big[(h^{00}-h_\av^{00}) + (\rho - \rho_\av)\big]}_{=0}\Big|\\
        +\frac{1}{\tau^2}\Big|\int_{\TT^3} \Big[(g^{00})^{-1} g^{ij}-\big((g^{00})^{-1}g^{ij}\big)_\av\Big]\cdot\big[(h^{00}-h_\av^{00}) + (\rho - \rho_\av)\big]\Big|\\
        \lesa \frac{1}{\tau^2}\Big\|\pa_x\big[(g^{00})^{-1} g^{ij}\big]\Big\|_{L^2}\Big(\|\pa_x h^{00}\|_{L^2} + \|\pa_x \rho\|_{L^2}\Big)\lesa \frac{1}{\tau^{5-\frac{3}{2n}-2\delta}}.
    \end{multline*}
    Finally, the error term $\mc{G}_h^{ij}$ satisfies the bound
    \begin{equation*}
        \Big|\int_{\TT^3} (g^{00})^{-1}\mc{G}_h^{ij}\big| \lesa \frac{1}{\tau^{2+\frac{6}{n}}},
    \end{equation*}
    where the precise decay rate arises from the pressure term $\tau^{-(2+\frac{6}{n})}g^{ij} p$ in $\mc{G}_h^{ij}$. The slowest decay rate is from this final bound, and so these estimates in combination give the stated estimate for $\mc{H}_h^{ij}$. 
    
    The decay estimates for the $\mc{H}_\rho$, $\mc{H}_v^i$ follow by the same arguments, although we point out that the additional terms on the RHS of these equations arise from the estimates 
    \begin{gather*}
        \Big|\int_{\TT^3} \Big(\frac{\chi}{g^{00}} + 12\Big)\big(\pa_\lambda h^{0\lambda} + \Gamma^0\big)\Big| \lesa \frac{\ve}{\tau}\big|(\rho_\av-12)+12h_\av^{00}\big|+ \ve\sum_{i,j=1}^3|\pa_\tau h_\av^{ij}| + \frac{1}{\tau^{2+\frac{3}{2n}-\delta}}\,, \\
        \frac{1}{\tau}\Big|\int_{\TT^3} \big((v^0)^{-1}-1\big)h^{0i}\Big| + \Big|\int_{\TT^3} \big(v^0 g_{00} + 1\big)\pa_\tau h^{0i}\big| \lesa \frac{\ve}{\tau}|h_\av^{0i}| + \frac{1}{\tau^{2+\frac{9}{2n}-\delta}}\,,
    \end{gather*}
    where we used the wave gauge conditions to bound 
    \begin{align*}
        |\Gamma_\av^0| &\lesa \frac{1}{\tau}\big|(\rho_\av-12) + 12h_\av^{00} \big| + \frac{1}{\tau^{3+\frac{3}{n}-\delta}}\,, \\
        |\pa_\tau h_\av^{00}| &\lesa \frac{1}{\tau}\big|(\rho_\av-12) + 12 h_\av^{00}\big| + \sum_{i,j=1}^3|\pa_\tau h_\av^{ij}| +\frac{1}{\tau^{3+\frac{3}{n}-\delta}},\\
        |\pa_\tau h_\av^{0i}| &\lesa \frac{1}{\tau}|h_\av^{0i}| + \frac{1}{\tau^{2+\frac{6}{n}}}\,.
    \end{align*}
    We note that these additional terms decay at a critical rate, and so we must include them explicitly in the estimates, and track their smallness carefully.
\end{proof}
\subsection{Improvement of the bootstrap assumption for the spatial averages}
We are now ready to improve on the bounds from the bootstrap assumptions \eqref{sec:stability.eq:boot-1}--\eqref{sec:stability.eq:boot-2}.
\begin{proposition}\label{sec:avg-ests.prop:avg-ests}
    There exists a constant $C>0$ such that the norms $S_{h;\av}$, $S_{(\rho,v);\av}$ defined in \eqref{sec:stability.eq:boot-1}--\eqref{sec:stability.eq:boot-2} obey the bound
    \begin{equation}\label{sec:avg-ests.eq:avg-ests}
        S_{h;\av}(\tau) + S_{(\rho,v);\av}(\tau) \leq C\ve^{\frac{3}{2}},
    \end{equation}
    which is strictly smaller than $\ve$ (provided $\ve$ is sufficiently small). We additionally have the bound
    \begin{equation}\label{sec:avg-ests.omega-ests}
        \tau^{\frac3n}|(\rho_\av - 12) + 12 h^{00}_\av| \leq C\ve^{\frac{3}{2}}.
    \end{equation}
\end{proposition}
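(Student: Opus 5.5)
We integrate the ODE system of Lemmas~\ref{sec:avg-ests.lem:avg-met} and \ref{sec:avg-sys.lem:avg-fluid} with integrating factors, feeding in the error bounds of Lemma~\ref{sec:avg-ests.lem:errors}. The initial data enter only through \eqref{sec:stability.eq:initial-bound}, so all initial contributions are $O(\ve^2)$, possibly with powers of $\tau_0$. Every error term coming from higher-order quantities carries a factor $\tau^{-\frac{3}{n}}$ or better beyond the critical rate. Using $\tau_0\ge \ve^{-4n}$, such factors can be converted into powers of $\ve$, for example $\tau^{-\frac{3}{2n}}\le \ve^{6}$. The constants in Lemma~\ref{sec:avg-ests.lem:errors} carry an implicit $\ve$ or $\ve^2$ from the bootstrap, since each term is at least quadratic or contains a bootstrapped quantity.

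The order in which we treat the variables matters, because the system is coupled.

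First we treat the combination $\Omega:=(\rho_\av-12)+12h_\av^{00}$ via \eqref{sec:avg-ests.eq:avg-fluid-1}. This gives $\pa_\tau(\tau^3\Omega)=\tau^3\mc{H}_\rho$. The term $\frac{\ve}{\tau}|\Omega|$ in $\mc{H}_\rho$ is absorbed by Grönwall, which only changes the rate $3$ to $3-C\ve$. The term $\ve|\pa_\tau h^{ij}_\av|\lesssim \ve^2\tau^{-1-\frac3n}$ and the remainder $\tau^{-1-\frac6n}$ then yield $|\Omega|\lesssim \ve^2\tau^{-\frac3n}+\tau^{-\frac6n}$. With $\tau_0$ large this gives \eqref{sec:avg-ests.omega-ests}.

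Next we treat $h^{ij}_\av$ through \eqref{sec:avg-ests.eq:avg-met-3}. Here $\pa_\tau(\tau^4\pa_\tau h^{ij}_\av)=\tau^4\mc{H}^{ij}_h$ gives $|\pa_\tau h^{ij}_\av|\lesssim \ve^2\tau_0^4\tau^{-4}+\tau^{-1-\frac6n}$, which is integrable. Hence $|h^{ij}_\av|\lesssim\ve^2+\tau_0^{-\frac6n}\lesssim \ve^2$. Note that the critical lapse/density terms in the $h^{ij}$ equation appear only in the combination $\Omega$, and they have already been moved into $\mc{H}^{ij}_h$.

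Then we treat $h^{0i}_\av$ from \eqref{sec:avg-ests.eq:avg-met-2}. Since $\kappa$ is large, we get $|h^{0i}_\av|\lesssim \ve^2(\tau_0/\tau)^\kappa+\tau^{-1-\frac6n}$, and the ODE itself bounds $\pa_\tau h^{0i}_\av$ with the extra $\tau^{-1}$. For $v_\av^i$ we use \eqref{sec:avg-ests.eq:avg-fluid-2} with integrating factor $\tau^2$. Its source is $\frac{\ve}{\tau}|h^{0i}_\av|+\tau^{-2-\frac6n}$, which gives $|v^i_\av+h^{0i}_\av|\lesssim \ve^2\tau^{-2}\tau_0^2+\tau^{-1-\frac6n}$, and hence the weighted bound at rate $\tau^{-1-\frac3n}$ with a spare $\tau^{-\frac3n}$.

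For $h^{00}_\av$, equation \eqref{sec:avg-ests.eq:avg-met-1} with factor $\tau^6$ has sources $\pa_\tau h^{ab}_\av$, $\tau^{-1}(\rho-12)_\av$ and $\mc{H}^{00}_h$. The term $\tau^{-1}(\rho_\av-12)$ is critical and $\rho_\av-12$ need not decay, so we substitute $\rho_\av-12=\Omega-12h^{00}_\av$. This turns the damping $6/\tau$ into an effective coefficient of either $0$ or $12/\tau$, depending on the sign convention of the $\frac12\tau^{-1}(\rho-12)_\av$ term. If the coefficient is $0$, then $h^{00}_\av$ satisfies $\pa_\tau h^{00}_\av=O(\tau^{-1}|\Omega|)+O(|\pa_\tau h^{ab}_\av|)+\dots$. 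The right-hand side is integrable, of size $\ve^2\tau^{-1-\frac3n}$ plus $\tau_0$-small terms, so $h^{00}_\av$ stays within $C\ve^2$ of its initial value and $\pa_\tau h^{00}_\av$ decays at rate $\tau^{-1-\frac3n}$. If instead the damping is strengthened, the bound is only better. Finally $\rho_\av-12=\Omega-12h^{00}_\av$ is bounded, which closes \eqref{sec:stability.eq:boot-2}.

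Collecting these bounds gives everything at level $C\ve^2+C\tau_0^{-\frac{3}{2n}}\le C\ve^{\frac32}$.

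The main obstacle is the $\Omega$ step. The critical $\ve\tau^{-1}|\Omega|$ and $\ve|\pa_\tau h^{ij}_\av|$ terms in $\mc{H}_\rho$ must be tracked with their explicit smallness to keep the decay rate $\tau^{-\frac3n}$, because that rate is what makes the $\tau^{-1}\Omega$ source integrable in the $h^{00}_\av$ and $h^{ij}_\av$ equations. The mild circularity, where $\pa_\tau h^{ij}_\av$ enters $\mc{H}_\rho$, is resolved by using the bootstrap bound for it, which is already $\ve$-small.
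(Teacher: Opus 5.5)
Your argument is in substance the paper's. It uses the same ODEs and error bounds, and the damping rates $\kappa,4,3,2$ exceed the target weights $1+\frac3n$, $1+\frac3n$, $\frac3n$, $1+\frac3n$. The critical $\frac{\ve}{\tau}|\Omega|$ term is absorbed at the cost of an $O(\ve)$ loss in rate. After substituting $\rho_\av-12=\Omega-12h^{00}_\av$, $h^{00}_\av$ has zero effective damping: with the sign as stated in Lemma~\ref{sec:avg-ests.lem:avg-met} the coefficient is $0$. The paper likewise controls $h^{00}_\av$ purely by integrability of $\tau^{-1}\Omega$ and $\pa_\tau h^{ab}_\av$, and uses $\tau_0\ge\ve^{-4n}$ to turn spare decay into powers of $\ve$. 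The difference is organizational. The paper puts $\tau^{1+\frac3n}h^{0i}_\av$, $\tau^{1+\frac3n}\pa_\tau h^{ij}_\av$, $\tau^{\frac3n}\Omega$ and $\tau^{1+\frac3n}v^i_\av$ into one quadratic functional $\alpha$ and closes a single inequality $\pa_\tau\alpha+\frac{B}{\tau}\alpha\le\frac{C\ve}{\tau}\alpha+C\tau^{-1-\frac3n}$. You instead integrate each scalar ODE in a triangular order, feeding the bootstrap bound for $\pa_\tau h^{ij}_\av$ into the $\Omega$ equation. Since the bound on $\mc{H}_h^{ij}$ involves no other average, you could also treat $h^{ij}_\av$ first and remove that circularity entirely.

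What needs fixing is your bookkeeping of the initial data.
\begin{itemize}
\item From $|\pa_\tau h^{ij}_\av|\lesssim \ve^2\tau_0^4\tau^{-4}+\dots$ you conclude $|h^{ij}_\av|\lesssim\ve^2$. But $\int_{\tau_0}^\infty \ve^2\tau_0^4 s^{-4}\,ds=\ve^2\tau_0/3$, which is enormous because $\tau_0\ge\ve^{-4n}$.
\item The same problem affects every weighted quantity at $\tau=\tau_0$. The quantities $\tau_0^{1+\frac3n}|\pa_\tau h^{ij}_\av|$, $\tau_0^{\frac3n}|\Omega|$, $\tau_0^{1+\frac3n}|v^i_\av|$ and $\tau_0^{1+\frac3n}|h^{0i}_\av|$ are not controlled by the unweighted bound \eqref{sec:stability.eq:initial-bound}. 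Your final tally $C\ve^2+C\tau_0^{-\frac{3}{2n}}$ is therefore unjustified as written.
\end{itemize}
You must take the smallness at $\tau_0$ in the weighted norms, $S_{h;\av}(\tau_0)+S_{(\rho,v);\av}(\tau_0)\lesssim\ve^2$. The paper's proof tacitly relies on the same thing when it asserts $\alpha(\tau_0)\le C\ve^2$. With that assumption, $|\pa_\tau h^{ij}_\av(\tau_0)|\lesssim\ve^2\tau_0^{-1-\frac3n}$, so the initial contribution to $h^{ij}_\av$ is $\lesssim\ve^2\tau_0^{-\frac3n}$, and all your remaining steps go through.

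A smaller inaccuracy: the constants in Lemma~\ref{sec:avg-ests.lem:errors} are not all $\ve$-small. The pressure terms producing $\tau^{-1-\frac6n}$ in $\mc{H}_\rho$ and $\tau^{-2-\frac6n}$ in $\mc{H}_h^{ij}$ are $O(1)$. Your actual estimates nonetheless handle them correctly, through the spare factor $\tau^{-\frac3n}$ and $\tau_0\ge\ve^{-4n}$.
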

\begin{proof}
    Combining Lemmas~\ref{sec:avg-ests.lem:avg-met} and~\ref{sec:avg-ests.lem:errors}, we find that the spatial averages $(h_\av^{\mu\nu},\rho_\av,v_\av)$ satisfy
    \begin{subequations}
        \begin{align}
            |\pa_\tau h_\av^{00}| &\lesa \frac{1}{\tau}\Big|(\rho_\av-12) + 12h_\av^{00}\Big| + \sum_{i,j=1}^3 |\pa_\tau h_\av^{ij}|+ \frac{1}{\tau^{3+\frac{3}{n}-\delta}}\,, \label{sec:avg-ests.eq:avg-ests-proof-1}\\
		      \Big|\pa_\tau h_\av^{0i} + \frac{\kappa}{\tau} h_\av^{0i} \Big| &\lesa \frac{1}{\tau^{2+\frac{6}{n}}}\,, \label{sec:avg-ests.eq:avg-ests-proof-2}\\
		      \Big|\pa_\tau^2 h_\av^{ij} + \frac{4}{\tau} \pa_\tau h_\av^{ij}\Big| &\lesa \frac{1}{\tau^{2+\frac{6}{n}}}\,. \label{sec:avg-ests.eq:avg-ests-proof-3}
        \end{align}
        Similarly from Lemma~\ref{sec:avg-sys.lem:avg-fluid} and~\ref{sec:avg-ests.lem:errors} we find
        \begin{multline}\label{sec:avg-ests.eq:avg-ests-proof-4}
            \Big|\pa_\tau \Big((\rho_\av-12) + 12h_\av^{00}\Big) + \frac{3}{\tau}\Big((\rho_\av-12) + 12 h_\av^{00}\Big)\Big| \\
            \lesa \frac{\ve}{\tau}\big|(\rho_\av-12)+12h_\av^{00}\big| + \ve\sum_{i,j=1}^3|\pa_\tau h_\av^{ij}|+ \frac{1}{\tau^{1+\frac{6}{n}}}\,,
        \end{multline}
        \begin{equation}\label{sec:avg-ests.eq:avg-ests-proof-5}
		      \Big|\pa_\tau \big(v^i_\av + h_\av^{0i}\big) + \frac{2}{\tau} \big(v_\av^i + h_\av^{0i}\big)\Big| \lesa \frac{\ve}{\tau}\sum_{i=1}^3 |h_\av^{0i}| +  \frac{1}{\tau^{2+\frac{6}{n}}}\,.
        \end{equation}
    \end{subequations}
    Estimates can be derived from these ODE in a straightforward manner. The equations \eqref{sec:avg-ests.eq:avg-ests-proof-2}--\eqref{sec:avg-ests.eq:avg-ests-proof-5} all contain damping terms whose coefficients are all sufficiently large to account for the expected decay rates of the various quantities. More specifically, we have
    \begin{equation*}
        \kappa > 1+\frac{3}{n}\,,\qquad 4 > 1+\frac{3}{n}\,,\qquad 3 > \frac{3}{n},\qquad 2 > 1+\frac{3}{n}\,.
    \end{equation*}
    Define the rescaled averaged quantities
    \begin{gather*}
        \mu^i:= \tau^{1+\frac{3}{n}}h_\av^{0i}\,,\qquad \gamma^{ij}:= \tau^{1+\frac{3}{n}}\pa_\tau h_\av^{ij}\,,\\
        \omega:= \tau^{\frac{3}{n}}\big((\rho_\av-12) + 12h_\av^{00}\big)\,,\qquad \sigma^i:= \tau^{1+\frac{3}{n}}v^i\,.
    \end{gather*}
    The ODEs \eqref{sec:avg-ests.eq:avg-ests-proof-2}--\eqref{sec:avg-ests.eq:avg-ests-proof-5} therefore imply
    \begin{align*}
        \frac{1}{2}\sum_{i=1}^3\Big(\pa_\tau (\mu^i)^2 + \frac{B}{\tau}(\mu^i)^2\Big) &\leq \frac{C}{\tau^{1+\frac{3}{n}}}\sum_{i=1}^3|\mu^i|\,,\\
        \frac{1}{2}\sum_{i,j=1}^3\Big(\pa_\tau (\gamma^{ij})^2 + \frac{B}{\tau}(\gamma^{ij})^2\Big) &\leq \frac{C}{\tau^{1+\frac{3}{n}}}\sum_{i,j=1}^3|\gamma^{ij}|\,,\\
        \frac{1}{2}\big(\pa_\tau \omega^2 + \frac{B}{\tau}\omega^2\big) &\leq C\Big(\frac{\ve}{\tau}|\omega| + \frac{\ve}{\tau}\sum_{i,j=1}^3|\gamma^{ij}|+\frac{1}{\tau^{1+\frac{3}{n}}}\Big)|\omega|\,,\\
        \frac{1}{2}\sum_{i=1}^3\Big(\pa_\tau (\sigma^i)^2 + \frac{B}{\tau}(\sigma^i)^2\Big) &\leq C\Big(\frac{\ve}{\tau}\sum_{i=1}^3 |\mu^i| + \frac{1}{\tau^{1+\frac{3}{n}}}\Big)\sum_{i=1}^3|\sigma^i|\,,\\
    \end{align*}
    where $B$, $C$ are positive constants, whose sizes depend only on parameters like $N$, $n$, and crucially not on $\ve$. 
    Combining these estimates, we find that the energy-type quantity
    \begin{equation*}
        \alpha(\tau) := \frac{1}{2}\Big(\sum_{i=1}^3(\mu^i)^2 + \sum_{i,j=1}^3 (\gamma^{ij})^2 + \omega^2 + \sum_{i=1}^3 (\sigma^i)^2\Big)\,,
    \end{equation*}
    obeys the inequality
    \begin{multline*}
        \pa_\tau \alpha(\tau) + \frac{B}{\tau}\alpha(\tau) \leq \frac{C}{\tau} \sqrt{\alpha(\tau)}\Big(\sum_{i=1}^3 |\mu^i| + \sum_{i,j=1}^3|\gamma^{ij}| + |\omega| \Big) + \frac{C}{\tau^{1+\frac{3}{n}}}\sqrt{\alpha(\tau)}
        \leq \frac{C\ve}{\tau}\alpha(\tau) + \frac{C}{\tau^{1+\frac{3}{n}}}\,.
    \end{multline*}
    Note we are now using $C$ as a running constant which does not depend on $\ve$. Shrinking $\ve$ if necessary, we can absorb the first term on the right hand side into the left hand side. Moreover, we may use the fact that the initial time satisfies $\tau_0 \geq \ve^{-4n}$ to bound  $\tau^{-3/4n} \leq \ve^3$ which we apply to the error term to bound
    \begin{equation}\label{sec:avg-ests.eq:avg-ests-proof-6}
        \pa_\tau \alpha(\tau) \leq \frac{C\ve^3}{\tau^{1+\frac{9}{4n}}}\,.
    \end{equation}
    The right hand side is integrable in time, hence we integrate \eqref{sec:avg-ests.eq:avg-ests-proof-6} on the interval $[\tau_0,\tau]$ to obtain
    \begin{equation*}
        \alpha(\tau) \leq \alpha(\tau_0) + C\ve^3\,,
    \end{equation*}
    The initial bound \eqref{sec:stability.eq:initial-bound} implies that $\alpha(\tau_0) \leq C\ve^2$, thus we have the improved bound
    \begin{equation}
        \tau^{1+\frac{3}{n}}\sum_{i=1}^3|h_\av^{0i}(\tau)| + \tau^{1+\frac{3}{n}}\sum_{i,j=1}^3|\pa_\tau h_\av^{ij}(\tau)| + \tau^{1+\frac{3}{n}}\sum_{i=1}^3|v_\av^i(\tau)| \leq C\ve^{\frac{3}{2}}\,,
    \end{equation}
    moreover we find that the quantity $\omega$ decays like $|\omega| \leq C\ve^{3/2} \tau^{-\frac{3}{n}}$, implying \eqref{sec:avg-ests.omega-ests}. Since $\pa_\tau h_\av^{ij}$ decay at an integrable rate in time, one can simply integrate again to obtain improved estimates for $h_\av^{ij}$.
    
    The remaining equation \eqref{sec:avg-ests.eq:avg-ests-proof-1} lacks a damping term, so instead we simply have
    \begin{equation*}
        \frac{1}{2}\pa_\tau (h_\av^{00})^2 \leq \frac{C}{\tau^{1+\frac{3}{n}}}|h_\av^{00}|\sqrt{\alpha} + \frac{C\ve^3}{\tau^{3+\frac{3}{n}-\delta}} \leq \frac{C}{\tau^{1+\frac{3}{n}}} \leq \frac{C\ve^3}{\tau^{1+\frac{9}{4n}}}\,.
    \end{equation*}
    Integrating this in time and using the initial bound gives the improved estimate for $h_\av^{00}$. By the decay of $\omega$, this also implies an improved bound on $\rho_\av-12$. Finally, one obtains improved bounds for the quantities $\pa_\tau h_\av^{00}$, $\pa_\tau h_\av^{0i}$ from the wave gauge equations and the previous estimates. This completes the proof.
\end{proof}
\begin{remark}\label{sec:avg-ests.rmk:derivests}
    We can use Proposition~\ref{sec:avg-ests.prop:avg-ests} to prove decay estimates for other (higher) derivatives of the various quantities. The averaged Euler equations \eqref{sec:avg-ests.eq:avg-fluid} together with \eqref{sec:avg-ests.omega-ests} from Proposition~\ref{sec:avg-ests.prop:avg-ests} and Lemma \ref{sec:avg-ests.lem:errors} imply the bounds 
    \begin{subequations}\label{sec:avg-ests.eq:second-derivs}
    \begin{align}
        |\pa_\tau \rho_\av(\tau)| &\lesa \frac{\ve}{\tau^{1+\frac{3}{n}}}\,,\\
        \sum_{i=1}^3\big|\pa_\tau v_\av^i(\tau)\big| &\lesa \frac{\ve}{\tau^{2+\frac{3}{n}}}\,,
    \end{align}
    The equation \eqref{sec:avg-ests.eq:avg-met-3} for the $h_\av^{ij}$ implies the bound
    \begin{equation}
        \sum_{i,j=1}^3 \big|\pa_\tau^2 h_\av^{ij}(\tau)\big| \lesa \frac{\ve}{\tau^{2+\frac{3}{n}}}\,.
    \end{equation}
    One can then differentiate-in-time the spatially averaged Euler equations \eqref{sec:avg-ests.eq:avg-fluid} and the wave gauge equations \eqref{sec:avg-ests.eq:avg-met-1}-- \eqref{sec:avg-ests.eq:avg-met-2} to obtain the bounds
    \begin{equation}
        \sum_{\mu=0}^3 \big|\pa_\tau^2 h_\av^{0\mu}(\tau)\big| + |\pa_\tau^2 \rho_\av(\tau)| + \sum_{i=1}^3 |\pa_\tau^2 v_\av^i(\tau)| \lesa \frac{\ve}{\tau^{2+\frac{3}{n}}}\,.
    \end{equation}
    \end{subequations}
\end{remark}
\section{Energy estimates for the metric perturbations}\label{sec:energy-met}
In this section, we establish energy estimates for higher derivatives of the metric components. The main estimates are contained in Proposition~\ref{sec:energy-met.prop:commuted-energy} and Proposition~\ref{sec:energy-met.prop:higher-energy}.

In Proposition~\ref{sec:energy-met.prop:commuted-energy}, we prove energy estimates for a class of damped wave equations that are satisfied by the metric components. To prove these estimates, we construct a family of wave energies with two important features. The first is the inclusion of a lower-order correction term that leads to coercive damping terms in the energy estimates. Recalling our discussion from Section~\ref{intro:bded-sec}, these damping terms will enable us to establish decay of the metric perturbations. The second feature of the energies is the application of the material derivative $\pav = v^\mu \pa_\mu$ as a vectorfield commutator. In view of Proposition~\ref{sec:stability.prop:fluid-time-deriv}, material derivatives of the fluid components exhibit improved decay in comparison to their spatial derivatives, and so commuting the metric evolution equations \eqref{sec:gauged-eqs.eq:wave} with $\pav$ leads to improved decay of the source terms on the right hand side. 

In Proposition~\ref{sec:energy-met.prop:higher-energy}, we apply the energy estimates of Proposition~\ref{sec:energy-met.prop:commuted-energy} to the spatial derivatives of the metric components. The gauge-fixed Einstein equations \eqref{sec:gauged-eqs.eq:wave} have certain source terms, contained in the $\mc{L}_{h}^{\mu\nu}$ which decay at a critical rate. These must be tracked carefully through the estimates. Ultimately, we will control such terms by exploiting some remaining gauge freedom and fixing the parameter $\kappa$ which appears in the gauge condition \eqref{sec:gauged-eqs.eq:gauge-choice} to be some sufficiently large number. We  do this later in Section~\ref{sec:boot-imp}.

\subsection{Base wave energy}
\begin{definition}[Base wave energy and damped wave operator]\label{def:base-wave-energy}
   For a given function $u$ that is sufficiently regular, we introduce the base \textit{wave energy}
\begin{equation}
    E_{w}[u](\tau) = \frac{1}{2}\int_{\TT^3}|g^{00}|(\pa_\tau u)^2 + g^{ij}\pa_i u \pa_j u\,,
\end{equation} 
    and recall the damped wave operator 
    \begin{equation}\label{sec:energy-met.eq:base-energy-2}
    P_\eta := \wh{\square}_g + \frac{2\eta}{\tau}g^{00}\pa_\tau\,.
    \end{equation}
\end{definition}

By the bootstrap assumption, $E_{w}$ is equivalent to the norm
\begin{equation*}
    \frac{1}{C}\|\pa u\|_{L^2}^2 \leq E_{w}[u] \leq C\|\pa u\|_{L^2}^2\,.
\end{equation*}
We prove a bound for the time-evolution of this energy.
\begin{lemma}\label{sec:energy-met.lem:base-energy}
	The wave energy $E_{w}$ satisfies the differential inequality
	\begin{equation}\label{sec:energy-met.eq:base-energy-1}
		\pa_\tau E_{w}[u](\tau) + \frac{2a}{\tau}\int_{\TT^3} |g^{00}|(\pa_\tau u)^2 \lesa \|\pa_\tau u\|_{L^2}\|P_\eta u\|_{L^2} + \frac{1}{\tau^{1+\frac{3}{n}}} E_{w}[u]\,.
	\end{equation}
\end{lemma}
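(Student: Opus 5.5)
The plan is the standard energy identity for the damped wave operator: differentiate $E_w[u]$ in $\tau$, substitute the equation through $P_\eta u$, and integrate the spatial second-order terms by parts. The damping $\frac{2\eta}{\tau}g^{00}\pa_\tau$ then supplies the coercive term with coefficient $a=\eta$. Every remaining term carries at least one derivative of the metric. I will bound these by the bootstrap assumptions \eqref{sec:globed.eq:boot} via Sobolev embedding.

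\textbf{Step 1: differentiate the energy.}
\[
\pa_\tau E_w[u] = \int_{\TT^3} |g^{00}|\pa_\tau u\,\pa_\tau^2 u + g^{ij}\pa_i u\,\pa_j\pa_\tau u + \frac12\int_{\TT^3}\big(\pa_\tau|g^{00}|(\pa_\tau u)^2 + \pa_\tau g^{ij}\pa_iu\pa_ju\big).
\]
Integrating the second term by parts gives $-\int \pa_j(g^{ij}\pa_i u)\pa_\tau u$. Since $|g^{00}|=-g^{00}$, the first two terms combine to
\[
-\int_{\TT^3}\pa_\tau u\big(g^{00}\pa_\tau^2u + g^{ij}\pa_i\pa_j u\big) - \int_{\TT^3}\pa_j g^{ij}\,\pa_i u\,\pa_\tau u .
\]

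\textbf{Step 2: insert the equation.} By the definition of $\wh{\square}_g$,
\[
g^{00}\pa_\tau^2 u + g^{ij}\pa_i\pa_j u = P_\eta u - 2g^{0a}\pa_a\pa_\tau u - \frac{2\eta}{\tau}g^{00}\pa_\tau u .
\]
The damping piece produces $\frac{2\eta}{\tau}\int g^{00}(\pa_\tau u)^2 = -\frac{2\eta}{\tau}\int|g^{00}|(\pa_\tau u)^2$. Moving it to the left-hand side gives the claimed good term with $a=\eta$. The $P_\eta u$ piece is bounded by $\|\pa_\tau u\|_{L^2}\|P_\eta u\|_{L^2}$ via Cauchy–Schwarz. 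For the mixed term, write $2g^{0a}\pa_a\pa_\tau u\,\pa_\tau u = h^{0a}\pa_a(\pa_\tau u)^2$ and integrate by parts. It becomes $-\int\pa_a h^{0a}(\pa_\tau u)^2$, which avoids any loss of derivatives.

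\textbf{Step 3: bound the metric-derivative errors.} What remains are integrals of $\pa h\cdot(\pa u)^2$ with coefficients $\pa_\tau g^{00}$, $\pa_\tau g^{ij}$, $\pa_j g^{ij}$ and $\pa_a h^{0a}$. These are controlled by $\|\pa h\|_{L^\infty}E_w[u]$, using the equivalence $E_w\sim\|\pa u\|_{L^2}^2$. The key point is the $L^\infty$ decay of each coefficient, and this is the only step needing care.

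Split each coefficient into its spatial average and oscillatory part. The averages are controlled by $S_{h;\av}$. It gives $|\pa_\tau h^{00}_\av|,|\pa_\tau h^{ij}_\av|\lesssim \ve\tau^{-1-\frac3n}$, and $\pa_\tau h^{0i}_\av$ decays even faster. The oscillatory parts are controlled by Sobolev embedding and Poincar\'e together with $S_{h;N}$. They decay at least like $\ve\tau^{-(2-\delta)}$, and $2-\delta\ge 1+\frac3n$ because $\delta<1-\frac3n$. Spatial derivatives such as $\pa_j g^{ij}$ have zero average, so they decay like $\tau^{-(2-\delta)}$ directly.

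Altogether $\|\pa h\|_{L^\infty}\lesssim \tau^{-1-\frac3n}$, which gives the error $\tau^{-1-\frac3n}E_w[u]$. The slowest contribution is the averaged time derivative $\pa_\tau h_\av^{00}$ or $\pa_\tau h_\av^{ij}$, which sets the rate $\tau^{-(1+\frac3n)}$ in \eqref{sec:energy-met.eq:base-energy-1}.
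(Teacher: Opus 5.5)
Your proposal is correct and follows the paper's proof essentially line by line: the same differentiated energy identity, the same integration by parts $2g^{0a}\pa_\tau u\,\pa_a\pa_\tau u = h^{0a}\pa_a(\pa_\tau u)^2$ for the mixed term, and the same bound of the remaining metric-derivative terms by $\|\pa h\|_{L^\infty}E_w[u]$, with damping coefficient $a=\eta$. Your Step 3 only spells out what the paper asserts directly from the bootstrap assumptions, namely $\|\pa h\|_{L^\infty}\lesssim\tau^{-1-\frac{3}{n}}$, which you obtain via the average/oscillation split, Poincar\'e, Sobolev embedding and $\delta<1-\frac{3}{n}$.
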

\begin{proof}
	We differentiate the energy $E_{w}[u]$ in time. For the first term, we have
	\begin{equation*}
		\frac{1}{2}\pa_\tau \Big(\int_{\TT^3} |g^{00}|(\pa_\tau u)^2\Big) =-\int_{\TT^3} g^{00}\pa_\tau u \pa_\tau^2 u -\frac{1}{2}\int_{\TT^3} \pa_\tau g^{00}(\pa_\tau u)^2\,.
	\end{equation*}
	For the spatial derivatives of $u$, we integrate by parts to obtain
	\begin{multline*}
		\frac{1}{2}\pa_\tau \Big(\int_{\TT^3} g^{ab}\pa_a u \pa_b u\Big) = \int_{\TT^3}g^{ab}\pa_a u \pa_b \pa_\tau u + \frac{1}{2}\int_{\TT^3} \pa_\tau g^{ab}\pa_a u \pa_b u\\
		= -\int_{\TT^3}g^{ab}\pa_\tau u \pa_a \pa_b u - \int_{\TT^3}\pa_a g^{ab}\pa_\tau u \pa_b u + \frac{1}{2}\int_{\TT^3}\pa_\tau g^{ab}\pa_a u \pa_b u\,.
	\end{multline*}
	It also follows from integration by parts that
	\begin{equation*}
		0 = \int_{\TT^3} \pa_a\big(g^{0a}(\pa_\tau u)^2\big)= 2\int_{\TT^3}g^{0a} \pa_\tau u \pa_\tau\pa_a u +\int_{\TT^3} \pa_a g^{0a}(\pa_\tau u)^2\,.
	\end{equation*}
	Combining these identities, we find that $E_{w}[u]$ satisfies
    \begin{equation}\label{sec:energy-met.eq:base-energy-proof-1}
        \pa_\tau E_{w}[u] =- \int_{\TT^3}\pa_\tau u \wh{\square}_g u - \frac{1}{2}\int_{\TT^3}\Big[\big(\pa_\tau g^{00}  +  2\pa_a g^{0a}\big) (\pa_\tau u)^2 +2\pa_a g^{ab}\pa_\tau u\pa_b u -\pa_\tau g^{ab}\pa_a u \pa_b u\Big]\,.
    \end{equation}
    The final integral consists solely of error terms, which can be bounded like
	\begin{equation}\label{sec:energy-met.eq:base-energy-proof-2}
		\Big|\int_{\TT^3}\Big[\big(\pa_\tau g^{00}  +  2\pa_a g^{0a}\big) (\pa_\tau u)^2 +2\pa_a g^{ab}\pa_\tau u\pa_b u -\pa_\tau g^{ab}\pa_a u \pa_b u\Big]\Big| \lesa \|\pa h\|_{L^\infty}E_{w}[u] \lesa \frac{1}{\tau^{1+\frac{3}{n}}}E_{w}[u]\,.
	\end{equation}
    Finally, we replace the wave operator $\wh{\square}_g$ with the damped wave operator $P_\eta$, and gain an additional damping term in the energy estimate:
    \begin{equation}\label{sec:energy-met.eq:base-energy-proof-3}
        - \int_{\TT^3}\pa_\tau u \wh{\square}_g u = -\int_{\TT^3} \pa_\tau u P_\eta u - \frac{2\eta}{\tau}\int_{\TT^3} |g^{00}|(\pa_\tau u)^2 \leq \|\pa_\tau u\|_{L^2} \|P_\eta u\|_{L^2} -  \frac{2\eta}{\tau}\int_{\TT^3} |g^{00}|(\pa_\tau u)^2\,,
    \end{equation}
    where the final line is a simple application of the Cauchy-Schwarz inequality. Combining the identity \eqref{sec:energy-met.eq:base-energy-proof-1} with the bounds \eqref{sec:energy-met.eq:base-energy-proof-2}, \eqref{sec:energy-met.eq:base-energy-proof-3} and rearranging, we get \eqref{sec:energy-met.eq:base-energy-1}.
\end{proof}
\subsection{The correction mechanism for energy decay}
The base wave energy does not, on its own, satisfy a differential inequality that leads to decay. The estimate from Lemma~\ref{sec:energy-met.lem:base-energy} does contain the bulk term
\begin{equation*}
    \frac{2\eta}{\tau}\int_{\TT^3} |g^{00}|(\pa_\tau u)^2\,,
\end{equation*}
which is inherited from the damping terms present in the wave operator $P_\eta$. However, this term does not control the entire energy, hence we cannot conclude that the energy $E_{w}$ decays for solutions to equations of the form $P_\eta u = F$.

To prove some form of energy decay, we modify the energy $E_{w}$ in two ways. The first is through the introduction of a lower-order ``corrective'' energy functional, which takes the form
\begin{equation}\label{sec:energy-met.eq:correction-term}
	\frac{1}{\tau}\int_{\TT^3} |g^{00}| u \pa_\tau u\,.
\end{equation}
We will introduce the second modification in the following subsection, but first we prove a bound on the time-evolution of the correction term \eqref{sec:energy-met.eq:correction-term}.
\begin{lemma}\label{sec:energy-met.lem:correction}
	For any function $u$, the functional \eqref{sec:energy-met.eq:correction-term} satisfies
	\begin{multline}\label{sec:energy-met.eq:correction}
		\pa_\tau \Big(\frac{1}{\tau}\int_{\TT^3} |g^{00}|u\pa_\tau u\Big) + \frac{1}{\tau}\int_{\TT^3} g^{ab}\pa_a u\pa_b u - \frac{1}{\tau}\int_{\TT^3} |g^{00}|(\pa_\tau u)^2 \\
		\lesa \frac{1}{\tau}\|u\|_{L^2}\|P_\eta u\|_{L^2} + \frac{1}{\tau^{2}}\big(\|u\|_{L^2}^2 + E_{w}[u]\big)\,.
	\end{multline}
\end{lemma}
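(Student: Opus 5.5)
We differentiate the correction functional directly and use the equation $P_\eta u = F$ to eliminate the second time derivative. By the product rule,
\begin{equation*}
    \pa_\tau \Big(\frac{1}{\tau}\int_{\TT^3} |g^{00}|u\pa_\tau u\Big) = -\frac{1}{\tau^2}\int_{\TT^3}|g^{00}|u\pa_\tau u + \frac{1}{\tau}\int_{\TT^3}\pa_\tau|g^{00}|\,u\pa_\tau u + \frac{1}{\tau}\int_{\TT^3}|g^{00}|(\pa_\tau u)^2 + \frac{1}{\tau}\int_{\TT^3}|g^{00}|u\pa_\tau^2 u .
\end{equation*}
The first term is bounded by $\tau^{-2}(\|u\|_{L^2}^2+E_w[u])$ via Cauchy--Schwarz, using the equivalence $E_w[u]\sim\|\pa u\|_{L^2}^2$. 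The second term carries $\pa_\tau h^{00}$, which by the bootstrap assumption is $O(\tau^{-1-\frac{3}{n}})$ in $L^\infty$, so it is harmless.

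The main step is the last term. Since $|g^{00}| = -g^{00}$, we write
\begin{equation*}
    |g^{00}|\pa_\tau^2 u = -g^{00}\pa_\tau^2u = -\wh{\square}_g u + 2g^{0a}\pa_a\pa_\tau u + g^{ab}\pa_a\pa_b u ,
\end{equation*}
and then replace $\wh{\square}_g u = P_\eta u - \frac{2\eta}{\tau}g^{00}\pa_\tau u$. The $P_\eta u$ piece contributes at most $\frac{1}{\tau}\|u\|_{L^2}\|P_\eta u\|_{L^2}$. The damping piece gives $\frac{2\eta}{\tau^2}\int g^{00}u\pa_\tau u$, which is again $O(\tau^{-2})(\|u\|^2_{L^2}+E_w)$.

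For the spatial part $\frac1\tau\int u\, g^{ab}\pa_a\pa_b u$, we integrate by parts to obtain
\begin{equation*}
    -\frac1\tau\int_{\TT^3} g^{ab}\pa_a u\pa_b u - \frac1\tau\int_{\TT^3}\pa_a g^{ab}\,u\,\pa_b u .
\end{equation*}
The first of these is exactly the good-sign term that, moved to the left, produces $+\frac1\tau\int g^{ab}\pa_au\pa_bu$ in \eqref{sec:energy-met.eq:correction}. For the shift part $\frac{2}{\tau}\int u\, g^{0a}\pa_a\pa_\tau u$, we also integrate by parts:
\begin{equation*}
    -\frac{2}{\tau}\int_{\TT^3} g^{0a}\pa_a u\,\pa_\tau u - \frac2\tau\int_{\TT^3}\pa_ag^{0a}\,u\,\pa_\tau u .
\end{equation*}

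The only real obstacle is ensuring that every remaining error, carrying $h^{0a}$, $\pa_x g^{ab}$ or $\pa_x h^{0a}$, decays like $\tau^{-2}$ relative to $\|u\|^2_{L^2}+E_w[u]$. By the bootstrap assumptions we have $\|h^{0a}\|_{L^\infty}\lesa \tau^{-1-\frac{3}{n}}$, since the averaged shift satisfies $|h^{0i}_\av|\lesa\tau^{-1-\frac3n}$ and the oscillatory part is bounded via Poincar\'e and $S_{h;N}$. Likewise, Sobolev embedding gives $\|\pa_x g\|_{L^\infty}\lesa \tau^{-(2-\delta)}$. Hence every such error has coefficient at most $\tau^{-1}\cdot\tau^{-1}$, and we bound it by Cauchy--Schwarz with $\|u\|_{L^2}\|\pa u\|_{L^2}\lesa \|u\|^2_{L^2}+E_w[u]$.

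Finally, the $+\frac1\tau\int|g^{00}|(\pa_\tau u)^2$ from the product rule is moved to the left, where it appears with the minus sign shown in \eqref{sec:energy-met.eq:correction}. Collecting terms yields the stated inequality. No sign condition on $\eta$ is needed, since the $\eta$-term is simply absorbed into the $\tau^{-2}$ error.
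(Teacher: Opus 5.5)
Your proof is correct and follows essentially the same route as the paper: differentiate the correction functional, rewrite $|g^{00}|\pa_\tau^2 u$ via $P_\eta$, integrate by parts in the $g^{ab}\pa_a\pa_b u$ and $g^{0a}\pa_a\pa_\tau u$ terms to extract the good-sign gradient term, and bound all remaining terms by Cauchy--Schwarz. For those bounds you use the bootstrap decay of $\pa h$ and $h^{0a}$ in $L^\infty$, together with the equivalence $E_w[u]\sim\|\pa u\|_{L^2}^2$.
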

\begin{proof}
	We compute
	\begin{equation*}
		\pa_\tau \Big(\frac{1}{\tau}\int_{\TT^3} |g^{00}|u\pa_\tau u\Big) =\frac{1}{\tau}\int_{\TT^3} |g^{00}|u\pa_\tau^2 u + \frac{1}{\tau}\int_{\TT^3} |g^{00}|(\pa_\tau u)^2\\
		+ \frac{1}{\tau^2}\int_{\TT^3} g^{00}u\pa_\tau u - \frac{1}{\tau}\int_{\TT^3} \pa_\tau g^{00} u\pa_\tau u\,.
	\end{equation*}
	For the first integral on the RHS, we substitute the wave operator $P_\eta$ and integrate by parts, giving
	\begin{align*}
		\frac{1}{\tau}\int_{\TT^3} |g^{00}|u\pa_\tau^2 u =& -\frac{1}{\tau}\int_{\TT^3}uP_\eta u+ \frac{1}{\tau}\int_{\TT^3} g^{ab}u\pa_a\pa_b u + \frac{2\eta}{\tau^2}\int_{\TT^3}g^{00}u\pa_\tau u + \frac2\tau\int_{\TT^3} g^{0a} u\pa_a\pa_\tau u\\
		=&-\frac{1}{\tau}\int_{\TT^3}uP_\eta u - \frac{1}{\tau}\int_{\TT^3} g^{ab}\pa_a u\pa_b u\\
        &+ \frac{2\eta}{\tau^2}\int_{\TT^3}g^{00} u\pa_\tau u -\frac{1}{\tau}\int_{\TT^3} \Big[\pa_a g^{ab}u \pa_b u 
        +2 g^{0a} \pa_a u \pa_\tau u +2 \pa_a g^{0a} u \pa_\tau u\Big]\,.
	\end{align*}
	Thus the correction term satisfies the identity
	\begin{multline*}
		\pa_\tau \Big(\frac{1}{\tau}\int_{\TT^3} |g^{00}|u\pa_\tau u\Big) + \frac{1}{\tau}\int_{\TT^3} g^{ab}\pa_a u\pa_b u - \frac{1}{\tau}\int_{\TT^3} |g^{00}|(\pa_\tau u)^2\\
		=-\frac{1}{\tau}\int_{\TT^3}uP_\eta u + \frac{1+2\eta}{\tau^2} \int_{\TT^3}g^{00}u\pa_\tau u\\
        -\frac{1}{\tau}\int_{\TT^3} \Big[\pa_\tau g^{00} u\pa_\tau u + \pa_a g^{ab}u \pa_b u + 2g^{0a}\pa_a u \pa_\tau u + 2\pa_a g^{0a}u \pa_\tau u\Big]\,.
	\end{multline*}
	We bound all terms on the right hand side using a combination of Young's inequality and the Cauchy-Schwarz inequality:
	\begin{align*}
		\frac{1}{\tau}\Big|\int_{\TT^3}uP_\eta u\Big| &\leq \frac{1}{\tau}\|u\|_{L^2}\|P_\eta u\|_{L^2},\\
        \frac{1}{\tau^2} \Big|\int_{\TT^3}g^{00}u\pa_\tau u\Big| &\leq \frac{1}{\tau^2}\|u\|_{L^2} \|\pa_\tau u\|_{L^2} \lesa \frac{1}{\tau^2}\big(\|u\|_{L^2}^2 + E_{w}[u]\big)\,,
    \end{align*}
    and
    \begin{multline*}
		\frac{1}{\tau}\Big|\int_{\TT^3} \Big[\pa_\tau g^{00} u\pa_\tau u + \pa_a g^{ab}u \pa_b u + 2g^{0a}\pa_a u \pa_\tau u + 2\pa_a g^{0a}u \pa_\tau u\Big]\Big|\\
        \lesa \frac{1}{\tau}\|\pa h\|_{L^\infty}\|u\|_{L^2}\|\pa u\|_{L^2} + \frac{1}{\tau}\sum_{i=1}^3\|h^{0i}\|_{L^\infty}\|\pa u\|_{L^2}^2 \lesa \frac{1}{\tau^{2+\frac{3}{n}}}\big(\|u\|_{L^2}^2 + E_{w}[u]\big)\,.
	\end{multline*}
	The result now follows.
\end{proof}
\begin{remark}
    The energy $E_{w}[u]$ does not control the $L^2$-norm of $u$, hence the additional terms that appear in \eqref{sec:energy-met.eq:correction}. However, the energy estimates of this section will be applied to the metric perturbations $h^{\mu\nu}$ \emph{after} taking at least one spatial derivative. Spatial derivatives of functions trivially have vanishing spatial average, and so we will control the $L^2$ norms of such quantities in a straightforward manner with the Poincar\'e inequality.
\end{remark}
\subsection{Commutation with the material derivative}
In light of the previous two lemmas, for any constant $\eta > 0$ one can show that the corrected energy
\begin{equation*}
    E_{w}[u] + \frac{\eta}{\tau}\int_{\TT^3} |g^{00}|u\pa_\tau u\,,
\end{equation*}
satisfies an energy inequality with a bulk term that controls the entire energy. We make here an essential modification of such an energy -- we commute once with the material derivative $\pav := v^\mu \pa_\mu$. We do this to derive improved estimates for the fluid variables that appear as source terms in the wave equations for the metric components.

\begin{definition}[Corrected wave energy and $\mathbf{v}$-adapted Riemannian metric]\label{def:corrected-pav-wave-energy}
Using the base wave energy defined in Definition \ref{def:base-wave-energy}, we introduce the corrected wave energy
 \begin{equation}\label{eq:corrected_wave_pav}
    E_{w, \eta}^{\mathbf{v}}[u](\tau): = E_{w}[\pav u](\tau) + \frac{\eta}{\tau}\int_{\TT^3} |g^{00}|(\pav u)(\pa_\tau \pav u)\,.
\end{equation}   
As in \cite{HadSpe:deSitterduststab:15}, define also the inverse Riemannian metric
    \begin{equation*}
        H^{ij} = g^{ij} + \frac{1}{(v^0)^2}g^{00}v^iv^j - \frac{2}{v^0}g^{0(i}v^{j)}\,.
    \end{equation*}
\end{definition}

In order to use $\pav$ as a vectorfield commutator in a way that controls our desired bootstrap quantities, the following Lemma is critical.

\begin{lemma}[Elliptic estimate]\label{sec:energy-met.lem:elliptic}
    We have the identity
    \begin{equation}\label{sec:energy-met.eq:elliptic-id}
        H^{ab}\pa_a \pa_b u = \wh{\square}_g u - \frac{1}{v^0}g^{00}\pav \pa_\tau u + \Big(\frac{1}{(v^0)^2}g^{00}v^a - \frac{2}{v^0}g^{0a}\Big) \pav \pa_a u\,.
    \end{equation}
    
    Additionally,  provided $u$ has zero spatial average, the following estimate holds,
    \begin{equation}\label{sec:energy-met.eq:elliptic-est}
        \|u\|_{H^2} \lesa \|P_\eta u\|_{L^2}  + \|\pa (\pa_{\mathbf{v}} u)\|_{L^2}\,.
    \end{equation}
\end{lemma}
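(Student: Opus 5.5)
The identity \eqref{sec:energy-met.eq:elliptic-id} is a purely algebraic computation, and I will verify it first. Since $v^0 \neq 0$, write $\pa_\tau = \frac{1}{v^0}(\pav - v^a\pa_a)$. Insert this into the terms of $\wh{\square}_g u = g^{00}\pa_\tau^2 u + 2g^{0a}\pa_a\pa_\tau u + g^{ab}\pa_a\pa_b u$ that contain $\pa_\tau$. For the first term this gives
\[
g^{00}\pa_\tau^2 u = \frac{1}{v^0}g^{00}\pa_\tau\pav u - \frac{1}{v^0}g^{00}v^a\pa_a\pa_\tau u + \text{(terms with $\pa v$)}.
\]
Now substitute $\pa_a\pa_\tau u = \frac{1}{v^0}(\pa_a \pav u - v^b\pa_a\pa_b u) + \dots$ once more. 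Collecting the pure second spatial derivatives produces exactly $g^{ab} + \frac{1}{(v^0)^2}g^{00}v^av^b - \frac{2}{v^0}g^{0a}v^b$, which is $H^{ab}$ after symmetrisation. The remaining terms are $\pav\pa_\tau u$ and $\pav\pa_a u$ with the stated coefficients. I expect the commutators $[\pa,\pav]u = (\pa v^\mu)\pa_\mu u$ to appear along the way. I will either absorb them into the notation, reading $\pav\pa u$ as $\pa\pav u$ up to $\pa v\cdot\pa u$, or check that the ordering in \eqref{sec:energy-met.eq:elliptic-id} is literally exact. Either way, for the estimate they contribute only $\|\pa v\|_{L^\infty}\|\pa u\|_{L^2}$, which is small by the bootstrap assumptions and Lemma~\ref{sec:boot:lem.misc-fluid}.

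For the estimate \eqref{sec:energy-met.eq:elliptic-est}, rewrite $\wh{\square}_g u = P_\eta u - \frac{2\eta}{\tau}g^{00}\pa_\tau u$. Then \eqref{sec:energy-met.eq:elliptic-id} gives
\[
\|H^{ab}\pa_a\pa_b u\|_{L^2} \lesa \|P_\eta u\|_{L^2} + \frac{1}{\tau}\|\pa_\tau u\|_{L^2} + \|\pa\pav u\|_{L^2} + \ve\|\pa u\|_{L^2}.
\]
The coefficients $\frac{1}{v^0}g^{00}$ and $\frac{1}{(v^0)^2}g^{00}v^a - \frac{2}{v^0}g^{0a}$ are bounded in $L^\infty$, since $v^0\approx 1$ and $g^{00}\approx -1$. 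Moreover $H^{ab}$ is a small perturbation of $\delta^{ab}$, because $h^{ij}$, $h^{0i}$ and $v^i$ are $O(\ve)$ in $L^\infty$. Hence the standard $H^2$ elliptic estimate on $\TT^3$ applies to zero-average $u$. One can prove it by integrating by parts twice in $\int H^{ab}\pa_a\pa_b u\,\delta^{cd}\pa_c\pa_d u$ and using $\|\pa H\|_{L^\infty}\lesa\ve$. Combined with Poincar\'e, this yields $\|u\|_{H^2}\lesa \|H^{ab}\pa_a\pa_b u\|_{L^2} + \ve\|\pa_x u\|_{L^2}$.

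The main obstacle is the term involving $\pa_\tau u$ and the lower-order terms, which are not on the right-hand side of \eqref{sec:energy-met.eq:elliptic-est}. I handle them as follows.
\begin{itemize}
\item \textbf{Time derivative.} Write $\pa_\tau u = \frac{1}{v^0}\pav u - \frac{v^a}{v^0}\pa_a u$. The first piece is controlled by $\|\pav u\|_{L^2}$. Since $\pav u$ need not have zero average, I bound it by $\|\pa\pav u\|_{L^2}$ plus its average. For a spatially differentiated quantity, which is how the lemma is applied, the average is itself quadratic and small. If needed I will simply allow $\|\pav u\|_{L^2}$ on the right. The second piece is $\lesa \ve\|\pa_x u\|_{L^2}$.
\item \textbf{Absorption.} All remaining terms are $O(\ve + \tau^{-1})\|u\|_{H^2}$, using Poincar\'e for $u$. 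Taking $\ve$ small and $\tau_0$ large, they are absorbed into the left-hand side, which gives \eqref{sec:energy-met.eq:elliptic-est}.
\end{itemize}
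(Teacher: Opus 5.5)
Your overall structure matches the paper's. The identity is pure algebra, and it is literally exact with $\pav\pa_\tau u := v^\mu\pa_\mu(\pa_\tau u)$. Expanding both sides, the $\pa_\tau^2 u$ and $\pa_\tau\pa_a u$ terms cancel in pairs, so no commutators arise; they only appear later, when you trade $\pav\pa u$ for $\pa(\pav u)$. The coercivity argument for $H^{ab}\approx\delta^{ab}$ via two integrations by parts is also the paper's; it squares $H^{ab}\pa_a\pa_b u$ rather than pairing it with $\Delta u$, which is immaterial.

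The gap is at the step you flag as the main obstacle: controlling $\|\pa_\tau u\|_{L^2}$. This quantity enters through $\frac{2\eta}{\tau}g^{00}\pa_\tau u$ and through the commutator errors $\pa v\cdot\pa u$.
\begin{itemize}
\item You route it through $\|\pav u\|_{L^2}$ and are left with the mean $(\pav u)_\av$. You only assert this mean is ``quadratic and small'', and only for spatially differentiated $u$, whereas the lemma is stated for any zero-average $u$.
\item Because the inequality is linear in $u$, an absolute bootstrap bound on the mean does not suffice. You need it bounded by a small multiple of $\|\pa_\tau u\|_{L^2}+\|u\|_{L^2}$, and you do not say how.
\item The obvious way to exhibit the mean as quadratic, $\fint_{\TT^3} v^\mu\pa_\mu\pa_x w = -\fint_{\TT^3}\pa_x v^\mu\,\pa_\mu w$ for $u=\pa_x w$, produces $\pa_\tau w$. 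That is not controlled by any norm of $u$.
\item Your fallback of adding $\|\pav u\|_{L^2}$ to the right-hand side proves a weaker statement and only postpones the problem. The commuted energy $E^{\mathbf v}_{w,\eta}[u]$ controls only $\pa(\pav u)$. The paper uses this lemma precisely to deduce $\|\pav u\|_{L^2}\lesa\|\pa(\pav u)\|_{L^2}+\tau^{-(1+\frac{3}{2n}-\delta)}\|P_\eta u\|_{L^2}$ in Remark~\ref{rem:time-derivs-elliptic}. That bound is then needed in Proposition~\ref{sec:energy-met.prop:commuted-energy}.
\end{itemize}

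The missing observation is that $u$ has zero spatial average at every time. Hence $\fint_{\TT^3}\pa_\tau u = \pa_\tau\fint_{\TT^3} u = 0$, and the Poincar\'e inequality applies to $\pa_\tau u$ itself, giving $\|\pa_\tau u\|_{L^2}\lesa\|\pa_x\pa_\tau u\|_{L^2}$. Combine this with
\[
\pa_i\pa_\tau u = \frac{1}{v^0}\Big[\pa_i(\pav u) - v^a\pa_a\pa_i u - \pa_i v^\mu\,\pa_\mu u\Big]
\]
to get $\|\pa_\tau u\|_{L^2}\lesa\|\pa(\pav u)\|_{L^2}+\ve\big(\|u\|_{H^2}+\|\pa_\tau u\|_{L^2}\big)$. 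Everything then absorbs, and this is what the paper does. The same fact also repairs your own route without restricting to spatially differentiated $u$, since
\[
(\pav u)_\av = \fint_{\TT^3}(v^0-v^0_\av)\,\pa_\tau u - \fint_{\TT^3}(\pa_a v^a)\,u,
\]
which is a small multiple of $\|\pa_\tau u\|_{L^2}+\|u\|_{L^2}$.
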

\begin{proof}
    The identity \eqref{sec:energy-met.eq:elliptic-id} follows from the computation
    \begin{align*}
        \wh{\square}_g u &= g^{ab}\pa_a \pa_b u + \underbrace{2g^{0a}\pa_\tau \pa_a u}_{(i)} + \underbrace{g^{00} \pa_\tau^2 u}_{(ii)}\,,\\
        - \frac{1}{v^0}g^{00}\pav \pa_\tau u&= -\underbrace{g^{00}\pa_\tau ^2 u}_{(ii)}-\underbrace{\frac{1}{v^0}g^{00}v^a \pa_\tau \pa_a u}_{(iii)}\,,\\
        \frac{1}{(v^0)^2}g^{00}v^a \pav \pa_a u &= \underbrace{\frac{1}{v^0} g^{00}v^a \pa_\tau \pa_a u}_{(iii)} + \frac{1}{(v^0)^2}g^{00}v^a v^b \pa_a \pa_b u\,,\\
        - \frac{2}{v^0}g^{0a}\pav \pa_a u &= -\underbrace{2g^{0a}\pa_\tau \pa_a u}_{(i)}-\frac{2}{v^0}g^{0(a} v^{b)}\pa_a \pa_b u\,.
    \end{align*}
    The pairs of terms labelled $(i)$, $(ii)$, $(iii)$ all cancel, and the remaining three terms combine to make $H^{ab}\pa_a\pa_b u$, giving \eqref{sec:energy-met.eq:elliptic-id}. 
    
    For the estimate \eqref{sec:energy-met.eq:elliptic-est}, we take the $L^2$ norm of $H^{ab}\pa_a \pa_b u$ over the torus and integrate by parts twice to obtain
    \begin{multline}\label{sec:energy-met.eq:elliptic-proof-1}
        \int_{\TT^3} (H^{ab} \pa_a \pa_b u)^2 = \int_{\TT^3} H^{ab}H^{cd}\pa_a \pa_c u\,\pa_b \pa_d u - \int_{\TT^3}\pa_a(H^{ab}H^{cd})\pa_b u\pa_c\pa_d u \\+ \int_{\TT^3}\pa_d (H^{ab}H^{cd})\pa_b u\pa_a\pa_c u\,.
    \end{multline}
    By the bootstrap assumptions, the metric $H$ satisfies $\sum_{i,j=1}^3\|H^{ij}-\delta^{ij}\|_{L^\infty} \lesa \ve$, and so the first term on the right hand side of \eqref{sec:energy-met.eq:elliptic-proof-1} controls all second-order spatial derivatives of $u$:
    \begin{equation*}
        \sum_{H^2}\|\pa_x^I u\|_{L^2}^2 \lesa \int_{\TT^3} H^{ab}H^{cd}\pa_a \pa_c u\,\pa_b \pa_d u\,.
    \end{equation*}
    We insert this bound into \eqref{sec:energy-met.eq:elliptic-proof-1}, rearrange, absorbing all decaying error terms into the left hand side, and apply the Poincar\'e inequality to obtain the elliptic estimate
    \begin{equation*}
        \|u\|_{H^2} \lesa \|H^{ab}\pa_a\pa_b u\|_{L^2}\,.
    \end{equation*}
    
    To prove the stated estimate, we combine this with the identity \eqref{sec:energy-met.eq:elliptic-id}, as well as the Poincar\'e inequality to get
    \begin{multline*}
    	\|u\|_{H^2}  \lesa \Big(\|\wh{\square}_g u\|_{L^2} + \|\pav \pa_\tau u\|_{L^2} + \frac{1}{\tau^{1+\frac{3}{2n}-\delta}}\|\pav\pa_x u\|_{L^2}\Big)\\
        \lesa \|P_\eta u\|_{L^2} + \Big(\frac{1}{\tau}\|\pa_\tau u\|_{L^2} + \|\pav \pa_\tau u\|_{L^2} + \frac{1}{\tau^{1+\frac{3}{2n}-\delta}}\|\pav\pa_x u\|_{L^2}\Big)\,.
    \end{multline*}
    We can bound all remaining terms in the brackets in using the identities 
    \begin{subequations}\label{sec:energy-met.eq:elliptic-proof-2}
        \begin{align}
            \pa_i \pa_\tau u &= \frac{1}{v^0}\Big[\pa_i (\pav u) - v^a\pa_a\pa_i u - \pa_i v^a \pa_a u  - \pa_i v^0 \pa_\tau u\Big]\,,\\
            \pa_\tau^2 u &= \frac{1}{v^0}\Big[\pa_\tau (\pav u) - v^a\pa_a\pa_\tau u - \pa_\tau v^a \pa_a u  - \pa_\tau v^0 \pa_\tau u\Big]\,,
        \end{align}
    \end{subequations}
    along with the Poincar\'e inequality to estimate $\|\pa_\tau u\|_{L^2}$, giving
    \begin{align*}
        \frac{1}{\tau}\|\pa_\tau u\|_{L^2} + \|\pav \pa_\tau u\|_{L^2} + \frac{1}{\tau^{1+\frac{3}{2n}-\delta}}\|\pav\pa_x u\|_{L^2} \lesa \|\pa(\pav u)\|_{L^2} + \frac{1}{\tau^{1+\frac{3}{2n}-\delta}}\|u\|_{H^2}\,.
    \end{align*}
    The result now follows.
\end{proof}

\begin{remark}\label{rem:time-derivs-elliptic}
    We can also prove estimates on norms of the coordinate time derivative and material derivative. Using the previous lemma, the identities \eqref{sec:energy-met.eq:elliptic-proof-2}, and the Poincar\'e inequality, $\pa_\tau u$ and $\pa_\tau^2 u$ satisfy the bounds
    \begin{equation}\label{sec:energy-met.eq:eq:elliptice-rmk-1}
        \|\pa_\tau u\|_{H^1} + \|\pa_\tau^2 u\|_{L^2} \lesa \|\pa (\pav u)\|_{L^2} + \frac{1}{\tau^{1+\frac{3}{2n}-\delta}}\|P_\eta u\|_{L^2}\,.
    \end{equation}
    This also gives the following $L^2$-norm bound of $\pav u$:
    \begin{equation}\label{sec:energy-met.eq:eq:elliptice-rmk-2}
        \|\pav u\|_{L^2} \lesa \|\pa (\pav u)\|_{L^2} + \frac{1}{\tau^{1+\frac{3}{2n}-\delta}}\|P_\eta u\|_{L^2}\,.
    \end{equation}
\end{remark}

Using the previous lemma, we prove the following energy estimate for $E_{w, \eta}^{\mathbf{v}}$:
\begin{proposition}[Energy estimate for the commuted metric energy]\label{sec:energy-met.prop:commuted-energy}
    Let $u$ be a function with vanishing spatial average. Then the modified wave energy $E_{w, \eta}^{\mathbf{v}}[u]$ satisfies the differential inequality
    \begin{multline}\label{sec:energy-met.eq:commuted-energy}
        \pa_\tau E_{w, \eta}^{\mathbf{v}}[u](\tau) + \frac{2\eta}{\tau}E_{w, \eta}^{\mathbf{v}}[u] \lesa \big(E_{w,\eta}^{\mathbf{v}}[u]\big)^{1/2}\big\|\pav(P_\eta u)\big\|_{L^2} + \frac{1}{\tau^{1+\frac{3}{2n}-\delta}}E_{w, \eta}^{\mathbf{v}}[u](\tau)\\
        +\frac{1}{\tau^{2-\frac{3}{2n}-\delta}}\big\|\pav (P_\eta u)\big\|_{L^2}  +\frac{1}{\tau^{2+\frac{9}{2n}-\delta}}\|P_\eta u\|_{L^2}^2\,.
    \end{multline}
\end{proposition}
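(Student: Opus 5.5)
The plan is to apply Lemma~\ref{sec:energy-met.lem:base-energy} and Lemma~\ref{sec:energy-met.lem:correction} to the function $w:=\pav u$, and then to bound the commutator $P_\eta w - \pav(P_\eta u)$. Adding $\eta$ times the correction inequality to the base energy inequality gives, schematically,
\begin{equation*}
\pa_\tau E_{w,\eta}^{\mathbf{v}}[u] + \frac{2\eta}{\tau}\int_{\TT^3}|g^{00}|(\pa_\tau w)^2 + \frac{\eta}{\tau}\int_{\TT^3}\big(g^{ab}\pa_a w\pa_b w - |g^{00}|(\pa_\tau w)^2\big) \lesa \Big(\|\pa_\tau w\|_{L^2}+\frac{1}{\tau}\|w\|_{L^2}\Big)\|P_\eta w\|_{L^2} + \frac{1}{\tau^{2}}\|w\|_{L^2}^2 + \frac{1}{\tau^{1+\frac{3}{n}}}E_w[w].
\end{equation*}
The bulk terms on the left combine to $\frac{\eta}{\tau}\int(|g^{00}|(\pa_\tau w)^2 + g^{ab}\pa_a w\pa_b w) = \frac{2\eta}{\tau}E_w[w]$. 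This equals $\frac{2\eta}{\tau}E_{w,\eta}^{\mathbf{v}}[u]$ up to the correction term $\frac{2\eta^2}{\tau^2}\int|g^{00}|w\pa_\tau w$, which is lower order. The same lower-order control shows that $E_{w,\eta}^{\mathbf{v}}[u]$ is equivalent to $\|\pa w\|_{L^2}^2$ for $\tau_0$ large.

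The one feature not present in the earlier lemmas is that $w=\pav u$ does not have vanishing average, so Poincar\'e cannot be applied to it directly. I will instead use Remark~\ref{rem:time-derivs-elliptic}, which gives
\begin{equation*}
\|w\|_{L^2}\lesa \|\pa w\|_{L^2} + \tau^{-(1+\frac{3}{2n}-\delta)}\|P_\eta u\|_{L^2}.
\end{equation*}
Inserting this, the term $\tau^{-2}\|w\|_{L^2}^2$ produces $\tau^{-2}E$ plus $\tau^{-(4+\frac{3}{n}-2\delta)}\|P_\eta u\|^2$, and the latter is much better than the stated $\tau^{-(2+\frac{9}{2n}-\delta)}\|P_\eta u\|^2$. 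Similarly $\tau^{-1}\|w\|\,\|P_\eta w\|$ splits into $E^{1/2}\|P_\eta w\|$ plus a cross term.

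Next I compute the commutator:
\begin{equation*}
P_\eta(\pav u)=\pav(P_\eta u) + [P_\eta,\pav]u,
\end{equation*}
where
\begin{equation*}
[P_\eta,\pav]u = 2g^{\alpha\beta}\pa_\alpha v^\mu\pa_\beta\pa_\mu u + (\wh\square_g v^\mu)\pa_\mu u - v^\mu\pa_\mu g^{\alpha\beta}\pa_\alpha\pa_\beta u + \frac{2\eta}{\tau}\big(g^{00}\pa_\tau v^\mu\pa_\mu u - v^\mu\pa_\mu(\tau^{-1}g^{00})\,\tau\,\pa_\tau u\big).
\end{equation*}
The terms are estimated as follows.
\begin{itemize}
\item Second derivatives of $u$ are bounded by Lemma~\ref{sec:energy-met.lem:elliptic} and Remark~\ref{rem:time-derivs-elliptic} in terms of $\|\pa w\|_{L^2}+\|P_\eta u\|_{L^2}$.
\item The coefficients $\pa v$ are controlled in $L^\infty$ via Sobolev embedding, using the bootstrap bounds and Proposition~\ref{sec:stability.prop:fluid-time-deriv}; in particular $\pa_x v$ and $\pa_\tau v$ are $O(\tau^{-(1+\frac{3}{2n}-\delta)})$.
\item The term $\pav g^{\alpha\beta}$ is $O(\tau^{-(1+\frac{3}{n})})$.
\item The explicit $\pa_\tau(\tau^{-1})$ term is harmless because $v^0\approx 1$. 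It contributes $\tau^{-2}\|\pa_\tau u\|$, which Remark~\ref{rem:time-derivs-elliptic} again reduces to $E^{1/2}$ plus $\|P_\eta u\|$.
\end{itemize}
Altogether I expect
\begin{equation*}
\|[P_\eta,\pav]u\|_{L^2}\lesa \tau^{-(1+\frac{3}{2n}-\delta)}\big(\|\pa w\|_{L^2}+\tau^{-(1+\frac{3}{2n}-\delta)}\|P_\eta u\|_{L^2}\big).
\end{equation*}
Multiplied by $E^{1/2}$, this yields the $\tau^{-(1+\frac{3}{2n}-\delta)}E$ term together with a cross term $\tau^{-(2+\frac{3}{n}-2\delta)}E^{1/2}\|P_\eta u\|$. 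Young's inequality, splitting with weight $\tau^{-(1+\frac{3}{2n}-\delta)}$, turns this cross term into $\tau^{-(1+\frac{3}{2n}-\delta)}E+\tau^{-(3+\frac{9}{2n}-3\delta)}\|P_\eta u\|^2$, which is dominated by the stated terms.

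The remaining cross terms come from $\tau^{-1}\|w\|\,\|\pav P_\eta u\|$ and from $\tau^{-1}$ times the non-average part of $w$. These give the $\tau^{-(2-\frac{3}{2n}-\delta)}\|\pav(P_\eta u)\|$ term. To obtain it I would bound $\|w\|_{L^2}$ crudely by the bootstrap, $\lesa \tau^{-(1-\frac{3}{2n}-\delta)}$, rather than by $E^{1/2}$. This is the worst-behaved piece, and I expect the exact exponent bookkeeping here to be the main obstacle. The other delicate point is ensuring that the coefficient of $E$ on the left stays exactly $2\eta/\tau$. Every perturbative term must therefore carry the extra decay $\tau^{-(\frac{3}{2n}-\delta)}$ relative to $1/\tau$, and this extra decay is precisely what the improved bounds on $\pa_\tau v$ and $\pa v$ from Proposition~\ref{sec:stability.prop:fluid-time-deriv} supply.
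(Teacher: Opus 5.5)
Your overall route is the paper's. You apply Lemmas~\ref{sec:energy-met.lem:base-energy} and~\ref{sec:energy-met.lem:correction} to $w=\pav u$ with the correction weighted by $\eta$, and control $\|w\|_{L^2}$ (which lacks vanishing average) through Remark~\ref{rem:time-derivs-elliptic}. You then expand $[P_\eta,\pav]u$, bound second derivatives of $u$ by Lemma~\ref{sec:energy-met.lem:elliptic}, and close with Young.

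The gap is the step producing $\tau^{-(2-\frac{3}{2n}-\delta)}\|\pav(P_\eta u)\|_{L^2}$, where you bound $\|w\|_{L^2}\lesa\tau^{-(1-\frac{3}{2n}-\delta)}$ ``by the bootstrap''.
\begin{itemize}
\item The proposition concerns an arbitrary $u$ with vanishing average. The bootstrap assumptions constrain $(g,\rho,v)$, not $u$, so this bound is not available.
\item The rate $\tau^{-(1-\frac{3}{2n}-\delta)}$ is that of $\|\pa_x\rho\|$, not of any norm of $\pav\pa_x^I h^{\mu\nu}$.
\item Every other estimate you use is quadratic in $u$ with background-dependent coefficients. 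A term linear in $\|\pav(P_\eta u)\|_{L^2}$ therefore cannot legitimately arise.
\end{itemize}
The paper instead inserts $\|w\|_{L^2}\lesa\|\pa w\|_{L^2}+\tau^{-(1+\frac{3}{2n}-\delta)}\|P_\eta u\|_{L^2}$ into $\tau^{-1}\|w\|_{L^2}\|\pav(P_\eta u)\|_{L^2}$. It then splits the resulting cross term by Young with weight $\tau^{3/n}$:
\begin{equation*}
\tau^{-(2+\frac{3}{2n}-\delta)}\|P_\eta u\|_{L^2}\|\pav(P_\eta u)\|_{L^2}\lesa \tau^{-(2-\frac{3}{2n}-\delta)}\|\pav(P_\eta u)\|_{L^2}^2+\tau^{-(2+\frac{9}{2n}-\delta)}\|P_\eta u\|_{L^2}^2\,.
\end{equation*}
So the correct conclusion carries $\|\pav(P_\eta u)\|_{L^2}^2$. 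The first power in the displayed statement is a misprint, as the proof shows and as its use in Proposition~\ref{sec:energy-met.prop:higher-energy} confirms (there $(P_{h^{00}}^{\mathbf{v}})^2$ appears).

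Reproducing the misprint is not harmless. Proposition~\ref{sec:energy-met.prop:com-ests} bounds $\|\pav(P_\eta u)\|_{L^2}$ by $\tau^{-(3+\frac{3}{2n}-\delta)}S_{(\rho,v);N}+\tau^{-(3+\frac{3}{n})}$. Multiply the linear term by the weight $\tau^{4+\frac{3}{n}-2\delta}$ used for $h^{00}$ in Theorem~\ref{sec:boot-imp.thm:higher-ests}. The result is of size $\tau^{-1+\frac{3}{n}}S_{(\rho,v);N}+\tau^{-1+\frac{3}{2n}-\delta}$, which is not integrable.

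A smaller point: your final commutator bound puts an extra factor $\tau^{-(1+\frac{3}{2n}-\delta)}$ on $\|P_\eta u\|_{L^2}$, contradicting your own first bullet. The terms $2g^{ab}\pa_a v^c\,\pa_b\pa_c u$ and $\pav g^{ab}\,\pa_a\pa_b u$ need $\|\pa_x^2 u\|_{L^2}$. Lemma~\ref{sec:energy-met.lem:elliptic} controls this only by $\|P_\eta u\|_{L^2}+\|\pa w\|_{L^2}$. The correct bound is therefore $\tau^{-(1+\frac{3}{2n}-\delta)}\big(\|\pa w\|_{L^2}+\|P_\eta u\|_{L^2}\big)$, as in the paper. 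The resulting cross term $\tau^{-(1+\frac{3}{2n}-\delta)}(E_{w,\eta}^{\mathbf{v}}[u])^{1/2}\|P_\eta u\|_{L^2}$ cannot be moved into the $\tau^{-(2+\frac{9}{2n}-\delta)}\|P_\eta u\|_{L^2}^2$ slot by Young without worsening the coefficient of $E_{w,\eta}^{\mathbf{v}}[u]$, so keep it in that form. It is harmless where the estimate is applied, since there $\|P_\eta u\|_{L^2}\lesa\tau^{-(3-\frac{3}{2n}-\delta)}$.
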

\begin{proof}
    By Lemmas~\ref{sec:energy-met.lem:base-energy} and~\ref{sec:energy-met.lem:correction}, we have for any $c \in \RR$ that
    \begin{multline*}
        \pa_\tau E_{w}[\pav u](\tau) + \pa_\tau \Big(\frac{c}{\tau}\int_{\TT^3} |g^{00}| (\pav u) \pa_\tau \pav u \Big)\\+ \frac{2\eta-c}{\tau}\int_{\TT^3}|g^{00}|(\pa_\tau \pav u)^2  + \frac{c}{\tau}\int_{\TT^3} g^{ab}\pa_a \pav u \pa_b \pav u\\
        \lesa \Big(\frac{1}{\tau}\|\pav u\|_{L^2} + \|\pa \pav u\|_{L^2}\Big)\big\|P_\eta (\pav u)\big\|_{L^2} + \frac{1}{\tau^{1+\frac{3}{n}}}\Big(\|\pav u\|_{L^2}^2 + E_{w}[\pav u]\Big)\,.
    \end{multline*}
    We see that the bulk term is maximised by setting $c = \eta$, and from this we obtain the bound
    \begin{multline}\label{sec:energy-met.eq:commuted-energy-proof-1}
         \pa_\tau E_{w, \eta}^{\mathbf{v}}[u](\tau) + \frac{2\eta}{\tau} E_{w, \eta}^{\mathbf{v}}[u] \lesa \Big(\frac{1}{\tau}\|\pav u\|_{L^2} + \|\pa (\pav u)\|_{L^2}\Big)\|P_\eta (\pav u)\|_{L^2}\\+ \frac{1}{\tau^{1+\frac{3}{n}}}\Big(\|\pav u\|_{L^2}^2 + E_{w}[\pav u]\Big)\,.
    \end{multline}
    To reach the stated estimate, we must bound the quantities $\|P_\eta (\pav u)\|_{L^2}$, $\|\pav u\|_{L^2}$. We observe that the formula for the commutator $[P_\eta,\pav]$ is
    \begin{equation*}
        [P_\eta,\pav]u = 2g^{\alpha\beta}\pa_\alpha v^\mu \pa_\beta \pa_\mu u + \wh{\square}_g v^\mu \cdot \pa_\mu u + \frac{2\eta}{\tau}g^{00}\pa_\tau v^\mu \pa_\mu u -v^\mu \pa_\mu g^{\alpha\beta}\cdot\pa_\alpha\pa_\beta u + \frac{2\eta}{\tau^2}v^0 g^{00}\pa_\tau u\,.
    \end{equation*}
    We apply this identity, Lemma~\ref{sec:energy-met.lem:elliptic}, and the subsequent remark to estimate
    \begin{multline}\label{sec:energy-met.eq:commuted-energy-proof-2}
        \big\|P_\eta (\pav u)\big\|_{L^2} \lesa \big\|\pav (P_\eta u)\big\|_{L^2} \\+ \big(\|\pa v\|_{L^\infty} + \|\pa^2v\|_{L^\infty} + \|\pa h\|_{L^\infty}+ \tau^{-2}\big)\big(\|u\|_{H^2} + \|\pa_\tau u\|_{H^1} + \|\pa_\tau^2 u\|_{L^2}\big)\\
        \leq \big\|\pav (P_\eta u)\big\|_{L^2} + \frac{1}{\tau^{1+\frac{3}{2n}-\delta}}\Big(\|P_\eta u\|_{L^2} + \|\pa(\pav u)\|_{L^2}\Big)\,.
    \end{multline}
    Finally, we control $\|\pav u\|_{L^2}$ with the estimate \eqref{sec:energy-met.eq:eq:elliptice-rmk-2} following Lemma~\ref{sec:energy-met.lem:elliptic}, which gives
    \begin{equation}\label{sec:energy-met.eq:commuted-energy-proof-3}
        \|\pav u\|_{L^2} \leq \frac{1}{\tau^{1+\frac{3}{2n}-\delta}}\|P_\eta u\|_{L^2} + \|\pa(\pav u)\|_{L^2}\,.
    \end{equation}
    The bounds \eqref{sec:energy-met.eq:commuted-energy-proof-2} and \eqref{sec:energy-met.eq:commuted-energy-proof-3} together imply by Young's inequality that
    \begin{align*}
         \Big(\frac{1}{\tau}\|\pav u\|_{L^2} + \|\pa (\pav u)\|_{L^2}\Big)\|P_\eta (\pav u)\|_{L^2} \lesa&\, \big(E_{w, \eta}^{\mathbf{v}}[u]\big)^{1/2} \big\|\pav(P_\eta u)\big\|_{L^2} + \frac{1}{\tau^{1+\frac{3}{2n}-\delta}}E_{w, \eta}^{\mathbf{v}}[u]\\
         &+\frac{1}{\tau^{2-\frac{3}{2n}-\delta}}\big\|\pav (P_\eta u)\big\|_{L^2}^2  +\frac{1}{\tau^{2+\frac{9}{2n}-\delta}}\|P_\eta u\|_{L^2}^2\,,\\
         \frac{1}{\tau^{1+\frac{3}{n}}}\Big(\|\pav u\|_{L^2}^2 + E_{w}[\pav u]\Big) \lesa&\, \frac{1}{\tau^{1+\frac{3}{n}}}E_{w, \eta}^{\mathbf{v}}[u] + \frac{1}{\tau^{3+\frac{6}{n}-2\delta}}\|P_\eta u\|_{L^2}^2\,,
    \end{align*}
    which we plug into \eqref{sec:energy-met.eq:commuted-energy-proof-1} to obtain the desired inequality.
\end{proof}

\subsection{Higher-order estimates of the metric perturbations}
Next, we derive energy estimates for the metric components at high order. The overall idea is to apply the energy estimate from Proposition~\ref{sec:energy-met.prop:commuted-energy} to the quantities $\pa_x^I h^{\mu\nu}$, $1 \leq |I| \leq N-1$. We recall the metric evolution equations \eqref{sec:gauged-eqs.eq:wave}, which can be written as
\begin{align*}
    P_{\kappa-2}h^{00} &= \mc{L}_h^{00} + \mc{G}_h^{00},\\
    P_{\kappa/2}h^{0i} &= \mc{L}_h^{0i} + \mc{G}_h^{0i},\\
    P_2 h^{ij} &= \mc{L}_h^{ij} + \mc{G}_h^{ij}.
\end{align*}
\begin{definition}[Top-order wave energies]
    Recalling Definition \ref{def:corrected-pav-wave-energy}
    for the $\pav-$boosted and corrected wave energy, we define the top-order wave energies for the metric components :
\begin{subequations}
    \begin{align}
        E_{h^{00};N-1}^{\mathbf{v}}(\tau) &:= \sum_{1 \leq|I|\leq N-1}E_{w,(\kappa-2)}^{\mathbf{v}}[\pa_x^I h^{00}](\tau)\,,\\ 
        E_{h^{0*};N-1}^{\mathbf{v}}(\tau)&:= \sum_{i=1}^3\sum_{1 \leq |I|\leq N-1}E_{w,(\kappa/2)}^{\mathbf{v}}[\pa_x^I h^{0i}](\tau)\,,
        \\ E_{h^{**};N-1}^{\mathbf{v}}(\tau)&:= \sum_{i,j=1}^3\sum_{1 \leq |I|\leq N-1}E_{w, 2}^{\mathbf{v}}[\pa_x^I h^{ij}](\tau)\,.
    \end{align}
\end{subequations}
\end{definition}

By the bootstrap assumption, these energies are bounded in the fashion
\begin{equation}\label{sec:energy-met.eq:higher-energy-bootstrap}
     E_{h^{00};N-1}^{\mathbf{v}}(\tau) \lesa \frac{\ve^2}{\tau^{4+\frac{3}{n}-2\delta}}\,, \qquad E_{h^{0*};N-1}^{\mathbf{v}}(\tau)\lesa \frac{\ve^2}{\tau^{4+\frac{3}{n}-2\delta}}\,, \qquad E_{h^{**};N-1}^{\mathbf{v}}(\tau) \lesa \frac{\ve^2}{\tau^{4-2\delta}}\,.
\end{equation}

\begin{remark}
Thanks to the elliptic estimate provided in Lemma \ref{sec:energy-met.lem:elliptic}, we can control (up to a decaying error) the higher-order metric norms $S_{h;N}$ and $ S_{h;N-1}^{\mathbf{v}}$ defined in the bootstrap as \eqref{sec:stability.eq:boot-3}--\eqref{sec:stability.eq:boot-3.5} using the modified energies defined above. Indeed using Lemma \ref{sec:energy-met.lem:elliptic}, Remark \ref{rem:time-derivs-elliptic}, and Proposition~\ref{sec:energy-met.prop:com-ests}, we have that
\begin{equation*}
    \|\pa \pa_x h^{00}\|_{H^{N-1}} + \|\pa\pav h^{00}\|_{H^{N-1}} \lesa (E_{h^{00};N-1}^{\mathbf{v}})^{1/2} + \frac{1}{\tau^{2+\frac{3}{n}-\delta}}
\end{equation*}
with analogous estimates for the norms of $h^{0i}$, $h^{ij}$. From this follows the norm bound
\begin{equation*}
    S_{h;N}(\tau) + S_{h;N-1}^{\mathbf{v}}(\tau) \lesa \tau^{2+\frac{3}{2n}-\delta}\Big[(E_{h^{00}; N-1}^{\mathbf{v}})^{1/2} + (E_{h^{0*}; N-1}^{\mathbf{v}})^{1/2}\big] + \tau^{2-\delta}(E_{h^{**}; N-1}^{\mathbf{v}})^{1/2} + \frac{1}{\tau^{\frac{3}{2n}}}.
\end{equation*}
\end{remark}

The top-order energy estimates for the three functionals $E_{h^{00};N-1}^{\mathbf{v}}, E_{h^{0*};N-1}^{\mathbf{v}}, E_{h^{**};N-1}^{\mathbf{v}}$ are the subject of the following Proposition. We state the result now, but note that several key estimates are required, whose proofs we delay until later in the section.
\begin{proposition}[Estimate for the higher-order metric energies]\label{sec:energy-met.prop:higher-energy}
    Suppose the metric perturbations $h^{\mu\nu}$, $\mu,\nu=0,1,2,3$ satisfy the gauge-fixed Einstein equations \eqref{sec:gauged-eqs.eq:wave}, and the bootstrap assumptions \eqref{sec:globed.eq:boot}. Then the top-order metric energies satisfy the following  inequalities:
    \begin{subequations}\label{sec:energy-met.eq:higher-energy}
        \begin{align}
            \pa_\tau E_{h^{00};N-1}^{\mathbf{v}}(\tau) + \frac{2(\kappa-2)}{\tau} E_{h^{00};N-1}^{\mathbf{v}} 
            &\lesa \frac{1}{\tau^{3+\frac{3}{2n}-\delta}}S_{(\rho,v);N}(E_{h^{00};N-1}^{\mathbf{v}})^{1/2} + \frac{1}{\tau^{5+\frac{9}{2n}-3\delta}}\,,\\
            \pa_\tau E_{h^{0*};N-1}^{\mathbf{v}}(\tau) + \frac{\kappa}{\tau} E_{h^{0*};N-1}^{\mathbf{v}} 
            &\lesa \Big[\frac{1}{\tau} (E_{h^{00};N-1}^{\mathbf{v}})^{1/2} + \frac{1}{\tau^{3+\frac{9}{2n}-\delta}}S_{(\rho,v);N} \Big](E_{h^{0*};N-1}^{\mathbf{v}})^{1/2}+ \frac{1}{\tau^{5+\frac{9}{2n}-3\delta}}\,,\\
            \pa_\tau E_{h^{**};N-1}^{\mathbf{v}}(\tau) + \frac{4}{\tau} E_{h^{**};N-1}^{\mathbf{v}} 
            &\lesa \Big[\frac{1}{\tau} (E_{h^{0*};N-1}^{\mathbf{v}})^{1/2} + \frac{1}{\tau^{3+\frac{3}{2n}-\delta}}S_{(\rho,v);N}\Big](E_{h^{**};N-1}^{\mathbf{v}})^{1/2} + \frac{1}{\tau^{5+\frac{9}{2n}-3\delta}}\,.
        \end{align}
    \end{subequations}
\end{proposition}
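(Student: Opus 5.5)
We apply Proposition~\ref{sec:energy-met.prop:commuted-energy} to $u = \pa_x^I h^{\mu\nu}$ for each $1\le |I|\le N-1$, with $\eta = \kappa-2$, $\kappa/2$, $2$ for the lapse, shift and spatial components respectively. Each such $u$ has vanishing spatial average, so the proposition applies directly. Summing over $I$ gives, for each family, a differential inequality whose right-hand side involves $\|\pav(P_\eta \pa_x^I h)\|_{L^2}$ and $\|P_\eta \pa_x^I h\|_{L^2}$.

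We then commute $\pa_x^I$ through $P_\eta$. The difference $P_\eta \pa_x^I h - \pa_x^I P_\eta h$ contains only terms with at least one derivative falling on $g^{\alpha\beta}$. By Sobolev product estimates and the bootstrap assumptions, these are quadratic and bounded by $\ve\tau^{-(1+3/n)}$ times top-order metric norms. This reduces matters to estimating $\pa_x^I(\mc{L}_h+\mc{G}_h)$ and $\pav\pa_x^I(\mc{L}_h+\mc{G}_h)$ in $L^2$; the commutator $[\pav,\pa_x^I]$ costs only $\|\pa_x v\|$ factors, which are small.

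For $\|P_\eta u\|_{L^2}$ crude bounds suffice. Using Lemmas~\ref{sec:stability.lem:christoffel}, \ref{sec:boot:lem.misc-fluid}, Proposition~\ref{sec:avg-ests.prop:avg-ests} and the bootstrap, the worst source is $\tau^{-2}\pa_x\rho$, giving roughly $\tau^{-(3-3/(2n)-\delta)}$. Multiplied by the prefactor $\tau^{-(2+9/(2n)-\delta)}$ after squaring, this lands inside $\tau^{-(5+9/(2n)-3\delta)}$ up to the harmless adjustment by $\delta$. The term $\tau^{-(2-3/(2n)-\delta)}\|\pav P_\eta u\|$ (squared via Young) is handled likewise once $\pav$ of the sources is bounded.

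The heart of the argument is $\|\pav\pa_x^I\mc{L}_h\|_{L^2}$. For the fluid terms $\tau^{-2}\pav\pa_x^I\rho$ and $\tau^{-2}\pav\pa_x^I(\rho v^0v^i)$, we use Proposition~\ref{sec:stability.prop:fluid-time-deriv}: $\|\pav\rho\|_{H^{N-1}}\lesa\tau^{-(1+3/(2n)-\delta)}S_{(\rho,v);N}+\tau^{-(1+3/n)}$. We also need to commute $\pa_x^I$ with $\pav$, which produces $\pa_x v\cdot\pa\rho$ terms of faster decay. This yields the $\tau^{-(3+3/(2n)-\delta)}S_{(\rho,v);N}$ coefficient in the $h^{00}$ and $h^{ij}$ inequalities. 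In the shift equation the fluid term carries $v^0v^i$, which gains the extra $\tau^{-3/n}$ visible in the stated bound.

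The metric--metric linear terms produce the coupling terms $\tau^{-1}(E^{\mathbf v}_{h^{00}})^{1/2}$ in the shift estimate and $\tau^{-1}(E^{\mathbf v}_{h^{0*}})^{1/2}$ in the spatial estimate:
\begin{itemize}
\item $\tau^{-1}g^{i\lambda}\pa_\lambda h^{00}$ in $\mc{L}^{0i}_h$;
\item $\tau^{-1}g^{\lambda(i}\pa_\lambda h^{j)0}$ and $\tau^{-2}g^{ij}(h^{00}-h^{00}_\av)$ in $\mc{L}^{ij}_h$.
\end{itemize}
After applying $\pav$, these are bounded via $\|\pa\pav\pa_x^I h\|\lesa (E^{\mathbf v})^{1/2}$ plus elliptic errors from Lemma~\ref{sec:energy-met.lem:elliptic}. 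The zeroth-order piece is controlled by Poincaré and Remark~\ref{rem:time-derivs-elliptic}.

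The self-coupling terms in $\mc{L}^{00}_h$, namely $\tau^{-2}(h^{00}-h^{00}_\av)$ and the average terms, are of critical size. The average terms have vanishing spatial derivatives, so they drop out after applying $\pa_x^I$. The oscillatory term $\tau^{-2}\pav\pa_x^I h^{00}$ is bounded by $\tau^{-1}\cdot\tau^{-1}\|\pav\pa_x^I h^{00}\|\lesa \tau^{-1}(E^{\mathbf v}_{h^{00}})^{1/2}$ by Poincaré. Since this is exactly critical, we absorb it into the damping by reducing the constant $2(\kappa-2)$ slightly, or by noting that it is dominated for $\kappa$ large.

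The nonlinear terms $\mc{G}_h$ are routine and give the $\tau^{-(5+9/(2n)-3\delta)}$ remainder. We expect the main obstacle to be tracking the commutators so that no critical-rate term appears without the structure just described, and in particular verifying that the spatial-metric equation receives no critical $h^{ij}$ self-coupling terms.
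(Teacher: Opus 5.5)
Your architecture is the paper's. You apply Proposition~\ref{sec:energy-met.prop:commuted-energy} to $\pa_x^I h^{\mu\nu}$, commute $\pa_x^I$ out of $P_\eta$, estimate $\pav$ of the fluid sources with Proposition~\ref{sec:stability.prop:fluid-time-deriv}, and read off the critical couplings from $\tau^{-1}g^{i\lambda}\pa_\lambda h^{00}$ and $\tau^{-1}g^{\lambda(i}\pa_\lambda h^{j)0}$. The step that fails as written is the lapse self-coupling $-\frac{2(\kappa+6)}{\tau^2}g^{00}(h^{00}-h^{00}_\av)$. Your bound $\tau^{-2}\|\pav\pa_x^Ih^{00}\|_{L^2}\lesa\tau^{-1}(E^{\mathbf{v}}_{h^{00};N-1})^{1/2}$ throws away a full power of $\tau$. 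The proposed remedy would also not work even if the term were critical. Its coefficient $2(\kappa+6)$ grows with $\kappa$ at the same rate as the damping coefficient $2(\kappa-2)$, so taking $\kappa$ large does not help. Subtracting $C(\kappa+6)\tau^{-1}E^{\mathbf{v}}_{h^{00};N-1}$ with an uncontrolled $C$ is not a slight reduction: it may destroy the damping. Besides, the proposition is stated with the exact coefficient $2(\kappa-2)$ and no $\tau^{-1}E^{\mathbf{v}}_{h^{00};N-1}$ on the right.

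The term is actually subcritical. For $u=\pa_x^Ih^{00}$ one has $(\pav u)_\av=\fint(v^0-v^0_\av)\pa_\tau u-\fint(\pa_av^a)u$, which is quadratically small. So by Poincar\'e (this is \eqref{sec:energy-met.eq:eq:elliptice-rmk-2}), $\|\pav u\|_{L^2}\lesa\|\pa(\pav u)\|_{L^2}$ up to faster-decaying terms, with no loss in $\tau$. Hence the term contributes $O(\tau^{-2}E^{\mathbf{v}}_{h^{00};N-1})=O(\ve^2\tau^{-(6+\frac{3}{n}-2\delta)})$, which lies inside $\tau^{-(5+\frac{9}{2n}-3\delta)}$. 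This is how the paper disposes of $\tau^{-2}\|\pav h^{00}\|_{H^{N-1}}$ in \eqref{sec:energy-met.eq:source-ests-proof-4}. The same reasoning covers two terms you did not handle correctly:
\begin{itemize}
\item $\tau^{-2}\rho h^{0i}$ in $\mc{L}^{0i}_h$, which you do not discuss;
\item the averaged terms of $\mc{L}^{00}_h$, which do not drop out under $\pa_x^I$ (they multiply $g^{00}$) but become quadratic, subcritical errors.
\end{itemize}

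Two smaller points. First, the commutator $\pav([P_\eta,\pa_x^I]h^{\mu\nu})$ contains $(\pa_x^Jh^{00})\,v^0\pa_\tau^3\pa_x^{I-J}h^{\mu\nu}$. The bootstrap norms reach $\pa_\tau^2h$ (via $S^{\mathbf{v}}_{h;N-1}$ and \eqref{sec:energy-met.eq:elliptic-proof-2}) but not $\pa_\tau^3h$. So ``Sobolev plus bootstrap'' is not enough: you must recover $\pa_\tau^3 h$ from the $\pav$-differentiated wave equation, as the paper does in the proof of Proposition~\ref{sec:energy-met.prop:com-ests}. Second, in the $h^{**}$ line the term $\tau^{-(1+\frac{3}{2n}-\delta)}E^{\mathbf{v}}_{h^{**};N-1}$ inherited from Proposition~\ref{sec:energy-met.prop:commuted-energy} is only $O(\tau^{-(5+\frac{3}{2n}-3\delta)})$ under \eqref{sec:energy-met.eq:higher-energy-bootstrap}, not $O(\tau^{-(5+\frac{9}{2n}-3\delta)})$. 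So not every remainder lands in the stated error there. The paper's proof has the same slip. It is harmless, since after multiplying by $\tau^{4-2\delta}$ only the rate $\tau^{-(1+\frac{3}{2n}-\delta)}$ is used in Theorem~\ref{sec:boot-imp.thm:higher-ests}.
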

\begin{proof}
    The differentiated metric pertubations have zero spatial average, and so we may apply the energy inequality from Proposition~\ref{sec:energy-met.prop:commuted-energy} to the energies that comprise $E_{h^{00};N-1}^{\mathbf{v}}$, $E_{h^{0*};N-1}^{\mathbf{v}}$, $E_{h^{**};N-1}^{\mathbf{v}}$. It follows that these higher-order energies satisfy the inequalities
    \begin{subequations}\label{sec:energy-met.eq:higher-energy-proof-1}
        \begin{align}
            \pa_\tau E_{h^{00};N-1}^{\mathbf{v}}(\tau) + \frac{2(\kappa-2)}{\tau} E_{h^{00};N-1}^{\mathbf{v}} &\lesa P_{h^{00}}^{\mathbf{v}}(E_{h^{00};N-1}^{\mathbf{v}})^{1/2} + \frac{1}{\tau^{2-\frac{3}{2n}-\delta}}(P_{h^{00}}^{\mathbf{v}})^2 \notag\\&\quad + \frac{1}{\tau^{2+\frac{9}{2n}-\delta}}(P_{h^{00}}^{\mathbf{v}})^2  + \frac{1}{\tau^{1+\frac{3}{2n}-\delta}}E_{h^{00};N-1}^{\mathbf{v}}\,, \\
            \pa_\tau E_{h^{0*};N-1}^{\mathbf{v}}(\tau) + \frac{\kappa}{\tau} E_{h^{0*};N-1}^{\mathbf{v}} &\lesa P_{h^{0*}}^{\mathbf{v}}(E_{h^{0*};N-1}^{\mathbf{v}})^{1/2} + \frac{1}{\tau^{2-\frac{3}{2n}-\delta}}(P_{h^{0*}}^{\mathbf{v}})^2 \notag\\&\quad + \frac{1}{\tau^{2+\frac{9}{2n}-\delta}}(P_{h^{0*}})^2 + \frac{1}{\tau^{1+\frac{3}{2n}-\delta}}E_{h^{0*};N-1}^{\mathbf{v}}\,, \\
            \pa_\tau E_{h^{**};N-1}^{\mathbf{v}}(\tau) + \frac{4}{\tau} E_{h^{**};N-1}^{\mathbf{v}} &\lesa P_{h^{**}}^{\mathbf{v}}(E_{h^{**};N-1}^{\mathbf{v}})^{1/2} + \frac{1}{\tau^{2-\frac{3}{2n}-\delta}}(P_{h^{**}}^{\mathbf{v}})^2 \notag\\&\quad+ \frac{1}{\tau^{2+\frac{9}{2n}-\delta}}(P_{h^{**}})^2 + \frac{1}{\tau^{1+\frac{3}{2n}-\delta}}E_{h^{**};N-1}^{\mathbf{v}}\,,
        \end{align}
    \end{subequations}
    where the quantities $P_{h^{00}}^{\mathbf{v}}$, $P_{h^{0*}}^{\mathbf{v}}$, $P_{h^{**}}^{\mathbf{v}}$ are defined as
    \begin{align*}
        P_{h^{00}}^{\mathbf{v}} &:= \sum_{1\leq|I|\leq N-1}\Big\|\pav\big[P_{\kappa-2}(\pa_x^I h^{00})\big]\Big\|_{L^2}\,,\\
        P_{h^{0*}}^{\mathbf{v}} &:= \sum_{i=1}^3\sum_{1\leq|I|\leq N-1}\Big\|\pav\big[P_{\kappa/2}(\pa_x^I h^{0i})\big]\Big\|_{L^2}\,,\\
        P_{h^{**}}^{\mathbf{v}} &:= \sum_{i,j=1}^3\sum_{1\leq|I|\leq N-1}\Big\|\pav\big[P_{2}(\pa_x^I h^{ij})\big]\Big\|_{L^2}\,,
    \end{align*}
    and $P_{h^{00}}$, $P_{h^{0*}}$, $P_{h^{**}}$ as
    \begin{align*}
        P_{h^{00}} &:= \sum_{1\leq |I|\leq N-1}\big\|P_{\kappa-2}(\pa_x^I h^{00})\|_{L^2}\,,\\
        P_{h^{0*}} &:= \sum_{i=1}^3\sum_{1\leq|I|\leq N-1}\big\|P_{\kappa/2}(\pa_x^I h^{0i})\|_{L^2}\,,\\
        P_{h^{**}} &:= \sum_{i,j=1}^3\sum_{1\leq|I|\leq N-1}\big\|P_{2}(\pa_x^I h^{ij})\|_{L^2}\,.
    \end{align*}
     We claim that these quantities obey the bounds
        \begin{subequations}\label{sec:energy-met.eq:higher-energy-proof-2}
        \begin{align}
        P_{h^{00}}^{\mathbf{v}} &\lesa \frac{1}{\tau^{3+\frac{3}{2n}}}S_{(\rho,v);N}+ \frac{1}{\tau^{3+\frac{3}{n}-\delta}}\,,\\
        P_{h^{0*}}^{\mathbf{v}} &\lesa \frac{1}{\tau}(E_{h^{00};N-1}^{\mathbf{v}})^{1/2} + \frac{1}{\tau^{3+\frac{9}{2n}}}S_{(\rho,v);N} + \frac{1}{\tau^{3+\frac{3}{2n}}}\,,\\
        P_{h^{**}}^{\mathbf{v}} &\lesa \frac{1}{\tau}(E_{h^{0*};N-1}^{\mathbf{v}})^{1/2} + \frac{1}{\tau^{3+\frac{3}{2n}}}S_{(\rho,v);N} + \frac{1}{\tau^{3+\frac{3}{n}}}\,,
        \end{align}
    \end{subequations}
    and
    \begin{equation}\label{sec:energy-met.eq:higher-energy-proof-3}
        P_{h^{00}} + P_{h^{0*}} + P_{h^{**}} \lesa \frac{1}{\tau^{3-\frac{3}{2n}-\delta}}\,.
    \end{equation}
    We delay the proofs of these bounds to Proposition~\ref{sec:energy-met.prop:com-ests}. For now, we observe that inserting them and the bound \eqref{sec:energy-met.eq:higher-energy-bootstrap} into \eqref{sec:energy-met.eq:higher-energy-proof-1} yields the inequalities \eqref{sec:energy-met.eq:higher-energy}.
\end{proof}
In the final part of this section, we  prove the claimed bounds \eqref{sec:energy-met.eq:higher-energy-proof-2}, \eqref{sec:energy-met.eq:higher-energy-proof-3}. We do this in two steps. First, we have the following lemma.
\begin{lemma}\label{sec:energy-met.lem:source-ests}
    We have the rough bounds
    \begin{equation}\label{sec:energy-met.eq:source-ests-1}
        \|P_{\kappa-2}h^{00}\|_{H^N} + \sum_{i=1}^3\|P_{\kappa/2}h^{0i}\|_{H^N} + \sum_{i,j=1}^3 \|P_2 h^{ij}\|_{H^N}\lesa \frac{1}{\tau^2}\,,
    \end{equation}
    \begin{equation}\label{sec:energy-met.eq:source-ests-2}
        \big\|\pa_x(P_{\kappa-2}h^{00})\big\|_{H^{N-1}} +  \sum_{i=1}^3\big\|\pa_x(P_{\kappa/2}h^{0i})\big\|_{H^{N-1}} + \sum_{i,j=1}^3\big\|\pa_x(P_{\kappa/2}h^{ij})\big\|_{H^{N-1}} \lesa \frac{1}{\tau^{3-\frac{3}{2n}-\delta}}\,,
    \end{equation}
    and the sharp bounds
    \begin{subequations}\label{sec:energy-met.eq:source-ests-3}
        \begin{align}
            \big\|\pav(P_{\kappa-2}h^{00})\big\|_{H^{N-1}} &\lesa \frac{1}{\tau^{3+\frac{3}{2n}-\delta}}S_{(\rho,v);N} + \frac{1}{\tau^{3+\frac{3}{n}}}\,,\\
            \sum_{i=1}^3\big\|\pav(P_{\kappa/2}h^{0i})\big\|_{H^{N-1}} &\lesa \frac{1}{\tau}(E_{h^{00};N-1}^{\mathbf{v}})^{1/2} + \frac{1}{\tau^{3+\frac{9}{2n}-\delta}}S_{(\rho,v);N} + \frac{1}{\tau^{3+\frac{3}{n}}}\,,\\
            \sum_{i,j=1}^3 \big\|\pav(P_2 h^{ij})\big\|_{H^{N-1}} &\lesa \frac{1}{\tau}(E_{h^{0*};N-1}^{\mathbf{v}})^{1/2} + \frac{1}{\tau^{3+\frac{3}{2n}-\delta}}S_{(\rho,v);N} + \frac{1}{\tau^{3+\frac{3}{n}}}\,.
        \end{align}
    \end{subequations}
\end{lemma}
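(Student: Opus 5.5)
The plan is to read off all three bounds from the explicit right-hand sides in Proposition~\ref{sec:gauged-eqs.prop:gauged-eqs}, namely $P_{\kappa-2}h^{00}=\mc{L}_h^{00}+\mc{G}_h^{00}$ (and similarly for $h^{0i}$, $h^{ij}$). Each term is estimated with the Sobolev product lemma, the Poincar\'e inequality, the bootstrap assumptions \eqref{sec:globed.eq:boot}, Lemmas~\ref{sec:stability.lem:lowered-met}, \ref{sec:boot:lem.misc-fluid} and \ref{sec:stability.lem:christoffel}, Proposition~\ref{sec:avg-ests.prop:avg-ests} (especially \eqref{sec:avg-ests.omega-ests}) and Remark~\ref{sec:avg-ests.rmk:derivests}.

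For the rough bound \eqref{sec:energy-met.eq:source-ests-1}, the worst terms carry a prefactor $\tau^{-2}$ multiplied by $O(1)$ quantities, for instance $\tau^{-2}\rho\, h^{0i}$ and $\tau^{-2}\chi v^iv^j$. Every other term is smaller: terms of the form $\tau^{-1}g\,\pa h$ are $O(\tau^{-2-\frac3{2n}+\delta})$, and $\mc{F}^{\mu\nu}$ is quadratic in $\pa h$. The bound therefore follows immediately.

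For \eqref{sec:energy-met.eq:source-ests-2} we apply $\pa_x$. Spatial averages are then annihilated, so the slowest terms become $\tau^{-2}g^{\mu\nu}\pa_x\rho$, which is $O(\tau^{-3+\frac3{2n}+\delta})$ by \eqref{sec:stability.eq:boot-4}. Next come $\tau^{-1}g\,\pa\pa_x h^{00}$ and $\tau^{-2}\pa_x h$, both of which decay faster. Quadratic terms and pressure terms are better still, carrying the extra factor $\tau^{-6/n}$.

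For the sharp bounds \eqref{sec:energy-met.eq:source-ests-3} we apply $\pav=v^\mu\pa_\mu$ to each source term. There are four classes of contributions.
\begin{enumerate}
\item When $\pav$ falls on $\rho$ or $v$, Proposition~\ref{sec:stability.prop:fluid-time-deriv} together with $\pav=v^0\pa_\tau+v^a\pa_a$ (where $v^a$ is small) gives $\|\pav\rho\|_{H^{N-1}}\lesa\tau^{-1-\frac3{2n}+\delta}S_{(\rho,v);N}+\tau^{-1-\frac3n}$. Multiplied by $\tau^{-2}$, this produces exactly the $S_{(\rho,v);N}$ terms in the statement. For $h^{0i}$ the relevant fluid factor is $\tau^{-2}\rho v^0v^i$, which benefits from $\pav v$ and gives the extra $\tau^{-3/n}$.
\item When $\pav$ falls on an average, such as $\pa_\tau\rho_\av$, $\pa_\tau^2 h^{00}_\av$ or the combination $(\rho_\av-12)+12h^{00}_\av$, we use Remark~\ref{sec:avg-ests.rmk:derivests} and \eqref{sec:avg-ests.omega-ests}. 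Each such contribution is $O(\tau^{-3-\frac3n})$.
\item When $\pav$ falls on metric factors multiplying $\tau^{-2}(h-h_\av)$, we obtain $\tau^{-2}\|\pav h\|$. This is controlled by the $\pav$-energies or by $S^{\mathbf v}_{h;N-1}$ and is of lower order. The derivative of $\tau^{-k}$ contributes an extra $\tau^{-1}$.
\item The coupling terms $-\frac{\kappa-4}{\tau}g^{i\lambda}\pa_\lambda h^{00}$ in $\mc{L}^{0i}_h$ and $-\frac{2(\kappa-4)}{\tau}g^{\lambda(i}\pa_\lambda h^{j)0}$ in $\mc{L}^{ij}_h$ become $\tau^{-1}\pa_\lambda\pav h^{00}$ plus a commutator $\tau^{-1}\pa v\,\pa h$. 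Taking $H^{N-1}$ norms, the first part is bounded by $\tau^{-1}(E^{\mathbf v}_{h^{00};N-1})^{1/2}$ (respectively $E^{\mathbf v}_{h^{0*};N-1}$), using the coercivity of the corrected energy together with Poincar\'e, since the derivatives have zero average. This is the origin of the critical $\tau^{-1}E^{1/2}$ terms.
\end{enumerate}

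The expected main obstacle is the equation for $h^{00}$. Its right-hand side contains $\tau^{-2}g^{00}(\rho-\rho_\av)$ and $\tau^{-2}(h^{00}-h^{00}_\av)$ but no critical coupling term, so the bound for $\pav(P_{\kappa-2}h^{00})$ must close without an $E^{1/2}$ term. This requires checking that $\pav\mc{F}^{00}$ and $\pav(\tau^{-1}h^{0a}\pa_ah^{00})$ are quadratic with the stated decay, and that $\tau^{-2}\pav h^{00}$ is absorbed into $\tau^{-3-\frac3n}$ via the bootstrap bound \eqref{sec:stability.eq:boot-3.5} combined with the average estimates.

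The term $\tau^{-2}\rho\,\pa_\tau^{2}$-type products coming from $\Gamma$ in the fluid equations do not appear here, but the terms $\pav\mc{F}$ involve $\pa\pav h$ multiplied by $\pa h$. These are bounded by $\tau^{-2+\delta}\cdot\tau^{-2+\delta}$, and the assumption $\delta<1-\frac3n$ ensures they fit under $\tau^{-3-\frac3n}$.

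Finally, the $h^{0i}$ bound with its extra $\tau^{-3/n}$ relies on the fact that the $\rho$-dependent source there carries a factor of $v$ or $h^{0i}$, and both of these enjoy the improved decay.
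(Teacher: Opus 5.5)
Your decomposition is the paper's. You write $P_\eta h=\mathcal{L}_h+\mathcal{G}_h$ and bound $\mathcal{G}_h$ crudely as a quadratic/pressure error. You then treat $\mathcal{L}_h$ term by term: Proposition~\ref{sec:stability.prop:fluid-time-deriv} handles $\partial_{\mathbf v}\rho$ and $\partial_{\mathbf v}v$, Remark~\ref{sec:avg-ests.rmk:derivests} and \eqref{sec:avg-ests.omega-ests} handle the averaged pieces, and the $\partial_{\mathbf v}$-energies (with Poincar\'e at the lowest spatial order) handle the coupling terms $\tau^{-1}g^{ia}\partial_a h^{00}$ and $\tau^{-1}g^{a(i}\partial_a h^{j)0}$. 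The only procedural difference is that you control second time derivatives of $h$ through the bootstrap norm $S^{\mathbf v}_{h;N-1}$. The paper instead first extracts $\|\partial_\tau^2 h^{\mu\nu}\|_{H^{N-1}}\lesssim\tau^{-2}$ from the wave equations and the rough bound. Either works.

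One step is wrong as written: the claim that the $\partial_{\mathbf v}\mathcal{F}^{\mu\nu}$ terms are $O(\tau^{-2+\delta}\cdot\tau^{-2+\delta})$ and so fit under $\tau^{-3-\frac3n}$. The undifferentiated factor $\partial h$ contains the spatial averages $\partial_\tau h^{00}_{\mathrm{av}}$ and $\partial_\tau h^{ij}_{\mathrm{av}}$, which the bootstrap only controls at the rate $\tau^{-1-\frac3n}$. For example, the term $-\tfrac12 (g^{00})^2\partial_\tau h^{ab}\,\partial_\tau g_{ab}$ in $\mathcal{F}^{00}$ has a definite sign, so nothing cancels it. It produces $(\partial_\tau h^{ab})_{\mathrm{av}}\,\partial_{\mathbf v}\partial_\tau g_{ab}$, and the oscillatory part of the second factor is only $O(\tau^{-2+\delta})$. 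The bootstrap therefore only gives $\|\partial_{\mathbf v}\mathcal{G}_h^{\mu\nu}\|_{H^{N-1}}\lesssim\tau^{-3-\frac3n+\delta}$. Separately, even your $\tau^{-4+2\delta}\le\tau^{-3-\frac3n}$ would require $2\delta\le 1-\frac3n$, which $\delta<1-\frac3n$ does not give. This $\tau^{\delta}$ loss is exactly the one in the paper's own intermediate bound $\|\partial\mathcal{G}_h^{\mu\nu}\|_{H^{N-1}}\lesssim\tau^{-(3+\frac3n-\delta)}$. So the non-$S$ remainder in the sharp bounds is really $\tau^{-3-\frac3n+\delta}$. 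That is harmless downstream: after the weights in Theorem~\ref{sec:boot-imp.thm:higher-ests} it still gives integrable errors. But you should state the weaker rate rather than claim the sharper one.
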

\begin{proof}
    First we prove \eqref{sec:energy-met.eq:source-ests-1}. We observe that the error terms $\mc{G}_h^{\mu\nu}$ satisfy
    \begin{equation}\label{sec:energy-met.eq:source-ests-proof-1}
        \|\mc{G}_h^{\mu\nu}\|_{H^{N}} \lesa \frac{1}{\tau^{2+\frac{6}{n}}}\,.
    \end{equation}
    To see why this is true, one can go through the terms that comprise the $\mc{G}_h^{\mu\nu}$, and quickly verify that the $H^N$ norm of all terms are bounded by $C\tau^{-(2+\frac{6}{n})}$. In particular, for the semilinear terms $\mc{F}^{\mu\nu}$, we have
    \begin{equation*}
        \|\mc{F}^{\mu\nu}\|_{H^N} \leq \|\pa h\|_{H^N}^2 \lesa \frac{1}{\tau^{2+\frac{6}{n}}}\,.
    \end{equation*}
    Using \eqref{sec:energy-met.eq:source-ests-proof-1}, rough bounds on the source terms $\mc{L}_h^{\mu\nu}$, and the bootstrap assumptions, we find that the quantities $P_{\kappa-2}h^{00}$, $P_{\kappa/2}h^{0i}$, $P_2 h^{ij}$ satisfy the bounds 
        \begin{multline*}
        \|P_{\kappa-2}h^{00}\|_{H^N} + \sum_{i=1}^3\|P_{\kappa/2}h^{0i}\|_{H^N} + \sum_{i,j=1}^3 \|P_2 h^{ij}\|_{H^N}\\
        \lesa \frac{1}{\tau^2}\Big(\|h\|_{H^N} + \|\rho\|_{H^N} + \|v\|_{H^N}\Big) + \frac{1}{\tau}\Big(\|\pa h\|_{H^{N}} + |\pa_\tau \rho_\av|\Big) + \|\pa h\|_{H^N}^2  + \frac{1}{\tau^{2+\frac{6}{n}}} \lesa \frac{1}{\tau^2}\,,
    \end{multline*}
    which proves \eqref{sec:energy-met.eq:source-ests-1}. We note that this also gives bounds on the quantities $\pa_\tau^2 h^{\mu\nu}$, as we have
    \begin{equation}\label{sec:energy-met.eq:source-ests-proof-2}
        \|\pa_\tau^2 h^{00}\|_{H^{N-1}} \lesa \|P_{\kappa-2} h^{00}\|_{H^{N-1}} + \frac{1}{\tau}\|\pa_\tau h^{00}\|_{H^{N-1}} + \|\pa \pa_x h^{00}\|_{H^{N-1}} \leq \frac{1}{\tau^2}\,,
    \end{equation}
    and analogous bounds for the $h^{0i}$, $h^{ij}$.

    Next, we prove \eqref{sec:energy-met.eq:source-ests-1}. Repeating the previous argument, we have the following bound for the $ \mc{G}_h^{\mu\nu}$:
    \begin{equation}\label{sec:energy-met.eq:source-ests-proof-3}
        \|\pa \mc{G}^{\mu\nu}\|_{H^{N-1}} \lesa \|\pa \pa h\|_{H^{N-1}} \|\pa h \|_{H^{N-1}} +\frac{1}{\tau^{3+\frac{3}{n}-\delta}} \lesa \frac{1}{\tau^{3+\frac{3}{n}-\delta}}\,,
    \end{equation}
    where we used \eqref{sec:energy-met.eq:source-ests-proof-2} to estimate any second-order time derivatives of the metric perturbations. Once again, using \eqref{sec:energy-met.eq:source-ests-proof-3}, rough bounds on the source terms $\mc{L}_h^{\mu\nu}$, and the bootstrap assumptions, we have
    \begin{multline*}
        \big\|\pa_x(P_{\kappa-2}h^{00})\big\|_{H^{N-1}} +  \sum_{i=1}^3\big\|\pa_x(P_{\kappa/2}h^{0i})\big\|_{H^{N-1}} + \sum_{i,j=1}^3\big\|\pa_x(P_{\kappa/2}h^{ij})\big\|_{H^{N-1}}\\
        \lesa \frac{1}{\tau^2}\Big(\|\pa_x \rho\|_{H^{N-1}} + \|\pa_x v\|_{H^{N-1}} + \|\pa_x h\|_{H^{N-1}}\Big)+ \frac{1}{\tau}\|\pa_x \pa h\|_{H^{N-1}}+ \frac{1}{\tau^{3+\frac{3}{n}-\delta
        }} \lesa \frac{1}{\tau^{3-\frac{3}{2n}-\delta}}\,,
    \end{multline*}
    where the slowest decay arises from the term $\tau^{-2} \|\pa_x \rho\|_{H^{N-1}}$. Hence the bound \eqref{sec:energy-met.eq:source-ests-2} holds.

    It remains to prove \eqref{sec:energy-met.eq:source-ests-3}. Here we must be slightly more careful, and precisely analyse the structures of the source terms $\mc{L}_h^{\mu\nu}$ in order to derive sharp bounds. For $\mc{L}_h^{00}$, we have 
    \begin{multline}\label{sec:energy-met.eq:source-ests-proof-4}
        \big\|\pav \mc{L}_h^{00}\big\|_{H^{N-1}} \lesa \frac{1}{\tau^3}\Big(\|\rho-\rho_\av\|_{H^{N-1}} + \|h^{00}-h_{\av}^{00}\|_{H^{N-2}} + \big|(\rho_\av - 12) + 12h^{00}_\av\big|\Big)\\+\frac{1}{\tau^2}\Big(\|\pav\rho\|_{H^{N-1}} + \|\pav h^{00}\|_{H^{N-1}} + |\pa_\tau \rho_\av| + |\pa_\tau h_\av^{00}|\Big)  + \frac{1}{\tau}\Big(|\pa_\tau^2 \rho_\av| + |\pa_\tau^2 h_\av^{00}|\Big)\\
        \leq \frac{1}{\tau^2}\|\pav \rho\|_{H^N} + \frac{C}{\tau^3}\|\pa_x \rho_\av\|_{H^{N-1}} + \frac{1}{\tau^{3+\frac{3}{n}}}\,,
    \end{multline}
    where the bounds on the various spatial averages follow from Proposition~\ref{sec:avg-ests.prop:avg-ests} and Remark~\ref{sec:avg-ests.rmk:derivests}, while the estimates for the quantities $\|\rho-\rho_\av\|_{H^{N-1}}$, $\|h^{00}-h_\av^{00}\|_{H^{N-1}}$ follow from the Poincar\'e inequality. We then use Proposition~\ref{sec:stability.prop:fluid-time-deriv} to estimate the material derivative of $\rho$, giving
    \begin{equation*}
        \frac{1}{\tau^2} \|\pav \rho\|_{H^{N-1}} +\frac{1}{\tau^3}\|\pa_x \rho\|_{H^{N-1}}\lesa \frac{1}{\tau^{3+\frac{3}{2n}-\delta}}S_{(\rho,v);N} + \frac{1}{\tau^{3+\frac{3}{n}}}\,.
    \end{equation*}
    Combining this with \eqref{sec:energy-met.eq:source-ests-proof-3} and \eqref{sec:energy-met.eq:source-ests-proof-4}, we obtain the first of the bounds \eqref{sec:energy-met.eq:source-ests-3}. We repeat this argument for the source terms $\mc{L}_h^{0i}$, $\mc{L}_h^{ij}$, and get
    \begin{align}
        \sum_{i=1}^3\big\|\pav \mc{L}_h^{0i} \big\|_{H^{N-1}} &\lesa \frac{1}{\tau}(E_{h^{00};N-1}^{\mathbf{v}})^{1/2} + \frac{1}{\tau^{3+\frac{9}{2n}-\delta}}S_{(\rho,v);N} + \frac{1}{\tau^{3+\frac{3}{n}}}\,,\\
        \sum_{i,j=1}^3\big\|\pav \mc{L}_h^{ij} \big\|_{H^{N-1}} &\lesa \frac{1}{\tau}(E_{h^{0*};N-1}^{\mathbf{v}})^{1/2} + \frac{1}{\tau^{3+\frac{3}{2n}-\delta}}S_{(\rho,v);N} + \frac{1}{\tau^{3+\frac{3}{n}}}\,.
    \end{align}
    By the above bounds and \eqref{sec:energy-met.eq:source-ests-proof-3}, we recover the remaining estimates.
\end{proof}

\begin{proposition}[$\pav-$commuted Einstein equations]\label{sec:energy-met.prop:com-ests}
    The estimates \eqref{sec:energy-met.eq:higher-energy-proof-2}, \eqref{sec:energy-met.eq:higher-energy-proof-3} hold. That is, we have
    \begin{subequations}
        \begin{align}
            \label{sec:energy-met.eq:com-ests-1}
            \sum_{1\leq|I|\leq N-1}\Big\|\pav\big[P_{\kappa-2}(\pa_x^I h^{00})\big]\Big\|_{L^2} &\lesa  \frac{1}{\tau^{3+\frac{3}{2n}-\delta}}S_{(\rho,v);N}+ \frac{1}{\tau^{3+\frac{3}{n}}}\,,\\
            \label{sec:energy-met.eq:com-ests-2}
            \sum_{i=1}^3\sum_{1\leq|I|\leq N-1}\Big\|\pav\big[P_{\kappa/2}(\pa_x^I h^{0i})\big]\Big\|_{L^2} &\lesa \frac{1}{\tau}(E_{h^{00};N-1}^{\mathbf{v}})^{1/2} + \frac{1}{\tau^{3+\frac{9}{2n}-\delta}}S_{(\rho,v);N} + \frac{1}{\tau^{3+\frac{3}{n}}}\,,\\
            \label{sec:energy-met.eq:com-ests-3}
            \sum_{i,j=1}^3\sum_{1\leq|I|\leq N-1}\Big\|\pav\big[P_{2}(\pa_x^I h^{ij})\big]\Big\|_{L^2} &\lesa \frac{1}{\tau}(E_{h^{0*};N-1}^{\mathbf{v}})^{1/2} + \frac{1}{\tau^{3+\frac{3}{2n}-\delta}}S_{(\rho,v);N} + \frac{1}{\tau^{3+\frac{3}{n}}}\,,
        \end{align}
    \end{subequations}
    and
    \begin{multline}\label{sec:energy-met.eq:com-ests-4}
        \sum_{1\leq|I|\leq N-1}\big\|P_{\kappa-2}(\pa_x^I h^{00})\big\|_{L^2}+
        \sum_{i=1}^3\sum_{1\leq|I|\leq N-1}\big\|P_{\kappa/2}(\pa_x^I h^{0i})\big\|_{L^2}\\+  \sum_{i,j=1}^3\sum_{1\leq|I|\leq N-1}\big\|P_{2}(\pa_x^I h^{ij})\big\|_{L^2} \lesa \frac{1}{\tau^{3-\frac{3}{2n}-\delta}}\,.
    \end{multline}
\end{proposition}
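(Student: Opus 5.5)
The plan is to reduce everything to Lemma~\ref{sec:energy-met.lem:source-ests}, which controls $P_\eta h$ and $\pav(P_\eta h)$ in $H^{N-1}$. What remains is to handle the commutators between $\pa_x^I$ and the two operators $P_\eta$ and $\pav$.

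\textbf{Commuting $\pa_x^I$ through $P_\eta$.} For $1\le|I|\le N-1$ we write
\[
P_\eta(\pa_x^I u)=\pa_x^I(P_\eta u)-[\pa_x^I,P_\eta]u,
\qquad
[\pa_x^I,P_\eta]u=\sum_{J+K=I,\,|J|\ge1}c_{JK}\Big(\pa_x^J g^{\alpha\beta}\,\pa_\alpha\pa_\beta\pa_x^K u+\frac{2\eta}{\tau}\pa_x^J g^{00}\,\pa_\tau\pa_x^K u\Big).
\]
Since $\pa_x^J g^{\alpha\beta}=\pa_x^J h^{\alpha\beta}$ for $|J|\ge1$, every commutator term carries at least one differentiated metric perturbation.

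\textbf{Proof of \eqref{sec:energy-met.eq:com-ests-4}.} We bound the commutator with the Sobolev product lemma, the bootstrap bounds on $\|\pa\pa_x h\|_{H^{N-1}}$, and the bound \eqref{sec:energy-met.eq:source-ests-proof-2} for $\pa_\tau^2 h$. This gives
\[
\|[\pa_x^I,P_\eta]h\|_{L^2}\lesa \|\pa_x h\|_{H^{N-1}}\|\pa\pa h\|_{H^{N-1}}\lesa \tau^{-(3-\delta)-\frac{3}{n}}.
\]
This is better than $\tau^{-(3-\frac{3}{2n}-\delta)}$. Combined with \eqref{sec:energy-met.eq:source-ests-2}, it yields \eqref{sec:energy-met.eq:com-ests-4}.

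\textbf{Proof of \eqref{sec:energy-met.eq:com-ests-1}--\eqref{sec:energy-met.eq:com-ests-3}.} We apply $\pav$ and write
\[
\pav P_\eta(\pa_x^I u)=\pa_x^I\pav(P_\eta u)-[\pa_x^I,\pav]P_\eta u-\pav[\pa_x^I,P_\eta]u .
\]
The three terms are handled as follows.
\begin{itemize}
\item The first term is controlled directly by \eqref{sec:energy-met.eq:source-ests-3}. It produces exactly the stated right-hand sides, including the $\tau^{-1}(E^{\mathbf v})^{1/2}$ couplings that come from $\mc{L}_h^{0i}$ and $\mc{L}_h^{ij}$.
\item The second term is $[\pa_x^I,\pav]=\sum \pa_x^J v^\mu\,\pa_\mu\pa_x^K$ with $|J|\ge1$. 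By \eqref{sec:energy-met.eq:source-ests-1} and $\|\pa_x v\|_{H^{N-1}}\lesa\ve\tau^{-(1+\frac{3}{2n}-\delta)}$ we get
\[
\|[\pa_x^I,\pav]P_\eta u\|_{L^2}\lesa \tau^{-(3+\frac{3}{2n}-\delta)}S_{(\rho,v);N}.
\]
The $\pa_\tau$-component of $\pa_\mu$ is handled the same way, using $\pa_x^J v^0$ from Lemma~\ref{sec:boot:lem.misc-fluid}.
\end{itemize}

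\textbf{Main obstacle: the third term $\pav[\pa_x^I,P_\eta]u$.} This term involves three time derivatives of $h$ in its worst part, $\pa_x^J h\,\pav\pa_\tau^2\pa_x^K h$. We avoid estimating third derivatives naively. Instead we rewrite $\pav\pa\pa\pa_x^K h=\pa\pa(\pav\pa_x^K h)+(\text{terms with }\pa v\cdot\pa^2 h)$. The principal part is then bounded by $\|\pa_x h\|_{L^\infty}$ times $\|\pa\pa\pav h\|_{H^{N-2}}$, which the bootstrap norm $S_{h;N-1}^{\mathbf v}$ controls. Any remaining $\pa_\tau^2$ is removed using the identities \eqref{sec:energy-met.eq:elliptic-proof-2}.

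The factor $\pav\pa_x^J h$ is likewise controlled by $S^{\mathbf v}_{h;N-1}$. Because at least one factor is $\pa_x h\sim\tau^{-(2-\delta)}$ and the other decays like $\tau^{-(2-\delta)}$ or better, these terms are quadratic and decay like $\tau^{-(4-2\delta)}$. This is below $\tau^{-(3+\frac{3}{n})}$ provided $\delta<\frac12-\frac{3}{2n}$, which should follow from \eqref{sec:stability.eq:delta-choice} after possibly shrinking $\delta$; if the margin fails, fall back on the sharper $h^{0\mu}$ rates. The terms with $\pav$ falling on $\pa_x^J g$ are bounded similarly.

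Summing over $|I|\le N-1$ then gives \eqref{sec:energy-met.eq:com-ests-1}--\eqref{sec:energy-met.eq:com-ests-3}.
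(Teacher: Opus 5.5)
Your decomposition $\pav P_\eta(\pa_x^I u)=\pa_x^I\pav(P_\eta u)-[\pa_x^I,\pav]P_\eta u-\pav[\pa_x^I,P_\eta]u$ is sound. So are your proof of \eqref{sec:energy-met.eq:com-ests-4} and your treatment of the first two terms, and you handle $[\pa_x^I,\pav]P_\eta u$ explicitly where the paper leaves it implicit. One small point there: its $\pa_\tau$-component needs $\pa_\tau(P_\eta h)$, which \eqref{sec:energy-met.eq:source-ests-1} does not provide. Recover it from $\pa_\tau F=(v^0)^{-1}(\pav F-v^a\pa_aF)$ and \eqref{sec:energy-met.eq:source-ests-3}.

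The genuine gap is in the third term. Its worst piece is $\pa_x^Jh^{00}\,\pav\pa_\tau^2\pa_x^Kh$, which after your rewriting becomes $\pa_x^Jh^{00}\,\pa_\tau^2(\pav\pa_x^Kh)$. The norm $S^{\mathbf v}_{h;N-1}$ only controls $\|\pa\pav h\|_{H^{N-1}}$, that is, one spacetime derivative of $\pav h$ plus spatial derivatives. It does not control $\pa_\tau^2\pav h$. The identities \eqref{sec:energy-met.eq:elliptic-proof-2} cannot remove this term, because they only trade a $\pa_\tau$ for a $\pav$. Applied to $u=\pav\pa_x^Kh$, they produce $\pa_\tau(\pav\pav\pa_x^Kh)$, and no bootstrap quantity controls two material derivatives; any other arrangement is circular. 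A secondary problem is that commuting $\pav$ through $\pa_\alpha\pa_\beta$ also generates $\pa_\alpha\pa_\beta v^\mu\,\pa_\mu\pa_x^Kh$, which you dropped. This includes $\pa_\tau^2 v$, which is estimated nowhere.

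The missing idea is to control the third time derivative through the equation itself, as the paper does. Expand
\[
\pav(P_\eta h)=v^0g^{00}\pa_\tau^3h+v^ag^{00}\pa_a\pa_\tau^2h+2h^{0a}v^\lambda\pa_\lambda\pa_a\pa_\tau h+g^{ab}v^\lambda\pa_\lambda\pa_a\pa_bh+\pav g^{\alpha\beta}\,\pa_\alpha\pa_\beta h+\dots,
\]
where the omitted terms carry a factor $\tau^{-1}$, and solve for $\pa_\tau^3h$. Bound the remaining terms as follows:
\begin{itemize}
\item $\pav(P_\eta h)$ in $H^{N-2}$ by Lemma~\ref{sec:energy-met.lem:source-ests};
\item $\pa_x\pa_\tau^2h$ by \eqref{sec:energy-met.eq:source-ests-proof-2};
\item $\pa_\tau\pa_x^2h$ by the bootstrap.
\end{itemize}
This gives $\|\pa_\tau^3h\|_{H^{N-2}}\lesa\tau^{-(2-\delta)}$. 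Then let $\pav=v^\lambda\pa_\lambda$ act directly on the products in $[P_\eta,\pa_x^I]h$, without commuting it past $\pa\pa$, so that no second derivatives of $v$ appear. This bounds $\pav([P_\eta,\pa_x^I]h)$ by $\tau^{-(4-2\delta)}$. Your observation that this needs $\delta<\frac12(1-\frac3n)$, possibly after shrinking $\delta$, is correct and applies equally to the paper's proof.
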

\begin{proof}
    In view of Lemma~\ref{sec:energy-met.lem:source-ests}, we are done if we can prove the commutator estimates 
    \begin{subequations}
        \begin{align}
            \label{sec:energy-met.eq:com-ests-proof-1}
            \sum_{\mu,\nu=0}^3\sum_{1\leq |I| \leq N-1}\Big\|\pav\big(\big[P_\eta,\pa_x^I] h^{\mu\nu}\big)\Big\|_{L^2} \lesa \frac{1}{\tau^{4-2\delta}}\,,\\
            \label{sec:energy-met.eq:com-ests-proof-2}
            \sum_{\mu,\nu=0}^3\sum_{1\leq |I| \leq N}\Big\|\big[P_\eta,\pa_x^I] h^{\mu\nu}\Big\|_{L^2} \lesa \frac{1}{\tau^{4-2\delta}}\,,
        \end{align}
    \end{subequations}
     since $\delta \leq \frac{1}{3}\big(1-\frac{3}{n}\big)$ implies $3+\frac{3}{n} < 4-2\delta$. Note that these estimates hold for all $\eta \in \RR$, and for all $h^{\mu\nu}$,
    Repeating the argument from \protect{\cite[Lemma 7.4]{Ber:decelerated:25}}, we write out the commutator $[P_\eta,\pa_x^I]$ in full, which gives
    \begin{equation*}
        [P_\eta,\pa_x^I]h^{\mu\nu} = \sum_{1\leq J \leq I}\sum_{\alpha,\beta=0}^3 c_{I,J}^{(1)}(\pa_x^J h^{\alpha\beta}) (\pa_\alpha\pa_\beta \pa_x^{I-J} h^{\mu\nu}\big)+ \frac{1}{\tau}\sum_{1\leq J \leq I}c_{I,J}^{(2)}(\pa_x^J h^{00})(\pa_\tau \pa_x^{I-J} h^{\mu\nu})\,,
    \end{equation*}
    with constants $c_{I,J}^{(1)}$, $c_{I,J}^{(2)}$ computed via the Leibniz rule. For each of the products
    \begin{equation*}
        (\pa_x^J h^{\alpha\beta})(\pa_\alpha \pa_\beta \pa_x^{I-J}h^{\mu\nu})\,,\qquad (\pa_x^J h^{00})( \pa_\tau \pa_x^{I-J}h^{\mu\nu})\,,
    \end{equation*}
    if we assume that $|I| \leq N$, $|J| \leq N-1$ then at least one factor must contain $\lfloor (N+2) /2 \rfloor$ or fewer derivatives. We estimate this factor in $L^\infty$, and the other in $L^2$. Since $N \geq 3$, we have $\lfloor (N+2)/2 \rfloor \leq N-1$, and so all resulting terms are controlled by the total energy. This argument yields the bound
    \begin{multline}\label{sec:energy-met.eq:com-ests-proof-3}
        \sum_{1\leq |I| \leq N}\Big\|\big[P_\eta,\pa_x^I] h^{\mu\nu}\Big\|_{L^2} \lesa \big\|\pa_x h\big\|_{H^N}\Big(\sum_{\alpha,\beta=0}^3\big\|\pa_\alpha\pa_\beta h^{\mu\nu}\big\|_{H^{N-1}} + \frac{1}{\tau}\big\|\pa_\tau h^{\mu\nu}\big\|_{H^{N}}\Big)\\
        \lesa \frac{1}{\tau^{2-\delta}}\big\|\pa_\tau^2 h^{\mu\nu}\big\|_{H^{N-1}} + \frac{1}{\tau^{4-2\delta}}\,,
    \end{multline}
    where in the second inequality we separated out the second-order time derivatives of $h^{\mu\nu}$, and applied the bootstrap bounds to all other terms. We estimate $\|\pa_\tau^2 h^{\mu\nu}\|$ using the bound \eqref{sec:energy-met.eq:source-ests-proof-2} that was stated in the proof of the previous lemma, giving
    \begin{equation*}
        \|\pa_\tau^2 h^{\mu\nu}\|_{H^{N-1}} \leq \frac{C}{\tau^2}\,.
    \end{equation*}
    Plugging this into \eqref{sec:energy-met.eq:com-ests-proof-3}, we obtain the estimates \eqref{sec:energy-met.eq:com-ests-proof-2}. 
    
    For the estimates \eqref{sec:energy-met.eq:com-ests-proof-1}, we have the expression
    \begin{multline*}
        \pav\big([P_\eta,\pa_x^I]h^{\mu\nu}\big) = \sum_{1\leq J \leq I}\sum_{\alpha,\beta=0}^3 c_{I,J}^{(1)}\Big[\big(v^\lambda\pa_\lambda\pa_x^J h^{\alpha\beta}\big) \big(\pa_\alpha\pa_\beta \pa_x^{I-J} h^{\mu\nu}\big) + (\pa_x^J h^{\alpha\beta}) \big(v^\lambda \pa_\lambda\pa_\alpha\pa_\beta \pa_x^{I-J} h^{\mu\nu}\big)\Big]\\+ \frac{1}{\tau}\sum_{1\leq J \leq I} c_{I,J}^{(2)}\Big[-\frac{1}{\tau}(\pa_x^J h^{00})\big(\pa_\tau \pa_x^{I-J} h^{\mu\nu}\big) + \big(v^\lambda\pa_\lambda\pa_x^J h^{00}\big)\big(\pa_\tau \pa_x^{I-J} h^{\mu\nu}\big) + (\pa_x^J h^{00})\big(v^\lambda\pa_\lambda\pa_\tau \pa_x^{I-J} h^{\mu\nu}\big)\Big]\,.
    \end{multline*}
    This arises from distributing at most $\partial^N$ derivatives. Since $1 \leq |J| \leq |I| \leq N-1$, each product has a factor with $\lfloor (N+2)/2 \rfloor$ or fewer derivatives of $h^{\mu\nu}$, and so we estimate these as before with the Sobolev embedding, to arrive at the bound 
    \begin{multline}\label{sec:energy-met.eq:com-ests-proof-4}
        \sum_{1\leq |I| \leq N - 1}\Big\|\pav\big(\big[P_\eta,\pa_x^I] h^{\mu\nu}\big)\Big\|_{L^2}\\\lesa \|\pa_x h\|_{H^N}\Big(\sum_{\alpha,\beta,\lambda=0}^3\big\|\pa_\alpha\pa_\beta\pa_\lambda h^{\mu\nu}\big\|_{H^{N-2}} + \frac{1}{\tau}\sum_{\lambda=0}^3 \big\|\pa_\lambda \pa_\tau h^{\mu\nu}\big\|_{H^{N-1}} + \frac{1}{\tau^2}\|\pa_\tau h^{\mu\nu}\|_{H^{N}}\Big)\\
        + \|\pa\pa_x h\|_{H^{N-1}}\big(\sum_{\alpha,\beta=0}^3\big\|\pa_\alpha\pa_\beta h^{\mu\nu}\big\|_{H^{N-1}} + \frac{1}{\tau}\big\|\pa_\tau h^{\mu\nu}\big\|_{H^{N}}\Big)\\
        \lesa \frac{1}{\tau^{2-\delta}}\Big(\big\|\pa_\tau^3 h^{\mu\nu}\big\|_{H^{N-2}} + \|\pa_\tau^2 h^{\mu\nu}\big\|_{H^{N-1}}\Big) + \frac{1}{\tau^{4-2\delta}}\,.
    \end{multline}
    Now we have separated out the second and third time derivatives. We bound the second time derivatives using \eqref{sec:energy-met.eq:source-ests-proof-2}. To estimate the third time derivatives, we rearranging the identity
    \begin{multline*}
        \pav \big(P_{\kappa-2}h^{00}\big) = v^0 g^{00}\pa_\tau^3 h^{00} + v^a g^{00}\pa_a \pa_\tau^2 h^{00}+ 2h^{0a}v^\lambda \pa_a \pa_\lambda\pa_\tau h^{00} + g^{ab}v^\lambda\pa_a\pa_b \pa_\lambda h^{00}\\
        + \pav h^{\alpha\beta}\pa_\alpha \pa_\beta h^{00} + \frac{2(\kappa-2)}{\tau}\pav h^{00}\pa_\tau h^{00} -\frac{2(\kappa-2)}{\tau^2}v^0g^{00}\pa_\tau h^{00} + \frac{2(\kappa-2)}{\tau}g^{00}v^\lambda\pa_\lambda\pa_\tau h^{00}\,,
    \end{multline*}
    and using Lemma~\ref{sec:energy-met.lem:source-ests} to estimate $\pav(P_{\kappa-2} h^{00})$, and \eqref{sec:energy-met.eq:source-ests-proof-2} to estimate $\pa_x \pa_\tau^2 h^{00}$, we obtain
    \begin{equation*}
        \|\pa_\tau^3 h^{00}\|_{H^{N-2}} \lesa \big\|\pav (P_{\kappa-2} h^{00})\big\|_{H^{N-2}} + \frac{1}{\tau^{2-\delta}} \lesa \frac{1}{\tau^{2-\delta}}\,.
    \end{equation*}
    Combining this with \eqref{sec:energy-met.eq:com-ests-proof-4} gives the bound \eqref{sec:energy-met.eq:com-ests-proof-1} for $h^{00}$; the estimates for $h^{0i}$, $h^{ij}$ follow by an identical argument.
\end{proof}
\section{Energy estimates for the fluid}\label{sec:energy-fluid}
Next, we prove energy estimates for higher-order derivatives of the fluid variables $(\rho,v)$. The main estimates are those in Proposition~\ref{sec:energy-fluid.prop:corrected-energy} and Proposition~\ref{sec:energy-fluid.prop:higher-energy}.
In Proposition~\ref{sec:energy-fluid.prop:corrected-energy}, we prove an estimate for the fundamental energy which we use to estimate the fluid variables. This energy builds on the canonical $L^2$-energy for the Euler equations, but with two important modifications. Like we did for the metric energy, we include certain lower-order correction terms which create coercive bulk terms in the energy estimate, leading to decay. We also commute the canonical energy with spatial derivatives, which allows us to employ an important div-curl identity to precisely control the resulting damping terms.

In Proposition~\ref{sec:energy-fluid.prop:higher-energy}, we apply the energy estimate derived in Proposition~\ref{sec:energy-fluid.prop:corrected-energy} to derivatives of the fluid variables. Similar to the gauge-fixed Einstein equations \eqref{sec:gauged-eqs.eq:wave}, the Euler equations \eqref{sec:gauged-eqs.eq:fluid} contain various source terms which decay at a critical rate. Here too we must keep careful track of these terms in our analysis of the higher-order fluid energy.

\begin{definition}[Fluid operators $Q, R^\nu$ and their weighted norms]\label{def:fluid-ops-QR}
    We define the following first-order differential operators acting on a scalar function $\varphi$ and vectorfield $\xi$:
\begin{align*}
	Q(\varphi,\xi) &:= v^\mu \pa_\mu \varphi + \chi \pa_\mu \xi^\mu\,,\\
	R^\nu(\varphi,\xi) &:= v^\mu \pa_\mu \xi^\nu + \frac{p'(\rho)}{\tau^{\frac{6}{n}}\chi}\Pi^{\mu\nu}\pa_\mu \varphi + \frac{2}{\tau}v^0 \xi^\nu\,.
\end{align*}
For later use, we also introduce certain norms of $Q, R$
    \begin{align*}
        f_{Q,R}[\varphi,\xi]&:= \big\|Q(\varphi,\xi)\big\|_{L^2} + \tau^{\frac{3}{n}}\sum_{i=1}^3 \big\|R^i(\varphi,\xi)\big\|_{L^2}\,, \\
        f_{Q,R;1}[\varphi,\xi](\tau)&:= \big\|\pa_x \big(Q(\varphi,\xi)\big)\big\|_{L^2} + \tau^{\frac{3}{n}}\sum_{\mu=0}^3\big\|\pa_x \big(R(\varphi,\xi)\big)\big\|_{L^2} + f_{Q,R}[\varphi,\xi](\tau)\,.
    \end{align*}
\end{definition}

Using the above definition, the Euler equations \eqref{sec:gauged-eqs.eq:fluid} can be expressed simply as
\begin{align*}
	Q(\rho,v) = \mc{L}_{(\rho,v)} + \mc{G}_{(\rho,v)}\,,\\
	R^i(\rho,v) = \mc{L}_{(\rho,v)}^i + \mc{G}_{(\rho,v)}^i\,.
\end{align*}
\subsection{Identities for the operators \texorpdfstring{$Q$, $R$}{Q, R}}
We briefly collect several results regarding the differential operators $Q$, $R$ which will be useful energy estimates that follow. The first shows that $R^0(\varphi,\xi)$ can be expressed purely in terms of the $R^i(\varphi,\xi)$ and lower-order differential operators.
\begin{lemma}\label{sec:energy-fluid.lem:R0-identity}
    Let $\varphi$ be a scalar function, and $\xi$ a vectorfield satisfying $g(\xi,v) = 0$, such that $\varphi$ and the components of $\xi$ have vanishing spatial average. Then $R^0(\varphi,\xi)$ satisfies the identity
    \begin{equation}\label{sec:energy-fluid.eq:R0-identity}
        R^0(\varphi,\xi) = -\frac{v_a}{v_0}R^a(\varphi,\xi) - v^\mu \pa_\mu \big(\frac{v_a}{v_0}\Big)\xi^a\,,
    \end{equation}
    moreover for all $k\in \{0,\dots,N-1\}$, we have
    \begin{equation}\label{sec:fluid-est.eq:R0-norm}
        \big\|R^0(\varphi,\xi)\big\|_{H^k} \lesa \frac{1}{\tau^{1+\frac{3}{2n}-\delta}}\sum_{i=1}^3\|R^i(\varphi,\xi)\|_{H^k} + \frac{1}{\tau^{1+\frac{9}{2n}-\delta}}\|\pa_x\xi\|_{H^k}\,.
    \end{equation}
\end{lemma}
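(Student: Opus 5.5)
The identity comes from differentiating the orthogonality constraint $g(\xi,v)=v_\mu\xi^\mu=0$ along the flow, and the norm bound then follows from the bootstrap decay of $v_a$ and $\pav v$. Throughout we write $v_\mu = g_{\mu\alpha}v^\alpha$.

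\textbf{Step 1: rewrite the constraint.} Since $v_0$ is close to $-1$ (by Lemma~\ref{sec:boot:lem.misc-fluid} and Lemma~\ref{sec:stability.lem:lowered-met}), we may divide by it and write the constraint as
\begin{equation*}
\xi^0 = -\frac{v_a}{v_0}\xi^a .
\end{equation*}

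\textbf{Step 2: apply $R^0$.} Applying $v^\mu\pa_\mu$ to this relation gives
\begin{equation*}
v^\mu\pa_\mu\xi^0 = -\frac{v_a}{v_0}v^\mu\pa_\mu\xi^a - v^\mu\pa_\mu\Big(\frac{v_a}{v_0}\Big)\xi^a .
\end{equation*}
The damping term is handled directly by the constraint: $\frac{2}{\tau}v^0\xi^0 = -\frac{v_a}{v_0}\frac{2}{\tau}v^0\xi^a$.

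The pressure term requires the projector identity. Because $v_\nu\Pi^{\mu\nu} = v_\nu v^\mu v^\nu + v^\mu = 0$, we have
\begin{equation*}
v_0\Pi^{\mu 0} = -v_a\Pi^{\mu a}, \qquad\text{so}\qquad \Pi^{\mu0}\pa_\mu\varphi = -\frac{v_a}{v_0}\Pi^{\mu a}\pa_\mu\varphi .
\end{equation*}
Summing the three pieces with the coefficients appearing in the definition of $R^\nu$ yields \eqref{sec:energy-fluid.eq:R0-identity}.

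\textbf{Step 3: the norm bound.} We take $H^k$ norms of \eqref{sec:energy-fluid.eq:R0-identity} using the Sobolev product estimate.

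For the first term, $\|v_a/v_0\|_{W^{k,\infty}}\lesa \ve\,\tau^{-(1+\frac{3}{2n}-\delta)}$. This follows from $v_a$ being bounded pointwise by $|v^i|+|h^{0i}|$ via \eqref{sec:boot:eq.misc-fluid-4}, together with three ingredients: the average bound \eqref{sec:boot:eq.misc-fluid-5}, the derivative bound \eqref{sec:boot:eq.misc-fluid-6}, and Poincaré plus Sobolev embedding. The slowest of these rates is $\tau^{-(1+\frac{3}{2n}-\delta)}$. This gives the first term of \eqref{sec:fluid-est.eq:R0-norm}.

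For the second term, we must bound $\|\pav(v_a/v_0)\|_{W^{k,\infty}}$ by $\tau^{-(1+\frac{9}{2n}-\delta)}$. Then, since $\xi^a$ has zero average, Poincaré gives $\|\xi\|_{H^k}\lesa\|\pa_x\xi\|_{H^k}$ and the product estimate closes the bound.

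\textbf{Main obstacle.} The main obstacle is bounding $\pav(v_a/v_0)$. The expansion
\begin{equation*}
\pav v_a = g_{a\alpha}\pav v^\alpha + v^\alpha\pav g_{a\alpha}
\end{equation*}
contains $\pa_\tau v$, which Proposition~\ref{sec:stability.prop:fluid-time-deriv} controls at the rate $\tau^{-(1+\frac{9}{2n}-\delta)}$. It also contains $v^a\pa_a v$, which decays even faster. The delicate part is $v^\alpha\pav g_{a\alpha}$, which involves $\pa_\tau g_{0a}$. I plan to control it via Lemma~\ref{sec:stability.lem:lowered-met}: this lemma gives $\tau^{-(2+\frac{3}{n})}$ decay for the average of $\pa_\tau g_{0a}$, and the bootstrap gives $\tau^{-(2+\frac{3}{2n}-\delta)}$ decay for its oscillatory part. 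Both rates are faster than needed.

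The derivative of $v_0$ is handled the same way using \eqref{sec:boot:eq.misc-fluid-2}. A possible catch is the $\pa_\tau g_{00}$ contributions. These appear multiplied by $v_a$ and are therefore doubly small, so they should decay faster than required.
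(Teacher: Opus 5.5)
Your proposal is correct and follows essentially the same route as the paper: solve $g(\xi,v)=0$ for $\xi^0$, substitute into $R^0$, use $v_\alpha\Pi^{\alpha\mu}=0$ for the pressure term, and then bound the coefficients $v_a/v_0$ and $\pav(v_a/v_0)$ using the bootstrap, Proposition~\ref{sec:stability.prop:fluid-time-deriv}, and Poincar\'e for $\xi$. One bookkeeping fix: for $k$ close to $N-1$ the bootstrap does not give $W^{k,\infty}$ control of these coefficients, so instead measure $v_a/v_0$ in $H^N$ and $\pav(v_a/v_0)$ in $H^{N-1}$, as the paper does, and use the product bound $\|fg\|_{H^k}\lesa\|f\|_{H^{N-1}}\|g\|_{H^k}$, which holds for $k\leq N-1$ since $N-1\geq 2$.
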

\begin{proof}
    We rearrange $g(\xi,v) = 0$, and solve for $\xi^0$ to obtain
    \begin{equation*}
        \xi^0 = -\frac{1}{v_0}v_a \xi^a,
    \end{equation*}
    where $v_\mu = g_{\mu\alpha} v^\alpha$. It follows that
    \begin{equation*}
        R^0(\varphi,\xi) = - \frac{v_a}{v_0} v^\mu \pa_\mu \xi^a + \frac{p'(\rho)}{\tau^{\frac{6}{n}}\chi}\Pi^{0\mu}\pa_\mu \varphi -\frac{2}{\tau}\frac{v_a}{v_0}v^0\xi^a -v^\mu \pa_\mu \Big(\frac{v_a}{v_0}\Big)  \xi^a.
    \end{equation*}
    Then we have $v_\alpha \Pi^{\alpha \mu} = 0$, which implies
    \begin{equation*}
        \Pi^{0\mu}\pa_\mu \varphi = -\frac{v_a}{v_0}\Pi^{a\mu}\pa_\mu \varphi\,,
    \end{equation*}
    Inserting this into the previous equation yields the identity \eqref{sec:energy-fluid.lem:R0-identity}. Using Proposition~\ref{sec:stability.prop:fluid-time-deriv} to estimate the time derivative of $v$ as well as the Poincar\'e inequality, we have the norm bound
    \begin{multline*}
        \|R^0(\varphi,\xi)\|_{H^k} \lesa \|v\|_{H^N}\sum_{i=1}^3\|R^i(\varphi,\xi)\|_{H^k}\\+\sum_{a=1}^3\Big(\|\pa_\tau (v_a/v_0)\|_{H^{N-1}} + \|v\|_{H^N}\|\pa_x(v_a/v_0)\|_{H^{N-1}}\Big)\|\xi\|_{H^k}\\\lesa \frac{1}{\tau^{1+\frac{3}{2n}-\delta}}\sum_{i=1}^3\|R^i(\varphi,\xi)\|_{H^k} + \frac{1}{\tau^{1+\frac{9}{2n}-\delta}}\|\pa_x\xi\|_{H^k}\,,
    \end{multline*}
    which is \eqref{sec:fluid-est.eq:R0-norm}.
\end{proof}

The next Lemma allows us to rewrite time derivatives of $\varphi$ and $\xi^i$ in terms of $Q(\varphi,\xi)$, $R^i(\varphi,\xi)$.
\begin{lemma}\label{sec:energy-fluid.lem:time-deriv}
    Let $\varphi$ be a scalar function, and $\xi$ a vectorfield satisfying $g(\xi,v) = 0$, such that $\varphi$ and the components of $\xi$ have vanishing spatial average. Then the quantities $\pa_\tau \varphi$, $\pa_\tau \xi$ satisfy the identities
    \begin{subequations}
        \begin{align}
            \label{sec:energy-fluid.eq:time-deriv-1}
            \pa_\tau \varphi &=\frac{1}{(v^0)^2-\tau^{-\frac{6}{n}}p'(\rho)\Pi^{00}}\Big(v^0\mc{A}-\chi\mc{B}^0\Big)\,,\\
            \label{sec:energy-fluid.eq:time-deriv-2}
            \pa_\tau  \xi^i &= \frac{1}{v^0}\mc{B}^i -\frac{1}{(v^0)^2-\tau^{-\frac{6}{n}}p'(\rho)\Pi^{00}}\frac{p'(\rho)}{\tau^{\frac{6}{n}}\chi v^0}\Pi^{0i}\Big(v^0\mc{A}-\chi\mc{B}^0\Big)\,,
        \end{align}
    \end{subequations}
    where $\mc{A}$, $\mc{B}^\mu$, are defined by
    \begin{align*}
        \mc{A}&:=  Q(\varphi,\xi)-v^a \pa_a \varphi - \chi \pa_a \xi^a\,,\\
        \mc{B}^\mu &:= R^\mu(\varphi,\xi) - v^a\pa_a \xi^\mu - \frac{p'(\rho)}{\tau^{\frac{6}{n}}\chi}\Pi^{\mu a}\pa_a \varphi -\frac2\tau v^0 \xi^\mu\,.
    \end{align*}
    Moreover, the following norm bounds hold for $k \in \{0,\dots,N-1\}$:
    \begin{subequations}\label{sec:energy-fluid.eq:time-deriv-3}
        \begin{align}
            \|\pa_\tau \varphi\|_{H^k} \lesa&\, \|Q(\varphi,\xi)\|_{H^k} + \frac{1}{\tau^{1+\frac{3}{2n}-\delta}}\sum_{i=1}^3\|R^i(\varphi,\xi)\|_{H^k} + \frac{1}{\tau^{1+\frac{3}{2n}-\delta}}\|\pa_x \varphi\|_{H^k}+\|\pa_x \xi\|_{H^k}\,,\\
            \sum_{i=1}^3\|\pa_\tau \xi^i\|_{H^k} \lesa&\, \frac{1}{\tau^{1+\frac{9}{2n}-\delta}}\|Q(\varphi,\xi)\|_{H^k} + \sum_{i=1}^3\|R^i(\varphi,\xi)\|_{H^k}+ \frac{1}{\tau^{\frac{6}{n}}}\|\pa_x \varphi\|_{H^k}+\frac{1}{\tau}\|\pa_x \xi\|_{H^k}\,.
        \end{align}
    \end{subequations}
\end{lemma}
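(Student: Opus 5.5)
The plan is to treat the definitions of $Q$ and $R^\mu$ as a $2\times 2$ linear system for $(\pa_\tau\varphi,\pa_\tau\xi^0)$ (after isolating $\pa_\tau$-terms), solve it by Cramer's rule, and then back-substitute into the $R^i$ equations to obtain $\pa_\tau\xi^i$. Concretely, moving all spatial derivatives and the damping term to the other side gives
\begin{equation*}
v^0\pa_\tau\varphi+\chi\pa_\tau\xi^0=\mc{A},\qquad v^0\pa_\tau\xi^\mu+\frac{p'(\rho)}{\tau^{\frac6n}\chi}\Pi^{0\mu}\pa_\tau\varphi=\mc{B}^\mu .
\end{equation*}
Taking $\mu=0$ and solving, we multiply the first equation by $v^0$, the second by $\chi$, and subtract. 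This yields $\big((v^0)^2-\tau^{-\frac6n}p'(\rho)\Pi^{00}\big)\pa_\tau\varphi=v^0\mc{A}-\chi\mc{B}^0$, which is \eqref{sec:energy-fluid.eq:time-deriv-1}. The coefficient is bounded below by roughly $1/2$, since $(v^0)^2\approx 1$ by Lemma~\ref{sec:boot:lem.misc-fluid} and $\Pi^{00}=(v^0)^2+g^{00}$ is quadratically small. Inserting this into the $\mu=i$ equation and dividing by $v^0$ gives \eqref{sec:energy-fluid.eq:time-deriv-2}. The orthogonality $g(\xi,v)=0$ is not needed for the identities themselves; it is needed only for the estimates.

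For the norm bounds, I will take $H^k$ norms with $k\le N-1$, using the Sobolev product lemma. The coefficients $v^0$, $\chi$, $p'(\rho)$ and the inverse determinant lie in $H^N$ with bounded norm, so multiplication by them costs only constants. In $\mc{A}$, the term $v^a\pa_a\varphi$ is bounded by $\|v\|_{W^{k,\infty}}\|\pa_x\varphi\|_{H^k}\lesa\tau^{-(1+\frac{3}{2n}-\delta)}\|\pa_x\varphi\|_{H^k}$, since $\|v\|$ is controlled by the average $\tau^{-(1+3/n)}$ plus the oscillation $\tau^{-(1+\frac{3}{2n}-\delta)}$. The term $\chi\pa_a\xi^a$ gives $\|\pa_x\xi\|_{H^k}$.

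The delicate piece is $\chi\mc{B}^0$, because the weights must match the stated ones. Here I invoke Lemma~\ref{sec:energy-fluid.lem:R0-identity} to replace $R^0$ by $-\frac{v_a}{v_0}R^a$ plus $\pav(v_a/v_0)\xi^a$. This produces the factor $\tau^{-(1+\frac{3}{2n}-\delta)}\sum_i\|R^i\|_{H^k}$ and a term bounded by $\tau^{-(1+\frac{9}{2n}-\delta)}\|\pa_x\xi\|_{H^k}$, where Poincaré is used on the zero-average $\xi$. The remaining terms of $\mc{B}^0$ are handled as follows. The term $v^a\pa_a\xi^0$ is treated by writing $\xi^0=-v_a\xi^a/v_0$, which makes it small. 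The pressure term $\tau^{-\frac6n}\Pi^{0a}\pa_a\varphi$ is bounded by $\tau^{-\frac6n}\|\Pi^{0a}\|_{L^\infty}\|\pa_x\varphi\|$, and $\Pi^{0a}=v^0v^a+g^{0a}$ is small, giving at least the stated $\tau^{-(1+\frac{3}{2n}-\delta)}$. The damping term $\frac2\tau v^0\xi^0$ is again small via orthogonality and Poincaré. Collecting these bounds gives the first estimate.

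For the second estimate, $\frac{1}{v^0}\mc{B}^i$ contributes $\|R^i\|_{H^k}$ directly. It also contributes $v^a\pa_a\xi^i$ and $\frac2\tau\xi^i$, both bounded by $\tau^{-1}\|\pa_x\xi\|_{H^k}$ via Poincaré, and the pressure term $\tau^{-\frac6n}\Pi^{ia}\pa_a\varphi$, bounded by $\tau^{-\frac6n}\|\pa_x\varphi\|_{H^k}$. The correction term carries the prefactor $\tau^{-\frac6n}\Pi^{0i}$, which is of size $\tau^{-\frac6n}\tau^{-(1+\frac{3}{2n}-\delta)}$ in $W^{k,\infty}$. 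Multiplying this by the first estimate gives $\tau^{-(1+\frac{15}{2n}-\delta)}\|Q\|\le\tau^{-(1+\frac{9}{2n}-\delta)}\|Q\|$, and the other contributions are absorbed into the terms already listed.

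The main obstacle is the bookkeeping of decay weights, particularly in $\mc{B}^0$. One must use the orthogonality constraint through Lemma~\ref{sec:energy-fluid.lem:R0-identity} rather than estimating $\|R^0\|$ crudely, since the crude bound would lose the $\tau^{-(1+\frac{3}{2n}-\delta)}$ gain in front of $\|R^i\|$.
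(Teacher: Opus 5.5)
This is correct and is essentially the paper's proof: the same $2\times 2$ system for $(\pa_\tau\varphi,\pa_\tau\xi^0)$ solved by elimination, back-substitution into the $\mu=i$ equations, and Lemma~\ref{sec:energy-fluid.lem:R0-identity} together with $\xi^0=-\frac{v_a}{v_0}\xi^a$ and Poincar\'e to control $\mc{B}^0$. Your weight bookkeeping is slightly more careful than the paper's intermediate bound on $\|\mc{B}^0\|_{H^k}$: you keep the $\tau^{-(1+\frac{9}{2n}-\delta)}\|\pa_x\xi\|_{H^k}$ term that $R^0$ contributes, which is harmlessly absorbed into $\|\pa_x\xi\|_{H^k}$. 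One small fix: for $k=N-1$, replace $\|v\|_{W^{k,\infty}}$ by $\|v\|_{H^N}$ via the $H^N\times H^k$ product estimate, which has the same decay.
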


\begin{proof}
    We rearrange the expressions defining $Q(\varphi,\xi)$ and $R^0(\varphi,\xi)$ to find that the time derivatives of $\varphi$ and $\xi^\mu$ satisfy the system
    \begin{align*}
        v^0\pa_\tau \varphi + \chi \pa_\tau \xi^0 &= \mc{A}\,,\\
        \frac{p'(\rho)}{\tau^{\frac{6}{n}}\chi} \Pi^{0\mu}\pa_\tau \varphi + v^0 \pa_\tau \xi^\mu &= \mc{B}^\mu\,.
    \end{align*}
    The equations for $\pa_\tau \varphi$ and $\pa_\tau \xi^0$ decouple, and solving these first give
    \begin{align*}
        \pa_\tau \varphi =&\frac{1}{(v^0)^2 - \tau^{-\frac{6}{n}}p'(\rho)\Pi^{00}}\Big(v^0 \mc{A}-\chi \mc{B}^0\Big)\,,\\
        \pa_\tau \xi^0 =&\frac{1}{(v^0)^2 - \tau^{-\frac{6}{n}}p'(\rho)\Pi^{00}}\Big(v^0\mc{B}^0 - \frac{p'(\rho)}{\tau^{\frac{6}{n}}\chi}\Pi^{00}\mc{A}\Big)\,,
    \end{align*}
    the first of which is \eqref{sec:energy-fluid.eq:time-deriv-1}. Then we substitute this into the remaining equations for the $\xi^i$ to obtain \eqref{sec:energy-fluid.eq:time-deriv-2}. Under the assumption $g(\xi,v) = 0$, we can express $\mc{B}^0$ purely in terms of $\varphi$ and the spatial components of $\xi$. Using Lemma~\ref{sec:energy-fluid.lem:R0-identity} to substitute out $R^0$, and the identity $\xi^0 = -\frac{v_a}{v_0}\xi^a$, we have
    \begin{equation*}
        \mc{B}^0 = -\frac{v_a}{v_0}R^a(\varphi,\xi) + \frac{v_a v^b}{v_0} \pa_b \xi^a - \frac{p'(\rho)}{\tau^{\frac{6}{n}}\chi}\Pi^{0a}\pa_a \varphi - v^0 \pa_\tau\Big(\frac{v_a}{v_0}\big)\xi^a + \frac2\tau v^0 \frac{v_a}{v_0} \xi^a.\,.
    \end{equation*}
    
    The norm bounds \eqref{sec:energy-fluid.eq:time-deriv-3} follow from
    \begin{align*}
        \|\mc{A}\|_{H^k} &\lesa \|Q(\varphi,\xi)\|_{H^k} + \frac{1}{\tau^{1+\frac{3}{2n}-\delta}}\|\pa_x \varphi\|_{H^k} + \|\pa_x \xi\|_{H^k}\,,\\
        \|\mc{B}^0\|_{H^k} &\lesa \frac{1}{\tau^{1+\frac{3}{2n}-\delta}}\Big(\sum_{i=1}^3\|R^i(\varphi,\xi)\|_{H^k} + \frac{1}{\tau}\|\pa_x\xi\|_{H^k} + \frac{1}{\tau^{\frac{6}{n}}}\|\pa_x \varphi\|_{H^k}\Big)\,,\\
        \sum_{i=1}^3\|\mc{B}^i\|_{H^k} &\lesa \sum_{i=1}^3\|R^i(\varphi,\xi)\|_{H^k} + \frac{1}{\tau}\|\pa_x\xi\|_{H^k} + \frac{1}{\tau^{\frac{6}{n}}}\|\pa_x \varphi\|_{H^k}\,,\
    \end{align*}
    where we used the Poincar\'e inequality in the second and third bounds, recalling also that $\varphi$ and the components of $\xi$ have vanishing spatial average.
\end{proof}
\subsection{The base fluid energy}
\begin{definition}[Base fluid energy and norm of $\xi$]\label{def:base-fluid-energy}
   For a given scalar function $\varphi$ and vectorfield $\xi$, we define the base \textit{fluid energy}
\begin{equation}
	E_{f}[\varphi,\xi](\tau):= \frac{1}{2}\int_{\TT^3} p'(\rho) v^0 \varphi^2 + 2\chi p'(\rho) \varphi \xi^0 +  \tau^{\frac{6}{n}} \chi^2 v^0 g_{\alpha\beta} \xi^\alpha \xi^\beta,
\end{equation} 
where we recall that $p'(\rho) := \frac{\de p}{\de \rho} = C(1+\frac{1}{n})\rho^{\frac{1}{n}}$. Similar to $v^i$, we define $\|\xi\|_{L^2}^2 = \sum_{i=1}^3\|\xi^i\|_{L^2}^2$. 
\end{definition}

We show that this energy controls certain time-weighted norms of $\varphi$, $\xi$.
\begin{lemma}\label{sec:energy-fluid.lem:coercivity}
	Let $\varphi$ be a scalar function, and $\xi$ a vectorfield satisfying $g(\xi,v) = 0$. Then we have the equivalence
	\begin{equation}
		E_{f}[\varphi,\xi] \sim \|\varphi\|_{L^2}^2 + \tau^{\frac{6}{n}}\|\xi\|_{L^2}^2\,.
	\end{equation}
\end{lemma}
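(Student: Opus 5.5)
The plan is to eliminate $\xi^0$ with the orthogonality constraint and then show that the resulting quadratic form in $(\varphi,\xi^1,\xi^2,\xi^3)$ is uniformly positive definite, with the $\tau^{6/n}$ weight sitting only on the $\xi$-block. From $g(\xi,v)=0$ we have, exactly as in the proof of Lemma~\ref{sec:energy-fluid.lem:R0-identity},
\begin{equation*}
\xi^0=-\frac{v_a}{v_0}\xi^a .
\end{equation*}
By Lemma~\ref{sec:boot:lem.misc-fluid} together with the bootstrap assumptions and Sobolev embedding, $|v_a|\lesssim \ve\tau^{-(1+\frac{3}{n})}+\ve\tau^{-(1+\frac{3}{2n}-\delta)}$ pointwise, while $|v_0|\sim 1$. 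Hence $|\xi^0|\lesssim \ve\tau^{-1}|\xi|$ pointwise.

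\textbf{Step 1: the $\xi$-block.} Expand the spatial part of the form:
\begin{equation*}
g_{\alpha\beta}\xi^\alpha\xi^\beta = g_{ab}\xi^a\xi^b + 2g_{0a}\xi^0\xi^a + g_{00}(\xi^0)^2 .
\end{equation*}
By Lemma~\ref{sec:stability.lem:lowered-met}, $|g_{ab}-\delta_{ab}|\lesssim\ve$, $|g_{0a}|\lesssim\ve$ and $|g_{00}|\sim1$. Combined with the bound on $\xi^0$, this gives
\begin{equation*}
g_{\alpha\beta}\xi^\alpha\xi^\beta = |\xi|^2\bigl(1+O(\ve)\bigr).
\end{equation*}
Since $v^0\sim 1$ and $\chi = \rho+\tau^{-6/n}p \sim 12$, because $\rho$ is close to $12$ in $L^\infty$, the third term of the integrand is comparable to $\tau^{6/n}|\xi|^2$. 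Likewise $p'(\rho)\sim 1$, so the first term is comparable to $\varphi^2$.

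\textbf{Step 2: the cross term.} This is the only genuine obstacle. The term $2\chi p'(\rho)\varphi\xi^0$ is not weighted by $\tau^{6/n}$, so it must be absorbed without losing that weight. Substituting for $\xi^0$ gives
\begin{equation*}
|\chi p'(\rho)\varphi\xi^0|\lesssim \ve\tau^{-1}|\varphi||\xi| .
\end{equation*}
By Young's inequality this is at most
\begin{equation*}
\ve\bigl(\varphi^2+\tau^{-2}|\xi|^2\bigr)\le \ve\bigl(\varphi^2+\tau^{6/n}|\xi|^2\bigr),
\end{equation*}
using $\tau\ge\tau_0\ge1$. For $\ve$ small this is absorbed into the diagonal terms from Step 1, giving pointwise equivalence of the integrand with $\varphi^2+\tau^{6/n}|\xi|^2$. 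Integrating over $\TT^3$ yields the claimed equivalence
\begin{equation*}
E_f[\varphi,\xi]\sim\|\varphi\|_{L^2}^2+\tau^{6/n}\|\xi\|_{L^2}^2 .
\end{equation*}

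The one point to check carefully is that the decay of $v_a$ genuinely beats the missing weight, which requires $\tau^{-1}\lesssim\tau^{3/n}$. This holds trivially here. Even without any decay of $v_a$, smallness of $\ve$ alone would still close the absorption, since the $\xi$-weight $\tau^{6/n}$ only helps.
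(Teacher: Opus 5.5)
Your proof is correct and follows the paper's route. You solve $\xi^0=-\frac{v_a}{v_0}\xi^a$ from $g(\xi,v)=0$, use Young's inequality to show the cross term $2\chi p'(\rho)\varphi\xi^0$ is small relative to $\|\varphi\|_{L^2}^2+\tau^{\frac{6}{n}}\|\xi\|_{L^2}^2$, and note that $p'(\rho)v^0$, $\chi^2v^0$ and $g_{\alpha\beta}\xi^\alpha\xi^\beta/|\xi|^2$ are uniformly comparable to constants. The only cosmetic difference is that you argue pointwise and absorb the cross term through the smallness of $\ve$, whereas the paper argues with $L^2$ norms and absorbs it through the decay factor $\tau^{-(1+\frac{9}{2n}-\delta)}$.
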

\begin{proof}
	The identity $g(\xi,v) = 0$ can be rearranged and solved for $\xi^0$ like
	\begin{equation*}
		\xi^0 = -\frac{1}{v_0}v_a \xi^a\,,
	\end{equation*}
	where $v_\mu = g_{\mu\alpha}v^\alpha$. Using the bootstrap assumptions, it follows that
	\begin{equation*}
		\|\xi^0\|_{L^2} \lesa \frac{1}{\tau^{1+\frac{3}{2n}-\delta}}\|\xi\|_{L^2}\,,
	\end{equation*}
    therefore we have the following bound for the mixed term in $E_f$:
	\begin{equation*}
		\Big|\int_{\TT^3} p'(\rho)\chi \varphi \xi^0\Big| \lesa \frac{1}{\tau^{1+\frac{3}{2n}-\delta}}\|\varphi\|_{L^2}\|\xi\|_{L^2} \lesa \frac{1}{\tau^{1+\frac{9}{2n}-\delta}}\Big(\|\varphi\|_{L^2}^2 + \tau^{\frac{6}{n}}\|\xi\|_{L^2}^2\Big)\,,
	\end{equation*}
    Meanwhile, the quantities $p'(\rho)$ and $\chi$ are nonzero to leading order in $\tau$, and so the first and last term comprising $E_f$ satisfy
    \begin{align*}
        \int_{\TT^3} p'(\rho) v^0 \varphi^2 \sim \|\varphi\|_{L^2}^2,\\\int_{\TT^3}\tau^{\frac{6}{n}} \chi^2 v^0 g_{\alpha\beta} \xi^\alpha \xi^\beta \sim \tau^{\frac{6}{n}}\|\xi\|_{L^2}^2\,.
    \end{align*}
	The result now follows.
\end{proof}

Next, we prove the basic energy estimate satisfied by the fluid energy $E_f$.
\begin{lemma}\label{sec:energy-fluid.lem:base-energy}
	Let $\varphi$ be a scalar function, and $\xi$ a vectorfield satisfying $g(\xi,v) = 0$. Then the base fluid energy $E_f$ satisfies the differential inequality
	\begin{equation}\label{sec:energy-fluid.eq:base-energy}
		\pa_\tau E_f[\varphi,\xi](\tau) + \frac{2-\frac{3}{n}}{\tau}\int_{\TT^3}\tau^{\frac{6}{n}} \chi^2 v^0 g_{\alpha\beta}\xi^\alpha \xi^\beta
		\lesa \big(E_f[\varphi,\xi]\big)^{1/2}f_{Q,R}[\varphi,\xi]+\frac{1}{\tau^{1+\delta}}E_f[\varphi,\xi]\,.
	\end{equation}
    where $f_{Q,R}$ was given in Definition \ref{def:fluid-ops-QR}.
\end{lemma}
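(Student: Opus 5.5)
We will prove the estimate by differentiating $E_f$ directly in time and rewriting every time derivative of $\varphi$ and $\xi$ through the operators $Q$ and $R$. This is the standard symmetriser computation for the Euler system.

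\textbf{Step 1: the symmetriser structure.} The weights $p'(\rho)v^0$, $\chi p'(\rho)$ and $\tau^{\frac6n}\chi^2 v^0 g_{\alpha\beta}$ in Definition~\ref{def:base-fluid-energy} are exactly those that make the principal part symmetric. Concretely, the combination
\[
p'(\rho)\,\varphi\, Q(\varphi,\xi) + \tau^{\frac6n}\chi\, g_{\alpha\beta}\xi^\alpha R^\beta(\varphi,\xi)
\]
has principal part
\[
\tfrac12\pa_\mu\Big(p'(\rho)v^\mu\varphi^2 + 2\chi p'(\rho)\varphi\xi^\mu + \tau^{\frac6n}\chi^2 v^\mu g_{\alpha\beta}\xi^\alpha\xi^\beta\Big)
\]
plus zeroth-order terms. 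The cross term $\chi p'(\rho)\varphi\,\pa_\mu\xi^\mu + p'(\rho)\chi\, g_{\alpha\beta}\xi^\alpha\Pi^{\mu\beta}\pa_\mu\varphi$ reduces to $\chi p'(\rho)\pa_\mu(\varphi\xi^\mu)$, because $g(\xi,v)=0$ gives $g_{\alpha\beta}\xi^\alpha\Pi^{\mu\beta}=\xi^\mu$.

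\textbf{Step 2: integrate and identify the damping.} Integrating over $\TT^3$, the spatial divergences vanish, and what remains is $\pa_\tau E_f$ plus lower-order terms. We then sort the lower-order terms.

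The damping term $\frac2\tau v^0\xi^\nu$ in $R$ contributes $\frac{2}{\tau}\int\tau^{\frac6n}\chi^2(v^0)^2 g\xi\xi$. On the other hand, differentiating the weight $\tau^{\frac6n}$ produces $+\frac{6}{n\tau}\cdot\frac12\int\tau^{\frac6n}\chi^2v^0g\xi\xi$. Together these give exactly the coefficient $\frac{2-\frac3n}{\tau}$ on the left-hand side, once we use $v^0 = 1 + O(\ve\tau^{-1-\frac3n})$. The resulting error is absorbed into $\tau^{-1-\delta}E_f$.

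\textbf{Step 3: bound the remaining terms.}
\begin{itemize}
\item The forcing terms are bounded by Cauchy--Schwarz together with Lemma~\ref{sec:energy-fluid.lem:coercivity}:
\[
\Big|\int p'\varphi Q\Big| + \Big|\int\tau^{\frac6n}\chi g\xi R\Big| \lesa \|\varphi\|_{L^2}\|Q\|_{L^2} + \tau^{\frac3n}\|\xi\|\,\tau^{\frac3n}\|R\| \lesa E_f^{1/2}f_{Q,R}.
\]
The $R^0$ component is handled via Lemma~\ref{sec:energy-fluid.lem:R0-identity}, or directly by noting $g_{\alpha\beta}\xi^\alpha R^\beta$ with $\xi^0$ small.
\item The remaining errors involve derivatives of the coefficients: $\pa v$, $\pa g$, $\pa\rho$ (through $p'(\rho)$ and $\chi$), and $\pa_\tau$ of $\chi=\rho+\tau^{-6/n}p$.
\end{itemize}

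\textbf{Main obstacle.} Step 3 is the hard part, because $\rho$ only satisfies $\|\pa_x\rho\|_{L^\infty}\lesa\tau^{-(1-\frac3{2n}-\delta)}$, which is not integrable. The danger is that a term like $\pa_\mu(p'(\rho)v^\mu)\varphi^2$ contains $\pa_x\rho$. We plan to handle each source of $\rho$-derivatives as follows.
\begin{itemize}
\item Spatial derivatives of $\rho$ always appear multiplied by $v^a$, so they arrive as $v^\mu\pa_\mu\rho=\pav\rho$. This is controlled via Proposition~\ref{sec:stability.prop:fluid-time-deriv} at the rate $\tau^{-1-\frac3{2n}+\delta}\lesa\tau^{-1-\delta}$, using $\delta<\frac3{4n}$.
\item The mixed term $\chi p'\varphi\xi^\mu$ with $\pa_\mu(\chi p')$ is estimated as $\|\pa\rho\|_{L^\infty}\|\varphi\|\|\xi\|$. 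Since $\|\varphi\|\|\xi\|\le\tau^{-\frac3n}E_f$, the weight gains the factor $\tau^{-\frac3n}$, leaving $\tau^{-1-\frac{3}{2n}+\delta}$, which is again acceptable.
\item Terms with $\pa g$ or $\pa v$ decay like $\tau^{-1-\frac3n}$, and $\pa_\tau(\tau^{-6/n}p)$ is $O(\tau^{-1-6/n})$; both are harmless.
\end{itemize}
Collecting Steps 2 and 3 gives the differential inequality in the statement.
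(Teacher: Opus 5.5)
Your route is the paper's: differentiate $E_f$, trade $v^\mu\partial_\mu$ for $Q$ and $R$, and kill the cross term via $g_{\alpha\beta}\xi^\alpha\Pi^{\mu\beta}=\xi^\mu$. The forcing is bounded by Cauchy--Schwarz plus the $R^0$ identity. The coefficient errors are controlled by the improved decay of $v^\mu\partial_\mu\rho$ and, for $\partial_\mu(\chi p'(\rho))\varphi\xi^\mu$, by the extra $\tau^{6/n}$ weight on $\xi$.

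Two things in Steps~1--2 need fixing. First, the multiplier for $R$ must be $\tau^{6/n}\chi^2 g_{\alpha\beta}\xi^\alpha$, not $\tau^{6/n}\chi g_{\alpha\beta}\xi^\alpha$. With one power of $\chi$, the pressure part of $R$ yields $p'(\rho)\xi^\mu\partial_\mu\varphi$. That term does not combine with $\chi p'(\rho)\varphi\,\partial_\mu\xi^\mu$ into a divergence, and the time component would not be the $\xi$-part of $E_f$. Your later formulas use $\chi^2$, so this is probably a typo.

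Second, and more substantively, your handling of the damping coefficient rests on a false claim. With the correct multiplier, $\frac{2}{\tau}v^0\xi^\beta$ contributes $\frac{2}{\tau}\int_{\mathbb{T}^3}\tau^{6/n}\chi^2 v^0 g_{\alpha\beta}\xi^\alpha\xi^\beta$, with a single power of $v^0$, not $(v^0)^2$. This is the same integrand as the term $\frac{3}{n\tau}\int_{\mathbb{T}^3}\tau^{6/n}\chi^2 v^0 g_{\alpha\beta}\xi^\alpha\xi^\beta$ produced by $\partial_\tau(\tau^{6/n})$. So $2-\frac{3}{n}$ comes out as an exact identity, as in the paper, and no comparison of $v^0$ with $1$ is needed.

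The comparison you propose would in fact fail, because $v^0=1+O(\varepsilon\tau^{-1-3/n})$ is not true. The normalisation $g_{\alpha\beta}v^\alpha v^\beta=-1$ gives $(v^0)^2=1-h^{00}+O(|v|^2+|h^{0i}|^2)$. But $h^{00}_{\mathrm{av}}$ is only $O(\varepsilon)$ and does not decay: the averaged lapse converges to a nearby constant, not to $-1$. Replacing $(v^0)^2$ by $v^0$ would therefore cost a term of size $\frac{\varepsilon}{\tau}E_f$. That term is critical and cannot be bounded by $\tau^{-1-\delta}E_f$.

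A smaller inaccuracy: $\partial_x v$ decays only like $\tau^{-(1+\frac{3}{2n}-\delta)}$, not $\tau^{-1-\frac{3}{n}}$. This rate is still acceptable because $\delta<\frac{3}{4n}$.
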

\begin{proof}
	We differentiate the base fluid energy $E_f$ term-by term. First, we compute using integration by parts that
	\begin{align*}
		\frac{1}{2}\pa_\tau \Big(\int_{\TT^3} p'(\rho) v^0 \varphi^2\Big) =&\, \int_{\TT^3} p'(\rho) v^0 \varphi \pa_\tau \varphi + \frac{1}{2}\int_{\TT^3} \pa_\tau (p'(\rho) v^0)\varphi^2\\
		=&\,\int_{\TT^3} p'(\rho) \varphi v^\mu \pa_\mu \varphi - \int_{\TT^3} p'(\rho)\varphi v^a \pa_a \varphi + \frac{1}{2}\int_{\TT^3} \pa_\tau \big(p'(\rho) v^0\big)\varphi^2\\
		=&\, \int_{\TT^3} p'(\rho) \varphi v^\mu \pa_\mu \varphi  + \frac{1}{2}\int_{\TT^3} \pa_\mu \big(p'(\rho) v^\mu\big)\varphi^2\,.
	\end{align*}
	We then substitute in the operator $Q(\varphi,\xi)$ to obtain
	\begin{equation}\label{sec:energy-fluid.eq:base-energy-prf-1}
		\frac{1}{2}\pa_\tau \Big(\int_{\TT^3} p'(\rho) v^0 \varphi^2\Big) = \int_{\TT^3} p'(\rho) \varphi Q(\varphi,\xi) - \int_{\TT^3} p'(\rho)\chi \varphi \pa_\mu \xi^\mu + \frac{1}{2}\int_{\TT^3} \pa_\mu (p'(\rho) v^\mu)\varphi^2.
	\end{equation}
	For the second term that makes up $E_f$, integrating by parts gives
	\begin{align*}
		\pa_\tau \Big(\int_{\TT^3} \chi p'(\rho) \varphi \xi^0\Big) 
        =&\, \int_{\TT^3} \chi p'(\rho)\big(\xi^0 \pa_\tau \varphi + \varphi \pa_\tau \xi^0\big) + \int_{\TT^3} \pa_\tau  \big(\chi p'(\rho)\big) \varphi \xi^0\\
		=&\, \int_{\TT^3} \chi p'(\rho)\big(\xi^\mu \pa_\mu \varphi + \varphi \pa_\mu \xi^\mu\big) - \int_{\TT^3} \chi p'(\rho)\big(\xi^a \pa_a \varphi+ \varphi \pa_a \xi^a \big)\\
        &+ \int_{\TT^3} \pa_\tau  \big(\chi p'(\rho)\big) \varphi \xi^0\\
		=&  \int_{\TT^3} \chi p'(\rho)\big(\xi^\mu \pa_\mu \varphi + \varphi \pa_\mu \xi^\mu\big) + \int_{\TT^3} \pa_\mu  \big(\chi p'(\rho)\big) \varphi \xi^\mu.
	\end{align*}
	We differentiate the final term in $E_f$, integrate by parts, then substitute the operators $R^\mu(\varphi,\xi)$:
	\begin{align}
        \label{sec:energy-fluid.eq:base-energy-prf-3}
		\frac{1}{2}\pa_\tau&\Big(\int_{\TT^3} \tau^{\frac{6}{n}}\chi^2 v^0 g_{\alpha\beta}\xi^\alpha \xi^\beta\Big) \\
        =&\, \int_{\TT^3} \tau^{\frac{6}{n}} \chi^2 g_{\alpha\beta}\xi^\alpha v^0\pa_\tau \xi^\beta + \frac{3}{n\tau}\int_{\TT^3} \tau^{\frac{6}{n}} \chi^2 v^0 g_{\alpha\beta}\xi^\alpha \xi^\beta+\int_{\TT^3} \tau^{\frac{6}{n}} \pa_\tau \big(\chi^2 g_{\alpha\beta}v^0\big)\xi^\alpha \xi^\beta\nonumber\\
		=&\,\int_{\TT^3} \tau^{\frac{6}{n}} \chi^2 g_{\alpha\beta}\xi^\alpha v^\mu\pa_\mu \xi^\beta + \frac{3}{n\tau}\int_{\TT^3} \tau^{\frac{6}{n}} \chi^2 v^0 g_{\alpha\beta}\xi^\alpha \xi^\beta
        +\int_{\TT^3} \tau^{\frac{6}{n}} \pa_\mu \big(\chi^2 g_{\alpha\beta}v^\mu\big)\xi^\alpha \xi^\beta\nonumber\\
		=&\,\int_{\TT^3} \tau^{\frac{6}{n}} \chi^2 g_{\alpha\beta}\xi^\alpha R^\beta(\varphi,\xi)  - \int_{\TT^3} p'(\rho)\chi g_{\alpha\beta}\xi^\alpha \Pi^{\mu\beta}\pa_\mu \varphi\nonumber\\
		&-  \frac{2-\frac{3}{n}}{\tau}\int_{\TT^3} \tau^{\frac{6}{n}} \chi^2 v^0 g_{\alpha\beta}\xi^\alpha \xi^\beta
		+\int_{\TT^3} \tau^{\frac{6}{n}} \pa_\mu \big(\chi^2 g_{\alpha\beta}v^\mu\big)\xi^\alpha \xi^\beta\,.\nonumber
	\end{align}
	Combining \eqref{sec:energy-fluid.eq:base-energy-prf-1}--\eqref{sec:energy-fluid.eq:base-energy-prf-3}, we arrive at the identity
	\begin{multline}\label{sec:energy-fluid.eq:base-energy-prf-4}
		\pa_\tau E_f + \frac{2-\frac{3}{n}}{\tau}\int_{\TT^3} \tau^{\frac{6}{n}}\chi^2 v^0 g_{\alpha\beta}\xi^\alpha \xi^\beta\\
		= \int_{\TT^3}p'(\rho)\varphi Q(\varphi,\xi) + \int_{\TT^3}\tau^{\frac{6}{n}}\chi^2 g_{\alpha\beta}\xi^\alpha R^\beta(\varphi,\xi) + \int_{\TT^3} \chi p'(\rho)\big(\xi^\mu - g_{\alpha\beta}\xi^\alpha \Pi^{\mu\beta}\big)\pa_\mu \varphi\\
		+\frac{1}{2}\int_{\TT^3}\pa_\mu \big(p'(\rho)v^\mu)\varphi^2 + \int_{\TT^3} \pa_\mu \big(\chi p'(\rho) \big) \varphi\xi^\mu + \int_{\TT^3} \tau^{\frac{6}{n}}\pa_\mu \big(\chi^2 g_{\alpha\beta}v^\mu\big)\xi^\alpha \xi^\beta\,.
	\end{multline}
	The third integral on the RHS of \eqref{sec:energy-fluid.eq:base-energy-prf-4} vanishes due to the condition $g(\xi,v) = 0$, since
	\begin{equation*}
		\xi^\mu - g_{\alpha\beta}\xi^\alpha \Pi^{\mu\beta} = -g_{\alpha\beta}\xi^\alpha v^\beta v^\mu = 0\,.
	\end{equation*}
	For the first and second integrals on the RHS of \eqref{sec:energy-fluid.eq:base-energy-prf-4}, we bound with the Cauchy-Schwarz inequality to obtain
	\begin{align}
        \label{sec:energy-fluid.eq:base-energy-prf-5}
		\Big|\int_{\TT^3}p'(\rho)\varphi Q(\varphi,\xi) \Big| 
        &\lesa \|\varphi\|_{L^2}\|Q(\varphi,\xi)\|_{L^2}\lesa \big(E_f[\varphi,\xi]\big)^{1/2}f_{Q,R}[\varphi,\xi]\,,\\
		\Big|\int_{\TT^3}\tau^{\frac{6}{n}}\chi^2 g_{\alpha\beta}\xi^\alpha R^\beta(\varphi,\xi)\Big| 
        &\lesa \big(\tau^{\frac{3}{n}}\|\xi\|_{L^2}\big)\Big(\tau^{\frac{3}{n}}\sum_{\mu=0}^3\|R^\mu(\varphi,\xi)\|_{L^2}\Big)\\
        &\lesa \big(E_f[\varphi,\xi]\big)^{1/2}f_{Q,R}[\varphi,\xi] + \frac{1}{\tau^{1+\frac{3}{2n}-\delta}}E_f[\varphi,\xi]\,,\nonumber
	\end{align}
    where in the final inequality, we used the identity \eqref{sec:energy-fluid.eq:R0-identity} to bound $R^0$ in $L^2$.
    
	The final three integrals in \eqref{sec:energy-fluid.eq:base-energy-prf-4} are all error terms that can be controlled by $\tau^{-(1+\delta)}E_f$. Using a combination of Cauchy-Schwarz, the Sobolev embedding, and Proposition~\ref{sec:stability.prop:fluid-time-deriv} to estimate material derivatives of $(\rho,v)$, we find that 
    \begin{align}
        \Big|\int_{\TT^3} \pa_\mu \big(p'(\rho) v^\mu\big)\varphi^2 \Big| &\lesa \big(\|\pav \rho\|_{L^\infty} + \|\pa_\mu v^\mu\|_{L^\infty}\big)\|\varphi\|_{L^2}^2 \lesa \frac{1}{\tau^{1+\frac{3}{2n}-\delta}}E_f[\varphi,\xi]\,,\\
        \Big|\int_{\TT^3} \pa_\mu \big(\chi p'(\rho) \big) \varphi\xi^\mu\Big| 
        &\lesa \frac{1}{\tau^{\frac{3}{n}}}\|\pa \rho\|_{L^\infty}\|\varphi\|_{L^2}\big(\tau^{\frac{3}{n}}\|\xi\|_{L^2}\big) \lesa \frac{1}{\tau^{1+\frac{3}{2n}-\delta}}E_f[\varphi,\xi]\,,\\
        \label{sec:energy-fluid.eq:base-energy-prf-6}
        \Big|\int_{\TT^3} \tau^{\frac{6}{n}}\pa_\mu \big(\chi^2 g_{\alpha\beta}v^\mu\big)\xi^\alpha \xi^\beta 
        \Big| 
        &\lesa \big(\|\pav \rho\|_{L^\infty} + \|\pa_\mu v^\mu\|_{L^\infty} + \|\pav g\|_{L^\infty}\big)\tau^{\frac{6}{n}}\|\xi\|_{L^2}^2 \lesa \frac{1}{\tau^{1+\frac{3}{2n}-\delta}}E_f[\varphi,\xi]\,.
    \end{align}
    In the first and third error term above, we use the improved decay of $\pav \rho$ to derive sufficient decay. For the second error term, the bad decay of the term $\|\pa \rho\|_{L^\infty}$ is compensated for by the additional $\tau^{\frac{6}{n}}$ weight possessed by the kinetic part of the energy $E_f[\varphi,\xi]$ (the part that controls $\|\xi\|_{L^2}^2$). Inserting the estimates \eqref{sec:energy-fluid.eq:base-energy-prf-5}--\eqref{sec:energy-fluid.eq:base-energy-prf-6} into the identity \eqref{sec:energy-fluid.eq:base-energy-prf-4} implies \eqref{sec:energy-fluid.eq:base-energy}.
	\end{proof}
\subsection{Fluid energy correction mechanisn}
In analogy to the wave energy estimates of the previous section, the base fluid energy $E_f$ does not, on its own, satisfy an energy inequality leading to decay. This is because the term
    \begin{equation*}
        \frac{2-\frac{3}{n}}{\tau}\int_{\TT^3} \tau^{\frac{6}{n}}\chi^2 v^0 g_{\alpha\beta} \xi^\alpha \xi^\beta\,,
    \end{equation*}
on the left hand side of the differential inequality \eqref{sec:energy-fluid.eq:base-energy}, does not control the entire energy. Like with the wave energy used to estimate the metric perturbations, we overcome this with the combination of lower-order energy correction terms, as well as a commutator argument. For the fluid energy, the commutator argument is simpler, as we only need to commute with spatial derivatives not material derivatives. We delay discussion of this until the next subsection, and focus now on the energy correction mechanism.
In a similar spirit to the correction mechanism introduced in \cite{Fajetal:decelEuler1:25, Fajetal:decelEuler2:25}, we introduce the correction term
\begin{equation}\label{sec:energy-fluid.eq:correction-def}
    \frac{1}{\tau}\int_{\TT^3}\tau^{\frac{6}{n}}\chi (v^0)^2 \xi^a \pa_a \varphi\,,
\end{equation}
and first prove a differential inequality for this integral.
\begin{lemma}\label{sec:energy-fluid.lem:correction-term}
    Suppose the scalar function $\varphi$, and the spatial components of the vectorfield $\xi$ have vanishing spatial average. Then the correction term satisfies the differential inequality
    \begin{multline}\label{sec:energy-fluid.eq:correction-term}
        \pa_\tau \Big(\frac{1}{\tau}\int_{\TT^3}\tau^{\frac{6}{n}}\chi (v^0)^2 \xi^a \pa_a \varphi \Big) +\frac{1}{\tau}\int_{\TT^3} g^{ab}p'(\rho)v^0 \pa_a \varphi \pa_b \varphi - \frac{1}{\tau}\int_{\TT^3}\tau^{\frac{6}{n}}\chi^2 v^0 (\pa_a \xi^a)^2\\
        \lesa \frac{1}{\tau^{1-\frac{3}{n}}}\big(E_{\rho,v,1}[\varphi,\xi]\big)^{1/2}f_{Q,R;1}[\varphi,\xi]+\frac{1}{\tau^{1+\frac{3}{2n}-\delta}}E_f[\pa_x\varphi,\pa_x\xi]\,,
    \end{multline}
    where $f_{Q,R;1}$ was given in Definition \ref{def:fluid-ops-QR}.
\end{lemma}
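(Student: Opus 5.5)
We differentiate the correction functional directly and substitute the fluid operators $Q$ and $R^a$ to eliminate the time derivatives of $\varphi$ and $\xi$. Write $W:=\tau^{\frac{6}{n}-1}\chi(v^0)^2$. The product rule gives
\[
\pa_\tau\Big(\int_{\TT^3} W\xi^a\pa_a\varphi\Big) = \int_{\TT^3}(\pa_\tau W)\xi^a\pa_a\varphi + \int_{\TT^3} W\,\pa_\tau\xi^a\,\pa_a\varphi + \int_{\TT^3} W\xi^a\pa_a\pa_\tau\varphi.
\]
The first integral splits into the explicit weight derivative $\frac{\frac{6}{n}-1}{\tau}\cdot\frac{1}{\tau}\int\tau^{\frac6n}\chi(v^0)^2\xi^a\pa_a\varphi$ and terms involving $\pa_\tau\chi$ and $\pa_\tau v^0$. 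The latter are error terms by Proposition~\ref{sec:stability.prop:fluid-time-deriv}. The former is bounded by $\tau^{-2}\|\xi\|_{L^2}\tau^{\frac6n}\|\pa_x\varphi\|_{L^2}$. After using the Poincar\'e inequality for $\xi$, this is $\lesa \tau^{-2+\frac3n}E_f[\pa_x\varphi,\pa_x\xi]$, which is acceptable when absorbed into the stated right-hand side. We will have to check this against the constraint $n>3$.

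In the second integral, replace $v^0\pa_\tau\xi^a$ using $R^a$:
\[
v^0\pa_\tau\xi^a = R^a - v^b\pa_b\xi^a - \tfrac{p'}{\tau^{6/n}\chi}\Pi^{\mu a}\pa_\mu\varphi - \tfrac2\tau v^0\xi^a.
\]
The principal contribution is $-\frac1\tau\int \chi v^0\cdot\frac{p'}{\chi}g^{ab}\pa_a\varphi\pa_b\varphi$. The factor $\tau^{\frac6n}$ cancels, and $\Pi^{ab}=g^{ab}+v^av^b$, so this produces exactly the term $+\frac1\tau\int g^{ab}p'v^0\pa_a\varphi\pa_b\varphi$ that is moved to the left-hand side of \eqref{sec:energy-fluid.eq:correction-term}. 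The $\Pi^{0a}\pa_\tau\varphi$ piece and the $v^av^b$ piece are small because $|v|+|h^{0i}|$ decays. For these we use Lemma~\ref{sec:energy-fluid.lem:time-deriv} to express $\pa_\tau\varphi$ through $Q$ and spatial derivatives. The $R^a$ piece is bounded by $\frac1\tau\tau^{\frac6n}\|R\|_{L^2}\|\pa_x\varphi\|_{L^2}\lesa \tau^{-1+\frac3n}E^{1/2}f_{Q,R;1}$, which explains the weight $\tau^{-(1-\frac3n)}$ in the statement. The damping piece $-\frac2\tau v^0\xi^a$ returns $\tau^{-2}\int\tau^{\frac6n}\chi v^0\xi^a\pa_a\varphi$, which is treated like the weight-derivative term above.

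In the third integral, integrate by parts in $x^a$ to move the derivative onto $\xi^a$, obtaining $-\int W\pa_a\xi^a\,\pa_\tau\varphi$. Then substitute $v^0\pa_\tau\varphi = Q - v^b\pa_b\varphi - \chi\pa_\mu\xi^\mu$. The principal term $+\frac1\tau\int\tau^{\frac6n}\chi^2v^0(\pa_a\xi^a)^2$ appears, and it accounts for the sign of the last term on the left-hand side of \eqref{sec:energy-fluid.eq:correction-term}. The $\pa_\tau\xi^0$ piece is handled by $\xi^0=-\frac{v_a}{v_0}\xi^a$ and Lemma~\ref{sec:energy-fluid.lem:R0-identity}. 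The $Q$ piece gives $\tau^{-1+\frac6n}\|\pa_x\xi\|\|Q\|$, which is $\lesa \tau^{-1+\frac3n}E^{1/2}f_{Q,R;1}$ by Lemma~\ref{sec:energy-fluid.lem:coercivity} applied to $\pa_x\xi$. The derivatives falling on $W$ after integration by parts involve $\pa_x\rho$ and $\pa_xv$. These are bounded by the bootstrap rates \eqref{sec:stability.eq:boot-4}, and any bad $\rho$-decay is paired with the extra $\tau^{\frac3n}$ carried by $\xi$, exactly as in the proof of Lemma~\ref{sec:energy-fluid.lem:base-energy}.

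The main obstacle is the bookkeeping of cross terms of the form $\tau^{-1}\int\tau^{\frac6n}(\ldots)\xi\,\pa\varphi$. With the unbalanced weights, the natural bound is $\tau^{-1+\frac3n}$ times an energy, which is not integrable. We must check that every such term either carries a decaying coefficient $\|v\|$, $\|h^{0i}\|$, $\|\pa\rho\|$, giving at least $\tau^{-(1+\frac3{2n}-\delta)}E_f[\pa_x\varphi,\pa_x\xi]$, or is $\tau^{-2}$-suppressed. The purely explicit $\tau^{-2+\frac3n}$ terms are where the restriction on $n$ and on $\delta$ in \eqref{sec:stability.eq:delta-choice} enters.
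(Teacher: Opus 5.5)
Your proposal follows the paper's proof essentially step for step. You differentiate the correction term, eliminate $\pa_\tau\xi^a$ and $\pa_\tau\varphi$ via $R^a$ and $Q$, and read off the two principal terms from $\Pi^{ab}=g^{ab}+v^av^b$ and from $-\chi\pa_\mu\xi^\mu$. Everything else is treated as error using the improved decay of $\pa_\tau\rho$ and the extra $\tau^{\frac{3}{n}}$ weight on $\xi$. The only real difference is that you integrate by parts before substituting $Q$, so you need only $\|Q\|_{L^2}$ rather than $\|\pa_x Q\|_{L^2}$, which is if anything slightly more economical.

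The point you leave open is shared with the paper, which simply labels these terms (i). The explicit contributions $\tau^{-2+\frac{6}{n}}\int_{\TT^3}\chi(v^0)^2\xi^a\pa_a\varphi$ come from the weight and from the $\frac{2}{\tau}v^0\xi^a$ term in $R^a$. They are only $O\big(\tau^{-(2-\frac{3}{n})}E_f[\pa_x\varphi,\pa_x\xi]\big)$, and this fits under $\tau^{-(1+\frac{3}{2n}-\delta)}E_f[\pa_x\varphi,\pa_x\xi]$ only when $\delta\geq\frac{9}{2n}-1$. So for $n$ close to $3$ the honest error rate is $\tau^{-\min\{2-\frac{3}{n},\,1+\frac{3}{2n}-\delta\}}$. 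That rate is still integrable precisely because $n>3$, which is what the later Gr\"onwall argument actually requires.
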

\begin{proof}
	To expedite some of the computations that follow, we will label any terms that are obvious error terms controlled by $\tau^{-(1++\frac{3}{2n}-\delta)}E_f[\pa_x\varphi,\pa_x \xi]$ with $(i)$.
    
    Differentiating the correction term in time gives
		\begin{multline}\label{sec:energy-fluid.eq:correction-term-proof-1}
		\pa_\tau \Big(\frac{1}{\tau}\int_{\TT^3} \tau^{\frac{6}{n}}\chi (v^0)^2 \xi^a \pa_a \varphi \Big)\\
		=  \frac{1}{\tau}\int_{\TT^3} \tau^{\frac{6}{n}} \chi v^0 \xi^a \pa_a (v^0\pa_\tau \varphi) + \frac{1}{\tau} \int_{\TT^3} \tau^{\frac{6}{n}} \chi (v^0)^2 \pa_\tau \xi^a \pa_a \varphi - \frac{1}{\tau}\int_{\TT^3} \tau^{\frac{6}{n}} \chi \xi^a \pa_a v^0 v^0\pa_\tau \varphi\\ - \underbrace{\frac{1-\frac{6}{n}}{\tau^{2-\frac{6}{n}}}\int_{\TT^3} \chi (v^0)^2 \xi^a \pa_a \varphi}_{(i)} + \underbrace{\frac{1}{\tau}\int_{\TT^3} \tau^{\frac{6}{n}}\pa_\tau \big(\chi (v^0)^2\big) \xi^a\pa_a \varphi}_{(i)}\,.
	\end{multline}
    For integrals that are not error terms, we use the definitions of the operators $Q$, $R^i$, and integrate by parts to write
    \begin{subequations}
	\begin{align}
        \label{sec:energy-fluid.eq:correction-term-proof-2}\
		\frac{1}{\tau}\int_{\TT^3} \tau^{\frac{6}{n}} \chi v^0 \xi^a \pa_a (v^0 \pa_\tau \varphi) 
        =&\, \frac{1}{\tau}\int_{\TT^3} \tau^{\frac{6}{n}}\chi v^0\xi^a \pa_a \big(Q(\varphi,\xi)\big)-\frac{1}{\tau}\int_{\TT^3} \tau^{\frac{6}{n}} \chi v^0 \xi^a \pa_a (\chi\pa_\mu\xi^\mu)\\
        &-\underbrace{\frac{1}{\tau}\int_{\TT^3} \tau^{\frac{6}{n}} \chi v^0 \xi^a \pa_a v^b \pa_b \varphi}_{(i)} + \underbrace{\frac{1}{\tau}\int_{\TT^3} \tau^{\frac{6}{n}}\pa_a\big(\chi v^0 v^b \xi^a)\pa_b \varphi}_{(i)}\,,\nonumber\\
        \frac{1}{\tau}\int_{\TT^3} \tau^{\frac{6}{n}} \chi (v^0)^2 \pa_\tau \xi^a \pa_a \varphi 
        =&\, \frac{1}{\tau}\int_{\TT^3} \tau^{\frac{6}{n}} \chi v^0 \pa_a \varphi R^a(\varphi,\xi) -\frac{1}{\tau}\int_{\TT^3} p'(\rho)v^0\pa_a \varphi \Pi^{a\mu}\pa_\mu \varphi\\
        &- \frac{1}{\tau}\underbrace{\int_{\TT^3} \tau^{\frac{6}{n}} \chi v^0 v^b\pa_b \xi^a \pa_a \varphi}_{(i)}-\underbrace{\frac{2}{\tau^2} \int_{\TT^3} \tau^{\frac6n}\chi v^0 \pa_a \varphi v^0 \xi^a}_{(i)}\,,\nonumber\\
        \frac{1}{\tau}\int_{\TT^3} \tau^{\frac{6}{n}} \chi \xi^a \pa_a v^0 v^0\pa_\tau \varphi 
        =&\, \frac{1}{\tau}\int_{\TT^3} \tau^{\frac{6}{n}} \chi \xi^a \pa_a v^0 Q(\varphi,\xi)-\underbrace{\frac{1}{\tau}\int_{\TT^3} \tau^{\frac{6}{n}} \chi^2 \xi^a \pa_a v^0 \pa_\mu \xi^\mu}_{(i)}\\
        &-\underbrace{\frac{1}{\tau}\int_{\TT^3} \tau^{\frac{6}{n}} \chi \xi^a \pa_a v^0 v^b\pa_b \varphi}_{(i)}\,.\nonumber
	\end{align}
    \end{subequations}
    For all terms that contain $Q(\varphi,\xi)$, $R^i(\varphi,\xi)$, we bound using Cauchy-Schwarz and the Poincar\'e inequality to get 
    \begin{align}
        \frac{1}{\tau}\Big|\int_{\TT^3} \tau^{\frac{6}{n}}\chi v^0\xi^a \pa_a \big(Q(\varphi,\xi)\big)\Big| &\lesa \frac{1}{\tau^{1-\frac{3}{n}}}\big(E_f[\pa_x\varphi,\pa_x\xi]\big)^{1/2}f_{Q,R;1}[\varphi,\xi]\,,\\
        \frac{1}{\tau}\Big|\int_{\TT^3} \tau^{\frac{6}{n}} \chi v^0 \pa_a \varphi R^a(\varphi,\xi)\Big| &\lesa \frac{1}{\tau^{1-\frac{3}{n}}}\big(E_f[\pa_x\varphi,\pa_x\xi]\big)^{1/2}f_{Q,R;1}[\varphi,\xi]\,,\\
        \frac{1}{\tau}\Big|\int_{\TT^3} \tau^{\frac{6}{n}} \chi \xi^a \pa_a v^0 Q(\varphi,\xi)\Big| &\lesa \frac{1}{\tau^{2-\frac{3}{2n}-\delta}}\big(E_f[\pa_x\varphi,\pa_x\xi]\big)^{1/2}f_{Q,R;1}[\varphi,\xi]\,.
    \end{align}
    For the remaining two terms, we integrate by parts: 
    \begin{align}
        -\frac{1}{\tau}\int_{\TT^3} \tau^{\frac{6}{n}} \chi v^0 \xi^a \pa_a (\chi\pa_\mu\xi^\mu) =&\, \frac{1}{\tau}\int_{\TT^3} \tau^{\frac{6}{n}} \chi^2 v^0 (\pa_a \xi^a)^2\\
        &+ \frac{1}{\tau}\int_{\TT^3} \tau^{\frac{6}{n}} \chi^2 v^0 \pa_a \xi^a \pa_\tau \xi^0 + \frac{1}{\tau}\int_{\TT^3} \tau^{\frac{6}{n}} \pa_a \big(\chi v^0\big) \chi \xi^a \pa_\mu \xi^\mu\,, \nonumber\\
        \label{sec:energy-fluid.eq:correction-term-proof-3}
        -\frac{1}{\tau}\int_{\TT^3} p'(\rho)v^0\pa_a \varphi \Pi^{a\mu}\pa_\mu \varphi =& - \frac{1}{\tau}\int_{\TT^3} p'(\rho) v^0 g^{ab}\pa_a \varphi \pa_b \varphi \\
        &-\frac{1}{\tau}\int_{\TT^3} p'(\rho) v^0 v^a \pa_a \varphi v^\mu \pa_\mu \varphi - \frac{1}{\tau}\int_{\TT^3} p'(\rho) v^0 g^{a0}\pa_a \varphi \pa_\tau \varphi\,.\nonumber
    \end{align}
    The first integral on the right hand side of these identities are precisely the extra terms on the left hand side of \eqref{sec:energy-fluid.eq:correction-term}. The other integrals can all be bounded by the right hand side of \eqref{sec:energy-fluid.eq:correction-term}, using Lemma~\ref{sec:energy-fluid.lem:time-deriv} if necessary to estimate time derivatives of $(\varphi,\xi)$. Thus, substituting in the various estimates and bounds \eqref{sec:energy-fluid.eq:correction-term-proof-2}--\eqref{sec:energy-fluid.eq:correction-term-proof-3} into the identity \eqref{sec:energy-fluid.eq:correction-term-proof-1} and rearranging, we arrive at the estimate \eqref{sec:energy-fluid.eq:correction-term}.
\end{proof}

\subsection{The corrected base fluid energy}
We are now ready to introduce the corrected energy which we use to estimate higher-order derivatives of $\rho$ and $v$.
\begin{definition}[Corrected base fluid energy]\label{def:corrected-base-fluid-energy}
    For fixed $\eta \in \RR$, we define the corrected fluid energy
\begin{multline}\label{sec:energy-fluid.eq:corrected-energy-def}
    E_{f, \eta; 1}[\varphi,\xi](\tau) := \frac{1}{2}\int_{\TT^3} g^{ij}\Big(p'(\rho) v^0 \pa_i\varphi\pa_j \varphi + 2\chi p'(\rho) \pa_i\varphi \pa_j\xi^0 +  \tau^{\frac{6}{n}} \chi^2 v^0 g_{\alpha\beta} \pa_i\xi^\alpha \pa_j\xi^\beta\Big)\\
    +\frac{\eta}{\tau} \int_{\TT^3}\tau^{\frac{6}{n}}\chi (v^0)^2 \xi^a \pa_a \varphi\,.
\end{multline}
\end{definition}

One should think of this energy as the base fluid energy applied to the quantities $\pa_x\varphi$, $\pa_x \varphi$, \emph{plus} a multiple of the fluid energy correction term. The functional $E_{f, \eta; 1}$ is still coercive, subject to several conditions. If $\varphi$ and the spatial components of $\xi$ have vanishing spatial average, and if $n > 3$, then by Poincar\'e, the correction term obeys the bound
\begin{equation*}
    \Big|\frac{1}{\tau}\int_{\TT^3} \tau^{\frac{6}{n}}\chi (v^0)^2 \xi^ i \pa_i \varphi\Big| \lesa \frac{1}{\tau^{1-\frac{3}{n}}}\|\pa_x \varphi\|_{L^2} \cdot \tau^{\frac{3}{n}}\|\pa_x \xi\|_{L^2} \lesa \frac{1}{\tau^{1-\frac{3}{n}}}\big(\|\pa_x \varphi\|_{L^2}^2 + \|\pa_x \xi\|_{L^2}^2\big)\,,
\end{equation*}
If in addition $\xi$ satisfies $g(\xi,v) = 0$, then by Lemma~\ref{sec:energy-fluid.lem:coercivity}, we have
\begin{equation*}
    E_{f, \eta; 1}[\varphi,\xi](\tau) \sim \|\pa_x \varphi\|_{L^2}^2 + \tau^{\frac{6}{n}}\|\pa_x \xi\|_{L^2}^2\,.
\end{equation*}
We note that this is the primary place in our argument where the assumption $n > 3$ on the polytropic index is invoked; see Remark~\ref{sec:intro.rmk:polytropic-index}.

We now prove the energy estimate satisfied by $E_{f, \eta; 1}$.
\begin{proposition}[Estimate for corrected fluid energy]\label{sec:energy-fluid.prop:corrected-energy}
    Suppose that $\varphi$ and the spatial components of $\xi$ have vanishing spatial average, and that $g(\xi,v) = 0$. If $\eta = 1 - \frac{3}{2n}$, then the corrected fluid energy $E_{f, \eta; 1}[\varphi,\xi]$ satisfies the differential inequality
    \begin{equation}\label{sec:energy-fluid.eq:corrected-energy}
        \pa_\tau E_{f, \eta; 1}[\varphi,\xi](\tau) + \frac{2\big(1-\frac{3}{2n}\big)}{\tau}E_{f, \eta; 1}[\varphi,\xi]
        \lesa \big(E_{f, \eta; 1}[\varphi,\xi]\big)^{1/2}f_{Q,R;1}[\varphi,\xi] +\frac{1}{\tau^{1+\frac{3}{2n}-\delta}}E_{f, \eta; 1}[\varphi,\xi]\,.
    \end{equation}
\end{proposition}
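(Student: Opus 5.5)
The plan is to write $E_{f,\eta;1}[\varphi,\xi]$ as the sum over $i$ of the base energies $E_f[\pa_i\varphi,\pa_i\xi]$, contracted with $g^{ij}$ rather than $\delta^{ij}$, plus $\eta$ times the correction term \eqref{sec:energy-fluid.eq:correction-def}. We then add the differential inequality of Lemma~\ref{sec:energy-fluid.lem:base-energy}, applied to the spatially differentiated pair, to $\eta$ times Lemma~\ref{sec:energy-fluid.lem:correction-term}.

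Applying Lemma~\ref{sec:energy-fluid.lem:base-energy} to $(\pa_i\varphi,\pa_i\xi)$ requires three preliminary observations. First, $\pa_i\xi$ is not exactly $g$-orthogonal to $v$. Differentiating $g(\xi,v)=0$ gives $g(\pa_i\xi,v)=-\pa_i(g_{\alpha\beta}v^\beta)\xi^\alpha$. Rather than redefine $\pa_i\xi^0$, I would rerun the proof of Lemma~\ref{sec:energy-fluid.lem:base-energy} and check that the vanishing third integral now produces an error $\int\chi p'(\rho)(\pa_x(gv)\cdot\xi)\pa\pa_x\varphi$. This is $O(\tau^{-1-\frac{3}{2n}+\delta})E_f$ after using the $\tau^{\frac{3}{n}}$ weight and Poincar\'e on $\xi$. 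Second, the operators commute with spatial derivatives up to lower-order terms:
\[
Q(\pa_i\varphi,\pa_i\xi)=\pa_iQ(\varphi,\xi)-\pa_iv^\mu\pa_\mu\varphi-\pa_i\chi\,\pa_\mu\xi^\mu,
\]
and similarly for $R$. These commutator terms are bounded through the bootstrap bounds and Proposition~\ref{sec:stability.prop:fluid-time-deriv}. The $\tau^{-\frac{6}{n}}\pa_i(p'/\chi\,\Pi)\pa\varphi$ piece in $R$ costs $\tau^{\frac{3}{n}}\cdot\tau^{-\frac{6}{n}}\|\pa_x\rho\|_{L^\infty}$. That is acceptable once the weights are tracked, giving $\tau^{-1-\frac{3}{2n}+\delta}E_{f,\eta;1}$. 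Third, the $g^{ij}$ weights and their time derivatives contribute errors of size $\|\pa h\|_{L^\infty}E\lesssim\tau^{-1-\frac{3}{n}}E$.

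After these steps the inequality reads
\[
\pa_\tau E_{f,\eta;1}+\frac{2-\frac3n}{\tau}\int\tau^{\frac6n}\chi^2v^0|\pa_x\xi|^2_g+\frac{\eta}{\tau}\int g^{ab}p'v^0\pa_a\varphi\pa_b\varphi-\frac{\eta}{\tau}\int\tau^{\frac6n}\chi^2v^0(\pa_a\xi^a)^2\lesssim\text{RHS}.
\]
The main step is the div--curl argument. On $\TT^3$, integration by parts gives $\int(\pa_a\xi^a)^2=\int|\pa_x\xi|^2-\int|\mathrm{curl}\,\xi|^2\le\int|\pa_x\xi|^2$ when the coefficients are frozen. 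With the coefficients $\chi^2v^0\approx 144$ and $g_{ab}\approx\delta_{ab}$, the discrepancy is again an error of size $\tau^{-1-\frac{3}{2n}+\delta}E$. The $\xi$-bulk is therefore at least $\frac{2-\frac{3}{n}-\eta}{\tau}\int\tau^{\frac6n}\chi^2v^0|\pa_x\xi|^2$.

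With $\eta=1-\frac{3}{2n}$, both the $\varphi$-bulk and the $\xi$-bulk equal $\frac{\eta}{\tau}$ times twice the corresponding part of the energy, since each energy part carries a factor $\frac12$. This yields $\frac{2\eta}{\tau}(E_{f,\eta;1}-\text{correction})$. Finally, the correction term itself is $O(\tau^{-1+\frac3n})E$ by the coercivity remark, where $n>3$ is used. Hence $\frac{2\eta}{\tau}\cdot\text{correction}$ is bounded by $\tau^{-2+\frac3n}E$, which is at most $\tau^{-1-\frac{3}{2n}+\delta}E$ when $1-\frac{3}{n}\ge\frac{3}{2n}-\delta$. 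If this condition fails I would absorb the term differently, via the smallness $\delta<1-\frac3n$.

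The $Q,R$ terms combine into $E^{1/2}f_{Q,R;1}$, with the $\tau^{-1+\frac3n}$ prefactor from Lemma~\ref{sec:energy-fluid.lem:correction-term} being harmless. The main obstacle I expect is the bookkeeping of the critical $1/\tau$ terms: the exact cancellation in the div--curl step and the non-orthogonality of $\pa_i\xi$. Both must be shown to be strictly subcritical, and not merely $O(1/\tau)$.
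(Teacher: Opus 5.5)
Your architecture is the paper's. You apply Lemma~\ref{sec:energy-fluid.lem:base-energy} to $(\pa_i\varphi,\pa_i\xi)$ contracted with $g^{ij}$ and add $\eta$ times Lemma~\ref{sec:energy-fluid.lem:correction-term}. You lower-bound the $\xi$-bulk by div--curl, commute $Q,R$ with $\pa_x$ to pass from $f_{Q,R}[\pa_x\varphi,\pa_x\xi]$ to $f_{Q,R;1}[\varphi,\xi]$, and take $\eta=1-\frac{3}{2n}$ to equalise the two damping rates. You are also more careful than the paper about the non-orthogonality of $\pa_i\xi$ and about the correction term inside $E_{f,\eta;1}$.

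\textbf{The gap is the frozen-coefficient div--curl step.} You compare with $\chi^2v^0\approx 144$ and $g_{ab}\approx\delta_{ab}$ and claim the discrepancy is $O(\tau^{-1-\frac{3}{2n}+\delta})E$. By Proposition~\ref{sec:avg-ests.prop:avg-ests}, however, $\rho_\av-12$, $h^{00}_\av$ and $h^{ij}_\av$ are only $O(\ve)$ and do not decay; only the combination $(\rho_\av-12)+12h^{00}_\av$ does. So $\chi^2v^0-144$ and $g_{ab}-\delta_{ab}$ are $O(\ve)$ uniformly in $\tau$. The discrepancy is then $\frac{C\ve}{\tau}E$, a critical term that lowers the damping constant below $2(1-\frac{3}{2n})$, and the stated inequality does not follow.

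The repair is to subtract spatial averages rather than background values. Write $\chi^2v^0=(\chi^2v^0)_\av+[\chi^2v^0-(\chi^2v^0)_\av]$. The oscillating part is bounded in $L^\infty$ by $\|\pa_x\rho\|_{H^2}+\|\pa_xv\|_{H^2}$ and costs only $\tau^{-(2-\frac{3}{2n}-\delta)}E$. For the metric you have two options:
\begin{itemize}
\item Freeze at the constant matrix $G=(g_{cd})_\av(\tau)$. Then $\int(\pa_a\xi^a)^2\le\int (G^{-1})^{ab}G_{cd}\pa_a\xi^c\pa_b\xi^d$ is Cauchy--Schwarz on the Fourier side, and replacing $g$ by $G$ costs errors involving only $\pa_x h$ and $(h^{0i})^2$, which decay.
\item Use the variable-coefficient identity of Appendix~\ref{app:divcurl}, whose error has the same structure. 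This is what the paper does.
\end{itemize}

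\textbf{Two smaller points.} First, your bound for the non-orthogonality error $-\int g^{ij}\chi p'(\rho)\,g(\pa_i\xi,v)\,\pav\pa_j\varphi$ loses a derivative. The factor $\pav\pa_j\varphi$ contains $v^a\pa_a\pa_j\varphi$, and through $Q(\pa_j\varphi,\pa_j\xi)$ its time part brings in $\pa_x^2\xi$; neither is controlled by $E_{f,\eta;1}[\varphi,\xi]$. To fix this:
\begin{itemize}
\item write $\pav\pa_j\varphi=\pa_j(\pav\varphi)-(\pa_jv^\mu)\pa_\mu\varphi$ and integrate by parts in $x^j$;
\item use $g(\pa_i\xi,v)=-(\pa_iv_\alpha)\xi^\alpha$, so only first derivatives of $\xi$ appear;
\item bound $\pav\varphi=Q(\varphi,\xi)-\chi\pa_\mu\xi^\mu$ with Lemma~\ref{sec:energy-fluid.lem:time-deriv}; part of the term lands in $E^{1/2}f_{Q,R;1}$.
\end{itemize}

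Second, your fallback for the correction term cannot work. Shrinking $\delta$ makes $1-\frac{3}{n}\ge\frac{3}{2n}-\delta$ harder to satisfy, not easier. The term $\frac{2\eta^2}{\tau}\cdot(\text{correction})$ is genuinely of size $\tau^{-(2-\frac{3}{n})}E$. So are the terms marked $(i)$ in Lemma~\ref{sec:energy-fluid.lem:correction-term}, for instance $\tau^{-(2-\frac{6}{n})}\int\chi(v^0)^2\xi^a\pa_a\varphi$, and the paper's proof passes over this as well. The honest error rate is $\tau^{-\min\{1+\frac{3}{2n}-\delta,\,2-\frac{3}{n}\}}E$. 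This is still integrable because $n>3$, and it is harmless for the Gr\"onwall argument in Section~\ref{sec:boot-imp}, possibly after enlarging $\tau_0$.
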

\begin{proof}
    We begin by letting $\eta\in \RR$ be an arbitrary real number. Repeating the argument from Lemma~\ref{sec:energy-fluid.lem:base-energy}, we find that the integral
    \begin{equation}
        E_{f, 0; 1}[\varphi,\xi] = \frac{1}{2}\int_{\TT^3} g^{ij}\Big(p'(\rho) v^0 \pa_i\varphi\pa_j \varphi + 2\chi p'(\rho) \pa_i\varphi \pa_j\xi^0 +  \tau^{\frac{6}{n}} \chi^2 v^0 g_{\alpha\beta} \pa_i\xi^\alpha \pa_j\xi^\beta\Big)\,,
    \end{equation}
    satisfies the differential inequality
    \begin{multline}\label{sec:energy-fluid.eq:corrected-energy-proof-1}
         \frac{1}{2}\pa_\tau E_{f, 0; 1}[\varphi,\xi](\tau)
         + \frac{2-\frac{3}{n}}{\tau}\int_{\TT^3}\tau^{\frac{6}{n}}g^{ij} \chi^2 v^0 g_{\alpha\beta}\pa_i \xi^\alpha \pa_j\xi^\beta \\
         \lesa \big(E_{f, \eta; 1}[\varphi,\xi](\tau)\big)^{1/2}f_{Q,R}[\pa_x\varphi,\pa_x\xi]+\frac{1}{\tau^{1+\frac{3}{2n}-\delta}}E_{f, \eta; 1}[\varphi,\xi](\tau)\,.
    \end{multline}
    Any additional error terms which may arise from derivatives hitting the $g^{ij}$ quantities in the integrand of $E_{f, 0; 1}$ will be integrable, as we have $\|\pa g\|_{L^\infty} \lesa \tau^{-(1+\frac{3}{n})}$. Combining \eqref{sec:energy-fluid.eq:corrected-energy-proof-1} with the estimate for the correction term from Lemma~\ref{sec:energy-fluid.lem:correction-term} implies the inequality 
    \begin{multline}\label{sec:energy-fluid.eq:corrected-energy-proof-2}
        \pa_\tau E_{f, \eta; 1}[\varphi,\xi] + \frac{\eta}{\tau}\int_{\TT^3} g^{ab}p'(\rho)v^0 \pa_a \varphi \pa_b \varphi \\
        + \frac{1}{\tau}\int_{\TT^3} \tau^{\frac{6}{n}}\chi^2 v^0 \Big[\big(2-\frac{3}{n}\big)g^{ab}g_{\alpha\beta}\pa_a \xi^\alpha \pa_b \xi^\beta - \eta(\pa_a\xi^a)^2\Big]\\
        \lesa \big(E_{f, \eta; 1}[\varphi,\xi]\big)^{1/2}\Big(f_{Q,R}[\pa_x\varphi,\pa_x\xi] + f_{Q,R;1}[\varphi,\xi]\Big)+\frac{1}{\tau^{1+\frac{3}{2n}-\delta}}E_{f, \eta; 1}[\varphi,\xi]\,.
    \end{multline}
    We focus on the third term in this inequality, the final integral on the left hand side. Since $g(\xi,v) = 0$, the $0$-component of $\xi$ is given by $\xi^0 = -\frac{v_a}{v_0}\xi^a$, therefore we have the upper bound
    \begin{equation*}
        \frac{1}{\tau}\int_{\TT^3} \tau^{\frac{6}{n}}\chi^2 v^0 g^{ab}g_{\alpha\beta}\pa_a \xi^\alpha \pa_b \xi^\beta  \leq \frac{1}{\tau}\int_{\TT^3} \tau^{\frac{6}{n}}\chi^2 v^0 g^{ab}g_{cd}\pa_a \xi^c \pa_b \xi^d  + \frac{C}{\tau^{1+\frac{3}{2n}-\delta}}E_{f, \eta; 1}[\varphi,\xi]\,.
    \end{equation*}
    Hence we may replace the final term on the left hand side of the inequality \eqref{sec:energy-fluid.eq:corrected-energy-proof-1} with
    \begin{equation}\label{sec:energy-fluid.eq:corrected-energy-proof-3}
        \frac{1}{\tau}\int_{\TT^3} \tau^{\frac{6}{n}}\chi^2 v^0 \Big[\big(2-\frac{3}{n}\big)g^{ab}g_{cd}\pa_a \xi^c \pa_b \xi^d - \eta(\pa_a\xi^a)^2\Big]\,.
    \end{equation}
    It is clear that this term will be nonnegative and control $\tau^{-1}\|\pa_x \xi\|_{L^2}^2$ for sufficiently small $\eta$. We can make this observation precise by a div-curl identity. We introduce the quantities $\mathrm{div}\,\xi$, $\mathrm{curl}\,\xi$\footnote{These are not quite the standard divergence and curl operators, however we label them in the same manner, as we use them for the same purpose.}
    \begin{equation*}
        \mathrm{div}\,\xi := \pa_a \xi^a\,, \qquad \mathrm{curl}_{ij} \,\xi:= \frac{1}{2}\Big(g_{ai}\pa_j \xi^a - g_{aj}\pa_i \xi^a\Big)\,.
    \end{equation*}
    We claim that if the components of $\xi$ have vanishing spatial average, then $\mathrm{div}\, \xi$, $\mathrm{curl}\,\xi$ satisfy the identity
    \begin{equation}\label{sec:energy-fluid.eq:corrected-energy-proof-4}
        \int_{\TT^3} g^{ab}g_{cd} (\pa_a \xi^c)(\pa_b \xi^d)= \int_{\TT^3} (\mathrm{div}\,\xi)^2 + 2\int_{\TT^3} g^{ab}g^{cd} (\mathrm{curl}_{ac} \,\xi)(\mathrm{curl}_{bd} \,\xi) + \mc{J}\,,
    \end{equation}
    where $\mc{J}$ is an error term defined in \eqref{app:divcurl-proof-5} that satisfies 
    \begin{equation*}
        |\mc{J}| \lesa \Big(\|\pa_x h\|_{L^\infty} + \sum_{i=1}^3\|h^{0i}\|_{L^\infty}^2\Big)\|\pa_x \xi\|_{L^2}^2\,.
    \end{equation*}
    We defer the proof of the identity \eqref{sec:energy-fluid.eq:corrected-energy-proof-4} to Appendix~\ref{app:divcurl}. Observing the nonnegativity of the $\mathrm{curl}$ integral \eqref{sec:energy-fluid.eq:corrected-energy-proof-4}, as well as the aforementioned bound on $\mc{J}$, we conclude that
    \begin{multline*}
        \frac{1}{\tau}\int_{\TT^3} \tau^{\frac{6}{n}}\chi^2 v^0(\mathrm{div}\,\xi)^2 \\
        \leq \frac{1}{\tau}\int_{\TT^3} \tau^{\frac{6}{n}}\chi^2 v^0 g^{ab}g_{cd} (\pa_a \xi^c)(\pa_b \xi^d)+ \frac{C}{\tau^{1-\frac{6}{n}}}\Big(\|\pa_x \rho\|_{H^2} + \|\pa_x v\|_{H^2} + \|\pa_x g\|_{H^2}\Big)\|\pa_x \xi\|_{L^2}^2\\
        \leq \frac{1}{\tau}\int_{\TT^3} \tau^{\frac{6}{n}}\chi^2 v^0 g^{ab}g_{cd} (\pa_a \xi^c)(\pa_b \xi^d) + \frac{C\ve}{\tau^{2-\frac{3}{2n}-\delta}}E_{f, \eta; 1}[\varphi,\xi]\,,
    \end{multline*}
    where we dealt with the $\chi^2v^0$ terms by decomposing like $\chi^2v^0 = (\chi^2 v^0)_\av + \big[\chi^2v^0- (\chi^2 v^0)_\av \big]$, then bounding the remainder term in $L^\infty$ by $\|\pa_x\rho\|_{H^2} + \|\pa_x v\|_{H^2}$ with the Sobolev embedding and Poincar\'e inequality. Thus the term \eqref{sec:energy-fluid.eq:corrected-energy-proof-4} satisfies the lower bound
    \begin{multline*}
        \frac{1}{\tau}\int_{\TT^3} \tau^{\frac{6}{n}}\chi^2 v^0 \Big[\big(2-\frac{3}{n}\big)g^{ab}g_{cd}\pa_a \xi^c \pa_b \xi^d - \eta(\pa_a\xi^a)^2\Big] \\
        \geq \frac{2 - \frac{3}{n} -\eta}{\tau}\int_{\TT^3} \tau^{\frac{6}{n}}\chi^2 v^0 g_{\alpha\beta}g^{ab}\pa_a \xi^\alpha \pa_b \xi^\beta - \frac{C}{\tau^{2-\frac{3}{2n}-\delta}}E_{f, \eta; 1}[\varphi,\xi]\,.
    \end{multline*}
    Inserting this into \eqref{sec:energy-fluid.eq:corrected-energy-proof-2}, we arrive at the estimate
    \begin{multline}\label{sec:energy-fluid.eq:corrected-energy-proof-6}
        \pa_\tau E_{f, \eta; 1}[\varphi,\xi] + \frac{2\min\Big\{\eta,2-\frac{3}{n}-\eta\Big\}}{\tau}E_{f, \eta; 1}[\varphi,\xi] \\
        \lesa \big(E_{f, \eta; 1}[\varphi,\xi]\big)^{1/2}\Big(f_{Q,R}[\pa_x\varphi,\pa_x\xi] + f_{Q,R;1}[\varphi,\xi]\Big)+\frac{1}{\tau^{1+\frac{3}{2n}-\delta}}E_{f, \eta; 1}[\varphi,\xi]\,.
    \end{multline}
    It is now clear that the coefficient of the damping term is maximised by setting $\eta = 1-\frac{3}{2n}$. 
    
    It remains to bound the quantity $f_{Q,R}[\pa_x\varphi,\pa_x\xi]$ in terms of $f_{Q,R;1}[\varphi,\xi]$. The operators $Q$, $R$ satisfy the commutator relations
    \begin{align*}
        Q(\pa_i\varphi,\pa_i\xi) =&\, \pa_i Q(\varphi,\xi) -\pa_i v^\mu \pa_\mu \varphi - \pa_i \chi \pa_\mu \xi^\mu\,,\\
        R^j(\pa_i\varphi,\pa_i\xi) =&\, \pa_i R^j(\varphi,\xi) -\pa_i v^\mu \pa_\mu \xi^j - \pa_i\Big[\frac{p'(\rho)}{\tau^{\frac{6}{n}}\chi}\Pi^{\mu i}\Big]\pa_\mu\varphi - \frac{2}{\tau}\pa_i v^0 \xi^j\,,
    \end{align*}
    and so using Lemma~\ref{sec:energy-fluid.lem:time-deriv} to estimate the time derivatives of $\varphi,\xi$, we have
    \begin{multline*}
        \sum_{i=1}^3\Big(\big\|Q(\pa_i\varphi,\pa_i\xi)\big\|_{L^2} + \tau^{\frac{3}{n}}\sum_{j=1}^3\big\|R^j(\pa_i \varphi,\pa_i \xi)\big\|_{L^2}\Big) \\\lesa f_{Q,R;1}[\varphi,\xi] + \frac{1}{\tau^{1+\frac{3}{2n}-\delta}}\|\pa_x\varphi\|_{L^2} + \frac{1}{\tau^{1-\frac{3}{2n}-\delta}}\|\pa_x \xi\|_{L^2} \lesa f_{Q,R;1}[\varphi,\xi] + \frac{1}{\tau^{1+\frac{3}{2n}-\delta}}E_{f, \eta; 1}[\varphi,\xi]\,.
    \end{multline*}
    We insert this estimate into \eqref{sec:energy-fluid.eq:corrected-energy-proof-6} to obtain the stated bound.
\end{proof}

\subsection{Higher-order estimates for the fluid variables}
Next, we apply the estimate from Proposition~\ref{sec:energy-fluid.prop:corrected-energy} to the higher-order spatial derivatives of the fluid variables
\begin{equation*}
    (\pa_x^I \rho,\pa_x^I v)\,,\qquad 1\leq |I|\leq N\,.
\end{equation*}
\begin{definition}[Top-order fluid energy]
    Let $v^I$ denotes the top-order spatial derivatives of the fluid four-velocity, with modified $0$-component
\begin{equation}
    v^I = \Big((v^I)^0,(v^I)^1,(v^I)^2,(v^I)^3\Big) := \Big(-\frac{v_a}{v_0}\pa_x^I v^a,\pa_x^I v^1,\pa_x^I v^2,\pa_x^I v^3\Big)\,.
\end{equation}
Then we use the corrected base fluid energy from Definition \ref{def:corrected-base-fluid-energy} to  define the higher-order fluid energy 
\begin{equation}
    E_{(\rho, v); N}(\tau) := \sum_{0\leq|I|\leq N-1} E_{f,(1-\frac{3}{2n}) ;1}\big[\pa_x^I \rho,v^I\big](\tau)\,.
\end{equation}
\end{definition}

The above choice of $v^I$ ensures that $g(v^I,v) = 0$, so that the previous results in this section hold. Since the spatial components of the $v^I$ have vanishing spatial average for $|I| \geq 1$, it follows from the Poincar\'e inequality that the energy $E_{(\rho,v);N}$ is equivalent (up to a weight in $\tau$) to the higher-order norm of the fluid:
\begin{equation}\label{sec:energy-fluid.eq:higher-energy-bound} E_{(\rho,v);N}(\tau) \sim\tau^{-\big(2-\frac{3}{n}-2\delta\big)}S_{(\rho,v);N}^2(\tau)\,.
\end{equation}

We have the following energy estimate for $E_{(\rho,v);N}$. Just as in Proposition~\ref{sec:energy-met.prop:higher-energy}, we rely on several commutator estimates which we will prove later in the section.
\begin{proposition}[Estimate for the higher-order fluid energy]\label{sec:energy-fluid.prop:higher-energy}
    Suppose $(\rho,v)$ solve the Euler equations \eqref{sec:gauged-eqs.eq:fluid}, and that the bootstrap assumptions \eqref{sec:stability.eq:boot} hold. Then the time derivative of the energy $E_{(\rho,v);N}$ satisfies
    \begin{multline}\label{sec:energy-fluid.eq:higher-energy}
        \pa_\tau E_{(\rho,v);N}(\tau) + \frac{2-\frac{3}{n}}{\tau}E_{(\rho,v);N}\\\lesa \tau^{\frac{3}{n}}\big[(E_{h^{00};N-1}^{\mathbf{v}})^{1/2} + (E_{h^{0*};N-1}^{\mathbf{v}})^{1/2}\big](E_{(\rho,v);N})^{1/2} + \frac{1}{\tau^{3-\frac{3}{2n}-3\delta}}\,.
    \end{multline}
\end{proposition}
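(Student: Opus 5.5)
We apply Proposition~\ref{sec:energy-fluid.prop:corrected-energy} to each pair $(\varphi,\xi) = (\pa_x^I\rho, v^I)$ with $0\le |I|\le N-1$ and then sum. By construction $g(v^I,v)=0$, and for $|I|\ge1$ both $\pa_x^I\rho$ and the spatial components of $v^I$ have vanishing spatial average. The case $I=0$ is also admissible, because $E_{f,\eta;1}$ already contains one spatial derivative and the correction term only involves $\pa_a\rho$ and $v^a$, which we split into average and oscillation. Each corrected energy then satisfies
\[
\pa_\tau E_{f,\eta;1} + \frac{2-\frac{3}{n}}{\tau}E_{f,\eta;1} \lesa E_{f,\eta;1}^{1/2} f_{Q,R;1}[\pa_x^I\rho,v^I] + \frac{1}{\tau^{1+\frac{3}{2n}-\delta}}E_{f,\eta;1}.
\]
The last term is integrable and harmless: we keep it on the right, where it is bounded by $\tau^{-(1+\frac{3}{2n}-\delta)}\cdot\tau^{-(2-\frac{3}{n}-2\delta)}$ by \eqref{sec:energy-fluid.eq:higher-energy-bound}, and this is dominated by $\tau^{-(3-\frac{3}{2n}-3\delta)}$. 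The whole problem therefore reduces to estimating $f_{Q,R;1}[\pa_x^I\rho,v^I]$.

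To do this we commute the Euler equations \eqref{sec:gauged-eqs.eq:fluid} with $\pa_x^I$:
\[
Q(\pa_x^I\rho,v^I) = \pa_x^I(\mc{L}_{(\rho,v)}+\mc{G}_{(\rho,v)}) + [\text{commutators}] + \chi\,\pa_\tau\big((v^I)^0 - \pa_x^I v^0\big),
\]
and similarly for $R^i$. The last piece appears because $v^I$ has a modified $0$-component; it is lower order since $(v^I)^0-\pa_x^Iv^0$ is quadratic and small, and we estimate it through Lemma~\ref{sec:boot:lem.misc-fluid}.

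The commutators have the form $\pa_x^J v^\mu\,\pa_\mu\pa_x^{I-J}\rho$ and $\pa_x^J(\tau^{-6/n}p'\Pi/\chi)\,\pa\pa_x^{I-J}\rho$ with $|J|\ge1$. We place the lower-order factor in $L^\infty$ and use Proposition~\ref{sec:stability.prop:fluid-time-deriv} for the time derivatives. In every product either a factor of $\pa_x v$ appears, which costs $\tau^{-(1+\frac{3}{2n}-\delta)}$, or the $\tau^{-6/n}$ weight appears; this is the absorption mechanism described in the introduction. The commutator contributions are then of size $\tau^{-(1+\frac{3}{2n}-\delta)}E^{1/2}$ in $f_{Q,R;1}$, so their products with $E^{1/2}$ are integrable multiples of $E$.

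Next come the $\mc{G}$ terms. By the bootstrap assumptions and Lemma~\ref{sec:stability.lem:christoffel} they are quadratic or pressure-suppressed. The crucial one is $\tau^{-1-6/n}pv^0$, whose spatial derivatives are $\tau^{-1-6/n}\pa_x\rho$. In $\|\pa_x\pa_x^I Q\|$ this gives $\tau^{-1-6/n}\|\pa_x\rho\|_{H^{N-1}}$, which is again integrable against $E$.

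Everything hinges on the linear terms, and this is where we expect the main obstacle. We have $\mc{L}_{(\rho,v)} = -\chi v^0\Gamma^\lambda_{0\lambda}$, and by \eqref{sec:stability.eq:christoffel-1} together with the gauge condition its spatial derivatives reduce to $\pa\pa_x h^{0\mu}$ plus terms of order $\tau^{-1}\pa_x h^{00}$. Likewise $\mc{L}^i_{(\rho,v)} = -\frac{2}{\tau}h^{0i}-(v^0)^2\Gamma^i_{00}$, which by \eqref{sec:stability.eq:christoffel-2} is governed by $\pa\pa_x h^{0\mu}$ and $\pa_x g_{00}$. Two points need care here.

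First, we must bound $\|\pa\pa_x^{I+1}h^{0\mu}\|_{L^2}$ up to $|I|=N-1$ by the commuted energies. For this we use Lemma~\ref{sec:energy-met.lem:elliptic} and Remark~\ref{rem:time-derivs-elliptic}, applied to $\pa_x^I h^{0\mu}$, combined with Proposition~\ref{sec:energy-met.prop:com-ests} to control $P_\eta$. This yields $(E^{\mathbf v}_{h^{00};N-1})^{1/2}+(E^{\mathbf v}_{h^{0*};N-1})^{1/2}$ plus a remainder of order $\tau^{-1-\frac{3}{2n}+\delta}\cdot\tau^{-(3-\frac{3}{2n}-\delta)}$.

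Second, the $R$ part of $f_{Q,R;1}$ carries a weight $\tau^{3/n}$. This weight produces the factor $\tau^{3/n}$ in front of the metric energies in \eqref{sec:energy-fluid.eq:higher-energy}. The $Q$ part has the same metric terms without the weight, so it is dominated.

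Crucially, the spatial metric components $h^{ij}$ enter only through $g_{0a}\pa_\tau h^{ia}$, which is quadratic. This is consistent with the statement, where only $E_{h^{00}}$ and $E_{h^{0*}}$ appear. The remaining remainder terms, multiplied by $E^{1/2}\lesa \ve\tau^{-(1-\frac{3}{2n}-\delta)}$, produce the inhomogeneity $\tau^{-(3-\frac{3}{2n}-3\delta)}$, up to a check of the exponent bookkeeping. Summing over $I$ then gives \eqref{sec:energy-fluid.eq:higher-energy}.
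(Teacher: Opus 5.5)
Your route is the paper's: apply Proposition~\ref{sec:energy-fluid.prop:corrected-energy} to $(\pa_x^I\rho,v^I)$, keep $\tau^{-(1+\frac{3}{2n}-\delta)}E_{(\rho,v);N}$ as the inhomogeneity, and bound $f_{Q,R;1}$ by commuting the Euler equations. The metric then enters linearly only through $\pa\pa_x h^{0\mu}$ (and $\pa_x g_{00}$), and the factor $\tau^{3/n}$ comes from the weight on $R$. However, three steps fail as written.

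First, $(v^I)^0-\pa_x^I v^0$ is not quadratic. For $|I|\ge 1$, differentiating $g_{\alpha\beta}v^\alpha v^\beta=-1$ gives
\[
2v_0\big(\pa_x^I v^0-(v^I)^0\big)=-(\pa_x^I g_{\alpha\beta})v^\alpha v^\beta-(\text{products of lower-order derivatives of } g \text{ and } v),
\]
which is linear in $\pa_x^I g_{00}$. Moreover, Lemma~\ref{sec:boot:lem.misc-fluid} contains no time derivatives. What you need, and what Lemma~\ref{sec:energy-fluid.lem:four-velocity} extracts from this identity, is that the difference involves at most $|I|$ derivatives of $g$ and $|I|-1$ of $v$, so it gains a derivative. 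Without that gain, $\pa_x\pa_\tau$ of the difference at $|I|=N-1$ contains $\pa_\tau\pa_x^N v$, which is not controlled by $S_N$. With the gain, the term is bounded by $\|\pa\pa_x h^{00}\|_{H^{N-1}}$ plus quadratic terms, and is harmless in the unweighted $Q$-part.

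Second, for $I=0$ you cannot use Proposition~\ref{sec:energy-fluid.prop:corrected-energy} as a black box. The undifferentiated $\|Q(\rho,v^I)\|_{L^2}$ inside $f_{Q,R;1}$ has a spatial average of size up to $\tau^{-(1+\frac{3}{n})}$ (from the averaged Christoffel and pressure terms). After multiplying by $E_{(\rho,v);N}^{1/2}$, this decays much more slowly than $\tau^{-(3-\frac{3}{2n}-3\delta)}$. You have to go into Lemma~\ref{sec:energy-fluid.lem:correction-term}, where undifferentiated $Q$ and $R$ only occur with additional decaying factors. The paper's proof simply sums from $|I|\ge1$.

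Third, the remainder you claim in the elliptic step is too optimistic, and this is where your deferred exponent check actually fails. The gain $\tau^{-(1+\frac{3}{2n}-\delta)}$ in Remark~\ref{rem:time-derivs-elliptic} concerns only time derivatives. The $H^2$ bound of Lemma~\ref{sec:energy-met.lem:elliptic} carries $\|P_\eta u\|_{L^2}$ with no gain. For $u=\pa_x^I h^{00}$ this contains $\tau^{-2}g^{00}\pa_x^I\rho$, coming from the Poisson-type term $-\tau^{-2}g^{00}(\rho-\rho_\av)$ in $\mc{L}_h^{00}$, of size $\ve\tau^{-(3-\frac{3}{2n}-\delta)}$.

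Through the term $-\frac{1}{2}g^{ia}\pa_a g_{00}$ of $\Gamma^i_{00}$, $N+1$ spatial derivatives of $h^{00}$ enter $\pa_x^{N}\mc{L}^i_{(\rho,v)}$. With the weight $\tau^{3/n}$ and the factor $E_{(\rho,v);N}^{1/2}$, this contributes $\ve^2\tau^{-(4-\frac{6}{n}-2\delta)}$. That is dominated by $\tau^{-(3-\frac{3}{2n}-3\delta)}$ only if $\frac{9}{2n}\le 1+\delta$, so it fails for $n$ near $3$. The remainder $\tau^{-(2+\frac{9}{n}-\delta)}$ quoted at the end of the proof of Lemma~\ref{sec:energy-fluid.lem:source-ests} has the same defect.

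The repair is to bound this piece by $\tau^{-2}\|\pa_x^{N-1}\rho\|_{L^2}\lesssim \tau^{-2}E_{(\rho,v);N}^{1/2}$. This adds $\tau^{-(2-\frac{3}{n})}E_{(\rho,v);N}$ to the right-hand side. That is not literally of the stated form, but its coefficient is integrable since $n>3$, and it can be absorbed by the damping in Section~\ref{sec:boot-imp} once $\tau_0$ is large.
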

\begin{proof}
    The quantities $\pa_x^I \rho$, $(v^I)^i$ have vanishing spatial average for $1 \leq |I| \leq N$. Moreover, $g(v^I,v) = 0$. Thus we may apply Proposition~\ref{sec:energy-fluid.prop:corrected-energy} to estimate the quantities $(\pa_x^I\rho,v^I)$. Using the definition of the energy $E_{(\rho,v);N}$, this implies the bound
    \begin{equation}\label{sec:energy-fluid.eq:higher-energy-proof-1}
        \pa_\tau E_{(\rho,v);N}+ \frac{2-\frac{3}{n}}{\tau}E_{(\rho,v);N}\lesa \sum_{1\leq |I|\leq N-1}f_{Q,R;1}[\pa_x^I\rho,v^I](E_{(\rho,v);N})^{1/2} + \frac{1}{\tau^{3-\frac{3}{2n}-3\delta}}\,.
    \end{equation}
    We claim that the following bound holds:
    \begin{equation}\label{sec:energy-fluid.eq:higher-energy-proof-2}
        \sum_{1\leq |I|\leq N-1}f_{Q,R;1}[\pa_x^I \varphi,v^I] \lesa \tau^{\frac{3}{n}}\big[(E_{h^{00};N-1}^{\mathbf{v}})^{1/2} + (E_{h^{0*};N-1}^{\mathbf{v}})^{1/2}\big] + \frac{1}{\tau^{2-2\delta}}\,.
    \end{equation}
    We will prove this in Proposition~\ref{sec:energy-fluid.prop:com-ests}. For now, we plug \eqref{sec:energy-fluid.eq:higher-energy-proof-2} into \eqref{sec:energy-fluid.eq:higher-energy-proof-1} and use the bound \eqref{sec:energy-fluid.eq:higher-energy-bound} to estimate
    \begin{align*}
        \pa_\tau &E_{(\rho,v);N} + \frac{2-\frac{3}{n}}{\tau}E_{(\rho,v);N}\\
        &\lesa \tau^{\frac{3}{n}}\big[(E_{h^{00};N-1}^{\mathbf{v}})^{1/2} + (E_{h^{0*};N-1}^{\mathbf{v}})^{1/2}\big](E_{(\rho,v);N})^{1/2} + \frac{1}{\tau^{2-2\delta}}(E_{(\rho,v);N})^{1/2} + \frac{1}{\tau^{1+\frac{3}{2n}-\delta}}E_{(\rho,v);N}\\
        &\lesa \tau^{\frac{3}{n}}\big[(E_{h^{00};N-1}^{\mathbf{v}})^{1/2} + (E_{h^{0*};N-1}^{\mathbf{v}})^{1/2}\big](E_{(\rho,v);N})^{1/2} + \frac{1}{\tau^{3-\frac{3}{2n}-3\delta}}\,.
    \end{align*}
    The result now follows.
\end{proof}

It remains to prove the bound \eqref{sec:energy-fluid.eq:higher-energy-proof-2}. First, we have estimates for the source terms appearing in the Euler equations.
\begin{lemma}[Estimates for the fluid source terms]\label{sec:energy-fluid.lem:source-ests}
    The quantities $\mc{L}_{(\rho,v)}$, $\mc{L}_{(\rho,v)}^i$, $\mc{G}_{(\rho,v)}$, and $\mc{G}_{(\rho,v)}^i$ obey the $H^N$-norm bounds
    \begin{subequations}\label{sec:energy-fluid.eq:source-ests-1}
        \begin{align}
            \|Q(\rho,v)\|_{H^N} &\lesa \frac{1}{\tau^{1+\frac{3}{n}}},\\
            \sum_{i=1}^3\|R^i(\rho,v)\|_{H^N} &\lesa \frac{1}{\tau^{2+\frac{3}{2n}-\delta}}\,,
        \end{align}
    \end{subequations}
    and the higher-order bounds
    \begin{subequations}\label{sec:energy-fluid.eq:source-ests-2}
        \begin{align}
            \big\|\pa_x[Q(\rho,v)]\|_{H^{N-1}} &\lesa \frac{1}{\tau^{2+\frac{3}{2n}-\delta}},\\
            \sum_{i=1}^3\big\|\pa_x [R^i(\rho,v)\big]\big\|_{H^{N-1}} &\lesa \big[(E_{h^{00};N-1}^{\mathbf{v}})^{1/2} + (E_{h^{0*};N-1}^{\mathbf{v}})^{1/2}\big] + \frac{1}{\tau^{2+\frac{9}{2n}-\delta}}\,.
        \end{align}
    \end{subequations}
\end{lemma}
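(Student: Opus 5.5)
The plan is to read off all four bounds directly from the Euler equations written as $Q(\rho,v)=\mc{L}_{(\rho,v)}+\mc{G}_{(\rho,v)}$ and $R^i(\rho,v)=\mc{L}^i_{(\rho,v)}+\mc{G}^i_{(\rho,v)}$, estimating each source term with the Sobolev product lemma, the bootstrap assumptions \eqref{sec:globed.eq:boot}, Lemma~\ref{sec:stability.lem:lowered-met}, Lemma~\ref{sec:boot:lem.misc-fluid} and the Christoffel symbol bounds of Lemma~\ref{sec:stability.lem:christoffel}.

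\textbf{The $H^N$ bounds \eqref{sec:energy-fluid.eq:source-ests-1}.} For $Q$, the right hand side is $-\chi v^\mu\Gamma^\lambda_{\mu\lambda}-6\tau^{-1-\frac6n}pv^0$. The first term is bounded by $\tau^{-(1+\frac3n)}$ by \eqref{sec:stability.eq:christoffel-3}. The pressure term is $O(\tau^{-1-\frac6n})$, which is even better. For $R^i$, the right hand side is:
\begin{itemize}
\item $-v^\mu v^\lambda\Gamma^i_{\mu\lambda}$, controlled by \eqref{sec:stability.eq:christoffel-3} at rate $\tau^{-(2+\frac{3}{2n}-\delta)}$;
\item $-\frac2\tau h^{0i}$, split into its average part, which is $\lesssim\ve\tau^{-2-\frac3n}$ by Proposition~\ref{sec:avg-ests.prop:avg-ests}, and its oscillatory part, which Poincar\'e bounds by $\tau^{-1}\|\pa_x h^{0i}\|_{H^{N-1}}\lesssim\tau^{-3-\frac{3}{2n}+\delta}$;
\item the pressure term $\tau^{-1-\frac6n}\Pi^{0i}$, where $\Pi^{0i}=v^0v^i+g^{0i}$ is already small (average $\tau^{-1-\frac3n}$ plus oscillation), giving a rate at least as good.
\end{itemize}
Only the $\Gamma$ term attains the stated rate.

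\textbf{The derivative bounds \eqref{sec:energy-fluid.eq:source-ests-2}.} For $Q$, I take $\pa_x$ and apply the first estimate of \eqref{sec:stability.eq:christoffel-4}. The key input there is the wave gauge identity $\pa_x^I\Gamma^\mu=\frac{\kappa}{\tau}\pa_x^I h^{0\mu}$ together with \eqref{sec:stability.eq:christoffel-1}: spatial derivatives of $\Gamma^\lambda_{0\lambda}$ involve only $\pa\pa_x h^{0\mu}$ and $\tau^{-1}\pa_x h^{00}$, both of order $\tau^{-2-\frac{3}{2n}+\delta}$. The pressure term contributes $\tau^{-1-\frac6n}\|\pa_x\rho\|$, which is smaller. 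For $R^i$, the terms split as follows:
\begin{itemize}
\item $\tau^{-1}\pa_x h^{0i}$ is $\lesssim\tau^{-1}\|\pa_x^2 h^{0i}\|_{H^{N-2}}$ by Poincar\'e, i.e.\ $\tau^{-3-\frac{3}{2n}+\delta}$;
\item the pressure piece gives $\tau^{-1-\frac6n}(\|\pa_x v\|+\|\pa_x h^{0i}\|)\lesssim\tau^{-2-\frac{15}{2n}+\delta}$;
\item the $\Gamma$ piece, by the second estimate of \eqref{sec:stability.eq:christoffel-4}, is bounded by $\sum_\mu\|\pa\pa_x h^{0\mu}\|_{H^{N-1}}+\tau^{-2-\frac{9}{2n}+\delta}$.
\end{itemize}

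\textbf{Main obstacle.} The difficulty is to replace $\sum_\mu\|\pa\pa_x h^{0\mu}\|_{H^{N-1}}$ by the energies $(E^{\mathbf v}_{h^{00};N-1})^{1/2}+(E^{\mathbf v}_{h^{0*};N-1})^{1/2}$, rather than by the bootstrap norm, which would lose the structure needed later. I would use the elliptic estimate Lemma~\ref{sec:energy-met.lem:elliptic} and Remark~\ref{rem:time-derivs-elliptic}, applied to $u=\pa_x^I h^{0\mu}$ with $|I|\le N-1$, which have zero average:
\[
\|\pa\pa_x^{I} h^{0\mu}\|_{H^1}\lesssim \|\pa(\pav\pa_x^I h^{0\mu})\|_{L^2}+\|P_\eta\pa_x^I h^{0\mu}\|_{L^2}.
\]
The first term on the right is controlled by the corrected energy. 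This uses the coercivity of $E^{\mathbf v}_{w,\eta}$: its correction term is lower order by Poincar\'e, via \eqref{sec:energy-met.eq:eq:elliptice-rmk-2}. The second term is bounded through \eqref{sec:energy-met.eq:com-ests-4}, but only at rate $\tau^{-3+\frac{3}{2n}+\delta}$. That rate falls short of the claimed remainder $\tau^{-2-\frac{9}{2n}+\delta}$ only if $\frac6n>1$; since $n>3$ gives $\frac6n<2$, the relevant comparison is $3-\frac{3}{2n}$ against $2+\frac{9}{2n}$, i.e.\ one needs $\frac6n<1$. This is the delicate point. If it fails, I would instead take the factor $\frac1\tau$ in the elliptic identity more carefully: the $P_\eta$ term enters only through $\hat\square_g$, and this should be combined with the better $h^{0\mu}$ source bounds of Lemma~\ref{sec:energy-met.lem:source-ests}. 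The remaining lower-derivative pieces of $\pa\pa_x h^{0\mu}$ are handled by the same inequality at one order less, together with Poincar\'e.
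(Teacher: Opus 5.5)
Up to the last step your argument is the paper's proof. The paper uses the same splitting of $Q(\rho,v)$ and $R^i(\rho,v)$ into $\mc{L}+\mc{G}$, bounded term by term with Lemma~\ref{sec:stability.lem:christoffel} and the bootstrap. It then simply asserts, without further argument,
\[
\sum_{\mu=0}^3\|\pa\pa_x h^{0\mu}\|_{H^{N-1}}\lesa (E^{\mathbf v}_{h^{00};N-1})^{1/2}+(E^{\mathbf v}_{h^{0*};N-1})^{1/2}+\frac{1}{\tau^{2+\frac{9}{n}-\delta}}\,.
\]
Your diagnosis of this step is correct, and the paper does not address it. Lemma~\ref{sec:energy-met.lem:elliptic} together with \eqref{sec:energy-met.eq:com-ests-4} only gives the remainder $\tau^{-(3-\frac{3}{2n}-\delta)}$. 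That beats $\tau^{-(2+\frac{9}{2n}-\delta)}$ only for $n\geq 6$.

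The genuine gap is your repair for $3<n<6$, which fails for two reasons. First, in the elliptic identity $\wh{\square}_g u=P_\eta u-\frac{2\eta}{\tau}g^{00}\pa_\tau u$, so $P_\eta u$ enters with coefficient one, not with an extra $\tau^{-1}$. Second, the improved bounds of Lemma~\ref{sec:energy-met.lem:source-ests} concern $\pav(P_\eta h)$, whereas the elliptic estimate needs $P_\eta\pa_x^I h$ itself. For $h^{0i}$ this does no harm: $\mc{L}_h^{0i}$ only contains $\tau^{-1}\pa h^{00}$ and $\tau^{-2}\rho v$, so $\|P_{\kappa/2}\pa_x^I h^{0i}\|_{L^2}\lesa\tau^{-(3+\frac{3}{2n}-\delta)}$. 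For $h^{00}$ it fails, because $\mc{L}_h^{00}$ contains $-\tau^{-2}g^{00}(\rho-\rho_\av)$.

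This term reflects a true size, not a lossy estimate. Modulo material-derivative terms controlled by $E^{\mathbf v}_{h^{00};N-1}$, the identity \eqref{sec:energy-met.eq:elliptic-id} says that $h^{00}$ solves a Newtonian Poisson equation $H^{ab}\pa_a\pa_b h^{00}\approx-\tau^{-2}g^{00}(\rho-\rho_\av)$. Meanwhile $R^i$ contains the gravitational force through $\Gamma^i_{00}\ni-\frac12 g^{ia}\pa_a g_{00}\approx\frac12\pa_i h^{00}$. Hence $\pa_x R^i$ contains $\pa_x\pa_i h^{00}$ up to total order $N+1$. Its size is $\tau^{-2}\|\pa_x\rho\|_{H^{N-2}}$, which the bootstrap allows to be as large as $\ve\tau^{-(3-\frac{3}{2n}-\delta)}$. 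No refinement of the elliptic argument removes it, so for $3<n<6$ the stated remainder cannot be derived from the bootstrap assumptions.

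What you can prove, and what suffices, is the bound with the extra term kept linear in $\rho$:
\[
\sum_{i=1}^3\big\|\pa_x[R^i(\rho,v)]\big\|_{H^{N-1}}\lesa (E^{\mathbf v}_{h^{00};N-1})^{1/2}+(E^{\mathbf v}_{h^{0*};N-1})^{1/2}+\frac{1}{\tau^{2}}\|\pa_x\rho\|_{H^{N-2}}+\frac{1}{\tau^{2+\frac{9}{2n}-\delta}}\,.
\]
In Proposition~\ref{sec:energy-fluid.prop:higher-energy} the new term contributes at most $\tau^{\frac{3}{n}-2}\|\pa_x\rho\|_{H^{N-2}}(E_{(\rho,v);N})^{1/2}\lesa\tau^{-(2-\frac{3}{n})}E_{(\rho,v);N}$. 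Its coefficient is integrable precisely because $n>3$; for $\tau_0$ large it is absorbed by the damping of the weighted fluid energy. So the bootstrap still closes, but the lemma and the remainder in Proposition~\ref{sec:energy-fluid.prop:com-ests} must be modified accordingly.
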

\begin{proof}
    We compute for the $H^N$ bounds:
        \begin{align*}
        \|\mc{L}_{(\rho,v)}\|_{H^N} &\lesa \|\Gamma\|_{H^N} \leq \frac{1}{\tau^{1+\frac{3}{n}}}\,,\\
        \sum_{i=1}^3\|\mc{L}_{(\rho,v)}^i\|_{H^N} &\lesa \frac{1}{\tau}\sum_{i=1}^3 \|h^{0i}\|_{H^N} +  \sum_{i=1}^3\|\Gamma_{00}^i\|_{H^N} \lesa \frac{1}{\tau^{2+\frac{3}{2n}-\delta}},\\
        \|\mc{G}_{(\rho,v)}\|_{H^N} &\lesa \|v\|_{H^N}\|\Gamma\|_{H^N} +\frac{1}{\tau^{1+\frac{6}{n}}}\|p\|_{H^N} \lesa \frac{1}{\tau^{1+\frac{3}{n}}}\,,\\
        \sum_{i=1}^3\|\mc{G}_{(\rho,v)}^i\|_{H^N} &\lesa \|v\|_{H^N}\|\Gamma\|_{H^N} + \frac{1}{\tau^{1+\frac{6}{n}}}\big\|\Pi^{0i}\big\|_{H^N} \lesa \frac{1}{\tau^{2+\frac{9}{2n}-\delta}}\,,
    \end{align*}
    implying \eqref{sec:energy-fluid.eq:source-ests-1}. For \eqref{sec:energy-fluid.eq:source-ests-2}, we use Lemma~\ref{sec:stability.lem:christoffel} to estimate the differentiated Christoffel symbols:
    \begin{align*}
        \big\|\pa_x \mc{L}_{(\rho,v)}\big\|_{H^{N-1}} &\lesa \frac{1}{\tau^{2+\frac{3}{2n}-\delta}}\,,\\
        \sum_{i=1}^3\big\|\pa_x \mc{L}_{(\rho,v)}^i\big\|_{H^{N-1}} &\lesa  \sum_{\mu=0}^3 \big\|\pa \pa_xh^{0\mu}\big\|_{H^{N-1}} + \frac{1}{\tau^{2+\frac{9}{2n}-2\delta}}\,.
    \end{align*}
    The quantities $\mc{G}_{(\rho,v)}$, $\mc{G}_{(\rho,v)}^i$, once differentiated, become rapidly-decaying error terms like
    \begin{align*}
        \big\|\pa_x \mc{G}_{(\rho,v)}\big\|_{H^{N-1}} \lesa &\,\|v\|_{H^N}\big(\|\Gamma\|_{H^{N-1}} + \|\pa_x\Gamma\|_{H^{N-1}}\big)\\
        &+\frac{1}{\tau^{1+\frac{6}{n}}}\big(\|\pa_x \rho\|_{H^{N-1}} + \|\pa_x v\|_{H^{N-1}}\big) \lesa \frac{1}{\tau^{2+\frac{9}{2n}-\delta}}\,,\\
        \sum_{i=1}^3\big\|\pa_x \mc{G}_{(\rho,v)}^i\big\|_{H^{N-1}} \lesa&\,  \|v\|_{H^N}\big(\|\Gamma\|_{H^{N-1}} + \|\pa_x\Gamma\|_{H^{N-1}}\big)\\
        &+ \frac{1}{\tau^{1+\frac{6}{n}}}\sum_{i=1}^3\big(\|\pa_x\rho\|_{H^{N-1}}\|\Pi^{0i}\|_{H^{N-1}} + \|\pa_x \Pi^{0i}\|_{H^{N-1}}\big)\lesa \frac{1}{\tau^{2+\frac{9}{2n}-\delta}}\,.
    \end{align*}
    After bounding 
    \begin{equation*}
        \sum_{\mu=0}^3\big\|\pa \pa_x h^{0\mu}\big\|_{H^{N-1}} \lesa (E_{h^{00};N-1}^{\mathbf{v}})^{1/2} + (E_{h^{0*};N-1}^{\mathbf{v}})^{1/2} + \frac{1}{\tau^{2+\frac{9}{n}-\delta}}\,,
    \end{equation*}
    we are done.
\end{proof}

We wish to apply Lemma~\ref{sec:energy-fluid.lem:source-ests} to estimate the quantities $Q(\pa_x^I \rho,v^I)$, $R^i(\pa_x^I \rho,v^I)$. Like in the previous section, this will require us to commute the spatial derivatives out of the operators $Q$, $R^i$. In order to do this, however, we will also need to compare the quantities $Q(\pa_x^I \rho,v^I)$, $R^i(\pa_x^I \rho,v^I)$ with the more natural $Q(\pa_x^I \rho,\pa_x^I v)$, $R^i(\pa_x^I \rho,\pa_x^I v)$. We have already commented that the $\pa_x^I v$ and $v^I$ are principally the same, and we now make this notion rigorous.

Since the quantities $R^i(\varrho,\xi)$ do not have a dependence on $\xi^0$, it follows immediately that
\begin{equation}\label{sec:energy-fluid.eq:R-equivalence}
    R^i(\pa_x^I \rho,v^I) \equiv R^i(\pa_x^I \rho,\pa_x^I v)
\end{equation}
for all $|I| \leq N$. However, the same identity is not true for the operator $Q$. Nonetheless, we show in the following Lemma that the quantities $Q(\pa_x^I \rho, v^I)$, $Q(\pa_x^I \rho, \pa_x^Iv)$ are principally the same, and differ by only a fast-decaying error.
\begin{lemma}\label{sec:energy-fluid.lem:four-velocity}
    For all $1 \leq |I| \leq N$, the quantities $\pa_x^I v^0 - (v^I)^0$, $\pa(\pa_x^I v^0 - (v^I)^0)$ are well-defined, and satisfy the bounds
    \begin{equation}\label{sec:energy-fluid.eq:four-velocity-1}
        \sum_{1\leq |I| + |J| \leq N+1}\Big\|\pa_x^J\big[\pa_x^I v^0 - v^I\big]\Big\|_{L^2}+ \sum_{1\leq|I| + |J| \leq N}\Big\|\pa_\tau \pa_x^J\big[\pa_x^I v^0 - v^I\big]\Big\|_{L^2} \lesa \frac{1}{\tau^{2+\frac{3}{2n}-\delta}}\,.
    \end{equation}
    Moreover, we have
    \begin{subequations}\label{sec:energy-fluid.eq:four-velocity-2}
        \begin{align}
            \sum_{1 \leq |I| \leq N}\big\|Q(\pa_x^I \rho, v^I)\big\|_{L^2} &\lesa \sum_{1 \leq |I| \leq N}\big\|Q(\pa_x^I \rho, \pa_x^Iv)\big\|_{L^2} + \frac{1}{\tau^{2+\frac{3}{2n}-\delta}}\,,\\
            \sum_{1 \leq |I| \leq N-1}\Big\|\pa_x \big[Q(\pa_x^I \rho, v^I)\big]\Big\|_{L^2} &\lesa \sum_{1 \leq |I| \leq N-1}\Big\|\pa_x \big[Q(\pa_x^I \rho, \pa_x^I v)\big]\Big\|_{L^2} + \frac{1}{\tau^{2+\frac{3}{2n}-\delta}}\,.
        \end{align}
    \end{subequations}
\end{lemma}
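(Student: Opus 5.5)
The plan is to write the difference explicitly and use the normalisation constraint to see that it is quadratically small. Differentiating $g_{\alpha\beta}v^\alpha v^\beta=-1$ gives $v_\mu\pa_x v^\mu=0$ at first order, so $\pa_x v^0 = -\frac{v_a}{v_0}\pa_x v^a - \frac{1}{2v_0}(\pa_x g_{\alpha\beta})v^\alpha v^\beta$. Applying $\pa_x^{I-1}$ and distributing by Leibniz, every $\pa_x^I v^0$ takes the form
\begin{equation*}
\pa_x^I v^0 = -\frac{v_a}{v_0}\pa_x^I v^a + \mathcal{R}^I,
\end{equation*}
where $\mathcal{R}^I$ is a sum of products. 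Each such product contains at least two factors from $\{v^a,\pa_x v^a, \pa_x g\}$, or one factor $\pa_x(v_a/v_0)$ multiplied by a lower derivative of $v^a$. Hence $\pa_x^I v^0-(v^I)^0=\mathcal{R}^I$ is well defined, since it is built from quantities controlled in $H^N$.

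For the bound \eqref{sec:energy-fluid.eq:four-velocity-1}, I estimate $\pa_x^J\mathcal{R}^I$ in $L^2$ with the Sobolev product estimate, putting the lower-order factor in $L^\infty$. The bad case is $|I|+|J|=N+1$, where $N+1$ derivatives can fall on a single factor. In that case the other factor is at most $v_a/v_0$ or undifferentiated $v$. One checks this only happens through $\pa_x^{N}(\pa_x v^a)$ or $\pa_x^{N}\pa_x g$ multiplied by a small $v$-factor, which is admissible because $g$ is in $H^{N+1}$. For $v$ I will instead rearrange so that the $\pa_x^{I+J}v^a$ term is exactly absorbed into $(v^I)^0$, so no loss occurs. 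The decay then comes from the bootstrap bounds: $\|v\|_{H^N}\lesssim\tau^{-(1+\frac{3}{2n}-\delta)}$ (average by Proposition~\ref{sec:avg-ests.prop:avg-ests}, oscillation by Poincar\'e) multiplied by $\|\pa_x v\|_{H^{N-1}}+\|\pa_x g\|_{H^N}\lesssim\tau^{-1}$, roughly giving $\tau^{-(2+\frac{3}{2n}-\delta)}$. The $\pa_\tau$ terms are handled the same way, with time derivatives of $v$ and $g$ bounded by Proposition~\ref{sec:stability.prop:fluid-time-deriv} and the bootstrap bounds on $\pa\pa_x h$ (Lemma~\ref{sec:stability.lem:lowered-met}). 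Both sources decay at least as fast as the spatial ones.

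For \eqref{sec:energy-fluid.eq:four-velocity-2}, I observe that $Q(\varphi,\xi)$ depends on $\xi^0$ only through $\chi\pa_\tau\xi^0$. This gives
\begin{equation*}
Q(\pa_x^I\rho,v^I)-Q(\pa_x^I\rho,\pa_x^Iv)=-\chi\,\pa_\tau\big(\pa_x^Iv^0-(v^I)^0\big).
\end{equation*}
Its $L^2$ norm, and the $L^2$ norm of its $\pa_x$-derivative for $|I|\le N-1$ (using $\|\pa_x\chi\|_{L^\infty}\lesssim1$), are then bounded by the second sum in \eqref{sec:energy-fluid.eq:four-velocity-1}, applied with $|I|+|J|\le N$.

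I expect the main obstacle to be the time-derivative estimate at top order, $|I|=N$ with $J=0$. There $\pa_\tau\mathcal{R}^N$ contains $\pa_\tau\pa_x^{N-1}\pa_x v^a$ multiplied by a small factor, together with terms like $\pa_\tau\pa_x^N g$. The first must be controlled through the Euler equations, via Lemma~\ref{sec:energy-fluid.lem:time-deriv} or Proposition~\ref{sec:stability.prop:fluid-time-deriv} at the $H^{N-1}$ level, without losing a derivative. The second is fine because $\pa_\tau g\in H^N$. If a loss does appear, I would first try to reorganise $\mathcal{R}^I$ so that the highest derivative of $v$ always appears only in the form $\frac{v_a}{v_0}\pa_x^{I}v^a$. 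Then $\pa_\tau$ of that term is exactly cancelled, and the remaining terms carry at most $N$ derivatives of $v$ in total.
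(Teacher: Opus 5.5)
Your decay argument has a gap. The claim that every product in $\mathcal{R}^I$ contains at least two factors from $\{v^a,\pa_x v^a,\pa_x g\}$ is false. Differentiating the constraint gives $2v_\mu\pa_x v^\mu=-(\pa_x g_{\alpha\beta})v^\alpha v^\beta$, not $v_\mu\pa_x v^\mu=0$. The $\alpha=\beta=0$ part of the right-hand side is $-(v^0)^2\pa_x g_{00}$, and $(v^0)^2\approx 1$. Hence $\mathcal{R}^I$ contains the term $-\frac{(v^0)^2}{2v_0}\pa_x^I g_{00}\approx \frac12\pa_x^I g_{00}$. This term is \emph{linear} in a metric derivative and carries no small $v$-factor at all; indeed, for $v^i=0$ one has $v^0=(-g_{00})^{-1/2}$, so $\pa_x v^0$ is driven by $\pa_x g_{00}$ alone. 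Your mechanism, $\|v\|_{H^N}$ times $\|\pa_x v\|_{H^{N-1}}+\|\pa_x g\|_{H^N}\lesa\tau^{-1}$, does not reach this term. Bounding it by $\|\pa_x g\|_{H^N}$ with the rate you quote gives only $\tau^{-1}$. Even the sharp bootstrap rate $\tau^{-(2-\delta)}$ for the spatial metric falls short of $\tau^{-(2+\frac{3}{2n}-\delta)}$. The same problem affects $\pa_\tau\pa_x^{I+J}g_{00}$ in the time-derivative sum.

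The missing observation, on which the paper's proof rests, has two parts. First, only $g_{00}$ enters linearly, since $\pa_x^I g_{0a}$ and $\pa_x^I g_{ab}$ come multiplied by $v^0v^a$ and $v^av^b$. Second, $g_{00}$ enjoys the improved lapse/shift weight in $S_{h;N}$: by Lemma~\ref{sec:stability.lem:lowered-met} and Poincar\'e, $\|\pa_x g_{00}\|_{H^N}+\|\pa_\tau\pa_x g_{00}\|_{H^{N-1}}\lesa\tau^{-(2+\frac{3}{2n}-\delta)}$. This linear lapse term is exactly what sets the rate in \eqref{sec:energy-fluid.eq:four-velocity-1}. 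The paper bounds the difference by $\|\pa_x h^{00}\|_{H^N}+\|\pa_x h\|_{H^N}^2+\|\pa_x v\|_{H^{N-1}}^2$ and notes that the linear term is the worst. The remaining, genuinely quadratic, terms are harmless once the actual bootstrap rates are inserted. Your reduction of \eqref{sec:energy-fluid.eq:four-velocity-2} via $Q(\pa_x^I\rho,v^I)-Q(\pa_x^I\rho,\pa_x^Iv)=-\chi\pa_\tau(\pa_x^Iv^0-(v^I)^0)$ is the same as the paper's. Finally, your top-order worries are unfounded. By construction $\mathcal{R}^I$ contains at most $|I|-1$ derivatives of $v$; the paper makes this explicit by differentiating the constraint $|I|$ times at once and isolating $2v_0(\pa_x^Iv^0-(v^I)^0)$. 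So $\pa_x^J\mathcal{R}^I$ and $\pa_\tau\pa_x^J\mathcal{R}^I$ never involve $\pa_x^{N+1}v$ or $\pa_\tau\pa_x^{N}v$, and no rearrangement is needed.
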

\begin{proof}
    Repeatedly differentiating the identity $g_{\alpha\beta}v^\alpha v^\beta = -1$, we find that for $1 \leq |I| \leq N$, $\pa_x^I v^0$ satisfies the identity
    \begin{align*}
        0 =&\, 2g_{\alpha\beta}v^\alpha \pa_x^I v^\beta + (\pa_x^I g_{\alpha\beta})v^\alpha v^\beta \\
        &+ \sum_{1 \leq J \leq I-1}c_{I,J}^{(3)}g_{\alpha\beta}(\pa_x^J v^\alpha)(\pa_x^{I-J}v^\beta) + \sum_{1\leq J \leq I-1}c_{I,J}^{(4)} (\pa_x^J g_{\alpha\beta})\pa_x^{I-J}(v^\alpha v^\beta)\\
        =&\, 2v_0 \big(\pa_x^I v^0 - (v^I)^0\big) + (\pa_x^I g_{\alpha\beta})v^\alpha v^\beta\\
        &+ \sum_{1 \leq J \leq I-1}c_{I,J}^{(3)}g_{\alpha\beta}(\pa_x^J v^\alpha)(\pa_x^{I-J}v^\beta) + \sum_{1\leq J \leq I-1}c_{I,J}^{(4)} (\pa_x^J g_{\alpha\beta})\pa_x^{I-J}(v^\alpha v^\beta)\,,
    \end{align*}
    where $c_{I,J}^{(3)}$, $c_{I,J}^{(4)}$ are coefficients computed using the Leibniz rule. Out of the four groups of terms on the right hand side of the above identity, we observe that the second is linear in spatial derivatives of $g_{00}$ (and quadratic in derivatives of $g_{i\mu}$, and $v^i$), the third is quadratic in spatial derivatives of the fluid four-velocity, and the fourth is quadratic in spatial derivatives of both $g$ and $v$. We also note that the top-order derivatives appearing in the second, third, and fourth collection of terms are $|I|$-order derivatives of $g$ and $|I|-1$-order derivatives of $v$. This implies that the difference $\pa_x^I v^0 - (v^I)^0$ gains one degree of regularity in comparison to $\pa_x^I v^0$, $(v^I)^0$, and allows us to take one more derivative of $\pa_x^I v^0 - (v^I)^0$ if $|I|=N$. Using the Sobolev embedding theorem (in the same manner as in Proposition~\ref{sec:energy-met.prop:com-ests}), we thus obtain the norm bounds 
    \begin{subequations}
        \begin{align}
            \label{sec:energy-fluid.eq:four-velocity-proof-1}
            \sum_{1\leq |I| + |J| \leq N+1}\Big\|\pa_x^J\big[\pa_x^I v^0 - (v^I)^0\big]\Big\|_{L^2} 
            &\lesa \|\pa_x h^{00}\|_{H^N} + \|\pa_x h\|_{H^N}^2 + \|\pa_x v\|_{H^{N-1}}^2 \lesa \frac{1}{\tau^{2+\frac{3}{2n}-\delta}},\\
            \label{sec:energy-fluid.eq:four-velocity-proof-2}
            \sum_{1 \leq|I| + |J| \leq N}\Big\|\pa_\tau \pa_x^J\big[\pa_x^I v^0 - (v^I)^0\big]\Big\|_{L^2} 
            &\lesa \|\pa \pa_x h^{00}\|_{H^{N-1}} + \|\pa h\|_{H^N}^2+ \|\pa v\|_{H^{N}}^2 \lesa \frac{1}{\tau^{2+\frac{3}{2n}-\delta}}\,,
        \end{align}
    \end{subequations}
    where the worst-decaying error terms are those linear in $\pa_x h^{00}$. This gives \eqref{sec:energy-fluid.eq:four-velocity-1}. The remaining estimates follow from the bounds
    \begin{align*}
        \sum_{1\leq |I|\leq N}\Big\|Q(\pa_x^I \rho,v^I)-Q(\pa_x^I \rho,\pa_x^I v)\Big\|_{L^2} &\lesa \sum_{1\leq |I|\leq N}\Big\|\pa_\tau \big[\pa_x^I v^0 - (v^I)^0\big]\Big\|_{L^2},\\
        \sum_{1\leq |I|\leq N-1}\Big\|\pa_x\Big[Q(\pa_x^I \rho,v^I)-Q(\pa_x^I \rho,\pa_x^I v)\Big]\Big\|_{L^2} &\lesa \sum_{1\leq |I|\leq N-1}\Big\|\pa_\tau \big[\pa_x^I v^0 - (v^I)^0\big]\Big\|_{H^1},
    \end{align*}
    the bounds \eqref{sec:energy-fluid.eq:four-velocity-proof-1}--\eqref{sec:energy-fluid.eq:four-velocity-proof-2}, and the triangle inequality.
\end{proof}
We are finally ready to prove the bound \eqref{sec:energy-fluid.eq:higher-energy-proof-2}.
\begin{proposition}[Commuted Euler equations]\label{sec:energy-fluid.prop:com-ests}
    We have the commuted estimates
    \begin{subequations}\label{sec:energy-fluid.eq:com-ests-1}
        \begin{multline}
            \sum_{1\leq |I\leq N}\Big(\big\|Q(\pa_x^I\rho,v^I)\big\|_{L^2} + \tau^{\frac{3}{n}}\sum_{i=1}^3\big\|R^i(\pa_x^I\rho,v^I)\big\|_{L^2}\Big)\\
            \lesa \tau^{\frac{3}{n}}\big[(E_{h^{00};N-1}^{\mathbf{v}})^{1/2} + (E_{h^{0*};N-1}^{\mathbf{v}})^{1/2}\big] + \frac{1}{\tau^{2-2\delta}}\,,
        \end{multline}
        \begin{multline}
            \sum_{0 \leq |I|\leq N-1}\Big(\Big\|\pa_x\big[Q(\pa_x^I\rho,v^I)\big]\Big\|_{L^2} + \tau^{\frac{3}{n}}\sum_{i=1}^3\Big\|\pa_x\big[R^i(\pa_x^I\rho,v^I)\big]\Big\|_{L^2}\Big)\\
            \lesa \tau^{\frac{3}{n}}\big[(E_{h^{00};N-1}^{\mathbf{v}})^{1/2} + (E_{h^{0*};N-1}^{\mathbf{v}})^{1/2}\big] + \frac{1}{\tau^{2-2\delta}}\,,
        \end{multline}
    \end{subequations}
    which immediately imply the bound \eqref{sec:energy-fluid.eq:higher-energy-proof-2}.
\end{proposition}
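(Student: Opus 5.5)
The plan is the same commutator strategy as in Proposition~\ref{sec:energy-met.prop:com-ests}, now for the first-order operators $Q$, $R^i$. First I reduce from $v^I$ to $\pa_x^I v$. By \eqref{sec:energy-fluid.eq:R-equivalence} the $R^i$ terms agree exactly. By Lemma~\ref{sec:energy-fluid.lem:four-velocity}, replacing $v^I$ by $\pa_x^I v$ inside $Q$ costs at most $\tau^{-(2+\frac{3}{2n}-\delta)}$. This error is smaller than $\tau^{-(2-2\delta)}$, so it suffices to bound $Q(\pa_x^I\rho,\pa_x^I v)$ and $R^i(\pa_x^I\rho,\pa_x^I v)$, together with their first spatial derivatives for $|I|\le N-1$.

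Next I write the commutator identities
\[
Q(\pa_x^I\rho,\pa_x^I v)=\pa_x^I Q(\rho,v)-\sum_{0<J\le I}c_{I,J}\Big(\pa_x^J v^\mu\,\pa_\mu\pa_x^{I-J}\rho+\pa_x^J\chi\,\pa_\mu\pa_x^{I-J}v^\mu\Big),
\]
and the analogous identity for $R^i$. In the $R^i$ identity the coefficient $\tau^{-\frac{6}{n}}p'(\rho)\chi^{-1}\Pi^{\mu i}$ and the term $\frac{2}{\tau}v^0$ are differentiated. The leading parts $\pa_x^I Q(\rho,v)$ and $\pa_x^I R^i(\rho,v)$ are controlled by Lemma~\ref{sec:energy-fluid.lem:source-ests}:
\begin{itemize}
\item $\pa_x Q$ is bounded by $\tau^{-(2+\frac{3}{2n}-\delta)}$;
\item $\tau^{\frac{3}{n}}\pa_x R^i$ is bounded by $\tau^{\frac{3}{n}}$ times the metric energies plus $\tau^{-(2+\frac{3}{2n}-\delta)}$.
\end{itemize}
Together these give exactly the main term on the right-hand side. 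For the extra $\pa_x$ in the second estimate, I apply the same identity with $I$ replaced by $I+e_j$, so the top order is $|I|+1\le N$.

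For the commutator sums I split each product so that the factor with at most $\lfloor (N+2)/2\rfloor$ derivatives goes in $L^\infty$ via Sobolev embedding and the other goes in $L^2$. Time derivatives $\pa_\tau\rho$ and $\pa_\tau v$ are handled with Proposition~\ref{sec:stability.prop:fluid-time-deriv}. The typical bounds are as follows.
\begin{itemize}
\item $\pa_x v\cdot\pa\pa_x\rho$ is bounded by $\tau^{-(1+\frac{3}{2n}-\delta)}\tau^{-(1-\frac{3}{2n}-\delta)}=\tau^{-(2-2\delta)}$, which is the claimed rate.
\item $\pa_x\chi\cdot\pa\pa_x v$, weighted by $1$ for $Q$, gives $\tau^{-(1-\frac{3}{2n}-\delta)}\tau^{-(1+\frac{3}{2n}-\delta)}=\tau^{-(2-2\delta)}$.
\item The $R$ terms carry the weight $\tau^{\frac{3}{n}}$ but gain either $\tau^{-\frac{6}{n}}$ from the pressure coefficient or the better $v$ decay. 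For example, $\tau^{\frac{3}{n}}\,\pa_x v\,\pa\pa_x v\lesssim\tau^{-(2+\frac{3}{n}-2\delta)}$, and $\tau^{\frac{3}{n}-\frac{6}{n}}\,\pa_x\rho\,\pa\pa_x\rho\lesssim\tau^{-(2-2\delta)}$.
\end{itemize}
Terms involving derivatives of $g$ in $\Pi^{\mu i}$ decay faster by the bootstrap bounds \eqref{sec:stability.eq:boot-3}.

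The main obstacle is the top-order time-derivative factors, such as $\pa_\tau\pa_x^{I-J}\rho$ with $|I-J|=N-1$ in the second estimate, and the $\pa_\tau$ components appearing in $\pa_\mu$. For these I use the identities of Lemma~\ref{sec:energy-fluid.lem:time-deriv}, applied to $(\pa_x^K\rho,v^K)$, to rewrite time derivatives through $Q$, $R$ and spatial derivatives. Since $|J|\ge1$, these occur only at order at most $N-1$ in $H^{\cdot}$, and the $Q$, $R$ pieces are bounded by Lemma~\ref{sec:energy-fluid.lem:source-ests}. The delicate point is checking that the slow $\tau^{-(1-\frac{3}{2n}-\delta)}$ decay of $\pa_x\rho$ is always paired either with a $v$ factor or with the $\tau^{-\frac{6}{n}}$ pressure weight. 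This pairing is what yields $\tau^{-(2-2\delta)}$ rather than a worse rate. I would verify it term by term in the expansion of $R^i$.
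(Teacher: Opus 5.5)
Your proposal is correct and follows essentially the same route as the paper. You reduce $v^I$ to $\pa_x^I v$ via Lemma~\ref{sec:energy-fluid.lem:four-velocity} and \eqref{sec:energy-fluid.eq:R-equivalence}, bound $\pa_x^I Q(\rho,v)$ and $\pa_x^I R^i(\rho,v)$ by Lemma~\ref{sec:energy-fluid.lem:source-ests}, and estimate the commutators by Sobolev splitting, pairing the slow decay of $\pa_x\rho$ with decay of $v$ or with the $\tau^{-\frac{6}{n}}$ weight. Two small remarks: the spatial part of $\tau^{\frac{3}{n}}\,\pa_x v\,\pa\pa_x v$ is only $\lesssim\tau^{-(2-2\delta)}$, not $\tau^{-(2+\frac{3}{n}-2\delta)}$, though this still meets the target; and the top-order time-derivative factors need no detour through Lemma~\ref{sec:energy-fluid.lem:time-deriv}, because $|J|\geq 1$ keeps them within the $H^{N-1}$ control of Proposition~\ref{sec:stability.prop:fluid-time-deriv} (with the normalisation constraint handling $v^0$).
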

\begin{proof}
    Combining the results of Lemmas~\ref{sec:energy-fluid.lem:source-ests},~\ref{sec:energy-fluid.lem:four-velocity}, we compute the bounds
    \begin{multline*}
        \sum_{1\leq |I|\leq N}\Big(\big\|Q(\pa_x^I\rho,v^I)\big\|_{L^2} + \tau^{\frac{3}{n}}\sum_{i=1}^3\big\|R^i(\pa_x^I\rho,v^I)\big\|_{L^2}\Big) \\\lesa \sum_{1\leq |I\leq N}\Big(\big\|Q(\pa_x^I\rho,\pa_x^Iv)\big\|_{L^2} + \tau^{\frac{3}{n}}\sum_{i=1}^3\big\|R^i(\pa_x^I\rho,\pa_x^Iv)\big\|_{L^2}\Big)+\frac{1}{\tau^{2+\frac{3}{2n}-\delta}}\\
        \lesa \sum_{1\leq|I|\leq N}\Big(\Big\|\big[Q,\pa_x^I\big](\rho,v)\Big\|_{L^2} + \tau^{\frac{3}{n}}\sum_{i=1}^3\Big\|\big[R^i,\pa_x^I\big](\rho,v)\Big\|_{L^2}\Big)\\
        +\tau^{\frac{3}{n}}\big[(E_{h^{00};N-1}^{\mathbf{v}})^{1/2} + (E_{h^{0*};N-1}^{\mathbf{v}})^{1/2}\big] + \frac{1}{\tau^{2+\frac{3}{2n}-\delta}}\,.
    \end{multline*}
    Similarly
    \begin{multline*}
        \sum_{1\leq |I|\leq N-1}\Big(\Big\|\pa_x\big[Q(\pa_x^I\rho,v^I)\big]\Big\|_{L^2} + \tau^{\frac{3}{n}}\sum_{i=1}^3\Big\|\pa_x\big[R^i(\pa_x^I\rho,v^I)\big]\Big\|_{L^2}\Big) \\
        \lesa\sum_{1\leq|I|\leq N-1}\Big(\Big\|\pa_x\Big(\big[Q,\pa_x^I\big](\rho,v)\Big)\Big\|_{L^2} + \tau^{\frac{3}{n}}\sum_{i=1}^3\Big\|\pa_x\Big(\big[R^i,\pa_x^I\big](\rho,v)\Big)\Big\|_{L^2}\Big)\\
        +\tau^{\frac{3}{n}}\big[(E_{h^{00};N-1}^{\mathbf{v}})^{1/2} + (E_{h^{0*};N-1}^{\mathbf{v}})^{1/2}\big] + \frac{1}{\tau^{2+\frac{3}{2n}-\delta}}\,.
    \end{multline*}
    Thus we must bound the commutators 
    \begin{equation*}
        [Q,\pa_x^I](\rho,v) = Q(\pa_x^I \rho, \pa_x^Iv) - \pa_x^I Q(\rho,v)\,,\qquad [R^i,\pa_x^I](\rho,v) = R^i(\pa_x^I \rho, \pa_x^Iv) - \pa_x^I R^i(\rho,v)\,.
    \end{equation*}
    We give a sketch of this, as the idea is very similar to what is contained in Proposition~\ref{sec:energy-met.prop:com-ests}. We write 
    \begin{align*}
        \big[Q,\pa_x^I\big](\rho,v) &= \sum_{1\leq J\leq I} c_{I,J}^{(1)}\Big[(\pa_x^J v^\mu)\big(\pa_\mu \pa_x^{I-J}\rho\big)  + c_{I,J}^{(1)}(\pa_x^J \chi)\big(\pa_\mu \pa_x^{I-J} v^\mu\big)\Big]\,,\\
        \big[R^i,\pa_x^I\big](\rho,v) &= \sum_{1\leq J\leq I} c_{I,J}^{(1)}\Big[(\pa_x^J v^\mu)\big(\pa_\mu \pa_x^{I-J}v^i\big) + \frac{1}{\tau^{\frac{6}{n}}}\pa_x^J \Big(\frac{p'(\rho)}{\chi}\Pi^{\mu i}\Big)\big(\pa_\mu \pa_x^{I-J} \rho\big) + \frac{2}{\tau}(\pa_x^{J}v^0)(\pa_x^{I-J}v^i)\Big]\,.
    \end{align*}
    Bounding all terms using the Sobolev embedding, we thus have
    \begin{equation*}
        \sum_{1\leq |I|\leq N}\Big\|\big[Q,\pa_x^I\big](\rho,v)\Big\|_{L^2} + \sum_{1\leq |I|\leq N-1}\Big\|\pa_x\Big(\big[Q,\pa_x^I\big](\rho,v)\Big)\Big\|_{L^2} \lesa \|\pa v\|_{H^N}\|\pa \rho\|_{H^N} \leq \frac{1}{\tau^{2-2\delta}}\,,
    \end{equation*}
    \begin{multline*}
        \sum_{1\leq |I|\leq N}\Big\|\big[R^i,\pa_x^I\big](\rho,v)\Big\|_{L^2} + \sum_{1\leq |I|\leq N-1}\Big\|\pa_x\Big(\big[R^i,\pa_x^I\big](\rho,v)\Big)\Big\|_{L^2} \\\lesa \|\pa v\|_{H^N}^2 + \frac{1}{\tau^{\frac{6}{n}}}\big(\|\pa_x \rho\|_{H^N} + \|\pa_xv\|_{H^N}\big)\|\pa \rho\|_{H^N} \lesa \frac{1}{\tau^{2+\frac{3}{n}-2\delta}}\,.
    \end{multline*}
    We point out that in both of these estimates, the bad (non-integrable) decay of $\rho$ is compensated for by either good decay in $v$, or additional $\tau$-weights. This yields \eqref{sec:energy-fluid.eq:com-ests-1}.
\end{proof}
\section{Improving the bootstrap assumption}\label{sec:boot-imp}
We are now ready to improve the bounds \eqref{sec:stability.eq:boot-3}, \eqref{sec:stability.eq:boot-4} on the higher-order derivatives of the evolution variables from the bootstrap assumption. Combined with the improvement on the bounds for the spatial averages from Proposition~\ref{sec:avg-ests.prop:avg-ests}, this will complete the proof of Theorem~\ref{sec:stability.thm:main}.

We define the total higher-order energy by
\begin{multline}
    \mc{E}_{N}(\tau) := \tau^{4+\frac{3}{n}-2\delta}E_{h^{00};N-1}^{\mathbf{v}}(\tau) + \tau^{4+\frac{3}{n}-2\delta}E_{h^{0*};N-1}^{\mathbf{v}}(\tau)\\+ \tau^{4-2\delta}E_{h^{**};N-1}^{\mathbf{v}}(\tau) + \tau^{2-\frac{3}{2}-2\delta}E_{(\rho,v);N}(\tau)\,.
\end{multline}
Observe that $\mc{E}_N$ and the combined higher-order norms introduced in \eqref{sec:stability.eq:boot} are equivalent, in the sense that
\begin{equation}\label{sec:boot-imp.eq:equivalence}
\mc{E}_N(\tau) \sim (S_{h;N}(\tau))^2 + (S_{h;N-1}^{\mathbf{v}}(\tau))^2 + (S_{(\rho,v);N}(\tau))^2\,.
\end{equation}

\begin{theorem}\label{sec:boot-imp.thm:higher-ests}
    The energy $\mc{E}_{N}(\tau)$ satisfies the bound
    \begin{equation}\label{sec:boot-imp.eq:higher-ests-1}
        \pa_\tau \mc{E}_{N}(\tau) \leq \frac{C}{\tau^{1+\frac{3}{2n}-\delta}}\,.
    \end{equation}
    Moreover, we have the improved bound on the higher-order norms
    \begin{equation}\label{sec:boot-imp.eq:higher-ests-2}
        S_{h;N}(\tau) +S_{h;N-1}^{\mathbf{v}}(\tau)+ S_{(\rho,v);N}(\tau) \leq C\ve^{3/2}\,.
    \end{equation}
\end{theorem}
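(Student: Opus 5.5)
We combine the four differential inequalities of Propositions~\ref{sec:energy-met.prop:higher-energy} and~\ref{sec:energy-fluid.prop:higher-energy} after multiplying each by the weight appearing in $\mc{E}_N$. (The fluid weight should read $\tau^{2-\frac{3}{n}-2\delta}$, consistent with \eqref{sec:energy-fluid.eq:higher-energy-bound}.) Define the rescaled energies
\[
e_{00}:=\tau^{4+\frac{3}{n}-2\delta}E_{h^{00};N-1}^{\mathbf{v}},\quad
e_{0*}:=\tau^{4+\frac{3}{n}-2\delta}E_{h^{0*};N-1}^{\mathbf{v}},\quad
e_{**}:=\tau^{4-2\delta}E_{h^{**};N-1}^{\mathbf{v}},\quad
e_f:=\tau^{2-\frac{3}{n}-2\delta}E_{(\rho,v);N}.
\]
Differentiating a weight $\tau^{w}$ produces $\frac{w}{\tau}$ times the energy. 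This is compared against the damping coefficients $2(\kappa-2)$, $\kappa$, $4$ and $2-\frac3n$ respectively. The net damping is
\[
2\kappa-8-\tfrac3n+2\delta \ \text{ for } e_{00},\qquad \kappa-4-\tfrac3n+2\delta \ \text{ for } e_{0*},\qquad 2\delta \ \text{ for } e_{**},\qquad 2\delta \ \text{ for } e_f.
\]
Each of these is nonnegative; the first two are large once $\kappa$ is large. Hence each rescaled energy satisfies $\pa_\tau e+\frac{c}{\tau}e\lesssim(\text{weighted sources})$ with $c\ge 2\delta>0$.

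\textbf{Step 1: rescale the sources.} Using $S_{(\rho,v);N}\lesssim e_f^{1/2}$ together with the equivalence \eqref{sec:boot-imp.eq:equivalence}, I rescale the right-hand sides.
\begin{itemize}
\item For $e_{00}$, the term $\tau^{-3-\frac{3}{2n}+\delta}S\,E^{1/2}$ becomes $\tau^{-1+\frac{3}{2n}-\delta}\cdot\tau^{-\frac{3}{2n}}\ldots$. Computed directly it is $\tau^{4+\frac3n-2\delta}\tau^{-3-\frac{3}{2n}+\delta}\tau^{-2-\frac{3}{2n}+\delta}e_f^{1/2}e_{00}^{1/2}=\tau^{-1}e_f^{1/2}e_{00}^{1/2}$, which is critical.
\item The inhomogeneous terms $\tau^{-5-\frac{9}{2n}+3\delta}$ become $\tau^{-1-\frac{3}{2n}+\delta}$ or better. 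These are integrable and give exactly \eqref{sec:boot-imp.eq:higher-ests-1}.
\item The cross terms are also critical: $\frac1\tau E_{h^{00}}^{1/2}E_{h^{0*}}^{1/2}\to \frac1\tau e_{00}^{1/2}e_{0*}^{1/2}$ and $\frac1\tau E_{h^{0*}}^{1/2}E_{h^{**}}^{1/2}\to \tau^{-1-\frac{3}{2n}}e_{0*}^{1/2}e_{**}^{1/2}$, the latter being subcritical.
\item In the fluid equation, $\tau^{3/n}E_h^{1/2}E_f^{1/2}$ becomes $\tau^{2-\frac3n-2\delta}\tau^{\frac3n}\tau^{-2-\frac{3}{2n}+\delta}\tau^{-1+\frac{3}{2n}+\delta}\,e_{00}^{1/2}e_f^{1/2}=\tau^{-1}e_{00}^{1/2}e_f^{1/2}$ (likewise with $e_{0*}$), again critical.
\end{itemize}

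\textbf{Step 2: absorb the critical terms (main obstacle).} The critical cross terms always pair a lapse/shift energy with another energy; no critical term is of the form $e_{**}\cdot e_f$. I use Young's inequality with a small parameter $\theta$:
\[
\tfrac1\tau e_f^{1/2}e_{00}^{1/2}\le \tfrac{\theta}{\tau}e_f+\tfrac{1}{4\theta\tau}e_{00}.
\]
I choose $\theta\le\delta/(2C)$, so that the fluid share is absorbed by the $2\delta$ damping. Then I choose $\kappa$ so large that $\kappa\gg C/\theta\sim C^2/\delta$, so that the $e_{00}$ and $e_{0*}$ shares are absorbed by their large damping. This is consistent with $\delta>1/\sqrt\kappa$ in \eqref{sec:stability.eq:delta-choice}: it requires $\kappa\delta\gtrsim C^2$, and $\kappa\delta>\sqrt\kappa$ is large. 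The delicate point is verifying that the implicit constants $C$ in the earlier propositions are independent of $\kappa$, or grow at most linearly in $\kappa$ (e.g.\ the $\frac{\kappa-4}{\tau}$ terms in $\mc{L}_h^{0i}$, $\mc{L}_h^{ij}$). In that case I would use the $\kappa$-dependent weights $\tau^{-1}$ against damping of order $\kappa$: a $\kappa$ factor in $\frac{\kappa}{\tau}e_{00}^{1/2}e_{0*}^{1/2}$ is handled by $\frac{\kappa}{\tau}\sqrt{ab}\le\frac{\kappa}{4\tau}b+\frac{\kappa}{\tau}a$ with damping $2\kappa$ on $a$ and $\kappa$ on $b$, possibly after rescaling $e_{0*}$ by a constant multiple. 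After absorption, $\pa_\tau\mc{E}_N\le C\tau^{-1-\frac{3}{2n}+\delta}$, using $\delta<\frac{3}{2n}$ (which holds since $\delta<\frac{3}{4n}$ under the stated constraint).

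\textbf{Step 3: integrate.} Integrating from $\tau_0$ gives
\[
\mc{E}_N(\tau)\le\mc{E}_N(\tau_0)+C\tau_0^{-\frac{3}{2n}+\delta}.
\]
The initial bound \eqref{sec:stability.eq:initial-bound} gives $\mc{E}_N(\tau_0)\lesssim\ve^4\tau_0^{O(1)}$. This weight is problematic, so more carefully I would bound $\mc{E}_N(\tau_0)\lesssim\ve^3$, using that the data are $\ve^2$-small and $\tau_0$ is large. The second term is at most $C\ve^{3}$ by $\tau_0\ge\ve^{-4n}$, since $\tau_0^{-\frac{3}{4n}}\le\ve^3$. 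Finally, \eqref{sec:boot-imp.eq:equivalence} converts $\mc{E}_N\lesssim\ve^3$ into \eqref{sec:boot-imp.eq:higher-ests-2}, where the lower-order error $\tau^{-\frac{3}{2n}}$ from the remark after the metric energies is also bounded by $\ve^{3/2}$.
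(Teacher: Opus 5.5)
Your Steps 1 and 2 follow the paper's proof. The weights and net damping rates are the same ($2\kappa-8-\frac3n+2\delta$, $\kappa-4-\frac3n+2\delta$, $2\delta$, $2\delta$), as are the three critical couplings: fluid into lapse, lapse into shift, and lapse/shift into fluid. The absorption is also the same: Young's inequality plus large $\kappa$, using $\delta>1/\sqrt{\kappa}$. Your caution about $\kappa$-dependent constants is justified. $\mathcal{L}_h^{0i}$ contains $-\frac{\kappa-4}{\tau}g^{i\lambda}\partial_\lambda h^{00}$, so the lapse--shift coupling constant is genuinely of size $\kappa$, whereas the paper asserts that $C_{h^{0*}}$ is $\kappa$-independent. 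Weighting $e_{00}$ against $e_{0*}$ by a fixed constant exploits the triangular structure and makes that absorption step robust, which the paper's explicit choice of $\kappa$ does not.

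The step that fails as you justify it is the bound $\mathcal{E}_N(\tau_0)\lesssim\varepsilon^3$ in Step 3. A large $\tau_0$ works against you. From the unweighted hypothesis \eqref{sec:stability.eq:initial-bound} you only get $\mathcal{E}_N(\tau_0)\lesssim\tau_0^{4+\frac{3}{n}-2\delta}\varepsilon^4$, and with $\tau_0\geq\varepsilon^{-4n}$ this is enormous. Nothing in the data or the equations lets you trade that weight back, so "$\varepsilon^2$-small data and large $\tau_0$" cannot yield $\varepsilon^3$. The only fix is to assume the data are small in the weighted bootstrap norms, $S_N(\tau_0)\lesssim\varepsilon^2$, equivalently $\mathcal{E}_N(\tau_0)\lesssim\varepsilon^4$. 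The paper tacitly uses exactly this when it appeals to the initial bound at this point, and it is already needed for the bootstrap set to be nonempty. So this is a defect in the stated hypothesis that no proof step can repair. With that reading, your integration step closes exactly as in the paper, including your correct handling of the $\tau^{-\frac{3}{2n}}$ error via $\tau_0\geq\varepsilon^{-4n}$.
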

\begin{proof}
    For convenience, we define the weighted energies 
    \begin{gather*}
        \mc{E}_{h^{00};N-1}^{\mathbf{v}} := \tau^{4+\frac{3}{n}-2\delta}E_{h^{00};N-1}^{\mathbf{v}}\,,\qquad \mc{E}_{h^{0*};N-1}^{\mathbf{v}} := \tau^{4+\frac{3}{n}-2\delta}E_{h^{0*};N-1}^{\mathbf{v}}\,,\\
        \mc{E}_{h^{**};N-1}^{\mathbf{v}} :=\tau^{4-2\delta}E_{h^{**};N-1}^{\mathbf{v}}\,, \qquad \mc{E}_{(\rho,v);N} := \tau^{2-\frac{3}{n}-2\delta}E_{(\rho,v);N}\,.
    \end{gather*}
    With these weights, we have the relations
    \begin{gather*}
        S_{h;N-1}^{\mathbf{v}}(\tau) + S_{h;N}(\tau) \sim (\mc{E}_{h^{00};N-1}^{\mathbf{v}})^{1/2} + (\mc{E}_{h^{0*};N-1}^{\mathbf{v}})^{1/2} + (\mc{E}_{h^{**};N-1}^{\mathbf{v}})^{1/2},\\S_{(\rho,v);N}(\tau) \sim (\mc{E}_{(\rho,v);N})^{1/2}\,,
    \end{gather*}
    By Proposition~\ref{sec:energy-met.prop:higher-energy}, the weighted metric energies obey the bounds
    \begin{subequations}
        \begin{align}
            \label{sec:boot-imp.eq:higher-ests-proof-1}
            \pa_\tau \mc{E}_{h^{00};N-1}^{\mathbf{v}} + \frac{2\kappa-8-\frac{3}{n}+2\delta}{\tau}\mc{E}_{h^{00};N-1}^{\mathbf{v}} 
            &\lesa \frac{1}{\tau}(\mc{E}_{(\rho,v);N})^{1/2}(\mc{E}_{h^{00};N-1}^{\mathbf{v}})^{1/2} + \frac{1}{\tau^{1+\frac{3}{2n}-\delta}}\,,\\
            \label{sec:boot-imp.eq:higher-ests-proof-2}
            \pa_\tau \mc{E}_{h^{0*};N-1}^{\mathbf{v}} + \frac{\kappa-4-\frac{3}{n}+2\delta}{\tau}\mc{E}_{h^{0*};N-1}^{\mathbf{v}} &\lesa \frac{1}{\tau}(\mc{E}_{h^{00};N-1}^{\mathbf{v}})^{1/2}(\mc{E}_{h^{0*};N-1}^{\mathbf{v}})^{1/2}\\
            &\quad + \frac{1}{\tau^{1+\frac{3}{n}}}(\mc{E}_{(\rho,v);N})^{1/2}(\mc{E}_{h^{0*};N-1}^{\mathbf{v}})^{1/2} + \frac{1}{\tau^{1+\frac{3}{2n}-\delta}}\,,\nonumber\\
            \label{sec:boot-imp.eq:higher-ests-proof-3}
            \pa_\tau \mc{E}_{h^{**};N-1}^{\mathbf{v}} + \frac{2\delta}{\tau}\mc{E}_{h^{**};N-1}^{\mathbf{v}} &\lesa \frac{1}{\tau^{1+\frac{3}{2n}-\delta}}(\mc{E}_{h^{0*};N-1}^{\mathbf{v}})^{1/2} (\mc{E}_{h^{**};N-1}^{\mathbf{v}})^{1/2}\\
            &\quad + \frac{1}{\tau^{1+\frac{3}{2n}-\delta}}(\mc{E}_{(\rho,v);N})^{1/2}(\mc{E}_{h^{**};N-1}^{\mathbf{v}})^{1/2} + \frac{1}{\tau^{1+\frac{3}{2n}-\delta}}\,.\nonumber
        \end{align}
    \end{subequations}
    
    Moreover, by Proposition~\ref{sec:energy-fluid.prop:higher-energy}, the weighted fluid energy satisfies
    \begin{equation}\label{sec:boot-imp.eq:higher-ests-proof-4}
        \pa_\tau \mc{E}_{(\rho,v);N} + \frac{2\delta}{\tau}\mc{E}_{(\rho,v);N} \lesa \frac{1}{\tau}\big[(\mc{E}_{h^{00};N-1}^{\mathbf{v}})^{1/2} + (\mc{E}_{h^{0*};N-1}^{\mathbf{v}})^{1/2}\big](\mc{E}_{(\rho,v);N})^{1/2} + \frac{1}{\tau^{1+\frac{3}{2n}-\delta}}\,.
    \end{equation}
    We can treat several of the terms in \eqref{sec:boot-imp.eq:higher-ests-proof-2} and \eqref{sec:boot-imp.eq:higher-ests-proof-3} as error terms:
    \begin{gather*}
        \frac{1}{\tau^{1+\frac{3}{n}}}(\mc{E}_{(\rho,v);N})^{1/2}(\mc{E}_{h^{0*};N-1}^{\mathbf{v}})^{1/2} \leq \frac{C}{\tau^{1+\frac{3}{2n}-\delta}}\,,\\
        \frac{C}{\tau^{1+\frac{3}{2n}-\delta}}\Big[(\mc{E}_{h^{0*};N-1}^{\mathbf{v}})^{1/2} + (\mc{E}_{(\rho,v);N})^{1/2}\Big](\mc{E}_{h^{**};N-1}^{\mathbf{v}})^{1/2} \leq \frac{C}{\tau^{1+\frac{3}{2n}-\delta}}\,.
    \end{gather*}
    Using these bounds, we write out the full system \eqref{sec:boot-imp.eq:higher-ests-proof-1}--\eqref{sec:boot-imp.eq:higher-ests-proof-4} with explicit constants $C_{h^{00}}$, $C_{h^{0*}}$, $C_{(\rho,v)}$ for the critical terms, and a running constant for error terms:
    \begin{subequations}\label{sec:boot-imp.eq:higher-ests-proof-5}
        \begin{align}
            \label{sec:boot-imp.eq:higher-ests-proof-6}
            &\pa_\tau \mc{E}_{h^{00};N-1}^{\mathbf{v}} + \frac{2\kappa-8-\frac{3}{n}+2\delta}{\tau}\mc{E}_{h^{00};N-1}^{\mathbf{v}} \leq \frac{C_{h^{00}}}{\tau}(\mc{E}_{(\rho,v);N})^{1/2}(\mc{E}_{h^{00};N-1}^{\mathbf{v}})^{1/2} + \frac{C}{\tau^{1 + \frac{3}{2n}-\delta}}\,,\\
            \label{sec:boot-imp.eq:higher-ests-proof-7}
            &\pa_\tau \mc{E}_{h^{0*};N-1}^{\mathbf{v}} + \frac{\kappa-4-\frac{3}{n}+2\delta}{\tau}\mc{E}_{h^{0*};N-1}^{\mathbf{v}} \leq \frac{C_{h^{0*}}}{\tau}(\mc{E}_{h^{00};N-1}^{\mathbf{v}})^{1/2}(\mc{E}_{h^{0*};N-1}^{\mathbf{v}})^{1/2} + \frac{C}{\tau^{1+\frac{3}{2n}-\delta}}\,,\\
            \label{sec:boot-imp.eq:higher-ests-proof-8}
            &\pa_\tau \mc{E}_{h^{**};N-1}^{\mathbf{v}} + \frac{2}{\sqrt{\kappa}}\frac{1}{\tau}\mc{E}_{h^{**};N-1}^{\mathbf{v}} \leq\frac{C}{\tau^{1+\frac{3}{2n}-\delta}}\,,\\
            \label{sec:boot-imp.eq:higher-ests-proof-9}
            &\pa_\tau \mc{E}_{(\rho,v);N} + \frac{2}{\sqrt{\kappa}}\frac{1}{\tau}\mc{E}_{(\rho,v);N} 
            \leq \frac{C_{(\rho,v)}}{\tau}\big[(\mc{E}_{h^{00};N-1}^{\mathbf{v}})^{1/2} + (\mc{E}_{h^{0*};N-1}^{\mathbf{v}})^{1/2}\big](\mc{E}_{(\rho,v);N})^{1/2} + \frac{C}{\tau^{1+\frac{3}{2n}-\delta}}\,.
        \end{align}
    \end{subequations}
    In \eqref{sec:boot-imp.eq:higher-ests-proof-8}, \eqref{sec:boot-imp.eq:higher-ests-proof-9}, we used the lower bound of $\delta > \frac{1}{\sqrt{\kappa}}$ stated earlier in \eqref{sec:stability.eq:delta-choice} to shrink the bulk terms on the left hand sides. We stress that the constants $C_{h^{00}}$, $C_{h^{0*}}$, $C_{(\rho,v)}$ do not depend on $\tau$, $\kappa$, or the size of any of the evolution variables. The key idea is that we choose $\kappa$ to be sufficiently large that the bulk terms in the inequalities \eqref{sec:boot-imp.eq:higher-ests-proof-6}, \eqref{sec:boot-imp.eq:higher-ests-proof-6}, \eqref{sec:boot-imp.eq:higher-ests-proof-9} can absorb the critical terms on the RHS of the various equations. This would then imply the bound \eqref{sec:boot-imp.eq:higher-ests-1}. We will choose $\kappa$ in a way that depends only on the polytropic index $n$ and the constants $C_{h^{00}}$, $C_{h^{0*}}$, $C_{(\rho,v)}$. Through repeated applications of Young's inequality $ab \leq \frac{1}{2}a^2 + \frac{1}{2}b^2$, we have
    \begin{align*}
        \frac{C_{h^{00}}}{\tau}(\mc{E}_{(\rho,v);N})^{1/2}(\mc{E}_{h^{00};N-1}^{\mathbf{v}})^{1/2} &\leq \frac{1}{\tau}\Big[\frac{C_{h^{00}}^2}{4\kappa} \mc{E}_{(\rho,v);N} + \kappa\mc{E}_{h^{00};N-1}^{\mathbf{v}}\Big],\\
        \frac{C_{h^{0*}}}{\tau}(\mc{E}_{h^{00};N-1}^{\mathbf{v}})^{1/2}(\mc{E}_{h^{0*};N-1}^{\mathbf{v}})^{1/2} &\leq \frac{1}{\tau}\Big[\frac{C_{h^{0*}}^2}{2\kappa}\mc{E}_{h^{00};N-1}^{\mathbf{v}} + \frac{\kappa}{2}\mc{E}_{h^{0*}}\Big],\\
        \frac{C_{(\rho,v)}}{\tau}\big[(\mc{E}_{h^{00};N-1}^{\mathbf{v}})^{1/2} + (\mc{E}_{h^{0*};N-1}^{\mathbf{v}})^{1/2}\big](\mc{E}_{(\rho,v);N})^{1/2} &\leq \frac{1}{\tau}\Big[\frac{\sqrt{\kappa} C_{(\rho,v)}^2}{2}\mc{E}_{h^{00};N-1}^{\mathbf{v}} + \frac{\sqrt{\kappa} C_{(\rho,v)}^2}{2}\mc{E}_{h^{0*};N-1}^{\mathbf{v}} + \frac{1}{\sqrt{\kappa}}\mc{E}_{(\rho,v)}\Big].
    \end{align*}
    The final terms on the right hand side of each of the above bounds can then be absorbed into the left hand side of the equations \eqref{sec:boot-imp.eq:higher-ests-proof-5}. This implies the following estimate for the higher order energy $\mc{E}_N$:
    \begin{multline}\label{sec:boot-imp.eq:higher-ests-proof-10}
        \pa_\tau \mc{E}_N + \frac{\kappa - 8 - \frac{3}{n} + 2\delta}{\tau}\mc{E}_{h^{00};N-1}^{\mathbf{v}} + \frac{\frac{\kappa}{2} - 4 - \frac{3}{n} + 2\delta}{\tau}\mc{E}_{h^{0*};N-1}^{\mathbf{v}}  + \frac{1}{\sqrt{\kappa}}\frac{1}{\tau}\mc{E}_{(\rho,v);N}\\
        \leq \frac{1}{\tau}\Big(\frac{C_{h^{0*}}^2}{2\kappa} + \frac{\sqrt{\kappa}C_{(\rho,v)}^2}{2}\Big)\mc{E}_{h^{00};N-1}^{\mathbf{v}} + \frac{\sqrt{\kappa}C_{(\rho,v)}^2}{2}\frac{1}{\tau}\mc{E}_{h^{0*};N-1}^{\mathbf{v}} + \frac{C_{h^{00}}^2}{4\kappa}\frac{1}{\tau}\mc{E}_{(\rho,v);N}  + \frac{C}{\tau^{1+\frac{3}{2n}-\delta}}.
    \end{multline}
    In order to prove \eqref{sec:boot-imp.eq:higher-ests-1}, we must fix the parameter $\kappa$ to simultaneously satisfy the three inequalities:
    \begin{align*}
        \kappa - 8 - \frac{3}{n} + 2\delta &\geq \frac{C_{h^{0*}}^2}{2\kappa} + \frac{\sqrt{\kappa}C_{(\rho,v)}^2}{2}\,,\\
        \frac{\kappa}{2} - 4 - \frac{3}{n} + 2\delta &\geq \frac{\sqrt{\kappa}C_{(\rho,v)}^2}{2}\,,\\
        \frac{1}{\sqrt{\kappa}} &\geq \frac{C_{h^{00}}^2}{4\kappa}\,.
    \end{align*}

    This can be accomplished by choosing $\kappa$ to be sufficiently large, as the left hand side of each equation will dominate the right hand side as $\kappa \ra \infty$. For example, a choice that works is
    \begin{equation}
        \kappa = \max\Big\{2\Big(8+\frac{3}{n}\Big),C_{h^{00}}^4,C_{h^{0*}}^4,4(1+C_{(\rho,v)}^2)^2\Big\}\,.
    \end{equation}
    By fixing $\kappa$ in this manner, all critical terms on the right hand of the inequality \eqref{sec:boot-imp.eq:higher-ests-proof-10} side can be absorbed into the left hand side, which immediately implies \eqref{sec:boot-imp.eq:higher-ests-1}. 
    
    Finally, we recall the assumption that $\tau_0 \geq \ve^{-4n}$, and use this to trade some decay in $\tau$ for additional weights in $\ve$ for the error term in \eqref{sec:boot-imp.eq:higher-ests-1}, which implies
    \begin{equation*}
        \pa_\tau \mc{E}_N(\tau) \leq \frac{C\ve^3}{\tau^{1+\frac{3}{4n}-\delta}}\,,
    \end{equation*}
    Since $\delta < \frac{3}{4n}$, we may integrate in time over the interval $[\tau_0,\tau]$ to arrive at the bound
    \begin{equation*}
        \mc{E}_N(\tau) \leq \mc{E}_N(\tau_0) + C\ve^3\,.
    \end{equation*}
    Using the norm equivalence \eqref{sec:boot-imp.eq:equivalence}, as well as the initial bound \eqref{sec:stability.eq:initial-bound}, we obtain the improved bound \eqref{sec:boot-imp.eq:higher-ests-2} on the higher-order norms, completing the proof of this theorem, and indeed of Theorem~\ref{sec:stability.thm:main}.
\end{proof}
\appendix
\section{Existence and asymptotics of the FLRW solution}\label{app:FLRW}
We study solutions to the Einstein--Euler system with polytropic equation of state. Restricting to the class of FLRW solutions with $\TT^3$-spatial geometry, the system reduces to
\begin{equation}
    \mc{M} = (0,\infty)\times \TT^3,\qquad \bar{g} = -\de t^2 + a^2(t)\sum_{i=1}^3 (\de x^i)^2,\qquad \bar{\rho} =b(t),\qquad \bar{v}^\mu = \delta_0^\mu,
\end{equation}
where the functions $(a(t),b(t))$ solve the ODE\footnote{In these equations we assume the polytropic constant satisfies $C = 1$. In the results that follow, however, one can take $C$ to be any positive value without ultimately changing the conclusions.}
\begin{subequations}
    \begin{align}
        \label{app:FLRW.eq:fried-1}
        3\Big(\frac{\dot{a}(t)}{a(t)}\Big)^2 - b(t) &= 0\,,\\
        \label{app:FLRW.eq:fried-2}
        \frac{\dot{b}(t)}{b(t)\big(1+b^{\frac{1}{n}}(t)\big)} + 3\frac{\dot{a}(t)}{a(t)} &=0\,.
    \end{align}
\end{subequations}
These equations are often called the Friedmann equations. The first can be derived from the Hamiltonian constraint, while the second follows from the Euler continuity equation. In fact one generally considers the Friedmann equations to be the equations \eqref{app:FLRW.eq:fried-1}, \eqref{app:FLRW.eq:fried-2}, along with the evolution equation
\begin{equation*}
    -2\frac{\ddot{a}}{a}-\Big(\frac{\dot{a}}{a}\Big)^2  =\bar{p} = b^{1+\frac{1}{n}}(t)\,,
\end{equation*}
obtained from the spatial trace of the Einstein equations $\bar{g}^{ij}(R_{ij}[\bar{g}] - \frac{1}{2}\bar{g}_{ij}R[\bar{g}]\big) = \bar{g}^{ij}T_{ij}$. However, this system of three equations is overdetermined, and it is equivalent to solve just the two equations \eqref{app:FLRW.eq:fried-1}, \eqref{app:FLRW.eq:fried-2}.

In the following lemma, we establish the future global existence of solutions to this system, and study their long-time asymptotics.

\begin{lemma}[FLRW solutions to the Einstein--Euler system]\label{app:FLRW.lem:FLRW}
    Let $a_0 > 0$. There exists a one-parameter family of FLRW solutions $(a(t),b(t))$ to the Friedmann equations \eqref{app:FLRW.eq:fried-1}, \eqref{app:FLRW.eq:fried-2}, parametrised by $a_0$ with initial data
    \begin{equation}\label{app:FLRW.eq:initial-data}
        a(1) = a_0\,,\qquad b(1)= b_0(a_0)\,.
    \end{equation}
    Then functions $a(t)$ and $b(t)$ have the long-time asymptotics
    \begin{equation}\label{app:FLRW.eq:FLRW-asymp}
        a(t) = a_\infty t^{2/3} + O(t^{\frac{2}{3}-\frac{1}{n}})\,,\qquad b(t) = \frac{4}{9t^2} + O(t^{-2-\frac{1}{n}})\,.
    \end{equation}
\end{lemma}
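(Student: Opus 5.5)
The plan is to reduce the Friedmann system to a single scalar ODE for $b$ in terms of the volume variable $a$, integrate it explicitly, and then read off the asymptotics. Dividing \eqref{app:FLRW.eq:fried-2} by $\dot a/a$ (which is positive by \eqref{app:FLRW.eq:fried-1}, provided we take $\dot a>0$, i.e.\ the expanding branch) gives
\begin{equation*}
    \frac{\de b}{b(1+b^{1/n})} = -3\,\frac{\de a}{a}.
\end{equation*}
Substituting $s=b^{-1/n}$ turns the left side into $-n\,\de s/(s+1)$, i.e.\ $\frac{\de}{\de a}\log(1+b^{-1/n}) = \frac{3}{n}\frac{1}{a}$. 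Hence
\begin{equation*}
    1+b^{-1/n} = K a^{3/n}, \qquad b = \big(Ka^{3/n}-1\big)^{-n},
\end{equation*}
with $K=(1+b_0^{-1/n})a_0^{-3/n}$ fixed by the data \eqref{app:FLRW.eq:initial-data}. For $b_0>0$ we have $Ka^{3/n}>1$ initially, and since $a$ is increasing this is preserved, so $b$ stays positive and finite.

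Next insert this into \eqref{app:FLRW.eq:fried-1}, which reads $\dot a = a\sqrt{b/3} = \frac{1}{\sqrt3}\,a\,(Ka^{3/n}-1)^{-n/2}$. This is an autonomous scalar ODE whose right side is smooth and positive for $a$ in the admissible region, so local existence and uniqueness follow from Picard. For global existence we note that, for large $a$, the right side behaves like $c\,a^{-1/2}$, which has sublinear growth, so there is no blow-up; and $a$ cannot decrease to the boundary $Ka^{3/n}=1$ because $\dot a>0$. This gives $a$ on $[1,\infty)$ with $a\to\infty$: indeed $a$ is increasing, and if it were bounded then $\dot a$ would stay bounded below, a contradiction.

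For the asymptotics, write $b = K^{-n}a^{-3}(1-K^{-1}a^{-3/n})^{-n} = K^{-n}a^{-3}(1+O(a^{-3/n}))$. Then
\begin{equation*}
    \frac{\de}{\de t}\big(a^{3/2}\big) = \frac{3}{2}a^{1/2}\dot a = \frac{\sqrt3}{2}K^{-n/2}\big(1+O(a^{-3/n})\big).
\end{equation*}
A first pass gives $a^{3/2}\sim ct$, so $a\sim t^{2/3}$ and $O(a^{-3/n})=O(t^{-2/n})$. Integrating once more gives $a^{3/2} = \frac{\sqrt3}{2}K^{-n/2}t + O(t^{1-2/n}) + O(1)$. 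Taking the $2/3$ power yields $a = a_\infty t^{2/3}\big(1+O(t^{-2/n})+O(t^{-1})\big)$, which is at least as good as the claimed $O(t^{2/3-1/n})$ error; one should keep the case $n$ small and the $O(1)$ constant in mind, but since $n>3$ here the rate $t^{-2/n}$ dominates $t^{-1}$. For $b$, use $b = 3(\dot a/a)^2$: from the integrated relation, $\dot a/a = \frac{2}{3t}\big(1+O(t^{-2/n})\big)$, and hence $b = \frac{4}{9t^2}\big(1+O(t^{-2/n})\big)$, consistent with the claim.

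The main obstacle is bookkeeping rather than any real difficulty. Specifically, one must obtain $\dot a/a$ without differentiating an $O(\cdot)$ term, which is handled by computing it directly from the ODE $\dot a = \frac{1}{\sqrt3}a^{-1/2}K^{-n/2}(1+O(a^{-3/n}))$ combined with the expansion of $a$. One must also fix the parametrisation by $a_0$, with $b_0(a_0)$ chosen, for instance, so that the solution has the prescribed normalisation.
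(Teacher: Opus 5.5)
Your route is the same as the paper's: you integrate the continuity equation in closed form as a function of $a$ (your relation $1+b^{-1/n}=Ka^{3/n}$ is exactly the paper's $b=\frac{b_0}{f}\big(\frac{a_0}{a}\big)^3$), substitute into \eqref{app:FLRW.eq:fried-1} to get an autonomous ODE $\dot a\propto f^{-1/2}a^{-1/2}$, extract $a\sim t^{2/3}$, and refine. Your refinement is more explicit than the paper's: you expand $f$ to $1+O(a^{-3/n})$ and integrate $\frac{\de}{\de t}a^{3/2}$ from $t=1$. Integrating from $t=1$ is also safer than the paper's normalisation $a(0)=0$. For this equation of state, going backwards in time $b$ blows up at the positive value $a=K^{-n/3}$, so $a$ never reaches $0$. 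In any case, the bounds $1\le f\le (b_0^{1/n}+1)^n$ only hold for $t\ge 1$.

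The one genuine error is your last line. From $b=3(\dot a/a)^2$ and $\dot a/a=\frac{2}{3t}\big(1+O(t^{-2/n})\big)$ you get $b=\frac{4}{3t^2}\big(1+O(t^{-2/n})\big)$, not $\frac{4}{9t^2}$. The same constant comes out directly from $b=K^{-n}a^{-3}\big(1+O(a^{-3/n})\big)$ with $a^{3/2}\simeq\frac{\sqrt3}{2}K^{-n/2}t$. So your otherwise correct argument contradicts the constant in the statement rather than confirming it. The $\frac{4}{9}$ in the statement is a misprint. It is incompatible with \eqref{app:FLRW.eq:fried-1}, and also with the normalisation in the body of the paper, where $\rho_n=\tau^6\bar\rho_n\to 12$ together with $\tau^6=9t^2$ gives $\bar\rho_n\approx\frac{4}{3t^2}$. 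You should state what your computation actually proves, namely $b=\frac{4}{3t^2}+O(t^{-2-2/n})$, which for $n>2$ is a stronger error bound than the one claimed, and flag the discrepancy instead of asserting agreement.
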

\begin{proof}
    The continuity equation \eqref{app:FLRW.eq:fried-2} can be solved in closed form, giving
    \begin{equation*}
    n\log \Big(\frac{b^{\frac{1}{n}}(t)}{b^\frac{1}{n}(t)+1}\Big) - n\log \Big(\frac{b_0^{\frac{1}{n}}}{b_0^\frac{1}{n}+1}\Big) = -3\Big(\log a(t) -\log a_0\Big)\,.
    \end{equation*}
    Solving for $b(t)$ then yields the formula
    \begin{equation*}
        b(t) = \frac{b_0}{f(t)} \Big(\frac{a_0}{a(t)}\Big)^3\,,
    \end{equation*}
    where $f(t)$ is the function
    \begin{equation*}
        f(t) = \Bigg[b_0^{\frac{1}{n}}+1-\Big(\frac{b_0 a_0^3}{a^3(t)}\Big)^{\frac{1}{n}}\Bigg]^n\,.
    \end{equation*}
    We point out that if $a(t) \ra \infty$ as $t \ra \infty$, then $f(t)$ approaches a constant, and therefore $b(t) \sim a^{-3}(t)$. This is the same scale relation as is satisfied by the dust solution with a pressureless fluid.

    Next we plug in this formula to the Hamiltonian constraint \eqref{app:FLRW.eq:fried-1} and rearrange for $a(t)$, to obtain
    \begin{equation}
        \dot{a}(t) = \sqrt{\frac{b_0 a_0^3}{3f(t)}} \cdot \frac{1}{\sqrt{a(t)}}\,.
    \end{equation}
    Taking the square root of all these quantities is valid, as we assumed $a_0,b_0 > 0$, moreover $f(a_0) = 1$, so the above equation shows that $a(t)$ is monotonically increasing, and hence positive for all $t \geq t_0$. By the monotonicity properties, $f(t)$ is bounded below by $1$ and above by $(b_0^{\frac{1}{n}}+1)^n$, so we integrate the inequalities
    \begin{equation*}
        \sqrt{\frac{b_0 a_0^3}{3(b_0^{\frac{1}{n}}+1)^n}} \leq \sqrt{a}\dot{a} \leq \sqrt{\frac{b_0a_0^3}{3}}\,,
    \end{equation*}
    to arrive at the rough bound
    \begin{equation*}
        t\sqrt{\frac{b_0 a_0^3}{3(b_0^{\frac{1}{n}}+1)^n}} \leq \frac{2}{3} a^{\frac{3}{2}}(t) \leq t\sqrt{\frac{b_0a_0^3}{3}}\,,
    \end{equation*}
    where we have implicitly fixed the time coordinate to be such that $a(0) = 0$. This gives that $a(t) \sim t^{\frac{2}{3}}$ for large $t$. 
    
    We can then insert these asymptotics back into the Hamiltonian constraint \eqref{app:FLRW.eq:fried-1} equation to derive the asmyptotic expansions \eqref{app:FLRW.eq:FLRW-asymp}. Note that this also fixes the initial value of $b_0$ in terms of $a_0$.
\end{proof}
\begin{remark}[Solutions from scattering data]
    Through a straightforward asymptotic analysis, one can also construct FLRW solutions to \eqref{intro:Einstein-Euler} with prescribed data at infinity. More specifically, one can show that for any $a_\infty > 0$, there exists an FLRW solution to the Einstein--Euler system with polytropic equation of state such that $\lim_{t\ra \infty}a(t) = a_\infty$. We leave the details to the reader.
\end{remark}
We can extend the previous result to construct a family of spatially homogeneous solutions to \eqref{intro:Einstein-Euler}. Let $(\mc{M},\bar{g},\bar{\rho},\bar{v})$ be a spatially homogeneous solution to the Einstein--Euler system with $\TT^3$ spatial topology containing a quiet fluid, where we remind the reader that in this paper we take $\TT^3 = [-\pi,\pi]^3$ with the ends identified. Then there exists a coordinate system $(t,x^1,x^2,x^3) \in I \times [-\pi,\pi]^3$ such that
\begin{equation}\label{app:FLRW.eq:hom-sol}
    \mc{M} = I \times \TT^3,\qquad \bar{g} = -\de t^2 + a_{ij}^2(t)\de x^i \de x^j,\qquad \bar{\rho}=  b(t),\qquad \bar{v}^\mu = \delta_0^\mu,
\end{equation}
where $a_{ij},b:I \ra \RR$ are smooth functions on an interval $I$ containing $1$, such that for each $t \in I$, $b(t) > 0$ and the functions $a_{ij}(t)$ constitute components of a symmetric, positive definite $3\times 3$ matrix.

\begin{lemma}[Spatially homogeneous solutions to the Einstein--Euler system]
        Let $(a_{ij}^0)$, $(a_{ij}^\infty)$ be symmetric, positive definite $3\times3$ matrices. Then:
        \begin{enumerate}
            \item There exists a spatially homogeneous solution to \eqref{intro:Einstein-Euler} of the form \eqref{app:FLRW.eq:hom-sol} such that
            \begin{equation*}
                a_{ij}(1) = a_{ij}^0,\qquad i,j=1,2,3.
            \end{equation*}
            \item There exists a spatially homogeneous solution to \eqref{intro:Einstein-Euler} of the form \eqref{app:FLRW.eq:hom-sol} such that
            \begin{equation*}
                \lim_{t \ra \infty}a_{ij}(t) = a_{ij}^\infty,\qquad i,j=1,2,3.
            \end{equation*}
        \end{enumerate}
    \end{lemma}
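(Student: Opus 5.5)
The plan is to avoid solving the full Bianchi I system and instead obtain the required solutions from the FLRW solutions of Lemma~\ref{app:FLRW.lem:FLRW} by a constant linear change of spatial coordinates. Two structural facts are used. First, the Friedmann equations \eqref{app:FLRW.eq:fried-1}--\eqref{app:FLRW.eq:fried-2} involve $a$ only through $\dot a/a$, so if $(a(t),b(t))$ is a solution then so is $(c\,a(t),b(t))$ for every constant $c>0$. Second, the Einstein--Euler system \eqref{intro:Einstein-Euler} is local and diffeomorphism invariant.

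Concretely, let $S=(S_{ij})$ be a constant symmetric positive definite matrix and $(a(t),b(t))$ an FLRW solution. I will show that
\begin{equation*}
\bar g=-\de t^2+a^2(t)\,S_{ij}\,\de x^i\de x^j,\qquad \bar\rho=b(t),\qquad \bar v^\mu=\delta_0^\mu
\end{equation*}
solves \eqref{intro:Einstein-Euler} on $I\times\TT^3$.
\begin{itemize}
\item Set $y=S^{1/2}x$. This is a linear diffeomorphism of $\RR^3$, and the pulled-back metric is $-\de t^2+a^2(t)\delta_{kl}\,\de y^k\de y^l$, which is the FLRW metric.
\item Since $y$ does not depend on $t$, the quiet fluid $\bar v=\pa_t$ and the density $b(t)$ are unchanged under this map.
\item The equations are local, so $(\bar g,\bar\rho,\bar v)$ solves them on the universal cover $\RR^3$ in $x$-coordinates.
\item All coefficients are independent of $x$, so the solution descends to the quotient $\TT^3=[-\pi,\pi]^3$.
\end{itemize}
The result is a spatially homogeneous solution of the form \eqref{app:FLRW.eq:hom-sol}, where $(a_{ij}(t))$ is the positive definite square root of the matrix $a^2(t)S$. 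Since $S$ is constant, $a_{ij}(t)=a(t)\,(S^{1/2})_{ij}$.

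For part (1), I fix any FLRW solution from Lemma~\ref{app:FLRW.lem:FLRW} with $a(1)=a_0>0$. I then choose $S^{1/2}=a^0/a_0$, which is symmetric positive definite. With this choice $a_{ij}(1)=a^0_{ij}$, which is the required initial condition.

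For part (2) there is an interpretation issue: $a(t)\sim a_\infty t^{2/3}\to\infty$, so the literal limit of $a_{ij}(t)$ is infinite. I would read the condition in the expansion-normalised sense used in \eqref{sec:intro.eq:asymptotics}, namely $\lim_{t\to\infty}t^{-2/3}a_{ij}(t)=a^\infty_{ij}$.
\begin{itemize}
\item Lemma~\ref{app:FLRW.lem:FLRW} gives $t^{-2/3}a(t)\to a_\infty>0$, using the expansion $a=a_\infty t^{2/3}+O(t^{2/3-1/n})$.
\item Choose $S^{1/2}=a^\infty/a_\infty$. Then $t^{-2/3}a_{ij}(t)\to a^\infty_{ij}$.
\item Alternatively, use the scaling freedom $a\mapsto c\,a$ to normalise $a_\infty=1$ first, and take $S^{1/2}=a^\infty$.
\item If the authors instead intend limits in the sense of the scattering Remark, the same composition with $S^{1/2}$ transfers the prescribed FLRW data at infinity to the matrix data.
\end{itemize}

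The main point needing care is global existence for $t\ge1$ together with the asymptotic constant $a_\infty$. This is already contained in Lemma~\ref{app:FLRW.lem:FLRW}, but I would double-check the time normalisation. The lemma's proof fixes $a(0)=0$, whereas the data are imposed at $t=1$. A time translation $t\mapsto t+t_*$ leaves \eqref{app:FLRW.eq:fried-1}--\eqref{app:FLRW.eq:fried-2} invariant, does not change $a_\infty$, and preserves positivity of $b$. Hence prescribing data at $t=1$ is compatible with the asymptotic statement.
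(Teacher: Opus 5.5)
Your proposal is correct and follows essentially the paper's argument: a constant linear change of spatial coordinates reduces the statement to the FLRW solutions of Lemma~\ref{app:FLRW.lem:FLRW}. The only difference is direction. You pull back an FLRW solution by $y=S^{1/2}x$, whereas the paper transforms the prescribed matrix to $a_0\delta_{ij}$, launches the FLRW solution and then undoes the transformation. Your forward version also makes explicit two points the paper leaves implicit: the solution descends to $\TT^3$ because it is $x$-independent on the universal cover, and part (2) can only hold in the expansion-normalised sense $t^{-2/3}a_{ij}(t)\to a^\infty_{ij}$, since $a(t)\to\infty$.
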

\begin{proof}
    We prove statement $(I)$; statement $(II)$ follows by an similar argument. We carry out a linear coordinate transformation of the spatial coordinates
    \begin{equation*}
        \Phi: [-\pi,\pi]^3 \ra \RR^3,\qquad \Phi(x^1,x^2,x^3) = (y^1,y^2,y^3)
    \end{equation*}
    which transforms the matrix $a_{ij}|_{t=1}$ like
    \begin{equation*}
        a_0\delta_{ij} = \tilde{a}\Big(\frac{\pa}{\pa y^i},\frac{\pa}{\pa y^j}\Big)\Big|_{t=1} = \frac{\pa x^a}{\pa y^i}\frac{\pa x^b}{\pa y^j}a\Big(\frac{\pa}{\pa x^a},\frac{\pa}{\pa x^b}\Big)\Big|_{t=1} = \frac{\pa x^a}{\pa y^i}\frac{\pa x^b}{\pa y^j}a_{ab}(1),
    \end{equation*}
    where $a_0$ is chosen so that the initial data $\tilde{a}(1) = a_0$, $b(1) = b_0$ satisfy the relation \eqref{app:FLRW.eq:initial-data}.
    
    The solution \eqref{app:FLRW.eq:hom-sol} can now be written as
    \begin{equation}\label{app:FLRW.eq:hom-sol-transformed}
         \bar{g} = -\de t^2 + \tilde{a}_{ij} \de y^i \de y^j,\qquad \bar{\rho} = b(t),\qquad \bar{v}^\mu = \delta_0^\mu,
    \end{equation}
    where the coordinates $(y^1,y^2,y^3) \in \Phi([-\pi,\pi]^3)$. Critically, as $\tilde{a}_{ij}|_{t=1} = a_0\delta_{ij}$, the solution  \eqref{app:FLRW.eq:hom-sol-transformed} induces a valid initial data set to the Friedmann equations \eqref{app:FLRW.eq:fried-1}--\eqref{app:FLRW.eq:fried-2}, which by Lemma~\ref{app:FLRW.lem:FLRW} launches an FLRW solution to \eqref{intro:Einstein-Euler}. In particular, we can write
    \begin{equation*}
        \tilde{a}_{ij}(t) = \tilde{a}(t) \delta_{ij}.
    \end{equation*}
    Finally, one can recover the original form of the metric by reversing the coordinate transformation $\Phi$.
\end{proof}
\section{Proof of div-curl identity \texorpdfstring{\eqref{sec:energy-fluid.eq:corrected-energy-proof-4}}{ }}\label{app:divcurl}
In this appendix we give a proof of the div-curl identity \eqref{sec:energy-fluid.eq:corrected-energy-proof-4} which is loosely based on \protect{\cite[Lemma 4.4.1]{ChrKlai:minkowskistab:93}}. We write
\begin{equation}\label{app:divcurl-proof-1}
    2\int_{\TT^3} g^{ab}g^{cd}(\mathrm{curl}_{ac}\,\xi)(\mathrm{curl}_{bd}\,\xi) = \underbrace{\int_{\TT^3} g^{ab}g^{cd}g_{ea}g_{fb}\pa_c \xi^e \pa_d \xi^f}_{I}-\underbrace{\int_{\TT^3} g^{ab}g^{cd}g_{ea}g_{fd}\pa_c \xi^e \pa_b \xi^f}_{II}.
\end{equation}
For the integral labelled $I$, we observe that $g^{ab}g_{ea}g_{fb} = g_{ef} -g^{0b}g_{0e}g_{fb}$, and after relabelling indices this gives
\begin{equation}\label{app:divcurl-proof-2}
    I = \int_{\TT^3} g^{ab}g_{cd}\pa_a \xi^c \pa_b \xi^d - \int_{\TT^3} g^{0b}g_{0e}g_{fb}g^{cd}\xi^e \pa_d \xi^f \,.
\end{equation}
We perform integration by parts twice on the integral labelled $II$, which yields the identity
\begin{equation}\label{app:divcurl-proof-3}
    II = \underbrace{\int_{\TT^3}g^{ab}g^{cd}g_{ea}g_{fd}\pa_b \xi^e \pa_c \xi^f}_{III} + \int_{\TT^3}\pa_b \Big[g^{ab}g^{cd}g_{ea}g_{fd}\Big]\xi^e \pa_c \xi^f - \int_{\TT^3}\pa_c \Big[g^{ab}g^{cd}g_{ea}g_{fd}\Big]\xi^e \pa_b \xi^f\,,
\end{equation}
and for the integral labelled $III$, we have
\begin{equation*}
    g^{ab}g^{cd}g_{ea}g_{fd} = \delta_e^b \delta_f^c - g^{0b}g^{cd}g_{0e}g_{fd} - g^{ab}g^{0c}g_{ea}g_{0f} - g^{0b}g^{0c}g_{0e}g_{0f}\,,
\end{equation*}
which implies
\begin{equation}\label{app:divcurl-proof-4}
    III = \int_{\TT^3} (\mathrm{div}\,\xi)^2 -2\int_{\TT^3}g^{0c}g_{0f}g^{ab}g_{ea}\pa_b \xi^e \pa_c\xi^f - \int_{\TT^3}g^{0b}g^{0c}g_{0e}g_{0f}\pa_b\xi^e\pa_c\xi^f\,.
\end{equation}
Putting together \eqref{app:divcurl-proof-1}--\eqref{app:divcurl-proof-4}, we have the identity
\begin{equation*}
     \int_{\TT^3} g^{ab}g^{cd}(\mathrm{curl}_{ab}\,\xi)(\mathrm{curl}_{cd}\,\xi) = \int_{\TT^3} g^{ab}g_{cd}\pa_a \xi^c \pa_b \xi^d  -  \int_{\TT^3} (\mathrm{div}\,\xi)^2 - \mc{J}\,,
\end{equation*}
where
\begin{multline}\label{app:divcurl-proof-5}
    \mc{J} = \int_{\TT^3}\pa_c \Big[g^{ab}g^{cd}g_{ea}g_{fd}\Big]\xi^e \pa_b \xi^f - \int_{\TT^3}\pa_b\Big[g^{ab}g^{cd}g_{ea}g_{fd}\Big]\xi^e\pa_c\xi^f + \int_{\TT^3} g^{0b}g_{0e}g_{fb}g^{cd}\xi^e \pa_d \xi^f\\
    +2\int_{\TT^3}g^{0c}g_{0f}g^{ab}g_{ea}\pa_b \xi^e \pa_c\xi^f + \int_{\TT^3}g^{0b}g^{0c}g_{0e}g_{0f}\pa_b\xi^e\pa_c\xi^f\,,
\end{multline}
which after rearranging is \eqref{sec:energy-fluid.eq:corrected-energy-proof-4}. Additionally, if the components of $\xi$ have vanishing spatial average, and $g$ is a bounded metric then $\mc{J}$ satisfies the bound 
\begin{equation}
    |\mc{J}| \leq C\Big(\|\pa_xg\|_{L^\infty} + \sum_{i=1}^3\|g^{0i}\|_{L^\infty}^2\Big)\|\pa_x \xi\|_{L^2}^2\,.
\end{equation}
\bibliographystyle{amsalpha}
\bibliography{main.bib}

@article{AndMon:Milnestab:11,
    title = {{E}instein spaces as attractors for the {E}instein flow},
    author = {Lars Andersson and Vincent Moncrief},
    journal = {J. Differential. Geom.},
    volume = {89},
    number = {1},
    year = {2011},
    pages = {1-47}
}

@article{Ber:decelerated:25,
    title = {Future stability of solutions of the {E}instein-nonlinear scalar field
    system undergoing decelerated expansion},
    author = {Louie Bernhardt},
    journal = {Preprint, ar{X}iv:2508.15303},
    year = {2025}
}

@book{Ber:thesis:26,
    title = {Stability and scattering problems for expanding cosmologies in general relativity},
    author = {Louie Bernhardt},
    publisher = {PhD thesis, University of Melbourne},
    year = {2026}
}

@Article{BrauerRendallReula:1994,
author = {Uwe Brauer and Alan Rendall and Oscar Reula},
title = {The cosmic no-hair theorem and the non-linear stability of homogeneous {N}ewtonian cosmological models},
journal = {Class. Quant. Grav},
volume = {11},
number = {9},
year = {1994},
pages= {2283-2296}
}

@article{Cho:Einsteinlwp:52,
    title = {Th{\'e}or{\`e}me d'existence pour certains syst{\`e}mes d'{\'e}quations aux d{\'e}riv{\'e}es partielles non lin{\'e}aires},
    author = {Yvonne Foures (Choquet)-Bruhat},
    journal = {Acta Math.},
    volume = {88},
    year = {1952},
    pages = {141-225}
}

@article{Choquet-Bruhat:2001qwz,
    author = "Choquet-Bruhat, Yvonne and Moncrief, Vincent",
    title = "{Future global in time Einsteinian space-times with U(1) isometry group}",
    eprint = "gr-qc/0112049",
    archivePrefix = "arXiv",
    doi = "10.1007/s00023-001-8602-5",
    journal = "Annales Henri Poincare",
    volume = "2",
    pages = "1007--1064",
    year = "2001"
}

@book{ChrKlai:minkowskistab:93,
    title = {The Global Nonlinear Stability of the {M}inkowski Space},
    author = {Demetrios Christodoulou and Sergiu Klainerman},
    series = {Princeton Mathematical Series},
    volume = {41},
    publisher = {Princeton University Press, Princeton, NJ},
    year = {1993}
}

@article{dafermos2021nonlinearstabilityschwarzschildfamily,
      title={The non-linear stability of the {S}chwarzschild family of black holes}, 
      author={Mihalis Dafermos and Gustav Holzegel and Igor Rodnianski and Martin Taylor},
      journal = {Preprint, ar{X}iv:2104.08222},
      year={2021}
}

@article{EindeS:EdS:32,
    title = {On the relation between the expansion and the mean
    density of the universe},
    author = {Albert Einstein and Willem de Sitter},
    journal = {Proc. Nat. Acad. Sci.},
    volume = {18},
    year = {1932}
}

@article{FajMar:polytropic:26,
    title = {Non-linear stability of the matter dominated universe},
    author = {David Fajman and Elliot Marshall},
    journal = {Preprint, ar{X}iv:2606.16879},
    year = {2026}
}

@article{Fajetal:linEulerstab:24,
    title = {The Stability of Relativistic Fluids in Linearly Expanding Cosmologies},
    author = {David Fajman and Max Ofner and Todd Oliynyk and Zoe Wyatt},
    journal = {International Mathematical Research Notices},
    volume = {5},
    pages = {4238-4383},
    year = {2024}
}

@article{FajOfnWya:EEstab1:24,
    title = {Slowly Expanding Stable Dust Spacetimes},
    author = {David Fajman and Maximilian Ofner and Zoe Wyatt},
    journal = {Arch. Ration. Mech. Anal.},
    volume = {248},
    number = {83},
    year = {2024},
}

@article{Fajetal:decelEuler1:25,
    author = "Fajman, David and Maliborski, Maciej and Ofner, Maximilian and Oliynyk, Todd and Wyatt, Zoe",
    title = "{Phase transition between shock formation and stability in cosmological fluids}",
    eprint = "2405.03431",
    archivePrefix = "arXiv",
    primaryClass = "gr-qc",
    doi = "10.1088/1361-6382/ade929",
    journal = "Class. Quant. Grav.",
    volume = "42",
    number = "14",
    pages = "14LT01",
    year = "2025"
}

@article{Fajetal:decelEuler2:25,
    title = {Stability of fluids in spacetimes with decelerated expansion},
    author = {David Fajman and Maximilian Ofner and Todd A. Oliynyk and Zoe Wyatt},
    journal = {Preprint, ar{X}iv:2501.12798},
    year = {2025}
}

@article{Fajman2021,
  author  = {David Fajman and Todd Oliynyk and Zoe Wyatt},
  title   = {Stabilizing Relativistic Fluids on Spacetimes with Non-Accelerated Expansion},
  journal = {Communications in Mathematical Physics},
  year    = {2021},
  volume  = {383},
  number  = {1},
  pages   = {401--426},
  doi     = {10.1007/s00220-020-03924-9},
  url     = {https://doi.org/10.1007/s00220-020-03924-9},
  issn    = {1432-0916},
  month   = apr
}

@article{Fajetal:decelEuler3:25,
    author = {David Fajman and Maciej Maliborski and Maximilian Ofner and Todd Oliynyk and Zoe Wyatt},
    title = {A critical threshold for the cosmological {E}uler-{P}oisson system},
    journal = {Preprint, ar{X}iv:2512.19454},
    year = {2025}
}

@article{FouMarOli:tilted:24,
    title = {Future stability of perfect fluids with extreme tilt and linear equation of state {$p=c_s^2 \rho$} for the {E}instein-{E}uler system with positive cosmological constant: The range {$1/3 < c_s^2 < 3/7$}},
    author = {Grigorios Fournodavlos and Elliot Marshall and Todd Oliynyk},
    journal = {Sel. Math. New Ser.},
    volume = {32},
    number = {15},
    year = {2024}
}

@article{FouMarOli:tilted2:25,
    title = {Future Stability of Tilted Two-Fluid Bianchi {I} Spacetimes},
    author = {Grigorios Fournodavlos and Elliot Marshall and Todd Oliynyk},
    journal = {Preprint, ar{X}iv:2508.15155},
    year = {2025}
}

@article{Fri:deSitterstab:86,
    title = {On the existence of n-geodesically complete or future complete solutions of {E}instein’s field equations with smooth asymptotic structure},
    author = {Helmut Friedrich},
    journal = {Comm. Math. Phys.},
    volume. = {107},
    number = {4},
    year = {1986},
    pages = {587-609}
}

@article{Friedrich:2017,
title = "Sharp Asymptotics for {E}instein-$\lambda$-Dust Flows",
journal = "Comm. Math. Phys.",
volume = "350",
pages = "803 - 844",
year = "2017",
author = "H. Friedrich"
}

@article{Gong_2024,
doi = {10.1088/1361-6382/ad9132},
url = {https://doi.org/10.1088/1361-6382/ad9132},
year = {2024},
month = {nov},
publisher = {IOP Publishing},
volume = {41},
number = {24},
pages = {245017},
author = {Gong, Xinyu and Wei, Changhua},
title = {Stabilizing effect of the spacetime expansion on the Euler–Poisson equations in Newtonian cosmology},
journal = {Classical and Quantum Gravity},
}

@article{HadSpe:deSitterduststab:15,
    title = {The global future stability of the {FLRW} solutions to the dust-{E}instein system with a positive cosmological constant},
    author = {Mahir Had{\v z}i{\'c} and Jared Speck},
    journal = {J. Hyperbolic Differ. Equ.},
    volume = {12},
    number = {01},
    year = {2015},
    pages = {87-188}
}

@book{Heb:sob:00,
    title = {Nonlinear Analysis on Manifolds: {S}obolev Spaces and Inequalities},
    author = {Emmanuel Hebey},
    volume = {5},
    series = {Courant Lecture Notes in Mathematics},
    publisher = {Courant Institute of Mathematical Sciences, New York, NY; American Mathematical Society, Providence, RI},
    year = {1999}
}

@article{hintz2026nonlinearstabilitysubextremalkerr,
    title = {Nonlinear stability of subextremal {K}err black holes},
    author = {Peter Hintz},
    journal = {Preprint, ar{X}iv:2606.28253},
    year = {2026}
}

@article{HinVas:KdSexpandingstab:24,
    title = {Stability of the expanding region of {K}err-de {S}itter spacetimes and smoothness at the conformal boundary},
    author = {Peter Hintz and Andr{\'a}s Vasy},
    journal = {Preprint, ar{X}iv:2409.15460},
    year = {2024}
}

@article{HuneauStingoWyatt2025,
  author  = {C{\'e}cile Huneau and Annalaura Stingo and Zoe Wyatt},
  title   = {The global stability of the {K}aluza--{K}lein spacetime},
  journal = {Journal of the European Mathematical Society},
  year    = {2025},
  doi     = {10.4171/JEMS/1663},
  note    = {Published online first}
}

@article{Ju2025,
  author  = {Ju, Xianshu and Ke, Xiangkai and Wei, Changhua},
  title   = {The global existence and blowup of the classical solution to the relativistic dust in a {FLRW} geometry},
  journal = {Letters in Mathematical Physics},
  year    = {2025},
  volume  = {115},
  number  = {6},
  pages   = {122},
  doi     = {10.1007/s11005-025-02006-y},
  url     = {https://doi.org/10.1007/s11005-025-02006-y},
  issn    = {1573-0530},
  month   = oct,
  day     = {30}
}

@article{KlaiSzef:Kerr:23,
    title = {Kerr stability for small angular momentum},
    author = {Sergiu Klainerman and J\'er\'emie Szeftel},
    journal = {Pure and Appl. Math. Quart.},
    volume = {19},
    year = {2023},
    number = {3},
    pages = {791-1678}
}

@article {LeFlochWei-Chaplygin,
    AUTHOR = {Philippe G. LeFloch and Changhua Wei},
     TITLE = {Nonlinear stability of self-gravitating irrotational
              {C}haplygin fluids in a {FLRW} geometry},
   JOURNAL = {Ann. Inst. H. Poincar\'e{} C Anal. Non Lin\'eaire},
  FJOURNAL = {Annales de l'Institut Henri Poincar\'e{} C. Analyse Non
              Lin\'eaire},
    VOLUME = {38},
      YEAR = {2021},
    NUMBER = {3},
     PAGES = {787--814},
      ISSN = {0294-1449,1873-1430},
   MRCLASS = {76Y05 (76E30)},
  MRNUMBER = {4227052},
       DOI = {10.1016/j.anihpc.2020.09.005},
       URL = {https://doi.org/10.1016/j.anihpc.2020.09.005},
}

@article{LiuWei2021,
  author  = {Chao Liu and Changhua Wei},
  title   = {Future Stability of the {FLRW} Spacetime for a Large Class of Perfect Fluids},
  journal = {Annales Henri Poincaré},
  year    = {2021},
  volume  = {22},
  number  = {3},
  pages   = {715--770},
  doi     = {10.1007/s00023-020-00987-1},
  url     = {https://doi.org/10.1007/s00023-020-00987-1},
  issn    = {1424-0661},
  month   = mar
}

@article{LubKro:deSitterradstab11,
    title = {A conformal approach for the analysis of the non-linear stability of pure radiation cosmologies},
    author = {Christian L{\"u}bbe and Juan A. Valiente-Kroon},
    journal = {Ann. Phys.},
    volume = {328},
    year = {2013},
    pages = {1-25}
}

@article{Mar:instabilityFLRW:25,
    title = {Instability of Slowly Expanding {FLRW} Spacetimes}, 
    author = {Elliot Marshall},
    journal = {Class. Quantum Grav.},
    volume = {42},
    number = {10},
    year = {2025}
}

@article{Oli:EulerdeSitterstab:16,
    title = {Future stability of the {FLRW} fluid solutions in the presence of a positive cosmological constant},
    author = {Todd Oliynyk},
    journal = {Comm. Math. Phys.},
    volume = {346},
    year = {2016},
    number = {1},
    pages = {293-312}
}

@article{Rin:deSitterENSFstab08,
    title = {Future stability of the {E}instein-non-linear scalar field system},
    journal = {Invent. Math.},
    author = {Hans Ringstr{\"o}m},
    volume = {173},
    number = {1},
    pages = {123-208},
    year = {2008}
}

@article{Rin:powerlaw:09,
    title = {Power law inflation},
    journal = {Comm. Math. Phys.},
    author = {Hans Ringstr{\"o}m},
    volume = {290},
    pages = {155-218},
    year = {2009}
}

@article{RodSpe:irrEulerdeSitterstab:13,
    title = {The Stability of the Irrotational {E}uler-{E}instein System with a Positive Cosmological Constant},
    author = {Igor Rodnianski and Jared Speck},
    journal = {J. Eur. Math. Soc. (JEMS)},
    volume = {15},
    number = {6},
    year = {2013},
    pages = {2369-2462}
}

@ARTICLE{SaWo67,
       author = {{Sachs}, R.~K. and {Wolfe}, A.~M.},
        title = "{Perturbations of a Cosmological Model and Angular Variations of the Microwave Background}",
      journal = {Astrophys. J.},
         year = 1967,
        month = jan,
       volume = {147},
        pages = {73},
          doi = {10.1086/148982},
       adsurl = {https://ui.adsabs.harvard.edu/abs/1967ApJ...147...73S}
}

@article{Spe:EulerdeSitterstab:12,
title = {The nonlinear future stability of the {FLRW} family of solutions to the {E}uler-{E}instein system with a positive cosmological constant},
author = {Jared Speck},
journal = {Selecta Math.},
volume = {18}, 
number = {3},
year = {2012},
pages = {633-715}
}

@article{Spe:relEulerstab:13,
title = {The stabilizing effect of spacetime expansion on relativistic fluids with sharp results for the radiation equation of state},
author = {Jared Speck},
journal = {Arch. Ration. Mech. Anal.},
volume = {210},
number = {2},
year = {2013},
pages = {535-579}
}

@article{Strain:2026fka,
    author = {Robert M. Strain and Martin Taylor and Renato {Velozo Ruiz}},
    title = {{Future global stability of Maxwell-J{\"u}ttner equilibria and vacuum for the massless Boltzmann equation on FLRW spacetimes}},
    journal = {Preprint, ar{X}iv:2606.00175},
    year = {2026}
}

@article{Tay:decelFLRWstab:24,
title = {Future stability of expanding spatially homogeneous {FLRW} solutions of the spherically symmetric {E}instein--massless {V}lasov system with spatial topology {$\mathbb{R}^3$}},
author = {Martin Taylor},
journal = {J. Math. Phys.},
volume = {65},
number = {2},
year = {2024}
}

@article{WeiJDE2018,
title = {Stabilizing effect of the power law inflation on isentropic relativistic fluids},
journal = {Journal of Differential Equations},
volume = {265},
number = {8},
pages = {3441-3463},
year = {2018},
issn = {0022-0396},
doi = {https://doi.org/10.1016/j.jde.2018.05.007},
url = {https://www.sciencedirect.com/science/article/pii/S002203961830278X},
author = {Changhua Wei},
}

@book{Weinberg:2008zzc,
    author = "Weinberg, Steven",
    title = "{Cosmology}",
    isbn = "978-0-19-852682-7",
    year = "2008",
    publisher = "Oxford University Press, Oxford."
}
\end{document}